\documentclass[11pt,twoside,a4paper,notitlepage]{article}

\def\DONOTINSERTCOMMENTS{}

\usepackage{array}
\newcolumntype{x}[1]{>{\centering\arraybackslash\hspace{0pt}}p{#1}}

\usepackage{thm-restate}

\usepackage{ifthen}

  \usepackage[a4paper,margin=2.5cm]{geometry}
  \usepackage{tgtermes}
  \usepackage[scale=.92]{tgheros}
  \usepackage{tgcursor}

\usepackage{microtype}

  \usepackage[english]{babel}

  \usepackage{fancyhdr}

\usepackage{subcaption}

\usepackage{bussproofs}

\usepackage{dsfont}
\usepackage{stmaryrd}

\usepackage{ifthen}

\usepackage{ifthen}

\makeatletter

\makeatother

\usepackage[utf8]{inputenc}

\usepackage{amsmath}
\usepackage{amssymb}
\usepackage{amsfonts}

\usepackage{mathtools}

\usepackage{bussproofs}

\usepackage{bm}  

\usepackage{subcaption}

\usepackage{varioref}

\usepackage{xspace}

  \usepackage{amsthm}
  \usepackage{amsthm-boldfont}

\usepackage{jncaptionsheadings}

\usepackage{ifthen}
\usepackage{xspace}
\usepackage{varioref}

\DeclareMathAlphabet{\mathsfsl}{OT1}{cmss}{m}{sl}

\newcommand{\bigoh}[1]{\mathrm{O} ( #1 )}

\newcommand{\bigomega}[1]{\Omega ( #1 )}

\DeclareMathOperator{\polylog}{polylog}

\providecommand{\abs}[1]{\lvert#1\rvert}

\providecommand{\ABS}[1]{\left\lvert#1\right\rvert}

\newcommand{\ceiling}[1]{\lceil #1 \rceil}

\newcommand{\set}[1]{\{ #1 \}}

\newcommand{\setdescr}[3][\mid]{\set{ #2 #1 #3 }}

\newcommand{\setsize}[1]{\lvert#1\rvert}

\newcommand{\intersection}{\cap}

\newcommand{\union}{\cup}
\newcommand{\Union}{\bigcup}

\newcommand{\Lor}{\bigvee}
\newcommand{\Land}{\bigwedge}

\newcommand{\limpl}{\ensuremath{\rightarrow}}

\newcommand{\olnot}[1]{\overline{#1}}

\newcommand{\clwidth}{k}

\newcommand{\xcnf}[1]{\mbox{\ensuremath{#1}-CNF}\xspace}
\newcommand{\kcnf}{\xcnf{\clwidth}}

\newcommand{\complclassformat}[1]{\textrm{\upshape{\textsf{#1}}}\xspace}

\newcommand{\Pclass}{\complclassformat{P}}

\newcommand{\NCclass}{\complclassformat{NC}}

\newcommand{\NC}{\complclassformat{NC}}

\newcommand{\introduceterm}[1]{{\emph{#1}}}

\newcommand{\eqperiod}{\enspace .}

\newcommand{\eg}{for instance\xspace} 
\newcommand{\ie}{i.e.,\ }

\newcommand{\etal}{et al.\@\xspace}

\newcommand{\refsec}[1]{Section~\ref{#1}}

\newcommand{\refapp}[1]{Appendix~\ref{#1}}

\newcommand{\refth}[1]{Theorem~\ref{#1}}

\newcommand{\refthm}[1]{Theorem~\ref{#1}}

\newcommand{\reflem}[1]{Lemma~\ref{#1}}

\newcommand{\refobs}[1]{Observation~\ref{#1}}

\usepackage[colorlinks=true,citecolor=blue]{hyperref}

\newtheorem{standardlocalcounter}{Dummy}[section]

\theoremstyle{plain}    

\newtheorem{theorem}[standardlocalcounter]{Theorem}
\newtheorem{lemma}[standardlocalcounter]{Lemma}
\newtheorem{proposition}[standardlocalcounter]{Proposition}
\newtheorem{corollary}[standardlocalcounter]{Corollary}
\newtheorem{observation}[standardlocalcounter]{Observation}

\theoremstyle{definition}

\newtheorem{definition}[standardlocalcounter]{Definition}
\newtheorem{claim}[standardlocalcounter]{Claim}

\theoremstyle{remark}

\usepackage{ifpdf}

\usepackage{graphicx}

\ifpdf         
\DeclareGraphicsRule{*}{mps}{*}{}
\fi

  \setcounter{secnumdepth}{3}
  \setcounter{tocdepth}{3}

\newcount\hours
\newcount\minutes
\def\SetTime{\hours=\time
\global\divide\hours by 60
\minutes=\hours
\multiply\minutes by 60
\advance\minutes by-\time
\global\multiply\minutes by-1 }
\SetTime
\def\now{\number\hours:\ifnum\minutes<10 0\fi\number\minutes}

\newcommand{\formf}{\ensuremath{F}}

\newcommand{\clc}{\ensuremath{C}}
\newcommand{\cld}{\ensuremath{D}}

\newcommand{\clausesetformat}[1]{\ensuremath{\mathbb{#1}}}
\newcommand{\clsc}{\clausesetformat{C}}

\newcommand{\setsofvarsorlit}[2]{\mathit{#1}({#2})}
\newcommand{\Setsofvarsorlit}[2]{\mathit{#1}\bigl({#2}\bigr)}

\newcommand{\vars}[1]{\setsofvarsorlit{Vars}{#1}}

\newcommand{\Vars}[1]{\Setsofvarsorlit{Vars}{#1}}

\newcommand{\restrict}[2]{{{#1}\!\!\upharpoonright_{#2}}}

\newcommand{\lengthstd}{L}

\newcommand{\spacestd}{s}

\newcommand{\ib}[1]{\llbracket{#1}\rrbracket}
\renewcommand{\bigoh}[1]{O(#1)}
\newcommand{\bigohtilde}[1]{\tilde{O}({#1})}
\newcommand{\bigomegatilde}[1]{\tilde\Omega({#1})}
\renewcommand{\olnot}[1]{\neg #1}

\newcommand{\cpstar}{CP$^*$\xspace}

\newcommand{\cc}{\ensuremath{\mathsf{cc}}}
\newcommand{\rank}{\ensuremath{\mathrm{rank}}}
\newcommand{\NS}{\ensuremath{\mathsf{NS}}}
\newcommand{\gap}{\ensuremath{\mathrm{gap}}}
\newcommand{\Pcc}{\mathsf{P}^{\cc}}
\newcommand{\Search}{\ensuremath{\mathsf{Search}}}
\newcommand{\Cert}{\ensuremath{\mathrm{Cert}}}
\newcommand{\rpeb}{\ensuremath{\mathrm{rpeb}}}
\newcommand{\speb}{\mathrm{speb}}
\newcommand{\static}{\mathrm{static}}
\newcommand{\pred}{\mathrm{pred}}
\newcommand{\desc}{\mathrm{desc}}
\newcommand{\head}{\mathrm{head}}

\newcommand{\depth}[1]{\mathrm{depth}(#1)}
\newcommand{\mdepth}[1][f]{\mathrm{mD}(#1)}
\newcommand{\mkw}[1][f]{\mathrm{mKW}(#1)}

\renewcommand{\formf}{\mathcal{C}}

\newcommand{\equalgadget}{\ensuremath{\mathrm{EQ}}}

\newcommand{\coefficientbits}{c}
\newcommand{\numvertices}{n}
\newcommand{\numvariables}{n}
\newcommand{\liftedformula}[2]{{#1} \circ {#2}}
\newcommand{\liftedvar}[3]{{#2}_{{#1}_{#3}}} \newcommand{\alicevar}[2]{\liftedvar{#1}{x}{#2}}
\newcommand{\bobvar}[2]{\liftedvar{#1}{y}{#2}}

\newcommand{\vargeneric}{v}
\newcommand{\varcoefficient}{a_j}
\newcommand{\freecoefficient}{c}
\newcommand{\lingadgetname}{L}
\newcommand{\lingadget}[1]{\lingadgetname(#1)}
\newcommand{\gadgetname}{g}
\newcommand{\boolgadget}[1]{\gadgetname(#1)}
\newcommand{\gadgetarity}{q}
\newcommand{\encodingtime}{\ell_\gadgetname}
\newcommand{\encodingspace}{s_\gadgetname}
\newcommand{\safetyconstant}{K}
\newcommand{\vertexseq}{W}

\newcommand{\prednode}[1]{\mathrm{pred}(#1)}
\newcommand{\predvertex}{u}

\newcommand{\succvertex}{w}
\newcommand{\trimctant}{i}
\newcommand{\ineqsize}{m}

\newcommand{\xnor}{\mathrm{XNOR}}
\newcommand{\eq}{\equalgadget}
\newcommand{\eqloglog}{\eq_{\log\log\numvertices}}

\newcommand{\extrainequalities}{\xi}

\newcommand{\ctaxiom}[1][\alpha]{\clc_{#1}}
\newcommand{\ctaccumulator}[1][\binarylimit]{T_{#1}}

\newcommand{\makelit}[2]{\ell(#1,#2)}
\newcommand{\polarity}{b}
\newcommand{\binarystring}{b}
\newcommand{\binarylimit}{B}
\newcommand{\axiom}[1]{I_{#1}}
\newcommand{\axiompremise}{J_\binarystring}
\newcommand{\axiomconclusion}{\lingadget{\succvertex}}
\newcommand{\slackterm}{S}

\newcommand{\lingadgetpart}[1]{\lingadgetname^+_{#1}}
\newcommand{\boolgadgetpart}[1]{\gadgetname^+_{#1}}

\newcommand{\pebblingformula}[1][G]{\mathrm{Peb}_{#1}}
\newcommand{\decisiontree}{\mathrm{DT}}
\newcommand{\paritydecisiontree}{\mathrm{PDT}}

\newcommand{\pebbling}{\mathcal{P}}
\newcommand{\reversepebbling}{R}

\usepackage[usenames,dvipsnames]{color}
\usepackage{numname}
\newcounter{authorcount}
\newcommand{\newauthor}[3]{\newcounter{#1comment}
\expandafter\newcommand\csname #1comment\endcsname[1]{\ifdefined\DONOTINSERTCOMMENTS\relax\else\medskip\par\noindent
{\bfseries \scshape \footnotesize #2's comment
\stepcounter{#1comment}\csname the#1comment\endcsname}:
{\sffamily \itshape \scriptsize\textcolor{#3}{##1}\par}
\medskip\fi}
\stepcounter{authorcount}
\expandafter\edef\csname #1ordinal\endcsname{\theauthorcount}
\expandafter\newcommand\csname theauthor#1\endcsname{the \ordinaltoname{\csname #1ordinal\endcsname} author\xspace}
\expandafter\newcommand\csname Theauthor#1\endcsname{The \ordinaltoname{\csname #1ordinal\endcsname} author\xspace}
}

\newauthor{sfdr}{Susanna}{OliveGreen}
\newauthor{om}{Or}{TealBlue}
\newauthor{jn}{Jakob}{Blue}
\newauthor{tp}{Toni}{Red}
\newauthor{rr}{Robert}{Orchid}
\newauthor{mv}{Marc}{Orange}

  \numberwithin{equation}{section}

\begin{document}

\newgeometry{margin=1.0in,top=1.7in,bottom=1in}

\begin{center}
  {\LARGE
    Lifting with Simple Gadgets and \\ Applications to Circuit and Proof Complexity
}
\\[1cm] \large

\setlength\tabcolsep{2em}
\begin{tabular}{x{4.1cm}cx{4.1cm}}
Susanna F. de Rezende &
Or Meir &
Jakob Nordström 
\\[-.7mm]
\small\slshape 
Mathematical Institute of~the 
&
\small\slshape University of Haifa &                                     
\small\slshape University of Copenhagen
\\[-.7mm]
\small\slshape Czech Academy of Sciences
& & 
\small\slshape  \mbox{KTH Royal Institute} of~Technology	
 \\
  \rule{0pt}{1ex} & & \\
Toniann Pitassi &
                  Robert Robere &
                                  Marc Vinyals
  \\[-.7mm]
\small\slshape University of Toronto  &
\small\slshape DIMACS &
                                  \small\slshape Technion  \\[-.7mm]
\small\slshape Institute for Advanced Study & \small\slshape Institute for Advanced Study  &
\end{tabular}

  \thispagestyle{empty}

  \pagestyle{fancy}
  \fancyhead{}
  \fancyfoot{}

  \fancyhead[CE]{\slshape
    NULLSTELLENSATZ AND POLYNOMIALLY BOUNDED COEFFICIENTS IN CUTTING PLANES}
  \fancyhead[CO]{\slshape \nouppercase{\leftmark}}
  \fancyfoot[C]{\thepage}

  \setlength{\headheight}{13.6pt}

\begin{abstract}
We significantly strengthen and generalize the theorem lifting
  Nullstellensatz degree to monotone span program size by Pitassi and
  Robere (2018) so that it works for
any
  gadget with high
  enough rank, in particular, for useful gadgets such as
  \emph{equality} and \emph{greater-than}. We apply our generalized
  theorem to solve two open problems:
  \begin{itemize}
    
  \item 
We present the first result that demonstrates a separation in proof
    power for cutting planes with unbounded versus polynomially bounded
    coefficients.
    Specifically, we exhibit CNF formulas that can be refuted in
    quadratic length and constant line space in cutting planes with
    unbounded coefficients, but for which there are no refutations in
    subexponential length and subpolynomial line space if coefficients
    are restricted to be of polynomial magnitude.

  \item 
    We give the first 
explicit
    separation between monotone
    Boolean formulas and monotone real formulas. Specifically, we give
    an explicit family of functions that can be computed with monotone
    real formulas of nearly linear size but require monotone Boolean
    formulas of exponential size. Previously only a non-explicit
    separation was known.  
    
  \end{itemize}
  
  An important technical ingredient, which may be of independent
  interest, is that we show that the Nullstellensatz degree of refuting
  the pebbling formula over a DAG~$G$ over any field coincides exactly
  with the reversible pebbling price of~$G$. 
In particular, this implies
  that the standard decision tree complexity and the parity
  decision tree complexity of the corresponding falsified clause search
  problem are equal.

  \end{abstract}

 \end{center}

\newpage

\restoregeometry
\pagenumbering{arabic}

\section{Introduction}
\label{sec:intro}

{\emph{Lifting theorems}} in complexity theory are a method of
transferring lower bounds
in a weak computational model into lower bounds for
a more powerful computational model, via function composition.
There has been an explosion of lifting theorems in the last
ten years, essentially reducing communication lower bounds
to query complexity lower bounds.

Early papers that establish lifting theorems include
Raz and McKenzie's
separation of the monotone $\NCclass$ hierarchy
~\cite{RM99Separation} (by lifting decision tree complexity
to deterministic communication complexity), and
Sherstov's pattern matrix method
~\cite{Sherstov11Pattern} which lifts
(approximate) polynomial degree to (approximate) matrix rank.
Recent work has established query-to-communication
lifting theorems in a variety of models,
leading to the resolution of many
longstanding open problems in many areas of computer science.
Some examples include the resolution of open questions in communication complexity~\cite{GPW15Deterministic,GLMWZ15Rectangles,GKPW17QueryToCommunicationPtoNP,GJPW17Randomized,GPW18Landscape},
monotone complexity~\cite{RPRC16Exponential,PR17StronglyExponential,PR18LiftingNS},
proof complexity~\cite{HN12VirtueSuccinctProofs,GP18CommunicationLowerBounds,dRNV16LimitedInteraction,GGKS18MonotoneCircuit},
extension complexity of linear and semidefinite programs~\cite{KMR17ApproximatingRectangles,GJW18ExtensionComplexity,LRS15LowerBoundsSDP},
data structures~\cite{CKLM18Simulation}
and finite model theory~\cite{BN16QuantifierDepth}.

Lifting theorems have the following form: given functions $f\colon \{0,1\}^n \to \{0,1\}$ (the ``outer function'')  and $g\colon\mathcal{X} \times \mathcal{Y} \to \{0,1\}$ (the ``gadget''), a lower bound for $f$ in a weak computational model implies a lower bound on $f \circ g^n$ in a stronger computational model.
The most desirable lifting theorems are the most general ones.
First, it should hold for \emph{any} outer function, and ideally $f$ should be allowed to be a partial function or a relation (i.e., a search problem).
Indeed, nearly all of the applications mentioned above require lifting where the outer function is a relation or a
partial function.
Secondly, it is often desirable that the gadget is as small as possible.
The most general lifting theorems established so far, for example lifting
theorems for deterministic and randomized communication complexity, require at least logarithmically-sized gadgets;
if these theorems could be improved generically to hold for constant-sized gadgets then many of the current theorems would be vastly improved.
Some notable examples where constant-sized gadgets are possible include Sherstov's degree-to-rank lifting \cite{Sherstov11Pattern},
critical block-sensitivity lifting~\cite{GP18CommunicationLowerBounds,HN12VirtueSuccinctProofs}, and lifting for monotone span programs \cite{PR17StronglyExponential, PR18LiftingNS, Robere18Thesis}.

\subsection{A New Lifting Theorem}

In this paper, we generalize a lifting theorem of Pitassi and Robere \cite{PR18LiftingNS} to use any gadget that has nontrivial rank. This theorem takes a search problem associated with an unsatisfiable CNF, and lifts a lower bound on the Nullstellensatz degree of the CNF to a lower bound on a related communication problem.

More specifically,  let $\mathcal{C}$ be an unsatisfiable \kcnf formula. The search problem associated with $\mathcal{C}$, $\Search(\mathcal{C})$, takes
as input an assignment to the underlying variables, and outputs a clause that is falsified by the assignments. \cite{PR18LiftingNS} prove that for any unsatisfiable $\mathcal{C}$, and for a sufficiently rich gadget $g$, deterministic communication complexity lower bounds for
the composed search problem $\Search(\mathcal{C})  \circ g^n$ follow from Nullstellensatz degree lower bounds for $\formf$.\footnote{In fact the result is quite a bit stronger---it applies to Razborov's rank measure~\cite{Razborov90Applications}, which is a strict strengthening of deterministic communication complexity.}
We significantly improve this lifting theorem so that it holds for \emph{any} gadget of large enough rank.

\begin{theorem}\label{cor:nss-communication-eq-intro}
Let $\mathcal{C}$ be a CNF over $n$ variables, let $\mathbb{F}$ be any field,
and let $g$ be any gadget of rank at least $r$.
Then the deterministic communication complexity of
$\Search(\mathcal{C} \circ g^n)$ is at least
$\NS_{\mathbb{F}}(\mathcal{C})$, the Nullstellensatz degree of
$\mathcal{C}$, as long as
$r \geq cn/\NS_{\mathbb{F}}(\mathcal{C})$ for some large enough constant $c$.
\end{theorem}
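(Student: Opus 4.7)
The plan is to follow the rank-based lifting strategy of Pitassi and Robere~\cite{PR18LiftingNS}, abstracting their gadget-specific analysis into a single linear-algebraic statement that exploits only the hypothesis $\rank_{\mathbb{F}}(g) \geq r$. I argue contrapositively: given a deterministic protocol $\Pi$ for $\Search(\mathcal{C}\circ g^n)$ of cost $d$, I build from $\Pi$ a Nullstellensatz refutation of $\mathcal{C}$ over $\mathbb{F}$ of degree strictly less than $\NS_{\mathbb{F}}(\mathcal{C})$, which contradicts the definition of $\NS_{\mathbb{F}}(\mathcal{C})$ and forces $d \geq \NS_{\mathbb{F}}(\mathcal{C})$.

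The starting point is the combinatorial structure of $\Pi$: it partitions $\mathcal{X}^n \times \mathcal{Y}^n$ into at most $2^d$ rectangles $\{R_j\}_j$, each labeled by a clause $C_j$ of $\mathcal{C}$ falsified throughout $R_j$. The core technical ingredient is a \emph{rank-preservation lemma} translating each rectangle $R_j$ into a small set $S_j \subseteq [n]$ of fixed coordinates together with a partial assignment $\rho_j\colon S_j \to \{0,1\}$ such that (a) $g(x_i, y_i) = \rho_j(i)$ for every $(x,y)\in R_j$ and every $i\in S_j$, and (b) for every Boolean assignment $\sigma$ to $[n]\setminus S_j$ there exists $(x,y)\in R_j$ with $g^n(x,y)$ agreeing with $\sigma$ on $[n]\setminus S_j$. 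Using only the rank bound on $g$, I would derive $|S_j| \leq d/\log r$ via rank counting: each fixed coordinate forces the rectangle into a constant-output subrectangle of $g$, costing a rank factor of at least $r$.

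Given the lemma, the Nullstellensatz refutation emerges via the PR18-style summation identity. For each leaf $j$, the clause $C_j$ must be falsified by $\rho_j$ (otherwise some extension of $\rho_j$ to $[n]\setminus S_j$ would satisfy $C_j$, contradicting (b)), so in particular $\vars{C_j}\subseteq S_j$. The coefficient polynomial of $C_j$ is then the degree-$|S_j|$ monomial indicator of $\rho_j$, and summing over leaves yields a Nullstellensatz refutation of degree at most $\max_j|S_j|\leq d/\log r$. The calibration $r \geq cn/\NS_{\mathbb{F}}(\mathcal{C})$ is precisely chosen so that, whenever $d < \NS_{\mathbb{F}}(\mathcal{C})$, this degree stays strictly below $\NS_{\mathbb{F}}(\mathcal{C})$, delivering the required contradiction.

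The hardest part will be the rank-preservation lemma for a generic gadget. PR18 rely on specific structural properties of their gadget (such as balance or invertibility), but here one must argue purely from $\mathbb{F}$-rank: large monochromatic rectangles of $g$ must be highly unbalanced when the gadget has full rank, and lifting this estimate from $g$ to the product gadget $g^n$ and then to arbitrary protocol subrectangles---which need not respect any convenient product structure on $\mathcal{X}^n\times\mathcal{Y}^n$---is where the delicate work lies. A secondary subtlety is that the argument must hold uniformly across all fields, since both $\rank_{\mathbb{F}}$ and $\NS_{\mathbb{F}}$ depend on the choice of $\mathbb{F}$ and can behave quite differently in characteristic $2$ versus characteristic $0$.
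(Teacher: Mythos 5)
Your proposal takes a fundamentally different route from the paper, and unfortunately it has two serious gaps that the paper's approach is specifically designed to avoid.

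The paper's argument is a \emph{direct} lower bound, not a simulation. It works with Razborov's rank measure: it uses the dual characterization $\NS_{\mathbb{F}}(\mathcal{C}) = \gap_{\mathbb{F}}(\mathcal{C})$ (Theorem~\ref{th:gap-equals-ns}) to extract a multilinear witness polynomial $p$, forms the pattern matrix $A = p \circ g^n$, and then shows that $\mu_{\mathbb{F}}(\mathcal{R},A)$ is large against every $\mathcal{C}$-structured rectangle cover. The engine is the degree-to-rank lifting, Theorem~\ref{th:rank-lifting}, which sandwiches $\rank(p\circ g^n)$ between $\sum_S(\rank(g)-3)^{|S|}$ and $\sum_S\rank(g)^{|S|}$ for \emph{any} gadget with $\rank(g)\geq 3$; this is proved by showing that zeroing one row and column of $g$ yields a ``good'' gadget in the sense of \cite{PR18LiftingNS} (via Lemma~\ref{lem:alternative-good}). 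Nothing in the paper's proof extracts structure rectangle-by-rectangle from the protocol.

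The first gap in your proposal is the rank-preservation lemma. Properties (a) and (b) together are essentially the conclusion of a full-blown Raz--McKenzie-style simulation theorem, and they do \emph{not} follow from $\rank_{\mathbb{F}}(g)\geq r$ alone. The informal step ``each fixed coordinate forces the rectangle into a constant-output subrectangle of $g$, costing a rank factor of at least $r$'' fails already for the equality gadget, which is the paper's headline application: $\eq_q$ is the $2^q\times 2^q$ identity matrix, full rank, yet it has $0$-monochromatic rectangles of dimensions $2^{q-1}\times 2^{q-1}$. So fixing a coordinate to $0$ costs essentially no rank. Moreover, the paper explicitly records (citing \cite{LM19Lifting}) that extracting a decision-tree-like object from a protocol over the equality gadget is impossible for general relations; that impossibility is precisely why they lift the \emph{algebraic} measure $\NS_{\mathbb{F}}$ via rank rather than attempting a simulation. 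A further red flag: your argument, if it worked as written, would yield $\decisiontree$-free degree $\leq d/\log r < \NS_{\mathbb{F}}(\mathcal{C})$ whenever $d<\NS_{\mathbb{F}}(\mathcal{C})$ and $r\geq 2$, eliminating the hypothesis $r\geq cn/\NS_{\mathbb{F}}(\mathcal{C})$ entirely; but that hypothesis is necessary, as one sees from the quantitative form $\NS_{\mathbb{F}}(\mathcal{C})\log\bigl(\NS_{\mathbb{F}}(\mathcal{C})\rank(g)/en\bigr) - 6n\log e/\rank(g) - \log k$ in Theorem~\ref{th:ns-lifting}, which degenerates to a trivial bound when $\rank(g)$ is too small relative to $n/\NS_{\mathbb{F}}(\mathcal{C})$.

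The second gap is the final summation. Even granting the lemma, the indicator monomials attached to the leaves do not sum to $1$ over $\{0,1\}^n$: the protocol's rectangles partition $\mathcal{X}^n\times\mathcal{Y}^n$, but their images under $g^n$ are subcubes of $\{0,1\}^n$ that can, and typically do, overlap (many preimages $(x,y)$ with the same $z=g^n(x,y)$ can lie in distinct rectangles). So the proposed identity $\sum_j \mathcal{E}(C_j)q_j = 1$ is not established, and without it you have no Nullstellensatz refutation. Obtaining a genuine partition of the cube is exactly what a simulation theorem buys you, which is again the object that cannot be had here under only a rank hypothesis.
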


An important special case of our generalized theorem is when the gadget $g$ is the equality function. In this work, we apply our theorem to resolve two open problems in proof complexity and circuit complexity. Both solutions depend crucially on the ability to use the equality gadget.

We note that lifting with the equality gadget has recently been the focus of another paper. Loff and Mukhopadhyay \cite{LM19Lifting} observed that
a lifting theorem for \emph{total} \emph{functions} with the
equality gadget can be proven using a rank argument. Surprisingly, they also observed that it is \emph{not} possible to lift query complexity to communication complexity for arbitrary relations! Concretely, \cite{LM19Lifting} give an example of a relation with linear query complexity but whose composition with equality has only polylogarithmic communication complexity.
Nonetheless, they are able to prove a lifting theorem for general relations using the equality gadget by replacing standard query complexity with a stronger complexity measure (namely, the $0$-query complexity of the relation).

Unfortunately, we cannot use either of the lifting theorems of \cite{LM19Lifting} for our applications. Specifically, in our applications we lift a search problem (and therefore cannot use their result for total functions), and this search problem has small $0$-query complexity (and therefore we cannot use their lifting theorem for general relations). Indeed, this shows that our lifting theorem is incomparable to the results of \cite{LM19Lifting}, even when specialized to the equality gadget. We note that our theorem, too, bypasses the impossibility result of \cite{LM19Lifting} by using a stronger complexity measure, which in our case is the Nullstellensatz degree.

\subsection{A Separation in Proof Complexity}
The main application of our lifting theorem is the first separation in proof complexity between cutting planes proofs with high-weight versus low-weight coefficients. The cutting planes proof-system is a proof system that can be used to refute an unsatisfiable CNF by translating it into a system of integer inequalities and showing that this system has no integer solution. The latter is achieved by a sequence of steps that derive new integer inequalities from old ones, until we derive the inequality $0 \ge 1$ (which clearly has no solution). The efficiency of such a refutation is measured by its length (i.e., the number of steps) and its space (i.e., the maximal number of inequalities that have to be stored simultaneously during the derivation).

The standard variant of the cutting planes proof system, commonly denoted by CP, allows the inequalities to use coefficients of arbitrary size. However, it is also interesting to consider the variant in which the coefficients are polynomially bounded, which is commonly denoted by \cpstar. This gives rise to the natural question of the relative power of CP vs. \cpstar: are they polynomially equivalent or is there a super-polynomial length separation? This question appeared in~\cite{BC96CuttingPlanes} and remains stubbornly open to date.
In this work we finally make progress by exhibiting a setting in which unbounded coefficients afford an exponential increase in proof power.

\begin{theorem}\label{th:tradeoff-intro}
There is a family of CNF formulas of size $N$
that have cutting planes refutations of length $\bigohtilde{N^2}$ and space $O(1)$, but
for which any refutation in length $L$ and space $s$ with polynomially bounded
coefficients must satisfy $s \log L = \bigomegatilde{N}$.
\end{theorem}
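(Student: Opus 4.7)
The plan is to apply Theorem~\ref{cor:nss-communication-eq-intro} with the equality gadget to a pebbling formula over a carefully chosen DAG $G$ on $n$ vertices. We want $G$ such that the reversible pebbling price $\rpeb(G)$ (equivalently, by the ``important technical ingredient'' advertised in the abstract, the Nullstellensatz degree $\NS_{\mathbb{F}}(\pebblingformula[G])$ over any field $\mathbb{F}$) is $\bigomegatilde{n}$, while simultaneously $G$ admits a compact black pebbling sequence that a CP refutation can exploit. Explicit families from the pebbling literature (for example the permutation-type graphs used in prior length-space tradeoffs) meet both requirements. Compose this outer formula with the equality gadget $\equalgadget_k$ on $k = \Theta(\log n)$-bit strings; the resulting CNF has size $N = \tilde{\Theta}(n)$ and will be the hard instance.

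\paragraph{Upper bound.}
The $\bigohtilde{N^2}$-length, $O(1)$-space CP refutation follows the pebbling strategy. The gadget $\equalgadget_k(x,y)$ can be expressed in CP by the two inequalities $\sum_{i < k} 2^{i}(x_i - y_i) \ge 0$ and $\sum_{i < k} 2^{i}(y_i - x_i) \ge 0$, whose coefficients are of magnitude up to $2^{k-1}$. These are polynomial in $N$, so CP may use them, but they are forbidden in \cpstar once $k$ is chosen large enough. Using these large-coefficient descriptions we simulate each pebbling move by a handful of additions and rescalings, accumulating the derivation into a single running inequality kept in memory; this yields $O(1)$ line space and $\bigohtilde{N^2}$ total length.

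\paragraph{Lower bound.}
For the lower bound, invoke Theorem~\ref{cor:nss-communication-eq-intro}. The rank of $\equalgadget_k$ is $2^k - 1$, which for $k = \Theta(\log n)$ is at least $cn/\NS_{\mathbb{F}}(\pebblingformula[G])$, so the rank hypothesis holds. Combining with $\NS_{\mathbb{F}}(\pebblingformula[G]) = \rpeb(G) = \bigomegatilde{n}$ yields
\[
   \cc\bigl(\Search(\pebblingformula[G] \circ \equalgadget_k^{n})\bigr) \;=\; \bigomegatilde{N} .
\]
To translate this communication lower bound into a \cpstar length-space tradeoff, invoke the standard reduction: any \cpstar refutation of length $L$ and line space $s$ yields a deterministic two-party protocol for the associated search problem of cost $\bigoh{s \log L}$, because polynomially bounded coefficients ensure each inequality is describable in $O(\log N)$ bits and the players only need to maintain the current space-$s$ configuration together with a pointer into the length-$L$ proof. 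This forces $s \log L = \bigomegatilde{N}$.

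\paragraph{Main difficulties.}
The two technical obstacles are (i) realising the CP refutation in genuinely constant line space, for which one needs an accumulator-style sum of pebbling axioms with carefully chosen integer weights rather than the more natural proof that maintains one inequality per currently pebbled vertex; and (ii) making the \cpstar-to-communication reduction tight enough to yield $s\log L$ rather than the weaker $s\log L \cdot \polylog N$, for which polynomially bounded coefficients are exactly what is needed to encode configurations succinctly. Both points build on existing machinery in the cutting-planes literature; the genuine novelty is that the newly lifted equality gadget is precisely what cooks up a formula on which \cpstar is forced to simulate ``large-coefficient reasoning'' that it provably cannot afford.
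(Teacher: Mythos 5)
Your proposal follows the same high-level blueprint as the paper (pebbling formulas lifted with equality, Nullstellensatz degree equals reversible pebbling price, the new lifting theorem, the cutting-planes-to-communication reduction, and an accumulator-style constant-space CP refutation), but two quantitative and conceptual errors break the argument as written.

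\textbf{Gadget arity.} You set the equality gadget on $k = \Theta(\log n)$ bits and claim the lifted CNF has size $N = \tilde{\Theta}(n)$. This is false. The pebbling clause $\olnot{z_u} \vee \olnot{z_v} \vee z_w$ becomes, after substitution and expansion into CNF, roughly $2k\cdot 4^k$ clauses (the two disequality constraints each contribute $2^k$ clauses, and equality contributes $2k$). With $k = \Theta(\log n)$ this is $n^{\Theta(1)}$ clauses per original clause, so $N$ is a polynomial in $n$ of degree strictly greater than $1$, and the lower bound $s\log L = \tilde\Omega(n)$ becomes only $\tilde\Omega(N^{1/c})$ for some $c>1$ --- strictly weaker than what the theorem claims. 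The paper instead uses $q = \Theta(\log\log n)$: the rank condition of the lifting theorem demands $\rank(\eq_q) = 2^q \geq \Theta(nk/\NS_{\mathbb{F}})$, and since $\NS_{\mathbb{F}}(\pebblingformula) = \rpeb(G) = \Omega(n/\log n)$ and the pebbling formula has constant width, it suffices that $2^q = \Theta(\log n)$. With $q = O(\log\log n)$ the lifted formula has only $\tilde O(n)$ variables and clauses, so $N = \tilde\Theta(n)$ genuinely holds.

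\textbf{Where the large coefficients actually live.} You attribute the CP/\cpstar gap to the coefficients $2^{k-1}$ appearing in the pair of inequalities encoding a single gadget instance, and say these are ``forbidden in \cpstar once $k$ is chosen large enough.'' This is wrong on both counts: for $k \leq O(\log n)$ those coefficients are polynomial (indeed, with the correct $q = O(\log\log n)$ they are merely $O(\log n)$), so \cpstar can certainly write them down. The \cpstar lower bound is not a syntactic impossibility --- it comes entirely from the communication-complexity argument via the reduction in Lemma~\ref{lem:cp-tradeoff-reduction}, whose cost $O(s(c+\log n)\log L)$ is only useful when the coefficient length $c$ is polylogarithmic. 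The coefficients that CP crucially exploits but \cpstar cannot are the ones in the accumulator $L(U) = \sum_{u_i \in U} K^i L(x_{u_i}, y_{u_i})$, which pack a conjunction of up to $n$ linear equalities into a single pair of inequalities; here the weight $K^{\setsize{U}}$ can be exponential in $n$, and this is exactly what lets the CP refutation run in $O(1)$ line space. You gesture at this in the ``main difficulties'' paragraph, but the body of the argument gets the mechanism backwards. Relatedly, your claim that the reduction is tight enough to avoid a $\polylog N$ loss is unnecessary (and not achieved in the paper): the theorem is stated with $\tilde\Omega$, and the paper's bound is $s\log L = \Omega(n/\log^2 n)$, which absorbs the polylogarithmic loss from the $O(\log n)$ coefficient bit size.
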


Our result is the first result in proof complexity demonstrating
any situation where high-weight coefficients are more powerful than
low-weight coefficients.
In comparison, for computing Boolean functions, the relative power of high-weight and low-weight linear threshold functions has been understood for a long time.
The greater-than function can be computed by
high-weight threshold functions, but not by low-weight threshold functions, and
weights of length polynomial in $n$ suffice \cite{Muroga-book} for Boolean functions.
For higher depth threshold formulas, it is known that depth-$d$ threshold
formulas of high-weight can efficiently be computed by
depth-$(d+1)$ threshold formulas of low-weight \cite{Goldmann-threshold}.

In contrast to our near-complete knowledge of high versus low weights for functions,
almost nothing is known about the relative power of
high versus low weights in the context of proof complexity.
Buss and Clote \cite{BC96CuttingPlanes}, building on work by Cook, Coullard, and Turán \cite{CCT87ComplexityCP}, proved an analog of Muroga's result for cutting planes, showing that weights
of length polynomial in the length of the proof suffice. Quite remarkably, this result is not known to hold for
other linear threshold proof systems:
there is \emph{no} nontrivial upper bound
on the weights for more general linear threshold propositional
proof systems (such as \emph{stabbing planes}~\cite{BFIKPPR18Stabbing}, and Krajíček's threshold logic proof system~\cite{Krajicek95Frege} where one can additionally branch on linear threshold formulas).
Prior to our result,
there was no separation between high and low weights, for any
linear threshold proof system.

\subsection{A Separation in Circuit Complexity}

A second application of our lifting theorem relates to monotone real circuits,
which were introduced by Pudl\'{a}k~\cite{Pudlak97LowerBounds}.
A monotone real circuit is a generalization of monotone Boolean circuits
where each gate is allowed to compute any non-decreasing real function
of its inputs, but the inputs and output of the circuit are Boolean.
A formula is a tree-like circuit, that is, every gate has fan-out one.
The first (exponential) lower bound for monotone real circuits
was proven already in~\cite{Pudlak97LowerBounds} by
extending the lower bound for computing the clique-colouring function with monotone Boolean
circuits~\cite{Razborov85LowerBounds,AB87Monotone}.
This lower bound, together with a generalization of the interpolation
technique~\cite{Krajicek97Interpolation} which applied only to
\cpstar, was used by Pudl\'{a}k
to obtain the first exponential lower bounds for CP.

Shortly after monotone real circuits were introduced,
there was an interest in understanding the
power of monotone real computation
in comparison to monotone Boolean computation.
By extending techniques in~\cite{RM99Separation},
Bonet \etal prove that
there are functions with polynomial size monotone Boolean
circuits that require monotone real formulas of exponential size~\cite{Johannsen98Lower,BEGJ00RelativeComplexity}.
This illustrates the power of DAG-like computations in comparison
to tree-like.
In the other direction, we would like to know whether
monotone real circuits are exponentially stronger
than monotone Boolean circuits.
Rosenbloom~\cite{Rosenbloom97Monotone} presented an elegant,
simple proof that monotone real formulas are exponentially
stronger than (even non-monotone) Boolean circuits,
since slice functions can be computed by
linear-size monotone real formulas,
whereas by a counting argument we know that most slice functions
require exponential size Boolean circuits.

The question of finding explicit functions that demonstrate that
monotone real circuits are stronger than general Boolean circuits
is much more challenging since it involves proving explicit lower bounds
for Boolean circuits---a task that seems currently
completely out of reach.
A more tractable problem is that of finding explicit functions
showing that
monotone real circuits or formulas are stronger than
\emph{monotone} Boolean circuits or formulas,
but prior to this work, no such separation was known either.
We provide an explicit separation for \emph{monotone formulas}, that is, we
 provide a family of explicit functions that can be computed with monotone real
 formulas of near-linear size but require exponential monotone Boolean formulas.
This is the first explicit example that illustrates
the strength of monotone real computation.
\begin{theorem}\label{th:monotone-intro}
  There is an explicit family of functions $f_n$ over
  $\bigoh{n\polylog n}$ variables that can be computed by
  monotone real formulas of size $\bigoh{n\polylog n}$ but for which every monotone
  Boolean formula requires size $2^{\bigomega{n/\log n}}$.
\end{theorem}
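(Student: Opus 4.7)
The plan is to apply our lifting theorem (\refthm{cor:nss-communication-eq-intro}) with a small equality gadget to the pebbling formula of a hard DAG, converting the resulting communication lower bound into a monotone Boolean formula lower bound via Karchmer--Wigderson duality, while giving a matching upper bound on the monotone real formula size by a direct construction that exploits the ability to compute equality using monotone real arithmetic.

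\textbf{Lower bound.} I would fix an explicit family of DAGs $G_n$ on $n$ vertices with reversible pebbling price $\rpeb(G_n) = \bigomega{n/\log n}$, and use the identification highlighted in the abstract, $\NS_{\mathbb{F}}(\pebblingformula[G_n]) = \rpeb(G_n)$, to get Nullstellensatz degree $\bigomega{n/\log n}$ over any field. Taking $g = \eqloglog$ on $q = \Theta(\log \log n)$ bits, whose communication matrix is the $2^q \times 2^q$ identity and hence has rank $2^q = \bigomega{\log n}$, the hypothesis $r \geq c n/\NS$ is satisfied for large $n$. \refthm{cor:nss-communication-eq-intro} then gives $\bigomega{n/\log n}$ deterministic communication complexity for $\Search(\pebblingformula[G_n] \circ \eqloglog^n)$. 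Via the standard KW correspondence, this search problem is the Karchmer--Wigderson relation of a canonical monotone function $f_n$ on $\bigoh{n \log \log n} = \bigoh{n \polylog n}$ variables, yielding the desired $2^{\bigomega{n/\log n}}$ lower bound on monotone Boolean formula size.

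\textbf{Upper bound.} For the matching monotone real upper bound, I would construct a monotone real formula for $f_n$ directly. The key observation is that $\eqloglog(a, b) = \Land_i [(a_i \wedge b_i) \vee (\olnot{a_i} \wedge \olnot{b_i})]$; introducing auxiliary ``negation-copy'' variables for $\olnot{a_i}$ and $\olnot{b_i}$ turns this into a genuinely monotone formula of size $\bigoh{\log \log n}$ in the doubled inputs, which is a standard trick for handling non-monotone gadgets on the upper-bound side of monotone KW lifting. The outer pebbling structure then admits a monotone real formula of size $\bigoh{n}$ obtained by following a linear-length pebbling strategy on $G_n$ and combining gadget outputs with real-valued monotone gates; composing with one equality subformula per gadget gives a monotone real formula of total size $\bigoh{n \polylog n}$ computing $f_n$.

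\textbf{Main obstacle.} The most delicate step is defining $f_n$ precisely so that both directions go through simultaneously: its KW relation must reduce to $\Search(\pebblingformula[G_n] \circ \eqloglog^n)$ (for the lower bound) while the doubled, monotonized version must admit a near-linear monotone real formula (for the upper bound). This requires carefully specifying the variable partition, the polarities assigned to each gadget coordinate, and the set of ``good'' and ``bad'' inputs of $f_n$, so that the lifting theorem applies to a genuine KW game of a total monotone function and, at the same time, the direct real construction correctly evaluates that same function on all inputs rather than only on the ``lifted'' promise set.
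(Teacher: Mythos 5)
Your lower-bound outline matches the paper's: lift the Nullstellensatz degree of a hard pebbling formula (equal to $\rpeb(G_n) = \Omega(n/\log n)$ by Lemma~\ref{lem:pebbling-nss}) through the equality gadget on $\Theta(\log\log n)$ bits, and translate the resulting communication bound into a monotone Boolean formula depth bound via Karchmer--Wigderson. That part is sound. The gap is in the upper bound, and you have correctly identified where the delicacy lies, but your proposed resolution does not work as stated.

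First, the function $f_n$ cannot be the ``doubled, monotonized'' pebbling formula you describe. The paper obtains $f_n$ abstractly from Lemma~\ref{lem:kw-from-search} (the Razborov--G\'al correspondence): it is a partial monotone function on $2^k$ variables, where $k$ is the \emph{nondeterministic} communication complexity of the search problem, and its input coordinates are indexed by certificates, not by the lifted variables $x_{i,j}, y_{i,j}$. There is no a priori reason that the KW game of a hand-built monotonized formula over the lifted variables reduces to $\Search(\pebblingformula \circ \eqloglog)$; establishing such a reduction would be a separate (nontrivial) argument, and the paper avoids it by working with the Razborov--G\'al function directly. Consequently your direct formula construction on the doubled variables does not compute the function to which the lower bound applies.

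Second, and independently, ``following a linear-length pebbling strategy and combining gadget outputs with real-valued monotone gates'' yields a DAG-like monotone real \emph{circuit}, not a formula: each vertex of $G_n$ is re-used by all of its successors, so the underlying computation graph has high fan-out. Unfolding it into a tree would blow up exponentially for hard pebbling graphs. This is exactly the obstacle the paper calls out explicitly, and it is why the upper bound goes through a \emph{tree-like semantic} cutting planes refutation of $\pebblingformula \circ \eqloglog$ of size $O(n \polylog n)$ (Theorem~\ref{th:tree-like}); such a refutation is then converted to a locally real protocol (Lemma~\ref{lem:proof-to-proto}) and hence to a monotone real formula for $f_n$ (Lemma~\ref{lem:proto-to-ckt}). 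Obtaining the short tree-like refutation is itself a substantive step: it exploits the fact that equalities can be compressed into a single inequality using large coefficients, and it must be \emph{semantic} rather than syntactic cutting planes to close. Your proposal omits this entire mechanism, which is the real technical content of the upper bound.
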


Another motivation for studying lifting theorems with simple gadgets,
and in particular the equality gadget, are connections with proving \emph{non-monotone} formula size lower bounds.
As noted earlier, lifting theorems have been extremely successful in proving
monotone circuit lower bounds, and it has also been shown to be useful in some computational settings that are only ``partially'' monotone; notably monotone span programs \cite{RPRC16Exponential, PR17StronglyExponential,PR18LiftingNS} and extended formulations \cite{GJW18ExtensionComplexity, KMR17ApproximatingRectangles}.

This raises the question of to what extent lifting techniques can help prove \emph{non-monotone} lower bounds.
The beautiful work by Karchmer, Raz and Wigderson \cite{KRW95Superlogarithmic} initiated such an approach for separating $\Pclass$ from $\NC^1$---this opened up a line of research popularly known as the \emph{KRW conjecture}.
Intriguingly, steps towards resolving the KRW conjecture are
closely connected to proving lifting theorems for the equality gadget.
The first major progress was made in \cite{EIRS}
where lower bounds for the universal
relation game are proven, which is an important special case of the KRW conjecture.
Their result was recently improved in several papers \cite{GMWW17TowardBetter,HW90Composition,KM18ImprovedComposition}, and
Dinur and Meir \cite{Dinur-Meir} gave
a new top-down proof of the
state-of-the-art $\Omega(n^3)$ formula-size lower bounds via
the KRW approach.

The connection to lifting using the equality gadget is obtained by
observing that the KRW conjecture involves communication problems
in which Alice and Bob are looking for a bit on which they
differ---this is exactly an \emph{equality} problem.
Close examination of the results in \cite{EIRS,HW90Composition}
show that they are equivalent to proving lower bounds for the search problem associated with
the pebbling formula when lifted with a $1$-bit equality gadget on a \emph{particular} graph \cite{Pitassi16note}.
Our proof of \refthm{th:tradeoff} actually establishes near-optimal lower bounds on the communication complexity
of the pebbling formula lifted with equality for \emph{any} graph,
but where the size of the equality is not 1.
Thus if our main theorem could be improved
with one-bit equality gadgets this would imply the results of \cite{EIRS,HW90Composition} as a direct corollary and with significantly better parameters.

\subsection{Overview of Techniques}

We conclude this section by giving a brief overview of our techniques,
also trying to
convey some of the simplicity of the proofs which we believe is
an extra virtue of these results.

\paragraph{Lifting theorem}

In order to prove their lifting theorem, Pitassi and Robere \cite{PR18LiftingNS} defined a notion of a ``good'' gadget. They then showed that if we compose a polynomial $p$ with a good gadget $g$, the rank of the resulting matrix $p \circ g^n$ is determined \emph{exactly} by the non-zero coefficients of $p$ and the rank of $g$. Their lifting theorem follows by using this correspondence to obtain bounds on the ranks of certain matrices, which in turn yield the required communication complexity lower bound.

In this work, we observe that every gadget $g$ can be turned into a good gadget using a simple transformation. This observation allows us to get an approximate bound on the rank of $p \circ g^n$ for any $g$ with nontrivial rank. While the correspondence we get in this way is only an approximation and not an exact correspondence as in \cite{PR18LiftingNS}, it turns out that this approximation is sufficient to prove the required lower bounds. We thus get a lifting theorem that works for every gadget $g$ with sufficiently large rank.

\paragraph{Cutting planes separation}
The crux of our separation between CP and \cpstar is the following observation: CP can encode a conjunction of linear equalities with a single equality, by using exponentially large coefficients. This allows CP refutations to obtain a significant saving in space when working with \emph{linear equalities}. This saving is not available to \cpstar, and this difference between the proof systems allows the separation.

In order to exploit this observation, one of our main innovations is to concoct the separating formula.
 To do this, we must come up with a candidate formula
that can only be refuted by reasoning about a large conjunction of
linear equalities, to show that
cutting planes (CP) can efficiently refute it, and to show that low-weight cutting planes (\cpstar) cannot.

To find such a candidate formula family we resort to \emph{pebbling formulas}
which have played a major role in many proof complexity trade-off results. Interestingly,
pebbling formulas have short resolution proofs that reason in terms of large conjunctions
of literals. When we lift such formulas with the equality gadget this proof can be simulated in
cutting planes by using the large coefficients to encode many equalities with a single equality.
This yields cutting planes refutation of any pebbling formula in quadratic length and constant space.

On the other hand we prove our time-space lower bound
showing that any \cpstar refutation requires large length or large space for the same formulas.
To prove this lower bound, the first step is to instantiate the
connection in~\cite{HN12VirtueSuccinctProofs} linking time/space bounds
for many proof systems to communication complexity lower
bounds for lifted search problems.
This connection means that we can obtain the desired \cpstar-lower bounds for our
formulas $\pebblingformula \circ \eq^n$ by proving communication complexity lower bounds for the corresponding
lifted search problem $\Search(\pebblingformula) \circ \eq^n$.

In order to prove the latter communication lower bounds, we prove lower bounds on the Nullstellensatz degree of $\Search(\pebblingformula)$, and then invoke our new lifting theorem to translate them into communication lower bounds for $\Search(\pebblingformula) \circ \eq^n$.
To show the Nullstellensatz lower bounds, we prove the following lemma, which establishes an equivalence between Nullstellsatz degree and the \emph{reversible} pebbling price, and may be interesting in its own right. (We remark that connections between Nullstellensatz degree and pebbling were previously shown in \cite{BCIP02Homogenization}; however their result was not tight.)

\begin{lemma}\label{lem:pebbling-nss-intro}
For any field $\mathbb{F}$ and any directed acyclic graph $G$ the Nullstellensatz degree
of $\pebblingformula$ is equal to the reversible pebbling price of $G$.
\end{lemma}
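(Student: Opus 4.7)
The plan is to prove the equality by establishing $\NS_{\mathbb{F}}(\pebblingformula) \leq \rpeb(G)$ and $\NS_{\mathbb{F}}(\pebblingformula) \geq \rpeb(G)$ uniformly in the field $\mathbb{F}$. I first fix the polynomial translation: for every non-sink vertex $v$ (including sources, whose predecessor set is empty), the axiom polynomial is
\[ A_v \;=\; \prod_{u \in \pred(v)} x_u \cdot (1 - x_v) \;=\; m_{\pred(v)} - m_{\pred(v) \cup \{v\}}, \]
with $m_S = \prod_{u \in S} x_u$; for each sink $t$ there is an additional sink axiom $A_t^{\star} = x_t$ alongside the pebbling-implication axiom at~$t$.

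For the upper bound, given a reversible pebbling $\emptyset = C_0, C_1, \ldots, C_T$ of price $p$ whose final configuration $C_T$ contains a sink, each single-pebble move at vertex $v_i$ satisfies $\pred(v_i) \subseteq C_i \cap C_{i+1}$ and therefore
\[ m_{C_{i+1}} - m_{C_i} \;=\; \pm\; m_{(C_i \cap C_{i+1}) \setminus \pred(v_i)} \cdot A_{v_i}, \]
a polynomial of total degree $|C_i \cap C_{i+1}| + 1 \leq p$. Telescoping expresses $m_{C_T} - 1$ as a combination of pebbling-axiom multiples of degree $\leq p$, and absorbing $m_{C_T}$ through a single sink-axiom multiple $m_{C_T \setminus \{t\}} \cdot A_t^{\star}$ (of degree $|C_T| \leq p$) yields a Nullstellensatz refutation of degree $\leq p$.

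The lower bound is the main content and is proved by constructing a ``design''. Assume $\rpeb(G) > d$; I would build an $\mathbb{F}$-linear functional $L : \mathbb{F}[\vec x]_{\leq d} \to \mathbb{F}$ with $L(1) = 1$ that annihilates every degree-$d$ axiom multiple, thereby ruling out refutations of degree $\leq d$. Let $H_d^{\star}$ be the sink-free configuration graph whose vertices are subsets $C \subseteq V(G)$ of size at most $d$ containing no sink, with an edge $\{C, C \cup \{v\}\}$ for every valid pebbling move at a non-sink vertex $v$ inside the budget~$d$; call $C$ \emph{pre-sink} if some sink $t$ satisfies $\pred(t) \subseteq C$, $t \notin C$, and $|C| + 1 \leq d$. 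The crucial combinatorial fact is that $\rpeb(G) > d$ is equivalent to the statement that no pre-sink configuration lies in the connected component of $\emptyset$ in $H_d^{\star}$; the nontrivial direction follows by inspecting the first moment a sink is placed in any low-price pebbling. Working modulo the Boolean axioms $x_v^2 - x_v$, define
\[ L(m_C) \;=\; \begin{cases} 1 & \text{if $C$ is sink-free and lies in the component of $\emptyset$ in $H_d^{\star}$,} \\ 0 & \text{otherwise,} \end{cases} \]
extended $\mathbb{F}$-linearly. Then $L(1) = 1$, while a brief case analysis shows that $L$ vanishes on every degree-$d$ axiom multiple: non-sink pebbling-axiom multiples reduce to differences along edges of $H_d^{\star}$ (equal $L$-values); implication-axiom multiples at a sink $t$ reduce to differences $m_C - m_{C \cup \{t\}}$ where $C$ is pre-sink (both sides killed by hypothesis); and sink-axiom multiples reduce to monomials containing a sink (killed). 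Any purported refutation therefore contradicts $L(1) = 1$.

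The main obstacle is calibrating the configuration graph correctly. A naïve graph that also includes sink-containing configurations would wrongly connect $\emptyset$ to certain sink-bearing configurations via implication-axiom moves at sink vertices, preventing $L$ from killing the sink-axiom multiples; restricting to sink-free configurations and imposing the pre-sink boundary condition is the fix, and once the combinatorial equivalence with $\rpeb(G) > d$ is in place the remaining verifications are routine bookkeeping that runs unchanged over any field.
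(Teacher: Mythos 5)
Your proposal is correct, and it differs from the paper's proof in a meaningful way. The paper establishes the statement entirely through the design characterization of Nullstellensatz degree (Theorem~\ref{thm:ns-design}): it argues that any $d$-design for $\pebblingformula$ is forced to be the indicator of configurations reachable from $\emptyset$ using at most $d$ pebbles, and that this functional satisfies the sink axiom precisely when $d < \rpeb(G)$; both inequalities then follow simultaneously from the equivalence ``no degree-$d$ refutation iff a $d$-design exists.'' Your lower bound constructs essentially the same object---your $L$ is that reachability indicator, restricted to sink-free configurations and extended by multilinearization---but you invoke only the easy soundness direction (applying an explicit design to a purported refutation gives $1=0$), so you never need the completeness half of the duality. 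Where you genuinely depart is the upper bound: instead of deducing a degree-$\rpeb(G)$ refutation from the non-existence of a design, you build one explicitly by telescoping the monomials $m_{C_i}$ along a reversible pebbling; this checks out---the identity $m_{C_{i+1}}-m_{C_i}=\pm\, m_{(C_i\cap C_{i+1})\setminus\pred(v_i)}\cdot A_{v_i}$ is valid for both placements and removals, the degree bound $\lvert C_i\cap C_{i+1}\rvert+1\le p$ holds since consecutive configurations differ in one vertex, and the final monomial $m_{C_T}$ is absorbed by one sink-axiom multiple. Your ``crucial combinatorial fact'' (for $\rpeb(G)>d$, no pre-sink configuration is reachable from $\emptyset$ in the sink-free configuration graph) is exactly the reachability argument underlying the paper's design, with the nontrivial direction correctly obtained by truncating a cheap pebbling at the first placement of the sink. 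What your route buys is a more self-contained and constructive argument: an explicit refutation with $\pm 1$ coefficients over any field, no appeal to the harder direction of the design theorem. What the paper's route buys is brevity given the available machinery, plus the slightly stronger structural fact that the low-degree design, when it exists, is essentially unique.
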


We remark that due to known lower and upper bounds in query and proof complexity, this lemma immediately implies that Nullstellensatz degree coincides for (deterministic) decision tree and parity decision tree complexity. We record this here as a corollary, as it may be of independent interest, and provide its proof in Appendix~\ref{sec:query-complexity}.
\begin{corollary}
\label{cor:query-complexity}
For any field $\mathbb{F}$ and any directed acyclic graph $G$, the Nullstellensatz degree over $\mathbb{F}$ of $\pebblingformula$, the decision tree depth of $\Search(\pebblingformula)$, and the parity decision tree depth of $\Search(\pebblingformula)$ coincide and are equal to the reversible pebbling price of $G$.
\end{corollary}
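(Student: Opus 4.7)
The plan is to close a chain of inequalities that sandwich both tree measures between $\rpeb(G)$ and itself, invoking Lemma~\ref{lem:pebbling-nss-intro} to identify $\NS_{\mathbb{F}}(\pebblingformula)$ with $\rpeb(G)$ for every field $\mathbb{F}$. Specifically, I would establish
\[
  \rpeb(G) \;=\; \NS_{\mathbb{F}_2}(\pebblingformula)
  \;\le\; \paritydecisiontree(\Search(\pebblingformula))
  \;\le\; \decisiontree(\Search(\pebblingformula))
  \;\le\; \rpeb(G),
\]
which, combined with the Lemma applied to arbitrary $\mathbb{F}$ to pin $\NS_{\mathbb{F}}$ to the common value, forces every quantity listed in the corollary to equal $\rpeb(G)$.

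The leftmost equality is Lemma~\ref{lem:pebbling-nss-intro} specialized to $\mathbb{F} = \mathbb{F}_2$. The first inequality is the standard extraction argument: from a parity decision tree of depth $d$ solving $\Search(\pebblingformula)$, for each leaf I would write the indicator of its root-to-leaf path as a product of at most $d$ affine $\mathbb{F}_2$-polynomials, use correctness to rewrite each leaf's falsified clause as an $\mathbb{F}_2$-combination of axioms weighted by these degree-$d$ polynomials, and sum over leaves (whose path indicators partition $1$) to obtain a Nullstellensatz refutation of degree $d$ over $\mathbb{F}_2$. The middle inequality is immediate because every ordinary decision tree is a parity decision tree that happens to query singleton parities.

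The main technical step is the last inequality, $\decisiontree(\Search(\pebblingformula)) \le \rpeb(G)$, which I would prove by converting a reversible pebbling of $G$ with price $p$ into a decision tree of depth $p$. The tree simulates the pebbling under the invariant that at every node the set $P$ of currently pebbled vertices coincides with the variables queried along the path, all revealed to equal $1$. When the pebbling places a pebble on a fresh vertex $v$, the tree queries $x_v$: branch $0$ at a source $v$ falsifies the source axiom, branch $0$ at an internal $v$ (whose predecessors are already in $P$, hence $1$) falsifies the propagation axiom at $v$, and branch $1$ extends the invariant. Pebble removals correspond to no-ops. When the sink is finally pebbled with value $1$, the sink axiom is the falsified clause. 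To make the depth match the pebble count $p$ rather than the (potentially exponentially larger) total number of pebbling moves, decision tree nodes arising from the same pebble configuration are identified, so that after a vertex is unpebbled and later re-pebbled no fresh query is spent but the tree descends into the subtree determined by the earlier branch.

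The hard part is precisely this last caveat: guaranteeing that the \emph{active} queries on any root-to-leaf path coincide with the current pebble configuration, so that the depth is controlled by $\max_t |P_t| = p$ rather than by the length of the pebbling sequence. Everything else is either direct from Lemma~\ref{lem:pebbling-nss-intro} or a standard proof-complexity reduction.
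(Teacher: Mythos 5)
Your chain
$\rpeb(G)=\NS_{\mathbb{F}_2}\le\paritydecisiontree\le\decisiontree\le\rpeb(G)$
is exactly the chain the paper uses (it also appeals to
\reflem{lem:pebbling-nss} to cover arbitrary~$\mathbb{F}$), and your
sketch of the PDT-to-Nullstellensatz extraction matches the paper's
Lemma~\ref{lem:PDT-NS}. The middle inequality is trivial in both. So the
structure is correct and, modulo the last step, identical.

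The genuine gap is in $\decisiontree(\Search(\pebblingformula))\le\rpeb(G)$,
and you yourself flag it as ``the hard part.'' The simulation you propose
does not give a tree of depth~$p$: when the pebbling places a fresh
vertex you spend a query, and that query is never ``given back'' by a
pebble removal, so the depth of a path is the number of \emph{distinct}
vertices ever pebbled along it, which for reversible pebblings is
typically far larger than $\max_t|P_t|=p$ (e.g.\ a path of length $n$
has $\rpeb=O(\log n)$, yet its optimal pebbling touches $\Theta(n)$
distinct vertices). The proposed fix of ``identifying decision tree
nodes arising from the same pebble configuration'' does not rescue this:
identification would only help if the same configuration recurred
\emph{along a single root-to-leaf path}, which it never does (the
configurations along any pebbling are distinct in a reversible
pebbling), and merging across different paths turns the tree into a DAG
without shrinking any individual path. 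So the claimed invariant
``queried variables $=$ currently pebbled vertices'' is simply false
under this construction. The actual argument (Chan's theorem, reproved
in Appendix~\ref{sec:reversible} of the paper,
Proposition~\ref{pro:pebbling-to-dt}) is not a time-step simulation at
all; it is an induction on the \emph{non-static cost} in which the first
query is the \emph{earliest-placed pebble that persists to the end} of
the pebbling, and the two branches recurse on two carefully constructed
sub-pebblings (the suffix, and a reversal with the descendants of that
vertex erased). This restructuring is what makes the depth equal the
peak pebble count rather than the number of placements, and it is
exactly the idea missing from your sketch.
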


Using the above equivalence, we obtain near-linear Nullstellensatz degree refutations for a family of graphs with maximal pebbling price,
which completes our time/space lower bound for \cpstar.
However, in order to separate CP and \cpstar we require a very specific gadget and lifting theorem.
Specifically, the gadget should
be strong enough, so that lifting holds for deterministic
communication complexity (which can efficiently simulate
small time/space \cpstar proofs), but on the other hand also weak enough,
so that lifting does not hold
for stronger communication models (randomized, real) that can efficiently
compute high-weight inequalities.
The reason that we are focusing on the equality gadget is that it
hits this sweet spot---it requires large
deterministic communication complexity, yet has short randomized
protocols, and furthermore equalities can be represented with a
single pair of inequalities.

\sfdrcomment{An extra virtue of our result is that our proof is elegant and, at least in hindsight, simple.}

\paragraph{Separation for monotone formulas}
As was the case for the separation between CP and \cpstar,
to obtain a separation between monotone Boolean formulas
and monotone real formulas we must find a function
that has just the right level of hardness.

To obtain a size lower bound for monotone Boolean
formulas we invoke the characterization of formula depth by
communication complexity of the Karchmer--Wigderson game~\cite{KW90Monotone}.
By choosing a function that has the same
Karchmer--Wigderson game as the search problem of a lifted pebbling
formula, we get a depth lower bound for monotone Boolean formulas from the
communication lower bound of the search problem.
Note that since monotone Boolean formulas can be balanced,
a depth lower bound implies a size lower bound.

In the other direction, we would like to show that these
functions are easy for real computation.
Analogously to the Karchmer--Wigderson relation,
it was shown in~\cite{HP18Note} that there is a correspondence
between
real DAG-like communication protocols (as defined in~\cite{Krajicek98Interpolation})
and monotone real circuits.
Using this relation,
a small monotone real \emph{circuit} can be extracted from
a short CP proof of the lifted pebbling formula.
However, we would like to establish a monotone real \emph{formula}
upper bound. One way to achieve this is by finding small tree-like CP
refutations of lifted pebbling formulas.
The problem is that for many gadgets lifted pebbling formulas
require exponentially long tree-like proofs. Nevertheless, for pebbling formulas lifted with the equality gadget
we are able to exhibit a short \emph{semantic} tree-like CP
refutation, which via real communication yields a small
monotone real formula.

\subsection{Organization of This Paper}
Section~\ref{sec:preliminaries} contains formal definitions of concepts discussed above and some useful facts.
Our main lifting theorem is proven in Section~\ref{sec:lifting}. Section~\ref{sec:cp} is devoted to proving our separation
between high-weight and low-weight cutting planes.
In \refsec{sec:monotone} we prove the separation between monotone real and Boolean formulas.
We conclude in Section \ref{sec:conclusion} with some open problems.

\section{Preliminaries}
\label{sec:preliminaries}

In this section we review some background material from communication
complexity and proof complexity.

\subsection{Communication Complexity and Lifted Search Problems}

Given a function $g: \mathcal{X} \times \mathcal{Y} \rightarrow \mathcal{I}$, we denote by $g^n : \mathcal{X}^n \times \mathcal{Y}^n \rightarrow \mathcal{I}^n$ the function that takes as input $n$ independent instances of $g$ and applies $g$ to each of them separately.
A \emph{total search problem} is a relation $\mathcal{S} \subseteq \mathcal{I} \times \mathcal{O}$ such that for all $z \in \mathcal{I}$ there is an
$o \in \mathcal{O}$ such that $(z, o) \in \mathcal{S}$. Intuitively, $S$ represents the computational task in which we are given an input $z \in \mathcal{I}$ and would like to find an output $o \in \mathcal{O}$ that satisfies $(z, o) \in \mathcal{S}$.

An important example of a search problem, which has proved to be very useful for proof complexity results, comes from unsatisfiable $k$-CNF formulas.
Given a $k$-CNF formula $\mathcal{C}$ over variables $z_1,\ldots,z_n$, the search problem $\Search(\mathcal{C}) \subseteq \set{0,1}^n \times \mathcal{C}$ takes as input an assignment $z \in \{0,1\}^n$ and outputs a clause $C \in \mathcal{C}$ that is falsified by $z$.

Given a search problem $\mathcal{S} \subseteq \mathcal{I}^n \times \mathcal{O}$ with a product input domain and a function $g: \mathcal{X} \times \mathcal{Y} \rightarrow \mathcal{I}$, we define the \emph{composition} $\mathcal{S} \circ g^n \subseteq \mathcal{X}^n \times \mathcal{Y}^n \times \mathcal{O}$ in the natural way: $(x, y, o) \in \mathcal{S} \circ g$ if and only if $(g^n(x, y), o) \in \mathcal{S}$.
We remark that this composition notation extends naturally to functions: for instance, if $f : \mathcal{I}^n \rightarrow \mathbb{F}$ is a function taking values in some field $\mathbb{F}$, for example, then the composition $f \circ g^n$ is a $\mathcal{X}^n \times \mathcal{Y}^n$ matrix over $\mathbb{F}$.
Second, we remark that we will sometimes write $S \circ g$ instead of $S \circ g^n$ if $n$ is clear from context.

A \emph{communication search problem} is a search problem with a bipartite input domain $\mathcal{I} = \mathcal{A} \times \mathcal{B}$.
A communication protocol for a search problem $\mathcal{S} \subseteq \mathcal{A} \times \mathcal{B} \times \mathcal{O}$ is a strategy for a collaborative game where two players Alice and Bob hold $x \in \mathcal{A}, y \in \mathcal{B}$, respectively, and wish to output an $o \in \mathcal{O}$ such that $\left( (x,y), o \right) \in \mathcal{S}$ while communicating as few bits as possible.
Messages are sent sequentially until one player announces the answer and only depend on the input of one
player and past messages. The cost of a protocol is the maximum number
of bits sent over all inputs, and the communication complexity of a
search problem, which we denote by $\Pcc(\mathcal{S})$, is the minimum cost over all protocols that solve $\mathcal{S}$.
For more details on communication complexity, see, \eg,~\cite{KN97CommunicationCplx}.

Given a CNF formula $\formf$ on $n$ variables $z_1, z_2, \ldots, z_n$ and a Boolean function
$\gadgetname\colon \set{0,1}^q \times \set{0,1}^q \to \set{0,1}$, we
define a \emph{lifted formula} $\liftedformula{\formf}{\gadgetname^n}$ as follows.
For each variable $z_i$ of $\formf$, we
have $2 q$ new variables $x_{i,1}, \ldots , x_{i,q}, y_{i,1}, \ldots, y_{i,q}$.
For each clause $\clc \in \formf$ we replace each literal $z_i$ or $\neg z_i$ in $\clc$ by a CNF
encoding of either $\gadgetname(x_{i,1}, \ldots , x_{i,q}, y_{i,1}, \ldots, y_{i,q})$ or $\olnot{\gadgetname}(x_{i,1}, \ldots , x_{i,q}, y_{i,1}, \ldots, y_{i,q})$ according
to the sign of the literal. We then expand the resulting expression
into a CNF, which we denote by $\liftedformula{\clc}{\gadgetname}$, using de~Morgan's
rules. The substituted formula is
$\liftedformula{\formf}{\gadgetname} = \Union_{\clc \in
  \formf}\liftedformula{\clc}{\gadgetname}$.

For the sake of an example, consider the clause $u \vee \overline v$, and we will substitute with the equality gadget on two bits. 
Formally, we replace $u$ with $x_{u,1}x_{u,2} = y_{u,1}y_{u,2}$ and $v$ with $x_{v,1}x_{v,2} = y_{v,1}y_{v,2}$. 
We can encode a two-bit equality as the CNF formula \[ (x_1x_2 = y_1y_2) \equiv (\overline x_1 \vee y_1) \wedge (x_1 \vee \overline y_1) \wedge (\overline x_2 \vee y_2) \wedge (x_2 \vee \overline y_2), \]
and a two-bit disequality as the CNF formula \[ (x_1x_2 \neq y_1y_2) \equiv (\overline x_1 \vee \overline x_2 \vee \overline y_1 \vee \overline y_2) \wedge (\overline x_1 \vee x_2 \vee \overline y_1 \vee y_2) \wedge (x_1 \vee \overline x_2 \vee y_1 \vee \overline y_2) \wedge (x_1 \vee x_2 \vee y_1 \vee y_2). \]
So, in the clause $u \vee \overline v$, we would substitute $u$ for the CNF encoding of $x_{u,1}x_{u,2} = y_{u,1}y_{u,2}$ and $\overline v$ with the CNF encoding of $x_{v,1}x_{v,2} \neq y_{v,1}y_{v,2}$; finally, we would convert the new formula to a CNF by distributing the top $\vee$ over the $\wedge$s from the new CNF encodings.

While $\Search(\mathcal{C}) \circ \gadgetname^n$ is not the same problem
as $\Search(\formf \circ \gadgetname^n)$, we can reduce the former to the latter.
Specifically, suppose we are given a protocol $\Pi$ for $\Search(\formf \circ \gadgetname^n)$.
Consider the following protocol $\Pi'$ for $\Search(\mathcal{C}) \circ \gadgetname^n$:
Given an input $(x,y)$, the protocol $\Pi'$ interprets $(x,y)$ as an input to $\Pi$.
Now, assume that $\Pi'$ outputs on $(x,y)$ a clause $\cld$ of  $\liftedformula{\formf}{\gadgetname^n}$,
which was obtained from a clause $\clc$  of $\formf$. Then, the clause $\clc$ is a valid $\Search(\mathcal{C})$
on $(x,y)$, so $\Pi'$ outputs it.
Let us record this observation.

\begin{observation}
  \label{obs:search-lift-lift-search}
  $\Pcc(\Search(\formf \circ \gadgetname)) \geq \Pcc(\Search(\mathcal{C}) \circ \gadgetname)$ for any unsatisfiable CNF $\mathcal{C}$ and any Boolean gadget $g$.
\end{observation}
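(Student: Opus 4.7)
The plan is to execute the reduction sketched in the paragraph immediately preceding the observation and verify that the simulation adds no extra bits of communication. The claim is essentially structural: any protocol $\Pi$ solving $\Search(\liftedformula{\formf}{\gadgetname^n})$ already solves $\Search(\mathcal{C}) \circ \gadgetname^n$ after a purely local post-processing step, so the communication cost only goes down under the reduction.

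First I would fix an optimal deterministic protocol $\Pi$ for $\Search(\liftedformula{\formf}{\gadgetname^n})$ of cost $\Pcc(\Search(\liftedformula{\formf}{\gadgetname^n}))$. Given an input $(x,y) \in \mathcal{X}^n \times \mathcal{Y}^n$ to $\Search(\mathcal{C}) \circ \gadgetname^n$, Alice and Bob interpret $(x,y)$ directly as an input to $\Pi$ and execute it, producing a clause $\cld \in \liftedformula{\formf}{\gadgetname^n}$ that is falsified at $(x,y)$. By the construction $\liftedformula{\formf}{\gadgetname^n} = \Union_{\clc \in \formf}\liftedformula{\clc}{\gadgetname}$, every such $\cld$ arises as a conjunct in the distributed expansion of at least one clause $\clc \in \formf$. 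Both players can identify such a parent $\clc$ without further communication, using any public tie-breaking rule (for instance lexicographic order on the clauses of $\formf$), and output $\clc$ as the answer.

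Next I would verify correctness. By the de~Morgan construction, the CNF $\liftedformula{\clc}{\gadgetname}$ is semantically equivalent to the Boolean expression obtained by substituting each literal of $\clc$ by the appropriate $\gadgetname$ or $\olnot{\gadgetname}$ of the lifted variables. Hence if some conjunct $\cld \in \liftedformula{\clc}{\gadgetname}$ is falsified at $(x,y)$, then the whole expression is falsified at $(x,y)$, which is precisely to say that $\clc$ is falsified by the assignment $\gadgetname^n(x,y)$. Thus $(\gadgetname^n(x,y),\clc) \in \Search(\mathcal{C})$, as required. Since the post-processing uses no bits of communication, the new protocol has the same cost as $\Pi$, yielding the stated inequality.

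There is no real obstacle: the only piece of care is the book-keeping in the previous paragraph, namely the observation that the CNF encoding of a Boolean expression via distributivity is falsified on an input only if the underlying expression itself is falsified. This is immediate from the definition once one unfolds the lifting construction, so the observation is a soft structural fact whose proof is essentially the three-step simulation argument above.
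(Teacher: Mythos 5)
Your proof is exactly the reduction sketched in the paragraph preceding the observation in the paper: simulate a protocol for $\Search(\formf \circ \gadgetname^n)$, map the output clause $\cld$ back to a parent clause $\clc \in \formf$ locally, and observe this costs no extra communication. You have simply filled in the correctness verification (that falsifying a conjunct of $\liftedformula{\clc}{\gadgetname}$ at $(x,y)$ forces $\clc$ to be falsified at $\gadgetname^n(x,y)$) in slightly more explicit form than the paper does; the argument is identical.
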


\subsection{Nullstellensatz}

As a proof system, Nullstellensatz allows verifying that a set of
polynomials does not have a common root, and it can also be used to
refute CNF formulas by converting them into polynomials. It plays an
important role in our lower bounds.

  Let $\mathbb{F}$ be a field, and let $\mathcal{P} = \set{p_1 = 0, p_2 = 0, \ldots, p_m = 0}$ be an unsatisfiable system of polynomial equations in $\mathbb{F}[z_1, z_2, \ldots, z_n]$.
  A \emph{Nullstellensatz refutation} of $\mathcal{P}$ is a sequence of polynomials $q_1, q_2, \ldots, q_m \in \mathbb{F}[z_1, z_2, \ldots, z_n]$ such that $ \sum_{i=1}^m p_iq_i = 1$ where the equality is syntactic.
  The \emph{degree} of the refutation is $\max_{i} \deg(p_i q_i)$; the \emph{Nullstellensatz degree} of $\mathcal{P}$, denoted $\NS_\mathbb{F}(\mathcal{P})$, is the minimum degree of any Nullstellensatz refutation of $\mathcal{P}$.

Let $\mathcal{C} = C_1 \wedge C_2 \wedge \cdots \wedge C_m$ be an unsatisfiable CNF formula over Boolean variables $z_1, z_2, \ldots, z_n$. 
We introduce a standard encoding of each clause $C_i$ as a polynomial equation.
If $C$ is a clause then let $C^+$ denote the set of variables occurring positively in $C$ and $C^-$ denote the set of variables occurring negatively in $C$; with this notation we can write $ C = \bigvee_{z \in C^+} z \vee \bigvee_{z \in C^-}  \overline{z}. $
From $C$ define the polynomial \[\mathcal{E}(C) \equiv \prod_{z \in C^+}(1- z) \prod_{z \in C^-}z, \]  over formal variables $z_1, z_2, \ldots, z_n$.
Observe that $\mathcal{E}(C) = 0$ is satisfied (over $0/1$ assignments to $z_i$) if and only if the corresponding assignment satisfies $C$.
We abuse notation and let $\mathcal{E}(\mathcal{C}) = \set{\mathcal{E}(C) : C \in \mathcal{C}} \cup \set{z_i^2 - z_i}_{i \in [m]},$ and note that the second set of polynomial equations restricts the $z_i$ inputs to $\set{0,1}$ values.
The \emph{$\mathbb{F}$-Nullstellensatz degree} of $\mathcal{C}$, denoted $\NS_{\mathbb{F}}(\mathcal{C})$, is the Nullstellensatz degree of refuting $\mathcal{E}(\mathcal{C})$.

How do we know that a Nullstellensatz refutation always exists?
One can deduce this from Hilbert's Nullstellensatz, but for our
purposes it is enough to use a simpler version proved by Buss et
al. (Theorem 5.2 in~\cite{BIKPRS97ProofCplx}): if $\mathcal{P}$ is a
system of polynomial equations over $\mathbb{F}[z_1, \ldots, z_n]$
with no $\set{0,1}$ solutions, then there exists a Nullstellensatz
refutation of $\mathcal{P} \cup \set{z_i^2 - z_i = 0}_{i \in [n]}$.

\subsection{Cutting Planes}

The \introduceterm{Cutting planes (CP)} proof system
was introduced in~\cite{CCT87ComplexityCP}
as a formalization of the integer linear programming algorithm in
\cite{Gomory63AlgorithmIntegerSolutions,Chvatal73EdmondPolytopes}.
Cutting planes proofs give a formal method to deduce new linear inequalities from old that are \emph{sound over integer solutions}---that is, if some integral vector $x^*$ satisfies a set of linear inequalities $\mathcal{I}$, then $x^*$ will also satisfy any inequality $ax \geq b$ deduced from $\mathcal{I}$ by a sequence of cutting planes deductions.
The allowed deductions in a cutting planes proof are the following:
\[     \text{Linear combination}
\ \
\AxiomC{$\sum_i a_i x_i \geq A $}
\AxiomC{$ \sum_i b_i x_i \geq B$}
\BinaryInfC{$\sum_i \, (c a_i +  d b_i) x_i \geq cA + dB$}
\DisplayProof 
\quad \quad
\text{Division}
\AxiomC{$\sum_i c a_i x_i \geq A$}
\UnaryInfC{$\sum_i a_i x_i \geq \ceiling{A/c}$}
\DisplayProof
\]
where
$a_i$, $b_i$, $c$, $d$, $A$, and~$B$ are all integers
and $c, d \geq 0$.

In order to use cutting planes to refute unsatisfiable CNF formulas, we need to translate clauses to inequalities.
It is easy to see how to do this by example: we translate the clause
$x \lor y \lor \olnot{z}$
to the inequality
$x + y + (1-z) \geq 1$,
or, equivalently, 
$x + y -z \geq 0$
if we collect all constant terms on the right-hand side.
For refuting CNF formulas we equip cutting planes proofs with the following additional rules ensuring all variables take $\set{0,1}$ values:

\[
\text{Variable axioms}
\ \ 
\AxiomC{\rule{0pt}{8pt}}
\UnaryInfC{$\ x \geq 0$}
\DisplayProof
\ \ \
\AxiomC{\rule{0pt}{8pt}}
\UnaryInfC{$\ -x \geq -1$}
\DisplayProof
\]

The goal, then, is to prove unsatisfiability by deriving the inequality
$0 \geq 1$. 
This is possible if and only if there is no
$\set{0,1}$-assignment  satifying all constraints.

As discussed in the introduction, we are interested in several natural parameters of cutting planes proof---length, space, and the sizes of the coefficients.
So, we define a cutting planes refutation as a sequence of \emph{configurations} (this is also known as the \emph{blackboard model}).
A configuration is a set of linear inequalities with
integer coefficients, and a sequence of configurations
$\clsc_0,\ldots,\clsc_\lengthstd$ is a cutting planes refutation of a
formula $\formf$ if $\clsc_0=\emptyset$, $\clsc_\lengthstd$ contains
the contradiction $0 \geq 1$, and each configuration $\clsc_{t+1}$
follows from $\clsc_{t}$ either by adding an inequality in $\formf$, by
adding the result of one of the above inference rules where all the premises
are in $\clsc_{t}$, or by removing an inequality present in $\clsc_{t}$.
The length of a refutation is then defined to be the number of configurations $L$; the space\footnote{Formally, this is known as the \emph{line space}.} is $\max_{t\in[L]}\setsize{\clsc_t}$, the maximum number of inequalities in a configuration; and
the coefficient bit size is the maximum size in bits of a coefficient
that appears in the refutation.

For any proof system, it is natural to ask what is the minimal amount of space needed to prove tautologies.
Indeed, there has been much work in the literature studying this, and for proof systems such as resolution (e.g.~\cite{ET01SpaceBounds,ABRW02SpaceComplexity,BG03SpaceComplexity,BN08ShortProofs}) and polynomial calculus (e.g.~\cite{ABRW02SpaceComplexity,FLNRT15SpaceCplx,BG15Framework,BBGHMW17SpaceProofCplx}) it is known that there are unsatisfiable CNF formulas which unconditionally require large space to refute.
In contrast (and quite surprisingly!) it was shown
in~\cite{GPT15SpaceComplexityCP} that for cutting planes proofs, constant line space is always enough.  The proof presented in
\cite{GPT15SpaceComplexityCP} does use coefficients of exponential
magnitude, but the authors are not able to show that this is
necessary---only that coefficients of at most constant magnitude are
not sufficient.

Similarly, one can ask 
whether cutting planes refutations require large
coefficients to realize the full power of the proof system.
Towards this, define \cpstar to be cutting planes proofs with
\emph{polynomially-bounded coefficients} or, in other words, a cutting planes refutation $\Pi$ of a formula $\mathcal{C}$ with $n$ variables is a \cpstar refutation if the largest coefficient in $\Pi$ has magnitude $\mathsf{poly}(n, L)$.

The question of how \cpstar relates to unrestricted cutting planes
has been raised in several papers, e.g., \cite{BPR97LowerBoundsCP,BEGJ00RelativeComplexity}.
This question was studied already in~\cite{BC96CuttingPlanes},
where it was proven that any cutting planes refutation in length~$L$
can be transformed into a refutation with
$L^{\bigoh{1}}$ lines having coefficents of
magnitude
$\exp(\bigoh{L})$ (here the asymptotic notation hides a mild
dependence on the size of the coefficients in the input).
The authors write, however, that their original goal had been to show
that coefficients of only polynomial magnitude would be enough,
\ie that \cpstar would be as powerful as cutting planes except possibly for
a polynomial loss, but that they had to leave this as an open problem.
To the best of our knowledge, there has not been
a single example of
\emph{any unsatisfiable formula} where \cpstar could potentially
perform much worse than general (high-weight) cutting planes.

Finally, as observed in \cite{BPS07LS,HN12VirtueSuccinctProofs}, we can use an
efficient cutting planes refutation of a formula $\formf$ to solve
$\Search(\formf)$ by an efficient communication protocol. Since the first configuration $\clsc_0$ is always
true and the last configuration $\clsc_\lengthstd$ is always false,
the players can simulate a binary search by evaluating the truth value
of a configuration according to their joint assignment and find a true
configuration followed by a false configuration. It is not hard to see
that the inequality being added corresponds to a clause in $\formf$
and it is a valid answer to $\Search(\formf)$.

\begin{lemma}[\cite{HN12VirtueSuccinctProofs}]
  \label{lem:cp-tradeoff-reduction}
  If there is a cutting planes refutation of $\formf$ in length
  $\lengthstd$, line space $\spacestd$, and coefficient bit size
  $\coefficientbits$, then there is a deterministic communication
  protocol for $\Search(\formf)$ of cost
  $\bigoh{\spacestd(\coefficientbits+\log\numvariables)\log\lengthstd}$.
  \end{lemma}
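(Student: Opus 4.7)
The plan is to have Alice and Bob, both of whom know the cutting planes refutation $\clsc_0, \clsc_1, \ldots, \clsc_\lengthstd$ non-uniformly, perform a binary search over the configurations to locate an index $t$ such that $\clsc_t$ is satisfied by their joint assignment to the variables of $\formf$ but $\clsc_{t+1}$ is not. Such an index must exist, because $\clsc_0 = \emptyset$ is vacuously satisfied by every assignment, while $\clsc_\lengthstd$ contains $0 \geq 1$ and is falsified by every assignment, so the truth value flips somewhere along the sequence.

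The key subroutine is evaluating the truth of a single configuration under the joint assignment. A configuration contains at most $\spacestd$ linear inequalities, and an inequality $\sum_i a_i z_i \geq A$ with coefficients of bit size at most $\coefficientbits$ over at most $\numvariables$ variables has partial sums of magnitude at most $\numvariables \cdot 2^{\coefficientbits}$, which can be encoded in $\bigoh{\coefficientbits + \log \numvariables}$ bits. Given any bipartition of $\formf$'s variables between the players, Alice computes her partial sum and sends it to Bob, who adds his own partial sum and compares the total to $A$. Thus evaluating one configuration uses $\bigoh{\spacestd(\coefficientbits + \log \numvariables)}$ bits, and since the binary search probes only $\bigoh{\log \lengthstd}$ configurations, the total cost is $\bigoh{\spacestd(\coefficientbits + \log \numvariables)\log \lengthstd}$, matching the target bound.

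It remains to argue that once the true-to-false transition at index $t$ is identified, the step from $\clsc_t$ to $\clsc_{t+1}$ must be the addition of an axiom inequality encoding some clause $\clc \in \formf$, and that this clause is a valid answer to $\Search(\formf)$. This will be the main technical point, and it reduces to a soundness check: none of the other three ways of forming $\clsc_{t+1}$ from $\clsc_t$ can turn a satisfied configuration into a falsified one. Removal of an inequality trivially preserves satisfaction; a linear combination with nonnegative integer coefficients of satisfied premises yields a satisfied consequence; and the division rule is sound on integer assignments, because $\sum_i c a_i z_i \geq A$ combined with $c > 0$ forces $\sum_i a_i z_i \geq A/c$, and integrality of the left-hand side then allows rounding the bound up to $\lceil A/c \rceil$. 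Consequently, only the axiom-download step can flip satisfaction, and the inequality added at that step corresponds to a clause falsified under the joint assignment. Both players can therefore read off and output this clause from the public refutation together with the outcome of the binary search, with no additional communication.
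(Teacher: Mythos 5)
Your proof matches the approach that the paper sketches (and attributes to~\cite{HN12VirtueSuccinctProofs}): binary search over the configurations, evaluate each configuration by exchanging partial sums of magnitude $\bigoh{\numvariables \cdot 2^{\coefficientbits}}$ per inequality, and argue by soundness that the first falsified configuration was obtained by downloading an inequality of $\formf$, whose underlying clause is a valid answer to $\Search(\formf)$. The cost accounting and the soundness check are both correct. One small omission: when listing the steps that cannot flip a satisfied configuration into a falsified one, you cover removal, linear combination, and division, but not the variable axioms $x \geq 0$ and $-x \geq -1$, which the paper also counts among the inference rules that may be downloaded into a configuration; these are of course always satisfied by Boolean assignments, so the argument is unaffected, but the case should be mentioned for completeness.
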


\section{Rank Lifting from Any Gadget}
\label{sec:lifting}

In this section we discuss our new lifting theorem, restated next.\footnote{In fact, we prove a somewhat more general theorem (see Theorem~\ref{th:ns-lifting} in Appendix \ref{sec:nullstellensatz-lifting} for details). We also remark that this theorem in fact holds for a stronger communication measure (Razborov's \emph{rank measure} \cite{Razborov90Applications}), and so implies lower bounds for other models---see Appendix \ref{sec:nullstellensatz-lifting} for details.}

\begin{theorem}
  \label{cor:nss-communication-eq}
  Let $\mathcal{C}$ be any unsatisfiable $k$-CNF on $n$ variables and let $\mathbb{F}$ be any field.
  For any Boolean valued gadget $g$ with $\rank(g) \geq 12enk/\NS_{\mathbb{F}}(\mathcal{C})$ we have \[ \Pcc(\Search(\mathcal{C}) \circ g) \geq \NS_{\mathbb{F}}(\mathcal{C}).\]
\end{theorem}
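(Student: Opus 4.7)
The plan is to argue by contradiction. Suppose there exists a deterministic protocol $\Pi$ for $\Search(\mathcal{C}) \circ g^n$ of cost $d < D := \NS_{\mathbb{F}}(\mathcal{C})$. Standard protocol-tree unfolding partitions the input space $\mathcal{X}^n \times \mathcal{Y}^n$ into at most $2^d$ combinatorial rectangles, each labeled by a clause $C \in \mathcal{C}$ such that $g^n(x,y)$ falsifies $C$ for every $(x,y)$ in the rectangle. Writing $M_C$ for the sum of indicator matrices of the rectangles labeled by $C$, we have $\sum_C M_C = J$, where $J$ is the all-ones matrix, and each $M_C$ is supported where $\mathcal{E}(C) \circ g^n = 1$. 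The goal is to convert this decomposition into a Nullstellensatz refutation $\sum_C \mathcal{E}(C) q_C \equiv 1 \pmod{\langle z_i^2 - z_i \rangle}$ of degree strictly less than $D$, yielding the desired contradiction.

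The heart of the argument is a rank identity for composed polynomials, generalizing \cite{PR18LiftingNS}. For a ``good'' gadget $g'$ of rank $r'$, Pitassi and Robere prove that $\rank(p \circ (g')^n) = \sum_{m} (r')^{\abs{m}}$, where the sum is over the monomials of $p$ written in Boolean normal form. I would prove an approximate analogue: any gadget $g$ of rank $r$ admits a coordinate-wise linear transformation that restricts $g$ to a good gadget of rank at least $r/c_0$ for some absolute constant $c_0$, following the ``simple transformation'' alluded to in the techniques overview. Applying this transformation tensor-wise to all $n$ copies of $g$ in the composition yields an approximate rank bound of the shape $\rank(p \circ g^n) \geq (r/c_0)^{\deg p}$ together with a matching combinatorial upper bound. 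The hypothesis $r \geq 12enk/D$ is calibrated precisely so that this rank, when raised to the power $D$, comfortably dominates the number $\binom{nk}{\leq D} \leq (enk/D)^D$ of degree-$D$ monomials in the lifted variables, which is exactly what the subsequent extraction needs.

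With the approximate rank identity in hand, the remainder of the proof follows the Pitassi--Robere template. Each matrix $M_C$ has rank at most $2^d$ and is supported on inputs where $\mathcal{E}(C) \circ g^n = 1$; by a duality argument combined with the rank formula, $M_C$ can be realized as $(\mathcal{E}(C) \cdot q_C) \circ g^n$ for some polynomial $q_C$ of degree at most $d$ (absorbing the Boolean axioms $z_i^2 - z_i$). Summing and using $\sum_C M_C = J = 1 \circ g^n$, we equate the polynomial representations modulo $\langle z_i^2 - z_i \rangle$ to get $\sum_C \mathcal{E}(C) q_C \equiv 1$, a Nullstellensatz refutation of degree $d < D$, the contradiction we wanted.

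The main obstacle I foresee is in Step 2: exhibiting an explicit ``goodifying'' transformation that takes an arbitrary gadget of rank $r$ to a good gadget of rank $\Omega(r)$, and verifying that this transformation respects the tensor structure of $g^n$ so that the approximate rank identity composes cleanly across all $n$ coordinates. The constants $12$ and $e$ in the hypothesis should emerge naturally from this calculation as the quantitative price of passing from the exact Pitassi--Robere identity to its approximate counterpart, after which the extraction step is essentially bookkeeping on top of \cite{PR18LiftingNS}.
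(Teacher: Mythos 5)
Your overall architecture runs in the opposite direction from the paper's, and the critical step in your direction does not go through. The paper argues forward: it takes a multilinear polynomial $p$ witnessing the \emph{algebraic gap complexity} of $\mathcal{C}$ (Theorem~\ref{th:gap-equals-ns}, which equals the Nullstellensatz degree), fixes the target matrix $A = p \circ g^n$ once and for all, and then shows that for any $\mathcal{C}$-structured rectangle cover $\mathcal{R}$ the rank measure $\mu_{\mathbb{F}}(\mathcal{R}, A) = \rank(A)/\max_{R\in\mathcal{R}}\rank(A\!\restriction_R)$ is large (Theorem~\ref{th:rank-measure-lifting}), whence the communication lower bound via Razborov's argument (Lemma~\ref{lem:cc-rank-lb}). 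You instead want to run backward: start from a cheap protocol, collect the rectangle-indicator matrices $M_C$ with $\sum_C M_C = J$, and ``realize $M_C$ as $(\mathcal{E}(C)\cdot q_C)\circ g^n$ for some polynomial $q_C$ of degree at most $d$.'' That realization is the gap: a pattern matrix $p'\circ g^n$ lies in a subspace of dimension at most $2^n$ inside the $|\mathcal{X}|^n|\mathcal{Y}|^n$-dimensional matrix space, so a generic low-rank $0/1$-matrix supported where $\mathcal{E}(C)\circ g^n = 1$ is simply not a pattern matrix, and low rank of $M_C$ gives no handle whatsoever to write it as one. Nothing in the rank-lifting formula lets you invert in this way; the formula computes $\rank(p'\circ g^n)$ for a given $p'$, it does not parametrize low-rank matrices by low-degree polynomials. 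The paper avoids this entirely by never trying to decode the protocol into a polynomial: it evaluates a \emph{fixed} pattern matrix against the rectangles.

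Your key technical lemma is also framed differently from the paper's. You propose a ``coordinate-wise linear transformation that restricts $g$ to a good gadget of rank at least $r/c_0$,'' a multiplicative loss in the rank. The paper's Theorem~\ref{th:rank-lifting} does not restrict or transform $g$ at all; instead, inside the inductive step it decomposes $g = g_1 + g_2$, where $g_1$ is the cross (one row and one column, rank $\leq 2$) and $g_2$ is $g$ with that cross zeroed out, verified to be \emph{good} via Lemma~\ref{lem:alternative-good} because $g_2$ has an all-zero row and column. This gives the additive bound $\rank(\mathds{1}\otimes A + g\otimes B) \geq \rank(A) + (\rank(g)-3)\rank(B)$, and the additive loss is essential for the telescoping $(\rank(g)-3)^{|S|}$ to compare against $\rank(g)^{|S|}$ in the gap calculation. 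A multiplicative loss $r/c_0$ would degrade exponentially in $n$ across the composition and would not give the stated constants; there is also no straightforward argument that a row/column restriction of $g$ can be made good while preserving most of the rank when the all-ones vector sits in the span of many rows. If you want to repair the proposal, the cleanest route is to abandon the ``decode the protocol into a Nullstellensatz refutation'' step, adopt the paper's forward rank-measure framework, and replace the restriction idea with the $g = g_1 + g_2$ decomposition used inside the induction.
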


This generalizes a recent lifting theorem from \cite{PR18LiftingNS}, which only allowed certain ``good'' gadgets. The main technical step of that proof showed that ``good'' gadgets can be used to lift the degree of multilinear polynomials  to the rank of matrices. In this section, we improve this, showing that \emph{any} gadget with non-trivial rank can be used to lift polynomial degree to rank. Given this result, Theorem \ref{cor:nss-communication-eq} is proved by reproducing the proof of \cite{PR18LiftingNS} with a tighter analysis.
With this in mind, in this section we will prove our new lifting argument for degree to rank, and then relegate the rest of the proof of Theorem \ref{cor:nss-communication-eq} to Appendix \ref{sec:nullstellensatz-lifting}.

Let us now make these arguments formal.
We start by recalling the definition of a ``good" gadget of \cite{PR18LiftingNS}.

\begin{definition}[Definition 3.1 in \cite{PR18LiftingNS}]\label{def:good-gadget}
  Let $\mathbb{F}$ be a field.
  A gadget $g: \mathcal{X} \times \mathcal{Y} \rightarrow \mathbb{F}$ is \emph{good} if for any matrices $A, B$ of the same size we have \[ \rank(\mathds{1}_{\mathcal{X}, \mathcal{Y}} \otimes A + g \otimes B) = \rank(A) + \rank(g)\rank(B)\] where $\mathds{1}_{\mathcal{X}, \mathcal{Y}}$ denotes the $\mathcal{X} \times \mathcal{Y}$ all-$1$s matrix.
\end{definition}

In \cite{PR18LiftingNS} it is shown that good gadgets are useful because they lift \emph{degree} to \emph{rank} when composed with multilinear polynomials.

\begin{theorem}[Theorem 1.2 in \cite{PR18LiftingNS}]
  Let $\mathbb{F}$ be any field, and let $p \in \mathbb{F}[z_1, z_2, \ldots, z_n]$ be a multilinear polynomial over $\mathbb{F}$.
  For any good gadget $g: \mathcal{X} \times \mathcal{Y} \rightarrow \mathbb{F}$ we have \[ \rank(p \circ g^n) = \sum_{S: \hat p(S) \neq 0} \rank(g)^{|S|}.\]
\end{theorem}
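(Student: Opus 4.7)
The plan is to proceed by induction on the number of variables $n$ in $p$, with the good gadget property (Definition~\ref{def:good-gadget}) providing the inductive step. The key observation is that since $p$ is multilinear, we can write it in the Fourier-style expansion
\[
p(z_1, \ldots, z_n) \;=\; \sum_{S \subseteq [n]} \hat{p}(S) \prod_{i \in S} z_i,
\]
and hence the composed matrix admits the decomposition
\[
p \circ g^n \;=\; \sum_{S \subseteq [n]} \hat{p}(S) \, \bigotimes_{i=1}^{n} M_i^S,
\]
where $M_i^S = g$ when $i \in S$ and $M_i^S = \mathds{1}_{\mathcal{X},\mathcal{Y}}$ when $i \notin S$. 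So the claim reduces to a rank computation on a sum of tensor products.

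To run the induction, I would split the sum according to whether $n$ lies in $S$. Grouping the two cases separately, $p \circ g^n$ becomes
\[
\mathds{1}_{\mathcal{X},\mathcal{Y}} \otimes A \;+\; g \otimes B,
\]
where
\[
A \;=\; \sum_{S \subseteq [n-1]} \hat{p}(S) \bigotimes_{i=1}^{n-1} M_i^S, \qquad B \;=\; \sum_{S \subseteq [n-1]} \hat{p}(S \cup \{n\}) \bigotimes_{i=1}^{n-1} M_i^S.
\]
Each of $A$ and $B$ is itself a composition with $g^{n-1}$ of a multilinear polynomial in $n-1$ variables (namely the restrictions of $p$ fixing $z_n=0$ and extracting the $z_n$-coefficient, respectively), so the inductive hypothesis applies and gives the rank of each as a sum of powers of $\rank(g)$ indexed by the nonzero Fourier coefficients. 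Applying the good gadget property yields
\[
\rank(p \circ g^n) \;=\; \rank(A) + \rank(g)\,\rank(B),
\]
and substituting the inductive values produces exactly $\sum_{S \subseteq [n] : \hat{p}(S) \neq 0} \rank(g)^{|S|}$, since multiplication by $\rank(g)$ shifts $|S|$ to $|S \cup \{n\}|$ over the terms contributed by $B$. The base case $n=0$ is trivial: $p$ is the constant $\hat{p}(\emptyset)$, and the $1\times 1$ matrix has rank $1$ iff $\hat{p}(\emptyset) \neq 0$, matching the formula.

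The substantive content lies entirely in Definition~\ref{def:good-gadget}, which is a strong additive-rank property rather than the subadditive inequality one would get for arbitrary gadgets. So there is no real obstacle to the argument above; the only point requiring care is to verify that the restrictions $p|_{z_n=0}$ and $(p - p|_{z_n=0})/z_n$ of $p$ produce precisely the polynomials giving rise to $A$ and $B$, so that the inductive hypothesis is applied to bona fide multilinear polynomials in $n-1$ variables. This is immediate from multilinearity. The main lemma of this section of the paper, stated later, is really the one extending this identity into an approximate rank bound for arbitrary high-rank $g$, via a transformation turning any gadget into a good one up to a controlled loss.
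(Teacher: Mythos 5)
Your proof is correct and uses the same inductive decomposition that this paper itself employs in its proof of the generalized Theorem~\ref{th:rank-lifting}: writing $p$ as $q + z_n r$ (equivalently, splitting the Fourier expansion on whether $z_n \in S$) to obtain $p \circ g^n = \mathds{1}\otimes (q\circ g^{n-1}) + g\otimes (r\circ g^{n-1})$, applying the rank-additivity property of good gadgets, and inducting. The only difference is cosmetic (the paper splits on $z_1$ rather than $z_n$), and your handling of the base case $n=0$ and of the identification of $A$ and $B$ as compositions of $(n-1)$-variable multilinear polynomials with $g^{n-1}$ is accurate.
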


In the present work, we show that a gadget being good is \emph{not} strictly necessary to obtain the above lifting from degree to rank.
In fact, composing with \emph{any} gadget lifts degree to rank!

\begin{theorem}
  \label{th:rank-lifting}
  Let $p \in \mathbb{F}[z_1, z_2, \ldots, z_n]$ be any multilinear polynomial and let $g : \mathcal{X} \times \mathcal{Y} \rightarrow \mathbb{F}$ be any non-zero gadget with $\rank(g) \geq 3$.
  Then \[ \sum_{S: \hat p(S) \neq 0} (\rank(g) - 3)^{|S|} \leq \rank(p \circ g^n) \leq \sum_{S: \hat p(S) \neq 0} \rank(g)^{|S|}.\]
 \end{theorem}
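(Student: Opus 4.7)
The plan is to prove the two directions separately, with the lower bound being the main technical work.

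The upper bound follows easily from the multilinear Fourier expansion $p = \sum_{S : \hat p(S) \neq 0} \hat p(S) \prod_{i \in S} z_i$: substituting $z_i \mapsto g(x_i, y_i)$ expresses $p \circ g^n$ as a sum over $S$ of tensor products that have a copy of $g$ at each coordinate $i \in S$ and a copy of $\mathds{1}_{\mathcal{X},\mathcal{Y}}$ at each coordinate $i \notin S$. Each such tensor has rank exactly $\rank(g)^{|S|}$, so subadditivity of rank gives the stated bound.

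For the lower bound I would proceed by induction on $n$, following the structure used for good gadgets in~\cite{PR18LiftingNS}. Writing $p(z) = p_0(z_2, \ldots, z_n) + z_1 p_1(z_2, \ldots, z_n)$ and setting $A = p_0 \circ g^{n-1}$, $B = p_1 \circ g^{n-1}$, one has $p \circ g^n = \mathds{1}_{\mathcal{X},\mathcal{Y}} \otimes A + g \otimes B$. Letting $r = \rank(g)$, the inductive step then reduces to the single-step rank inequality
\begin{equation*}
  \rank\bigl( \mathds{1}_{\mathcal{X},\mathcal{Y}} \otimes A + g \otimes B \bigr) \;\geq\; \rank(A) + (r - 3) \rank(B)
\end{equation*}
for any matrices $A, B$ of the same shape and $r \geq 3$, which replaces the exact equality available in the good-gadget case.

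To establish this inequality I would pick an invertible $r \times r$ submatrix $g'$ of $g$ (which exists by the definition of rank) and restrict $M := \mathds{1} \otimes A + g \otimes B$ to the corresponding $r \times r$ block-submatrix $M' = \mathds{1}_r \otimes A + g' \otimes B$, whose rank is at most that of $M$. Left-multiplying by the invertible matrix $(g')^{-1} \otimes I$ preserves rank and, by the mixed-product rule, transforms $M'$ into a block matrix $N$ with $N_{ij} = u_i A + \delta_{ij} B$, where $u = (g')^{-1} \mathbf{1}$ is a nonzero vector in $\mathbb{F}^r$ and $\mathbf{1}$ is the all-ones column vector. Now fix an index $i_0$ with $u_{i_0} \neq 0$, subtract $u_i / u_{i_0}$ times block row $i_0$ from every other block row $i$ with $u_i \neq 0$ (cancelling the $A$-contribution in those rows), and delete block column $i_0$. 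What remains has block row $i_0$ equal to $r - 1$ copies of $u_{i_0} A$, while every other block row contains exactly one copy of $B$ at its own block-column coordinate. The total row span decomposes as $V_1 + V_2$, with $\dim V_1 = \rank(A)$ (diagonal copies of $\mathrm{rowspan}(A)$) and $\dim V_2 = (r-1)\rank(B)$ (disjoint copies of $\mathrm{rowspan}(B)$); the estimate $\dim(V_1 \cap V_2) \leq \rank(B)$ combined with inclusion-exclusion yields $\rank(N) \geq \rank(A) + (r - 2) \rank(B)$, which comfortably implies the target.

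The main obstacle is exactly this rank lemma. For good gadgets the contributions of $\mathds{1}$ and $g$ do not interfere and an exact equality holds with coefficient $r$ in front of $\rank(B)$; dropping goodness allows some interference and forces the linear-algebraic reduction above, with a small additive loss (here at most $2$, well within the $3$ allowed by the statement) in the coefficient of $\rank(B)$. Once the lemma is available, an induction on $n$ using $\hat p(S) = \hat p_0(S)$ for $1 \notin S$ and $\hat p(S) = \hat p_1(S \setminus \{1\})$ for $1 \in S$ combines the bounds on $\rank(A)$ and $\rank(B)$ with the factor $(r-3)$ to deliver $\rank(p \circ g^n) \geq \sum_{S : \hat p(S) \neq 0} (r - 3)^{|S|}$.
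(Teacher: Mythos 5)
Your proof is correct, and in fact establishes a slightly sharper statement: the block-Gaussian-elimination argument yields the coefficient $\rank(g)-2$ unconditionally, improving on the paper's $\rank(g)-3$ (the paper only recovers $\rank(g)-2$ under the extra hypothesis that $g$ is not full rank). The upper bound and the inductive scaffolding are the same as in the paper; the genuine difference lies in the proof of the single-step inequality $\rank(\mathds{1} \otimes A + g \otimes B) \geq \rank(A) + (r-c)\,\rank(B)$. The paper decomposes $g = g_1 + g_2$, where $g_1$ is a ``cross'' consisting of one row and one column of $g$ (so $\rank(g_1) \leq 2$) and $g_2$ is $g$ with that cross zeroed out. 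Since $g_2$ has a zero row and a zero column, the all-ones vector lies in neither its row- nor column-space, hence by Lemma~\ref{lem:alternative-good} $g_2$ is good and the exact good-gadget rank formula applies to $\mathds{1} \otimes A + g_2 \otimes B$; they then pay $\rank(g_1)\rank(B) \leq 2\rank(B)$ for adding back $g_1 \otimes B$ and lose at most another $\rank(B)$ because $\rank(g_2)$ may be only $\rank(g)-1$, giving $r-3$. You instead restrict to an invertible $r\times r$ submatrix $g'$, kill the $g'$-part by left-multiplying with $(g')^{-1} \otimes I$ to obtain $u\mathbf{1}^T \otimes A + I_r \otimes B$, and read off the rank directly via block row reduction and deletion of one block column; the only loss is the overlap between the diagonally embedded copy of the row space of $A$ and the direct sum of copies of the row space of $B$, which is at most $\rank(B)$-dimensional, yielding $r-2$ with no side condition on $g$. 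The paper's route is more modular — it reuses the good-gadget characterization, which it establishes anyway — while yours is more elementary and self-contained and gives a uniformly better constant.
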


We remark that the lower bound in the theorem can be sharpened to $\rank(g) - 2$ if the gadget $g$ is not full rank.
While the previous theorem does not require the gadget $g$ to be good, the notion of a good gadget will still play a key role in the proof. The general idea is that every gadget with non-trivial rank can be transformed into a good gadget with a slight modification.
With this in mind, en-route to proving Theorem \ref{th:rank-lifting} we give the following characterization of good gadgets which may be of independent interest.

\begin{lemma}
  \label{lem:alternative-good}
  A gadget $g$ is good if and only if the all-$1$s vector is not in the row or column space of $g$.
\end{lemma}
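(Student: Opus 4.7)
My plan is to prove both directions by an explicit matrix factorization, writing
$M := \mathds{1}_{\mathcal{X},\mathcal{Y}} \otimes A + g \otimes B$
as a product $U D V^T$ in which the hypothesis on $\mathbf{1}$ translates directly into full column rank of $U$ and $V$.

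For the forward direction I would proceed by contraposition. Suppose $\mathbf{1}_{\mathcal{Y}}^T = w^T g$ for some vector $w$ (the column-space case being symmetric), and take the $1 \times 1$ matrices $A = B = [1]$. The definition of ``good'' then requires $\rank(\mathds{1}_{\mathcal{X},\mathcal{Y}} + g) = 1 + \rank(g)$. But $\mathds{1}_{\mathcal{X},\mathcal{Y}} = \mathbf{1}_{\mathcal{X}} \mathbf{1}_{\mathcal{Y}}^T = (\mathbf{1}_{\mathcal{X}} w^T)\, g$, so every row of $\mathds{1}_{\mathcal{X},\mathcal{Y}}$ already lies in the row space of $g$, giving $\rank(\mathds{1}_{\mathcal{X},\mathcal{Y}} + g) \leq \rank(g) < 1 + \rank(g)$, and hence $g$ cannot be good.

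For the backward direction, fix matrices $A, B$ of common size $m \times m'$ and a rank decomposition $g = \sum_{k=1}^{r} u_k v_k^T$ with $r = \rank(g)$. A direct entrywise check establishes the identities
\[ \mathds{1}_{\mathcal{X},\mathcal{Y}} \otimes A = (\mathbf{1}_{\mathcal{X}} \otimes I_m)\, A \,(\mathbf{1}_{\mathcal{Y}} \otimes I_{m'})^T, \qquad g \otimes B = \sum_{k=1}^{r} (u_k \otimes I_m)\, B \,(v_k \otimes I_{m'})^T. \]
Collecting the left and right factors gives $M = U D V^T$ with
$U = [\,\mathbf{1}_{\mathcal{X}} \mid u_1 \mid \cdots \mid u_r\,] \otimes I_m$,
$V = [\,\mathbf{1}_{\mathcal{Y}} \mid v_1 \mid \cdots \mid v_r\,] \otimes I_{m'}$, and $D = \mathrm{diag}(A, B, \ldots, B)$. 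Under the hypothesis $\mathbf{1}_{\mathcal{X}} \notin \mathrm{colspace}(g) = \mathrm{span}(u_1, \ldots, u_r)$, the matrix $[\,\mathbf{1}_{\mathcal{X}} \mid u_1 \mid \cdots \mid u_r\,]$ has $r+1$ independent columns, and tensoring with $I_m$ preserves this, so $U$ has full column rank; symmetrically $V$ has full column rank.

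To conclude, I would invoke the standard fact that $\rank(UDV^T) = \rank(D)$ whenever $U$ and $V^T$ are injective as linear maps (equivalently, when $U$ has full column rank and $V$ has full column rank). Since $D$ is block-diagonal with one block $A$ and $r$ blocks $B$, this yields $\rank(M) = \rank(D) = \rank(A) + r \cdot \rank(B) = \rank(A) + \rank(g)\rank(B)$, exactly the identity in the definition of good. The main technical obstacle is really the bookkeeping in the factorization: once $M$ is packaged as $UDV^T$, the non-containment hypothesis on $\mathbf{1}$ lines up exactly with the rank-preservation condition, and everything else is routine linear algebra.
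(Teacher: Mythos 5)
Your proof is correct, and it takes a genuinely different route from the paper. The paper first proves a general reduction lemma (its Lemma~\ref{lem:general-criterion}) showing that for any $f$, the tensor condition ``$\rank(f\otimes A+g\otimes B)=\rank(f)\rank(A)+\rank(g)\rank(B)$ for all $A,B$'' is equivalent to the scalar condition $\rank(f+g)=\rank(f)+\rank(g)$; this reduction itself relies on the Marsaglia--Styan additivity criterion (Theorem~\ref{th:additive-rank-criterion}) applied to $f\otimes A$ and $g\otimes B$. It then invokes Marsaglia--Styan once more with $f=\mathds{1}$ to translate $\rank(\mathds{1}+g)=1+\rank(g)$ into the all-ones condition. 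You bypass this machinery entirely: for the forward direction you specialize to $A=B=[1]$ and observe directly that $\mathds{1}_{\mathcal{X},\mathcal{Y}}=(\mathbf{1}_{\mathcal{X}}w^T)g$ forces a rank collapse; for the backward direction you build an explicit factorization $\mathds{1}\otimes A+g\otimes B=UDV^T$ from a rank decomposition of $g$, where the hypothesis that the all-ones vectors lie outside $\mathrm{col}(g)$ and $\mathrm{row}(g)$ is exactly what makes $U$ and $V$ have full column rank, so $\rank(UDV^T)=\rank(D)=\rank(A)+\rank(g)\rank(B)$. Your argument is more self-contained (no appeal to the Marsaglia--Styan theorem) and makes the rank count completely transparent; what it gives up is the paper's intermediate Lemma~\ref{lem:general-criterion}, which holds for arbitrary $f$ and is noted to be of independent interest.
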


In the remainder of the section we prove Theorem \ref{th:rank-lifting} and Lemma \ref{lem:alternative-good}.

\subsection{Proof of Lemma \ref{lem:alternative-good}}
\label{sec:proof-alternative-good}

We begin by proving Lemma \ref{lem:alternative-good}, which is by a simple linear-algebraic argument.
Given a matrix $M$ over a field, let $row(M)$ denote the row-space of $M$ and let $col(M)$ denote the column-space of $M$.
The following characterization of when rank is additive will be crucial.

\begin{theorem}[\cite{MS72Rank}]\label{th:additive-rank-criterion}
  For any matrices $A, B$ of the same size over any field, $\rank(A + B) = \rank(A) + \rank(B)$ if and only if $row(A) \cap row(B) = col(A) \cap col(B) = \set{\mathbf{0}}$.
\end{theorem}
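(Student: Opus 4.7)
The plan is to reduce the goodness condition directly to Theorem~\ref{th:additive-rank-criterion} applied to the pair $M_1 = \mathds{1}_{\mathcal{X},\mathcal{Y}} \otimes A$ and $M_2 = g \otimes B$. Since $\rank(\mathds{1}_{\mathcal{X},\mathcal{Y}}) = 1$ and Kronecker rank multiplies, we have $\rank(M_1) = \rank(A)$ and $\rank(M_2) = \rank(g)\rank(B)$, so $g$ being good is equivalent to requiring $row(M_1) \cap row(M_2) = \{\mathbf{0}\}$ and $col(M_1) \cap col(M_2) = \{\mathbf{0}\}$ for every choice of $A, B$. The whole argument then reduces to translating these two intersection conditions into statements about whether $\mathbf{1}$ lies in $row(g)$ or $col(g)$.

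For the ``if'' direction, suppose $\mathbf{1} \notin row(g) \cup col(g)$ and fix arbitrary $A, B$. The clean way to describe the row spaces of Kronecker products is via a reshaping identification: the row of $P \otimes Q$ indexed by $(i, i')$, when reshaped as a matrix, is the outer product $P_i^{T} Q_{i'}$. Applied to $\mathds{1} \otimes A$, every reshaped row has the form $\mathbf{1}\,v^{T}$ with $v \in row(A)$, so all of its columns lie in $\mathrm{span}(\mathbf{1})$; applied to $g \otimes B$, every reshaped row is $g_i^{T} B_{i'}$, whose columns lie in $row(g)$ (viewed as vectors in $\mathbb{F}^{|\mathcal{Y}|}$). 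These column constraints are preserved under linear combinations, so any vector lying in $row(M_1) \cap row(M_2)$ reshapes to a matrix whose columns lie in $\mathrm{span}(\mathbf{1}) \cap row(g) = \{\mathbf{0}\}$, forcing it to be zero. The column-space side is symmetric using $\mathbf{1} \notin col(g)$. Theorem~\ref{th:additive-rank-criterion} then delivers rank additivity and hence goodness.

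For the ``only if'' direction I argue the contrapositive: if $\mathbf{1} \in row(g)$, I pick $A = B = [1]$, making $M_1 = \mathds{1}$ and $M_2 = g$. Then $row(M_1) = \mathrm{span}(\mathbf{1}) \subseteq row(g) = row(M_2)$, so the row intersection is nontrivial, and Theorem~\ref{th:additive-rank-criterion} forces $\rank(\mathds{1} + g) < 1 + \rank(g) = \rank(A) + \rank(g)\rank(B)$, so $g$ is not good. The case $\mathbf{1} \in col(g)$ is handled by the symmetric choice and the column-intersection side. The only point that needs care is the reshaping identification for rows (and columns) of a Kronecker product and the verification that the column constraints survive taking linear combinations; this is purely bookkeeping with index conventions, and once it is set up both directions fall out immediately from the additive-rank criterion.
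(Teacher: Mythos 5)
Your proposal does not prove the statement in question. The statement is Theorem~\ref{th:additive-rank-criterion}, the Marsaglia--Styan rank-additivity criterion: for matrices $A, B$ of the same size over any field, $\rank(A+B)=\rank(A)+\rank(B)$ if and only if $row(A)\cap row(B)=col(A)\cap col(B)=\set{\mathbf{0}}$. Your very first sentence announces that you will ``reduce the goodness condition directly to Theorem~\ref{th:additive-rank-criterion}\ldots'' --- that is, you are invoking the theorem as a black box, not proving it.

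What you actually prove (and prove correctly, via a reasonable Kronecker-reshaping argument) is Lemma~\ref{lem:alternative-good}, the characterization of good gadgets: $g$ is good if and only if the all-$1$s vector is not in the row or column space of~$g$. The paper reaches that lemma by a slightly different route, first establishing the intermediate Lemma~\ref{lem:general-criterion} (the tensor condition $\rank(f\otimes A + g\otimes B)=\rank(f)\rank(A)+\rank(g)\rank(B)$ for all $A,B$ is equivalent to $\rank(f+g)=\rank(f)+\rank(g)$) and then setting $f=\mathds{1}$, whereas you work directly with the reshaped row and column spaces of the Kronecker products. That is a valid alternative, but it is a proof of the wrong statement. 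Note also that the paper itself does \emph{not} prove Theorem~\ref{th:additive-rank-criterion}; it is a citation to Marsaglia and Styan (1972). A self-contained proof of it would have to be a freestanding linear-algebra argument --- e.g.\ via block decomposition and Schur complements, the Frobenius rank inequality, or a dimension count on sums and intersections of row/column spaces --- and none of that appears in your proposal.
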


The previous theorem formalizes the intuition that rank should be additive if and only if the corresponding linear operators act on disjoint parts of the vector space.
Using the previous theorem we deduce the following general statement, from which Lemma \ref{lem:alternative-good} immediately follows.

\begin{lemma}\label{lem:general-criterion}
  Let $f, g$ be matrices over any fixed field $\mathbb{F}$ of the same size.
  The following are equivalent:
  \begin{enumerate}
    \item For all matrices $A, B$ of the same size, $\rank(f \otimes A + g \otimes B) = \rank(f)\rank(A) + \rank(g)\rank(B).$
    \item $\rank(f + g) = \rank(f) + \rank(g).$
  \end{enumerate}

\end{lemma}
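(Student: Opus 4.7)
The plan is to prove the two implications separately. The direction $(1) \Rightarrow (2)$ is an immediate specialization, while $(2) \Rightarrow (1)$ reduces, via the Marsaglia--Styan criterion of Theorem~\ref{th:additive-rank-criterion}, to a simple linear-algebra fact about tensor products.

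For $(1) \Rightarrow (2)$, I would specialize the hypothesis to $A = B = (1)$, the $1 \times 1$ matrix with entry $1$. Then $f \otimes A = f$, $g \otimes B = g$, and $\rank(A) = \rank(B) = 1$, so condition $(1)$ collapses exactly to $\rank(f + g) = \rank(f) + \rank(g)$.

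For the converse, I would apply Theorem~\ref{th:additive-rank-criterion} in both directions. It turns condition $(2)$ into the pair of transversality conditions $row(f) \cap row(g) = \{0\}$ and $col(f) \cap col(g) = \{0\}$, and turns the desired condition $(1)$ into the analogous transversality conditions for $f \otimes A$ and $g \otimes B$ (using additionally the standard identity $\rank(M \otimes N) = \rank(M)\rank(N)$ to match the additive ranks). The task therefore boils down to showing $row(f \otimes A) \cap row(g \otimes B) = \{0\}$ and the symmetric column statement whenever the analogous intersections are trivial for $f$ and $g$ alone.

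The only real content is the following tensor-product lemma: if $U_1, U_2$ are subspaces of a common vector space $U$ with $U_1 \cap U_2 = \{0\}$, then for any subspaces $W_1, W_2$ of any vector space $W$, we have $(U_1 \otimes W_1) \cap (U_2 \otimes W_2) = \{0\}$ inside $U \otimes W$. I would prove this by choosing a basis of $U$ whose first block is a basis of $U_1$ and whose second block is a basis of $U_2$ (possible precisely because $U_1 \cap U_2 = \{0\}$), then completing to a basis of $U$ and tensoring with any basis of $W$ to obtain a basis of $U \otimes W$. Any element of $U_j \otimes W_j$ has nonzero coordinates only on tensor-basis vectors whose first factor lies in the $U_j$-block, and these blocks are disjoint, so a vector in the intersection must vanish. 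Applying this lemma with the standard identification $row(f \otimes A) = row(f) \otimes row(A)$ (viewing the rows of $f \otimes A$ as the tensors $f_i \otimes a_k$), and symmetrically $col(f \otimes A) = col(f) \otimes col(A)$, completes the proof. The main obstacle, such as it is, is purely bookkeeping: keeping careful track of which ambient tensor space each row/column space sits in, and verifying the Kronecker-to-tensor identification cleanly so that the transversality transfers from $f,g$ to $f \otimes A, g \otimes B$ without ambiguity.
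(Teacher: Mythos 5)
Your proof is correct, and the direction $(1) \Rightarrow (2)$ is identical to the paper's. For the converse, however, you take a genuinely different route. The paper argues by contrapositive at the level of explicit coordinates: starting from a nonzero vector $u \in col(f \otimes A) \cap col(g \otimes B)$ with, say, $u_1 \neq 0$, it writes $u = (f \otimes A)x = (g \otimes B)y$, splits $x, y$ into blocks, and contracts against the first rows $A_1, B_1$ of $A, B$ to produce length-$b$ vectors $x', y'$ with $fx' = gy' \neq 0$, hence $col(f) \cap col(g) \neq \{0\}$. You instead prove the forward implication abstractly via a general transversality lemma for tensor products of subspaces, together with the identifications $row(f \otimes A) = row(f) \otimes row(A)$ and $col(f \otimes A) = col(f) \otimes col(A)$ and the Kronecker rank identity $\rank(M \otimes N) = \rank(M)\rank(N)$, and then invoke Theorem~\ref{th:additive-rank-criterion} on both sides of the equivalence rather than just one. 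Your tensor lemma, and its basis-block proof, are correct: once $U_1 \cap U_2 = \{0\}$, any basis of $U$ extending disjoint bases of $U_1$ and $U_2$ gives a tensor basis of $U \otimes W$ in which elements of $U_1 \otimes W_1$ and $U_2 \otimes W_2$ have disjoint support. The trade-off is familiar: your argument is cleaner and more conceptual, and isolates the real content (transversality is preserved under tensoring), but it requires some comfort with the Kronecker-to-tensor dictionary; the paper's argument is longer and involves some index bookkeeping, but stays entirely within elementary matrix manipulations and never invokes tensor products of vector spaces as such.
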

\begin{proof}
  By choosing $A = B = (1)$ we instantly deduce (2) from (1). To prove the converse, we use Theorem \ref{th:additive-rank-criterion}.
  Let $A, B$ be matrices such that $\rank(f \otimes A + g \otimes B) \neq \rank(f)\rank(A) + \rank(g)\rank(B)$.
  Then by Theorem \ref{th:additive-rank-criterion} it follows that there is a non-zero vector in the intersection of either the row- or column-spaces of $f \otimes A$ and $g \otimes B$.
  Suppose that there is a non-zero vector $u \in col(f \otimes A) \cap col(g \otimes B)$, and we prove that there is a non-zero vector in $col(f) \cap col(g)$ implying $\rank(f + g) \neq \rank(f) + \rank(g)$.
  (A symmetric argument will apply to the row spaces.)

  Assume that $f$ and $g$ are $a \times b$ dimensional matrices, and that $A$ and $B$ are $m \times n$ dimensional matrices. Let $u$ be the length $am$ non-zero vector in the column spaces of both $f \otimes A$ and $g \otimes B$, and suppose without loss of generality that $u_1 \neq 0$. It follows that there are length $bn$ vectors $x, y$ such that $(f \otimes A)x = u = (g \otimes B)y$.
  Write
  \begin{align*}
    x & = (x^1, x^2, \ldots, x^b), \\
    \quad y & = (y^1, y^2, \ldots, y^b)
  \end{align*}
  where $x^i, y^i$ are vectors of length $n$ for each $i$.

  Let $A_1$ denote the first row of $A$ and $B_1$ denote the first row of $B$; note they are both vectors of length $n$.
  Define the length-$b$ vectors
  \begin{align*}
    x' & = (A_1 x^1, A_1 x^{2}, \ldots, A_1 x^{b}), \\
    y' & = (B_1 y^1, B_1 y^2, \ldots, B_1 y^b).
  \end{align*}
  Then, by definition, for each $i = 1, 2, \ldots, a$ we have $(fx')_i = u_{(i-1)m + 1} = (gy')_i$, and the vector is non-zero since $u_1 \neq 0$ by assumption.
  Thus $fx' = gy'$ and the column spaces of $f$ and $g$ intersect at a non-zero vector.
\end{proof}

From Lemma \ref{lem:general-criterion} we can deduce Lemma \ref{lem:alternative-good} immediately.

\begin{proof}[Proof of Lemma \ref{lem:alternative-good}]
  By the previous lemma, $g$ is good if and only if $\rank(\mathds{1} + g) = \rank(\mathds{1}) + \rank(g)$.
  By Theorem \ref{th:additive-rank-criterion} this is true iff the all-$1$s vector is not in the row- or column-space of $g$.
\end{proof}

\subsection{Proof of Theorem \ref{th:rank-lifting}}
\label{sec:proof-rank-lifting}

In this section we prove Theorem \ref{th:rank-lifting} using Lemma \ref{lem:alternative-good}.
The theorem follows by induction using the following lemma, and the proof mimics the proof from \cite{PR18LiftingNS, Robere18Thesis}.

\begin{lemma}
  Let $\mathbb{F}$ be any field, and let $g: \mathcal{X} \times \mathcal{Y} \rightarrow \mathbb{F}$ be any gadget with $\rank(g) \geq 3$.
  For any matrices $A, B$ of the same size we have \[ \rank(\mathds{1}_{\mathcal{X}, \mathcal{Y}} \otimes A + g \otimes B) \geq \rank(A) + (\rank(g) - 3)\rank(B)\] where $\mathds{1}_{\mathcal{X}, \mathcal{Y}}$ is the $\mathcal{X} \times \mathcal{Y}$ all-$1$s matrix.
\end{lemma}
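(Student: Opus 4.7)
My plan is to reduce to the good-gadget rank identity of Lemma~\ref{lem:general-criterion} by writing $g$ as a good gadget plus a rank-two perturbation involving the all-ones vectors, then exploiting the formula for rank under disjoint column spaces to save the factor of $\rank(B)$ that a naive subadditivity bound would lose.

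I would first produce a \emph{double-centered} version of $g$. Fix any $e\in\mathcal{X}$, $f\in\mathcal{Y}$ and let
\[
\tilde g[x,y] \;=\; g[x,y] - g[e,y] - g[x,f] + g[e,f] \qquad (x\neq e,\ y\neq f),
\]
with $\tilde g[e,\cdot] = \tilde g[\cdot,f] = 0$ elsewhere. Then every column of $\tilde g$ vanishes at coordinate $e$ and every row at coordinate $f$, so by Lemma~\ref{lem:alternative-good} the matrix $\tilde g$ is good. A direct computation rewrites the residue as $g - \tilde g = \mathbf{1}_{\mathcal{X}}\alpha^T + \beta\,\mathbf{1}_{\mathcal{Y}}^T$ for vectors $\alpha\in\mathbb{F}^{\mathcal{Y}}$, $\beta\in\mathbb{F}^{\mathcal{X}}$ with $\alpha_f = 0 = \beta_e$, giving $\rank(g - \tilde g)\leq 2$ and hence $\rank(\tilde g)\geq \rank(g) - 2$.

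Next I would set $\bar g = \tilde g + \beta\,\mathbf{1}_{\mathcal{Y}}^T$. Since $\beta_e = 0$ and every column of $\tilde g$ has a zero at coordinate $e$, the matrix $\bar g$ inherits this, so $\mathbf{1}_{\mathcal{X}}\notin\text{col}(\bar g)$; thus $\bar g$ is still \emph{column-good} even though it need not be row-good. The crucial identity is then
\[
\mathds{1}\otimes A + g\otimes B \;=\; \mathbf{1}_{\mathcal{X}}\otimes N \,+\, \bar g\otimes B,
\]
where $N = \mathbf{1}_{\mathcal{Y}}^T\otimes A + \alpha^T\otimes B$; this works because the leftover correction $\mathbf{1}_{\mathcal{X}}\alpha^T\otimes B$ depends only on the $\mathcal{Y}$-index and so merges with $\mathds{1}\otimes A$ into the single term $\mathbf{1}_{\mathcal{X}}\otimes N$. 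Column-goodness of $\bar g$ places $\text{col}(\bar g\otimes B)$ in the subspace of vectors whose $e$-th coordinate is zero, which is disjoint from $\text{col}(\mathbf{1}_{\mathcal{X}}\otimes N)$; by a standard basis-change argument underlying Theorem~\ref{th:additive-rank-criterion}, the total rank equals $\rank(N) + \rank(\bar g)\rank(B) - d$, where $d$ is the dimension of the intersection of the two row spaces.

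The remaining and hardest step is bounding $d$. Computing $\text{row}(\mathbf{1}_{\mathcal{X}}\otimes N) = \mathbf{1}_{\mathcal{Y}}\otimes\text{row}(A) + \alpha\otimes\text{row}(B)$ and $\text{row}(\bar g\otimes B) = \text{row}(\bar g)\otimes\text{row}(B)$, a short case split on whether $\alpha$ lies inside $\text{row}(\tilde g)$ yields the uniform bound $d \leq \dim(\text{row}(A)\cap\text{row}(B)) + \rank(B)$. Substituting this together with $\rank(N) = \rank(A) + \rank(B) - \dim(\text{row}(A)\cap\text{row}(B))$ and the rank-one perturbation estimate $\rank(\bar g)\geq \rank(g) - 1$ (since $g - \bar g = \mathbf{1}_{\mathcal{X}}\alpha^T$), the whole expression collapses via the trivial inequality $\rank(B)\geq \min(\rank(A),\rank(B))$ to precisely $\rank(A) + (\rank(g)-3)\rank(B)$: the ``$-1$'' comes from the rank drop between $g$ and $\bar g$, and the remaining ``$-2$'' comes from the two $\rank(B)$-dimensional row-space cancellations.
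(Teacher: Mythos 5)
Your proposal takes a genuinely different and more ambitious route than the paper. The paper simply splits $g = g_1 + g_2$ where $g_2$ is a good gadget of rank $\rank(g)-1$ obtained by zeroing a column and a row, and $g_1$ has rank at most $2$; rank subadditivity then pays a full $2\rank(B)$ penalty, landing at $\rank(g)-3$. You instead double-center $g$ into a good $\tilde g$, then absorb one of the two rank-one corrections into the $\mathds{1}\otimes A$ side of the identity and keep a column-good $\bar g$ whose column space lives in a coordinate hyperplane, so only the row-space intersection $d$ is lost. This is a cleverer decomposition. In fact, if one works it out carefully it actually beats the paper's constant: using $\rank(N) = \rank(A)+\rank(B)-\dim(\mathrm{col}(A)\cap\mathrm{col}(B))$ and the block-wise bound $d \le \dim(\mathrm{row}(A)\cap\mathrm{row}(B)) + \rank(B) - \dim(\mathrm{col}(A)\cap\mathrm{col}(B))$ (derived by noting that any $c$ with $\Phi(c)\in\mathrm{row}(\bar g\otimes B)$ must satisfy $cA\in\mathrm{row}(B)$, then counting dimensions of the relevant preimages), the final inequality collapses to $\rank(A)+(\rank(g)-2)\rank(B)$ provided one chooses $e$ to be a nonconstant row of $g$ so that $\alpha$ is nonconstant — always possible when $\rank(g)\ge 2$.

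That said, your proof as written has several gaps. First, the claim $\alpha_f = 0 = \beta_e$ forces $g[e,f]=0$, which need not hold; you can set one of the two to zero (set $\beta_e = 0$, which is what column-goodness of $\bar g$ actually requires) but not both. Second, $\mathrm{row}(\mathbf{1}_{\mathcal{X}}\otimes N)$ equals $\mathrm{row}(N)$, which is only a \emph{subspace} of $\mathbf{1}_{\mathcal{Y}}^{T}\otimes\mathrm{row}(A) + \alpha^{T}\otimes\mathrm{row}(B)$ (the coefficient vectors are coupled), so your displayed equality is a containment; this happens to be harmless for an upper bound on $d$ but should be stated correctly. Third, the stated formula $\rank(N) = \rank(A)+\rank(B)-\dim(\mathrm{row}(A)\cap\mathrm{row}(B))$ has row and column switched, and it also fails when $\alpha$ is constant (where $N = \mathbf{1}^T\otimes(A+\alpha_0 B)$). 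Fourth, and most seriously, the key bound $d \le \dim(\mathrm{row}(A)\cap\mathrm{row}(B)) + \rank(B)$ is asserted but not established; the proposed ``short case split on whether $\alpha\in\mathrm{row}(\tilde g)$'' is not obviously conclusive, since $\mathrm{row}(\tilde g)$ need not sit inside $\mathrm{row}(\bar g)$. A cleaner route to a bound on $d$ looks at a fixed $\mathcal{Y}$-block of any vector in $\mathrm{row}(N)\cap\mathrm{row}(\bar g\otimes B)$ to deduce $cA\in\mathrm{row}(A)\cap\mathrm{row}(B)$, then compares dimensions of the preimage and kernel of $c\mapsto\Phi(c)$. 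Finally, the closing sentence invoking ``$\rank(B)\ge\min(\rank(A),\rank(B))$'' does not explain what is actually cancelling; the real accounting is the two $\rank(B)$-sized losses from $\dim(\mathrm{row}(A)\cap\mathrm{row}(B))$ and (in your version) from the $d$-bound. So: the high-level idea is sound and potentially yields a strictly stronger lemma than the paper proves, but the write-up does not yet constitute a correct proof, whereas the paper's five-line argument does.
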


\begin{proof}
  Assume without loss of generality that $|\mathcal{X}| \geq |\mathcal{Y}|$ and let $\mathds{1} = \mathds{1}_{\mathcal{X}, \mathcal{Y}}$.
  Thinking of $g$ as a matrix, let $u$ be any column vector of $g$.
  If we zero the entries of $u$ in $g$, then the remaining matrix cannot have full rank, implying that some row-vector $v$ of the remaining matrix will become linearly dependent.
  Let $g_1$ be the $\mathcal{X} \times \mathcal{Y}$ matrix consisting of the $u$ column and $v$ row of $g$, and let $g_2$ be the $\mathcal{X} \times \mathcal{Y}$ matrix obtained by zeroing out $u$ and $v$ in $g$.
  Observe $g = g_1 + g_2$, and also since $g_2$ contains an all-$0$ row and an all-$0$ column it is good by Lemma \ref{lem:alternative-good} (as any linear combination of rows/columns of $g$ must contain a zero coordinate).

  Now, observe that
  \begin{align*}
    \rank(\mathds{1} \otimes A + g \otimes B) & = \rank(\mathds{1} \otimes A + g_1 \otimes B + g_2 \otimes B) \\
                                              & \geq \rank(\mathds{1} \otimes A + g_2 \otimes B) - \rank(g_1 \otimes B) \\
                                              & = \rank(\mathds{1} \otimes A + g_2 \otimes B) - \rank(g_1)\rank(B)
  \end{align*}
  where the inequality follows since adding a rank-$R$ matrix can decrease the rank by at most $R$.
  Since $g_1$ consists of a single non-zero row and column we have $\rank(g_1) \leq 2$; by the construction of $g_2$ we have $\rank(g_2) = \rank(g) - 1$.
  Using these facts and the fact that $g_2$ is good, we have
  \begin{align*}
    \rank(\mathds{1} \otimes A + g_2 \otimes B) - \rank(g_1)\rank(B) & \geq \rank(A) + \rank(g_2)\rank(B) - 2\,\rank(B) \\
                                                                     & = \rank(A) + (\rank(g) - 3)\rank(B). \qedhere
  \end{align*}
\end{proof}

With the lemma in hand we can prove Theorem \ref{th:rank-lifting}.

\begin{proof}[Proof of Theorem \ref{th:rank-lifting}]
  We prove \[ \rank(p \circ g^n) \geq \sum_{S: \hat p(S) \neq 0} (\rank(g) - 3)^{|S|}\] by induction on $n$, the number of variables.

  Observe that the inequality is trivially true if $n = 0$.
  Assume $n > 0$, and let $\mathds{1} = \mathds{1}_{\mathcal{X}, \mathcal{Y}}$.
  Write $p = q + z_1r$ for multilinear polynomials $q, r \in \mathbb{F}[z_2, z_3, \ldots, z_n]$. Note that it clearly holds that $p \circ g^n = \mathds{1} \otimes (q \circ g^{n-1}) + g \otimes (r \circ g^{n-1})$.
  From the claim we have by induction that
  \begin{align*}
    \rank(p \circ g^n) & = \rank(\mathds{1} \otimes (q \circ g^{n-1}) + g \otimes (r \circ g^{n-1})) \\
                       & \geq \rank(q \circ g^{n-1}) + (\rank(g)-3)\rank(r \circ g^{n-1}) \\
                       & = \sum_{S: \hat q(S) \neq 0} (\rank(g)-3)^{|S|} + (\rank(g)-3)\sum_{T: \hat r(T) \neq 0} (\rank(g)-3)^{|T|} \\
                       & = \sum_{S: \hat p(S) \neq 0 \atop z_1 \not \in S} (\rank(g) - 3)^{|S|} + (\rank(g)-3)\sum_{T: \hat p(T) \neq 0 \atop z_1 \in T} (\rank(g)-3)^{|T| - 1} \\
                       & = \sum_{S: \hat p(S) \neq 0} (\rank(g) - 3)^{|S|}.
  \end{align*}
  For the upper bound, by subadditivity of rank we have
 \begin{align*}
 \rank(\mathds{1} \otimes A + g \otimes B) &\leq \rank(\mathds{1} \otimes A) + \rank(g \otimes B) \\
						& =\rank(\mathds{1}) \rank(A) + \rank(g)\rank(B) \\
						& = \rank(A) + \rank(g)\rank(B).
\end{align*}
  Apply the above induction argument using this inequality \emph{mutatis mutandis}.
\end{proof}

\section{Application: Separating Cutting Planes Systems}
\label{sec:cp}

In this section we prove a new separation between high-weight and low-weight cutting planes proofs in the bounded-space regime.
\begin{theorem}
  \label{th:tradeoff}
  There is a family of $\bigoh{\log\log n}$-CNF formulas over
  $\bigoh{n\log\log n}$ variables and $\bigohtilde{n}$ clauses
  that have CP refutations in length $\bigohtilde{n^2}$ and line
  space $\bigoh{1}$, but for which any \cpstar refutation in length
  $\lengthstd$ and line space $\spacestd$ must satisfy
  $\spacestd\log\lengthstd = \bigomega{n/\log^2 n}$.
\end{theorem}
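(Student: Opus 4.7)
The plan is to take as the separating family the pebbling formulas $\pebblingformula$ over a constant-in-degree DAG $G$ on $n$ vertices chosen to have maximum reversible pebbling price $\rpeb(G) = \Theta(n/\log n)$ (such graphs exist by standard constructions), lifted with the equality gadget $\eq_q$ on $q = \Theta(\log\log n)$ bits. Since $\pebblingformula$ is a CNF of constant width over $n$ variables, the lifted formula $\pebblingformula \circ \eq_q^n$ has $2qn = \bigoh{n \log \log n}$ variables, clauses of width $\bigoh{q} = \bigoh{\log\log n}$, and a total of $\bigohtilde{n}$ clauses (the CNF encoding of a single lifted axiom contributes $2^{\bigoh{q}}$ clauses). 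This matches the parameters in the statement.

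For the upper bound, I would simulate in CP the standard constant-width resolution refutation of $\pebblingformula$, which essentially pebbles $G$ and at each step reasons about a conjunction of literals indexed by the currently pebbled vertices. After lifting, each literal $v$ becomes the equality $\eq_q(x^v,y^v)$, and a conjunction $\bigwedge_{v \in S} \eq_q(x^v, y^v)$ can be represented by the \emph{single} equality
\[ \sum_{v \in S}\sum_{j=1}^{q} 2^{(v-1)q + j - 1}\,(x^v_j - y^v_j) \;=\; 0, \]
which holds over $\set{0,1}$-inputs iff all component equalities hold. Maintaining such an ``accumulator'' (together with its companion $\geq$-inequality) on the blackboard, and updating it upon each pebbling step from the lifted axioms, keeps the line space at $\bigoh{1}$ throughout. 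Each pebbling step needs $\bigohtilde{n}$ derivation steps to decode the relevant portion of the accumulator, resolve against the lifted axiom, and re-encode; over the $\bigoh{n}$ pebbling steps this gives total length $\bigohtilde{n^2}$.

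For the lower bound, I chain together the machinery already assembled in the paper. Suppose $\Pi$ is a $\cpstar$ refutation of $\pebblingformula \circ \eq_q^n$ of length $L$ and line space $s$; its coefficients are of bit size $\coefficientbits = \bigoh{\log n + \log L}$. Lemma~\ref{lem:cp-tradeoff-reduction} then yields a deterministic protocol for $\Search(\pebblingformula \circ \eq_q^n)$ of cost $\bigoh{s (\coefficientbits + \log n) \log L} = \bigoh{s \log L \cdot \log (nL)}$. By Observation~\ref{obs:search-lift-lift-search} this is at least $\Pcc(\Search(\pebblingformula) \circ \eq_q^n)$, and applying Theorem~\ref{cor:nss-communication-eq} with $\mathcal{C} = \pebblingformula$ gives $\Pcc(\Search(\pebblingformula) \circ \eq_q^n) \geq \NS_{\mathbb{F}}(\pebblingformula) = \rpeb(G) = \bigomega{n/\log n}$ by Lemma~\ref{lem:pebbling-nss-intro}. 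The rank hypothesis of Theorem~\ref{cor:nss-communication-eq} reads $\rank(\eq_q) \geq 12enk/\NS(\pebblingformula)$, which with $k = \bigoh{1}$ and $\NS(\pebblingformula) = \bigomega{n/\log n}$ becomes $\rank(\eq_q) = 2^q \geq \bigomega{\log n}$; this is satisfied by our choice $q = \Theta(\log\log n)$ with a sufficiently large constant. Rearranging $s \log L \cdot \log(nL) = \bigomega{n/\log n}$ yields $s \log L = \bigomegatilde{n} = \bigomega{n/\log^2 n}$ in the relevant regime $\log L = \bigoh{\log n}$, and for larger $L$ the conclusion is only strengthened.

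I expect the main obstacle to lie in the upper bound: one must design the accumulator-update subroutine carefully so that at every intermediate moment the blackboard holds only $\bigoh{1}$ inequalities (rather than a growing list of per-vertex equalities), and bound its length at $\bigohtilde{n}$ steps per pebbling move. The lower bound, by contrast, is essentially a bookkeeping calculation once Theorem~\ref{cor:nss-communication-eq}, Lemma~\ref{lem:pebbling-nss-intro}, and Lemma~\ref{lem:cp-tradeoff-reduction} are in place, and the only nontrivial check is that the rank of $\eq_q$ at $q = \Theta(\log\log n)$ bits is \emph{exactly} in the sweet spot that makes Theorem~\ref{cor:nss-communication-eq} applicable with a near-linear Nullstellensatz degree bound.
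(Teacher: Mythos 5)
Your proposal follows exactly the paper's strategy: take a constant-degree DAG on $n$ vertices with $\rpeb(G) = \Theta(n/\log n)$ (the Paul--Tarjan--Celoni graphs of Theorem~\ref{th:hard-graphs}), lift its pebbling formula with an equality gadget on $\Theta(\log\log n)$ bits, use the trick of packing many linear equalities into one high-coefficient equality (the paper's Coding Lemma / your ``accumulator'') for the constant-space CP upper bound, and chain Lemma~\ref{lem:cp-tradeoff-reduction}, Observation~\ref{obs:search-lift-lift-search}, Theorem~\ref{cor:nss-communication-eq} and Lemma~\ref{lem:pebbling-nss} for the \cpstar lower bound.

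One small slip at the end: you set $\coefficientbits = \bigoh{\log n + \log L}$ following the literal $\mathsf{poly}(n,L)$ reading of the \cpstar definition, and then assert that ``for larger $L$ the conclusion is only strengthened.'' That last sentence is not correct as stated. In the intermediate regime $\omega(\log n) \ll \log L \ll n/\log^2 n$, the inequality $s \log L \cdot (\log n + \log L) = \bigomega{n/\log n}$ does not by itself imply $s\log L = \bigomega{n/\log^2 n}$: for example $\log L = \Theta(\sqrt{n})$ and $s = \bigoh{1}$ satisfies the former but violates the latter. The paper's proof avoids this by using $\coefficientbits = \bigoh{\log n}$, i.e.\ reading ``polynomially bounded'' as polynomial in the formula size only, and you should do the same; once you drop the $\log L$ term from $\coefficientbits$ your rearrangement goes through for all $L$.
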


By the results of \cite{GPT15SpaceComplexityCP}, \emph{any} unsatisfiable CNF formula has a cutting planes refutation in constant line space, albeit with exponential length and exponentially large coefficients. In Theorem \ref{th:tradeoff} we show that the length of such a refutation can be reduced to polynomial for certain formulas, described next.

At a high level, we prove \refth{th:tradeoff} using the reversible pebble game. Given any DAG $G$ with a unique sink node $t$, the \introduceterm{reversible pebble game}~\cite{Bennett89TimeSpaceReversible} is a single-player game that
is played with a set of pebbles on $G$. Initially
the graph is empty, and at each step the player can either place or
remove a pebble on a vertex whose predecessors already have pebbles
(in particular the player can always place or remove a pebble on a source).
The goal of the game is to place a pebble on the sink while using as few pebbles as possible.
The reversible pebbling price of a graph, denoted $\rpeb(G)$, is the minimum number of pebbles required to place a pebble on the sink.

The family of formulas witnessing \refth{th:tradeoff} are \emph{pebbling
formulas} composed with the equality gadget. Intuitively, the pebbling formula~\cite{BW01ShortProofs} $\pebblingformula$ associated with $G$ is a formula that claims that it is impossible to place a pebble on the sink (using any number of pebbles). Since it is always possible to place a pebble by using some amount of pebbles, this formula is clearly a contradiction.

Formally, the pebbling formula $\pebblingformula$ is the following CNF formula.
For each vertex $u \in V$ there is a variable $z_u$ (intuitively, $z_u$ should take the value ``true" if and only if it is possible to place a pebble on $u$ using any number of pebbles). The variables are constrained by the following clauses.
\begin{itemize}
\item a clause $z_s$ for each source vertex $s$ (i.e., we can always place a pebble on any source),
\item a clause $\Lor_{u \in \prednode{v}} \olnot z_u \lor z_v$ for each non-source vertex $v$ with predecessors $\prednode{v}$ (i.e., if we can place a pebble on the predecessors of $v$, then we can place a pebble on $v$), and
\item a clause $\olnot{z_t}$ for the sink $t$ (i.e., it is impossible to place a pebble on $t$).
\end{itemize}

Proving Theorem \ref{th:tradeoff} factors into two tasks: a lower
bound and an upper bound. By applying our lifting theorem from the
previous section, the lower bound will follow immediately from a good
lower bound on the Nullstellensatz degree of pebbling formulas. In
order to prove lower bounds on the Nullstellensatz degree, we show in
Section \ref{sec:nsatz} that over \emph{every} field, the
Nullstellensatz degree required to refute $\pebblingformula$ is
\emph{exactly} the reversible pebbling price of $G$. We then use it
together with our lifting theorem to prove the time-space tradeoff for
bounded-coefficient cutting planes refutations of
$\pebblingformula \circ g$ in Section \ref{sec:lowerbounds} for
\emph{any} high-rank gadget $g$. Finally, in Section
\ref{sec:upper-bounds} we prove the upper bound by presenting a short
and constant-space refutation of $\pebblingformula \circ \equalgadget$
in cutting planes with unbounded coefficients.

\subsection{Nullstellensatz Degree of Pebbling Formulas}
\label{sec:nsatz}

In this section we prove that the Nullstellensatz degree of the pebbling formula of a graph $G$ equals the reversible pebbling price of $G$.
\begin{lemma}
  \label{lem:pebbling-nss}
  For any field $\mathbb{F}$ and any graph $G$, $\NS_{\mathbb{F}}(\pebblingformula) = \rpeb(G)$.
\end{lemma}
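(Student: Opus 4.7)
The plan is to prove $\NS_{\mathbb{F}}(\pebblingformula) = \rpeb(G)$ via two matching inequalities, both built on the same bijection between pebble configurations $P \subseteq V(G)$ and squarefree monomials $m_P = \prod_{u \in P} z_u$. Under the standard encoding $\mathcal{E}$, the source/induction axiom associated with a vertex $v$ is $m_{\prednode{v}} - m_{\prednode{v} \cup \{v\}}$, while the sink axiom is $z_t = m_{\{t\}}$, and this is precisely what couples pebbling moves to Nullstellensatz derivations.

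For the upper bound I would fix a reversible pebbling $P_0 = \emptyset, P_1, \ldots, P_\tau$ that places a pebble on the sink using at most $k = \rpeb(G)$ pebbles, and expand the telescoping identity $m_{P_\tau} - m_{P_0} = \sum_{i}(m_{P_{i+1}} - m_{P_i})$. Each single-step difference corresponds to placing or removing a pebble on some vertex $v$ whose predecessors lie in both $P_i$ and $P_{i+1}$, and a direct calculation yields $m_{P_{i+1}} - m_{P_i} = \pm \mathcal{E}(C_v) \cdot m_{R_i}$, where $m_{R_i}$ collects the unchanged pebbles that are not predecessors of $v$. A routine degree count shows that this product has degree exactly $\max(|P_i|,|P_{i+1}|) \leq k$. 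Since $t \in P_\tau$, the left-hand side may be rewritten as $\mathcal{E}(\olnot{z_t}) \cdot m_{P_\tau \setminus \{t\}} - 1$, and rearranging yields a Nullstellensatz refutation in which every axiom multiple has degree at most $k$.

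For the lower bound I would invoke the standard duality: $\NS_\mathbb{F}(\mathcal{P}) > d$ holds exactly when there is an $\mathbb{F}$-linear functional $L$ on multilinear polynomials of degree at most $d$ with $L(1) \neq 0$ that vanishes on every product $\mathcal{E}(C) \cdot q$ of degree at most $d$. Assuming $\rpeb(G) > d$, I would define $L$ combinatorially. Consider the \emph{configuration graph} $H_d$ whose vertices are the subsets of $V(G)$ of size at most $d$, with an edge between $P$ and $P \triangle \{v\}$ whenever $\prednode{v} \subseteq P$ (note this condition is symmetric in the two endpoints since $v \notin \prednode{v}$). Let $\mathcal{K}$ be the connected component of $\emptyset$ in $H_d$, and set $L(m_P) = 1$ if $P \in \mathcal{K}$ and $L(m_P) = 0$ otherwise, extended by linearity. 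Then $L(1) = 1$ by definition. For any source or induction axiom and any multilinear $q$, the product $\mathcal{E}(C_v) \cdot q$ expands (modulo $z_i^2 = z_i$) into a linear combination of differences $m_{R'} - m_{R' \cup \{v\}}$ with $\prednode{v} \subseteq R'$ and both endpoints of size at most $d$, i.e.\ adjacent in $H_d$, so $L$ annihilates each such term. For the sink axiom $z_t$, every product of degree at most $d$ is a combination of monomials $m_S$ with $t \in S$ and $|S| \leq d$, and the assumption $\rpeb(G) > d$ says precisely that no such $S$ is reachable from $\emptyset$ in $H_d$, so $L(m_S) = 0$.

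The main obstacle is the lower bound direction, specifically verifying that the combinatorially defined $L$ truly kills every axiom multiple uniformly over an arbitrary field $\mathbb{F}$. What makes the verification go through cleanly is the reversibility of the pebble game: the single condition $\prednode{v}\subseteq P$ licenses both placing and removing a pebble on $v$, making $H_d$ genuinely undirected, so connected-component membership (as opposed to a directed reachability notion) is the right symmetric invariant needed to annihilate the telescoping differences produced by the source and induction axioms, and it is purely combinatorial, hence field-independent.
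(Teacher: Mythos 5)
Your proof is correct, and it is closely aligned with the paper's argument for the lower bound direction but takes a genuinely different route for the upper bound. Both you and the paper rely on the design duality (Buss et al.) to establish $\NS_{\mathbb{F}}(\pebblingformula) \geq \rpeb(G)$, and your linear functional $L$ (the indicator of the connected component of $\emptyset$ in the configuration graph $H_d$) is exactly the design the paper defines when it sets $D(z_T)=1$ for every configuration $T$ reachable from $\emptyset$ using at most $d$ pebbles. Your write-up is more careful here: the paper is rather terse about why the source/neighbour axiom constraints are all of the form $D(z_{R'}) = D(z_{R' \cup \{v\}})$ with $\prednode{v}\subseteq R'$ and both endpoints of size at most $d$, and about why this forces the component-indicator and nothing else; you spell out the multilinearization step and the field-independence, which the paper leaves implicit.

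Where you diverge is the upper bound. The paper establishes $\NS_{\mathbb{F}}(\pebblingformula) \leq \rpeb(G)$ indirectly: it shows that the design is \emph{forced} to assign $1$ to every reachable configuration, so once a pebbling of cost $d$ reaches the sink, the sink axiom is violated, hence no $d$-design exists, hence a degree-$d$ refutation exists by Theorem~\ref{thm:ns-design}. You instead give a direct, explicit refutation by telescoping $m_{P_\tau}-m_{P_0}=\sum_i (m_{P_{i+1}}-m_{P_i})$ along a cost-$k$ reversible pebbling, observing that each increment is $\pm\mathcal{E}(C_v)\cdot m_{R_i}$ of degree $\max(|P_i|,|P_{i+1}|)\leq k$ and that the boundary term $m_{P_\tau} = z_t\cdot m_{P_\tau\setminus\{t\}}$ is a multiple of the sink axiom of degree $|P_\tau|\leq k$. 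This constructive version is more informative than the paper's contrapositive argument: it shows concretely how a reversible pebbling \emph{is} a Nullstellensatz certificate with $\pm 1$ coefficients, and it makes visible that reversibility (the same side condition $\prednode{v}\subseteq P$ for both placement and removal) is exactly what keeps every step a legitimate axiom multiple. Your checks of the degree bound, the encoding $\mathcal{E}(C_v)=(1-z_v)m_{\prednode{v}}$, and the vanishing of $L$ on sink-axiom multiples when $\rpeb(G)>d$ are all sound.
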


We crucially use the following dual characterization of Nullstellensatz degree by designs 
\cite{Buss98LowerBoundsNS}.

\begin{definition}\label{def:ns-designs}
  Let $\mathbb{F}$ be a field, let $d$ be a positive integer, and let $\mathcal{P}$ be an unsatisfiable system of polynomial equations over $\mathbb{F}[z_1, z_2, \ldots, z_n]$.
  A \emph{$d$-design} for $\mathcal{P}$ is a linear functional $D$ on the space of polynomials satisfying the following axioms:
  \begin{enumerate}
  \item $D(1) = 1$.
  \item For all $p \in \mathcal{P}$ and all polynomials $q$ such that $\deg(pq) \leq d$, we have $D(pq) = 0$.
  \end{enumerate}
\end{definition}

Clearly, if we have a candidate degree-$d$ Nullstellensatz refutation $1 = \sum p_iq_i$, then applying a $d$-design to both sides of the refutation yields $1 = 0$, a contradiction.
Thus, if a $d$-design exists for a system of polynomials then there cannot be a Nullstellensatz refutation of degree $d$.
Remarkably, a converse holds for systems of polynomials over $\set{0,1}^n$.

\begin{theorem}[Theorems 3, 4 in \cite{Buss98LowerBoundsNS}]\label{thm:ns-design}
  Let $\mathbb{F}$ be a field and let $\mathcal{P}$ be a system of polynomial equations over $\mathbb{F}[z_1, z_2, \ldots, z_n]$ containing the Boolean equations $z_i^2 - z_i = 0$ for all $i \in [n]$.
  Then $\mathcal{P}$ does not have a degree-$d$ Nullstellensatz refutation if and only if it has a $d$-design.
\end{theorem}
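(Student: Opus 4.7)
The easy direction ($\Leftarrow$) is essentially noted in the text preceding the theorem: if a $d$-design $D$ exists and a degree-$d$ Nullstellensatz refutation $1 = \sum_i p_i q_i$ also existed, then applying $D$ to both sides would give $1 = D(1) = \sum_i D(p_i q_i) = 0$, a contradiction. So the substantive direction is: assuming $\mathcal{P}$ has no degree-$d$ Nullstellensatz refutation, exhibit a $d$-design.

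My approach for this direction is a finite-dimensional duality argument. Let $\mathcal{V}_d \subseteq \mathbb{F}[z_1, \ldots, z_n]$ denote the vector space of polynomials of total degree at most $d$, which has finite dimension $\binom{n+d}{d}$. Inside $\mathcal{V}_d$, let $\mathcal{I}_d$ be the subspace spanned by all products $pq$ with $p \in \mathcal{P}$ and $q$ any polynomial such that $\deg(pq) \leq d$. The key observation is that a degree-$d$ Nullstellensatz refutation of $\mathcal{P}$ is literally a witness to the statement $1 \in \mathcal{I}_d$, so the hypothesis is equivalent to $1 \notin \mathcal{I}_d$.

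In a finite-dimensional vector space, any vector outside a subspace can be separated from it by a linear functional. Concretely, I would extend a basis of $\mathcal{I}_d$ to a basis of $\mathcal{V}_d$ that includes the vector $1$ (possible precisely because $1 \notin \mathcal{I}_d$), and let $D : \mathcal{V}_d \to \mathbb{F}$ be the coordinate functional returning the coefficient of $1$ in this basis. Then $D(1) = 1$ and $D$ vanishes on all of $\mathcal{I}_d$. I would then extend $D$ to the entire polynomial ring $\mathbb{F}[z_1, \ldots, z_n]$ by declaring $D$ to send every monomial of degree exceeding $d$ to $0$ and extending linearly. This extension satisfies axiom~(1) of Definition~\ref{def:ns-designs} by construction; and for any $p \in \mathcal{P}$ and any $q$ with $\deg(pq) \leq d$ we have $pq \in \mathcal{I}_d \subseteq \mathcal{V}_d$, whence $D(pq) = 0$, giving axiom~(2).

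The only step that warrants any care is the separation step, which is really immediate from rank-nullity and uses nothing specific about~$\mathcal{P}$ or about the Boolean axioms. The Boolean axioms $z_i^2 - z_i$ play a separate role: via the result cited just before the theorem, they guarantee that an unsatisfiable~$\mathcal{P}$ has \emph{some} Nullstellensatz refutation over~$\mathbb{F}$, so that both sides of the equivalence can genuinely fail for sufficiently large~$d$; the equivalence itself between ``no degree-$d$ refutation'' and ``a $d$-design exists'' is a general finite-dimensional linear-algebraic fact.
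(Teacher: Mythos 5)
Your proof is correct. The paper itself does not supply a proof of this statement --- it is cited directly from Buss~et~al.\ --- so there is no internal argument to compare against, but your finite-dimensional duality argument is the standard one and is exactly the right way to establish this equivalence. The two pivotal observations are both present and correct: (i) the existence of a degree-$d$ Nullstellensatz refutation is equivalent to $1 \in \mathcal{I}_d$, where $\mathcal{I}_d$ is the span inside $\mathcal{V}_d$ of the products $pq$ with $p \in \mathcal{P}$ and $\deg(pq) \le d$ (absorbing scalar coefficients into the $q$'s shows the span is exactly the set of left-hand sides of candidate degree-$d$ refutations); and (ii) a linear functional on $\mathcal{V}_d$ vanishing on $\mathcal{I}_d$ with $D(1)=1$ exists precisely when $1 \notin \mathcal{I}_d$, which is basic linear algebra. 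Extending $D$ arbitrarily to monomials of degree exceeding $d$ costs nothing, since axiom~(2) of Definition~\ref{def:ns-designs} only constrains $D$ on $\mathcal{I}_d \subseteq \mathcal{V}_d$.

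Your closing remark is also a genuinely useful clarification: the equivalence ``no degree-$d$ refutation $\Leftrightarrow$ $d$-design exists'' uses nothing special about the Boolean axioms, even though the theorem statement (following Buss) includes them. Their role in the paper is elsewhere: they ensure a refutation exists at all (so the statement is not vacuous), and --- as one sees in the proof of Lemma~\ref{lem:pebbling-nss} immediately after --- they force any $d$-design to be determined by its values on \emph{multilinear} monomials, since $D(z_i^2 q) = D(z_i q)$ whenever $\deg(z_i^2 q) \le d$, which is what makes the design combinatorially tractable. But that is a consequence of having Boolean axioms in $\mathcal{P}$, not a hypothesis needed for the duality itself.
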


With this characterization in hand we prove \reflem{lem:pebbling-nss}. 

\begin{proof}[Proof of \reflem{lem:pebbling-nss}]
  Let $G$ be a DAG, and consider the pebbling formula $G$.
  Following the standard translation of CNF formulas into unsatisfiable systems of polynomial equations, we express $\pebblingformula$ with the following equations:
  \begin{description}
  \item[Source Equations.] The equation $(1 - z_s) = 0$ for each source vertex $s$.
  \item[Sink Equations.] The equation $z_t = 0$ for the sink vertex $t$.
  \item[Neighbour Equations.] The equation $(1 - z_v)\prod_{u \in \mathrm{pred}(v)} z_u = 0$ for each internal vertex $v$.
  \item[Boolean Equations.] The equation $z_v^2 - z_v = 0$ for each vertex $v$.
  \end{description}
We prove that a $d$-design for the above system exists if and only if $d < \rpeb(G)$, and this implies the lemma. Let $D$ be a $d$-design for the system.
First, note that since the Boolean axioms are satisfied and since $D$ is linear, it follows that $D$ is completely specified by its value on multilinear monomials $z_T := \prod_{i \in T}z_i$ (with this notation note that $z_\emptyset := 1$). Moreover, $D$ must satisfy the following properties:
\begin{description}
  \item[Empty Set Axiom.] $D(z_{\emptyset}) = 1$.
  \item[Source Axioms.] $D(z_T) = D(z_Tz_s)$ for every source $s$ and every $T \subseteq [n]$ with $|T \cup \set{s}| \leq d$.
  \item[Neighbour Axioms.] $D(z_Tz_{\mathrm{pred}(v)}) = D(z_Tz_{\mathrm{pred}(v)}z_v)$ for every non-source vertex $v$ and every $T \subseteq [n]$ with $|T \cup \mathrm{pred}(v)\cup \set{v}| \leq d$. 
  \item[Sink Axiom.] $D(z_T z_t) = 0$ for the sink $t$ and every $T \subseteq [n]$ with $|T \cup \set{t}| \leq d$.
  \end{description}
  We may assume without loss of generality that $D(z_T) = 0$ for any set $T$ with $|T| > d$.

  Given a set $S$ of vertices of $G$, we think of $S$ as the reversible pebbling configuration in which there are pebbles on the vertices in $S$ and there are no pebbles on any other vertex. We say that a configuration $T$ is \introduceterm{reachable} from a configuration $S$ if there is a sequence of legal reversible pebbling moves that changes $S$ to $T$ while using at most $d$ pebbles at any given point.

 Now, we claim that the only way to satisfy the first three axioms is to set $D(x_T) = 1$ for every configuration $T$ that is reachable from $\emptyset$. To see why, observe that those axioms are satisfiable if and only if the empty configuration is assigned the value $1$, any configuration containing the sink is labelled $0$, and $D(z_S) = D(z_T)$ for any two configurations $S, T$ with at most $d$ pebbles that are mutually reachable via a single reversible pebbling move. Hence, this setting of $D$ is the only one we need to consider.

  Finally, observe that this specification of a design $D$ satisfies the sink axiom if and only if $d < \rpeb(G)$, since the sink is reachable from $\emptyset$ using $\rpeb(G)$ pebbles but not with less (by the definition of $\rpeb(G)$). Therefore, a $d$-design for $\pebblingformula$ exists if and only if $d < \rpeb(G)$, as required. 
 \end{proof}

\subsection{Time-Space Lower Bounds for Low-Weight Refutations}
\label{sec:lowerbounds}

In this section we prove the lower bound part of the time-space
trade-off for \cpstar.
\begin{lemma}
  \label{lem:cpstar-is-hard}
  There is a family of graphs $\set{G_n}$ with $n$ vertices and constant degree, such that every \cpstar refutation of
  $\pebblingformula[G_n] \circ \eq$ in length $\lengthstd$ and line space $\spacestd$
  must have $\spacestd\log\lengthstd = \bigomega{n/\log^2 n}$.
\end{lemma}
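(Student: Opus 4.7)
I plan to combine four ingredients: Lemma~\ref{lem:pebbling-nss} (Nullstellensatz degree equals reversible pebbling price), Theorem~\ref{cor:nss-communication-eq} (lifting Nullstellensatz degree to deterministic communication complexity), Observation~\ref{obs:search-lift-lift-search} (converting $\Pcc(\Search(\mathcal{C})\circ g)$ lower bounds into $\Pcc(\Search(\mathcal{C}\circ g))$ lower bounds), and Lemma~\ref{lem:cp-tradeoff-reduction} (extracting a deterministic communication protocol from a \cpstar refutation). Take $\{G_n\}$ to be a family of constant in-degree DAGs on $n$ vertices whose reversible pebbling price is $\rpeb(G_n)=\Omega(n/\log n)$; such families follow from the classical constructions of maximally hard pebbling graphs together with the observation that their hardness carries over (up to constant factors) to the reversible variant of the game. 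By Lemma~\ref{lem:pebbling-nss}, for every field~$\mathbb{F}$,
\[
\NS_{\mathbb{F}}\bigl(\pebblingformula[G_n]\bigr) \;=\; \rpeb(G_n) \;=\; \Omega(n/\log n).
\]

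\textbf{Setting up the communication lower bound.} Since $G_n$ has constant in-degree, $\pebblingformula[G_n]$ is an $O(1)$-CNF on $n$ variables, so the rank hypothesis of Theorem~\ref{cor:nss-communication-eq} becomes $\rank(g) \geq O(\log n)$. Choose $g=\eq$ to be the equality gadget on $q=\Theta(\log\log n)$ bits: viewed as a matrix it is the $2^q \times 2^q$ identity, so $\rank(\eq)=2^q=\Theta(\log n)$, which meets the threshold. Chaining Theorem~\ref{cor:nss-communication-eq} with Observation~\ref{obs:search-lift-lift-search} yields
\[
\Pcc\bigl(\Search(\pebblingformula[G_n]\circ\eq)\bigr) \;\geq\; \Pcc\bigl(\Search(\pebblingformula[G_n])\circ\eq\bigr) \;\geq\; \Omega(n/\log n),
\]
and the lifted formula has $N=\Theta(n\log\log n)$ variables of width $O(\log\log n)$.

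\textbf{Deriving the \cpstar trade-off and main obstacle.} Given any \cpstar refutation of $\pebblingformula[G_n]\circ\eq$ of length~$L$ and line space~$s$, the polynomially bounded coefficients have bit size $c=O(\log N)=O(\log n)$, so Lemma~\ref{lem:cp-tradeoff-reduction} produces a deterministic communication protocol for $\Search(\pebblingformula[G_n]\circ\eq)$ of cost
\[
O\bigl(s(c+\log N)\log L\bigr) \;=\; O\bigl(s\log n\cdot\log L\bigr).
\]
Matching this against the lower bound of the previous paragraph forces $s\log L\cdot\log n=\Omega(n/\log n)$, i.e.\ $s\log L=\Omega(n/\log^2 n)$, which is the claim. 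The one external technical ingredient --- and the main place where I foresee subtlety --- is verifying that the chosen DAG family has the required \emph{reversible} (not merely standard) pebbling price; the remainder of the argument is parameter bookkeeping via the four ingredients above, the key design choice being the equality-gadget width $q=\Theta(\log\log n)$, just large enough to satisfy the rank condition of the lifting theorem while keeping the width (and hence the coefficient bit size) of the lifted CNF small.
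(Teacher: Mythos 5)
Your proof is correct and follows essentially the same route as the paper: Lemma~\ref{lem:pebbling-nss}, Theorem~\ref{cor:nss-communication-eq} with the equality gadget on $\Theta(\log\log n)$ bits, Observation~\ref{obs:search-lift-lift-search}, and Lemma~\ref{lem:cp-tradeoff-reduction}, instantiated on a maximally hard pebbling family of constant-degree DAGs. The one point you flag as a potential subtlety --- reversible pebbling hardness --- is handled in the paper exactly as you anticipate: Paul--Tarjan--Celoni graphs have standard pebbling price $\Omega(n/\log n)$, and since reversible pebbling is a restriction of standard pebbling, the reversible price can only be larger.
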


Our plan is to lift a pebbling formula that is hard with respect to
Nullstellensatz degree, and as we just proved it is enough to find a
family of graphs whose reversible pebbling price is large.
Paul \etal~\cite{PTC76SpaceBounds} provide such a family (and in fact 
prove their hardness in the stronger standard pebbling model).
\begin{theorem}\label{th:hard-graphs}
  There is a family of graphs $\set{G_n}$ with $n$ vertices, constant degree, and for which $\rpeb(G_n) = \bigomega{n/\log n}$.
\end{theorem}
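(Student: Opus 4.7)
The plan is to deduce the reversible pebbling lower bound directly from the classical lower bound of Paul, Tarjan, and Celoni~\cite{PTC76SpaceBounds} for the \emph{standard} black pebble game by a short monotonicity argument. Write $\mathrm{peb}(G)$ for the standard pebbling price of $G$: the minimum number of pebbles needed to place a pebble on the sink under the rules that a pebble may be placed on $v$ only if all predecessors of $v$ already carry pebbles, but a pebble may be \emph{removed} from any vertex at any time.

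First I would invoke the construction of Paul, Tarjan, and Celoni, which exhibits an explicit family of constant-degree DAGs $\{G_n\}$ on $n$ vertices with $\mathrm{peb}(G_n) = \bigomega{n/\log n}$. This is exactly the family we will take.

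Second I would observe that reversible pebbling is \emph{at least as restrictive} as standard pebbling: the reversible rule requires the predecessors of $v$ to be pebbled both for placement and for removal, whereas the standard rule imposes this condition only for placement. Hence every legal reversible pebbling sequence is also a legal standard pebbling sequence, which gives $\mathrm{peb}(G) \le \rpeb(G)$ for every DAG $G$. Applying this to $\{G_n\}$ yields $\rpeb(G_n) \ge \mathrm{peb}(G_n) = \bigomega{n/\log n}$, which is the desired bound.

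There is essentially no obstacle: the nontrivial content of the theorem is entirely contained in the PTC76 construction, and the comparison $\mathrm{peb}(G) \le \rpeb(G)$ is immediate from the definitions. The only point worth flagging is the direction of the inequality, which might appear counter-intuitive from the word ``reversible'' alone but is in fact forced by the additional constraint reversibility imposes on removal moves.
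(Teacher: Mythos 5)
Your proposal is correct and matches the paper's own (very brief) justification: the paper simply cites Paul, Tarjan, and Celoni for the $\bigomega{n/\log n}$ standard pebbling lower bound on an explicit constant-degree family and notes that this bound is proved in the stronger standard model, which is exactly your observation that every reversible pebbling is a legal standard pebbling and hence $\rpeb(G) \geq \mathrm{peb}(G)$. You have the direction of the inequality right, and the flagging at the end is a good instinct.
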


We combine these graphs with our lifting theorem as follows.

\begin{lemma}
  \label{lem:search-is-hard}
  There is a family of graphs $\set{G_n}$ with $n$ vertices and
  constant degree, such that
  $\Pcc(\Search(\pebblingformula\circ\eq))= \bigomega{n/\log n}$.
\end{lemma}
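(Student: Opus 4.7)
The plan is to combine the three ingredients already developed in this section and the previous one: the family of hard pebbling graphs from Theorem~\ref{th:hard-graphs}, the exact identity $\NS_\mathbb{F}(\pebblingformula) = \rpeb(G)$ of Lemma~\ref{lem:pebbling-nss}, and the new lifting theorem (Theorem~\ref{cor:nss-communication-eq}) instantiated with the equality gadget.

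First, I would take $\{G_n\}$ to be the family of $n$-vertex constant-in-degree DAGs from Theorem~\ref{th:hard-graphs}, for which $\rpeb(G_n) = \bigomega{n/\log n}$. Since the in-degree is bounded by some constant $d$, the pebbling formula $\pebblingformula[G_n]$ is a $k$-CNF with $k = d+1 = O(1)$ over $O(n)$ variables. Fixing any field $\mathbb{F}$ (say $\mathbb{F}_2$), Lemma~\ref{lem:pebbling-nss} then gives
\[\NS_\mathbb{F}(\pebblingformula[G_n]) = \rpeb(G_n) = \bigomega{n/\log n}.\]

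Next, I would apply Theorem~\ref{cor:nss-communication-eq} with $\mathcal{C} := \pebblingformula[G_n]$ and $g := \eq$, where $\eq$ denotes the equality gadget on $q$ bits. Viewed as a matrix, $\eq$ is a $2^q \times 2^q$ identity, so $\rank(\eq) = 2^q$. The rank hypothesis $\rank(\eq) \geq 12 e n k / \NS_\mathbb{F}(\pebblingformula[G_n])$ thus reduces to $2^q = \bigomega{\log n}$, which is met by choosing $q = \Theta(\log \log n)$. The lifting theorem then yields
\[\Pcc\bigl(\Search(\pebblingformula[G_n]) \circ \eq\bigr) \geq \NS_\mathbb{F}(\pebblingformula[G_n]) = \bigomega{n/\log n},\]
and Observation~\ref{obs:search-lift-lift-search} converts this into the claimed bound for the search problem of the substituted formula,
\[\Pcc\bigl(\Search(\pebblingformula[G_n] \circ \eq)\bigr) \geq \Pcc\bigl(\Search(\pebblingformula[G_n]) \circ \eq\bigr) = \bigomega{n/\log n}.\]

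No step is really an obstacle here, since each is a direct invocation of an already-proved result; the only mild subtlety is the arithmetic choice of the gadget width $q$. Because the Nullstellensatz degree is already near-linear in $n$ and the clause width is $O(1)$, a rank-$O(\log n)$ gadget suffices, which translates to only $q = \Theta(\log \log n)$ bits of equality. Keeping $q$ this small will also matter downstream: it controls the per-variable blow-up of the lifted formula when this communication bound is converted into the \cpstar time-space lower bound of Lemma~\ref{lem:cpstar-is-hard} in the next subsection.
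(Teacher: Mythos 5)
Your proposal follows exactly the same three-step route as the paper: instantiate the hard graph family of Theorem~\ref{th:hard-graphs}, convert $\rpeb(G_n)$ into a Nullstellensatz degree bound via Lemma~\ref{lem:pebbling-nss}, and lift through Theorem~\ref{cor:nss-communication-eq} with an $O(\log\log n)$-bit equality gadget before applying Observation~\ref{obs:search-lift-lift-search}. The only difference is cosmetic: you explicitly verify the rank hypothesis $\rank(\eq) = 2^q \geq 12enk/\NS_{\mathbb{F}}(\pebblingformula)$, which the paper leaves implicit in its choice of $q = O(\log(n/\NS_{\mathbb{F}}(\pebblingformula)))$.
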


\begin{proof}
  Let $\pebblingformula$ be the pebbling formula of a graph $G=G_n$
  from the family given by \refth{th:hard-graphs}.
  By \reflem{lem:pebbling-nss} the Nullstellensatz degree of the
  formula is
\begin{equation}
  \NS_{\mathbb{F}}(\pebblingformula) = \rpeb(G) = \bigomega{n/\log
    n} \eqperiod
\end{equation}

  This allows us to use our lifting theorem,
  \refth{cor:nss-communication-eq}, with an equality gadget of arity
  $\gadgetarity=\bigoh{\log (n/\NS_{\mathbb{F}}(\pebblingformula)}=\bigoh{\log\log n}$, and
  obtain that the lifted search problem
  $\Search(\pebblingformula)\circ\eq$ requires deterministic
  communication
\begin{equation}
  \Pcc(\Search(\pebblingformula)\circ\eq) \geq
  NS_{\mathbb{F}}(\pebblingformula) = \bigomega{n/\log n} \eqperiod
\end{equation}

  As we noted in \refobs{obs:search-lift-lift-search}, this implies
  that the search problem of the lifted formula also requires
  deterministic communication
\begin{equation}
  \Pcc(\Search(\pebblingformula\circ\eq)) \geq
  \Pcc(\Search(\pebblingformula)\circ\eq) = \bigomega{n/\log n} \eqperiod
\end{equation}
\end{proof}

Since we collected our last ingredient, let us finish the proof.

\begin{proof}[Proof of \reflem{lem:cpstar-is-hard}]
  Let $S=\Search(\pebblingformula \circ \eq)$ be the search problem
  given by \reflem{lem:search-is-hard}.
Using \reflem{lem:cp-tradeoff-reduction} we have that every
  cutting planes refutation of the lifted formula in length
  $\lengthstd$, line space $\spacestd$, and coefficient length
  $\coefficientbits$ must satisfy
  \begin{equation}
    \label{eq:almost-there}
  \spacestd(\coefficientbits+\log\numvariables)\log\lengthstd =
  \bigomega{\Pcc(S)} =
  \bigomega{n/\log n} \eqperiod
\end{equation}

Since the size of the lifted formula
$\pebblingformula\circ\eq$ is $\bigohtilde{n}$, the
coefficients of a \cpstar refutation are bounded by a polynomial of
$n$ in magnitude, and hence by $\bigoh{\log n}$ in
length. Substituting the value of $\coefficientbits=\bigoh{\log n}$ in
\eqref{eq:almost-there} we obtain that
\begin{equation}
  \spacestd\log\lengthstd = \bigomega{n/\log^2 n}
\end{equation}
  as we wanted to show.
\end{proof}

\subsection{Time-Space Upper Bounds for High Weight Refutations}
\label{sec:upper-bounds}
\label{sec:small-space}

We now prove \refth{th:small-space}, showing that cutting planes proofs with large coefficients can efficiently refute pebbling formulas composed with equality gadgets in constant line space.
Let $\equalgadget_\gadgetarity$ denote the equality gadget on $q$ bits.

\begin{theorem}
  \label{th:small-space}
  Let $\pebblingformula$ be any constant-width pebbling formula.
  There is a cutting planes refutation of $\liftedformula{\pebblingformula}{\eqloglog}$ in length $\tilde{O}(\numvariables^2)$ and space $\bigoh{1}$.
\end{theorem}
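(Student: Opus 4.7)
The plan is to simulate a standard black pebbling of $G$ in cutting planes while exploiting the crucial fact, emphasised in the introduction, that in cutting planes with unbounded coefficients a long conjunction of equality gadgets can be packed into a single linear equation. This is what makes the line space constant. For $\gadgetarity = \log\log \numvertices$ and each vertex $v$, introduce the linear form
\[
L_v \ := \ \sum_{j=1}^{\gadgetarity} 2^{j-1}(x_{v,j} - y_{v,j}) \eqperiod
\]
Then $|L_v| \leq 2^{\gadgetarity} - 1$ and $L_v = 0$ over $\{0,1\}$-inputs precisely when $\eq(x_v, y_v)$ holds. Fix a topological order $v_1, \ldots, v_{\numvertices}$ of $G$ and weights $w_i := (2^{\gadgetarity})^{i-1}$. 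A base-$2^{\gadgetarity}$ uniqueness argument then shows that for any $S$, $\sum_{i \in S} w_i L_{v_i} = 0$ if and only if $L_{v_i} = 0$ for every $i \in S$, so the single linear equation $\sum_i w_i L_{v_i} = 0$ faithfully encodes the whole conjunction of the corresponding equality gadgets.

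The refutation proceeds in $\numvertices$ phases, one per vertex, and maintains the invariant that at the end of phase $i$ exactly two accumulator inequalities $A_i \geq 0$ and $-A_i \geq 0$ with $A_i := \sum_{j \leq i} w_j L_{v_j}$ are in memory; all other inequalities used within a phase are scratch and are erased before the phase closes, keeping the line space at $\bigoh{1}$. Phase $i$ derives the auxiliary pair $\pm w_i L_{v_i} \geq 0$ and adds them into the accumulator. If $v_i$ is a source, the source CNF $\eq(x_{v_i}, y_{v_i})$ directly yields the bit-wise equalities $x_{v_i, j} = y_{v_i, j}$ in $\bigoh{\gadgetarity}$ CP steps, which sum to $L_{v_i} = 0$. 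If $v_i$ is internal, the $\polylog \numvertices$ clauses of the propagation CNF together with the accumulator $A_{i-1} = 0$ semantically force $L_{v_i} = 0$; to realise this implication inside CP in constant line space, I would iteratively peel the topmost term $L_{v_j} = 0$ off the accumulator by combining $A_{i-1}$ with the variable-axiom bound $\bigl|\sum_{j' < j} w_{j'} L_{v_{j'}}\bigr| \leq w_j - 1$ and invoking the division rule, continuing until the bit-wise predecessor equalities have been extracted. One application of the $\polylog \numvertices$ propagation clauses then yields the bit-wise equalities on $v_i$ and hence $L_{v_i} = 0$. A final peel after phase $\numvertices$ delivers $L_t = 0$ for the sink $t$, which together with the sink CNF $\neg \eq(x_t, y_t)$ produces $0 \geq 1$ in $\bigoh{1}$ further steps.

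The main obstacle is executing the peels in constant line space: the natural derivation wants to hold all the variable-axiom bounds in memory simultaneously before combining them with the accumulator, which would blow up the space. The fix is to build the sum of bounds term-by-term into a single running scratch inequality, discarding each individual bound once it has been absorbed, and to synchronise these updates with the accumulator so that the strict inequality required by the division rule always holds at the moment of invocation. A careful amortised accounting of the scratch derivations and of the propagation-CNF work over all $\numvertices$ phases then yields the claimed length $\bigohtilde{\numvertices^2}$ with line space $\bigoh{1}$ throughout.
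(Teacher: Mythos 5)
Your high-level plan is the same as the paper's: encode the conjunction $\bigwedge_{j\le i}\eq(x_{v_j},y_{v_j})$ as a single pair of inequalities $\pm\sum_{j\le i} K^{j} L_{v_j}\ge 0$ (with a safety base $K$ strictly larger than $\max|L_v|$; note $2^q$ itself does not work since $|L_v|$ can be as large as $2^q-1$, so equality is possible), carry this accumulator through a topological sweep, and recover each predecessor's $L_{v_j}=0$ by ``peeling'' least/most significant terms via addition of variable axioms followed by the division rule. That peeling operation is exactly Part 1 of the paper's Coding Lemma, and your fix for keeping the scratch work in constant line space -- accumulate the weakening axioms into one running inequality and discard each one after it is absorbed -- is the right and the paper's device.

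The genuine gap is in the propagation step (and the analogous final sink step). You write that once the bit-wise predecessor equalities are in hand, ``one application of the $\polylog n$ propagation clauses yields the bit-wise equalities on $v_i$,'' and that the sink is dispatched in $\bigoh{1}$ further steps. Neither of these is a single cutting-planes inference or a routine linear combination: the lifted axiom $\clc\circ\eq$ is a set of $2^{\Theta(kq)}$ clauses, and deriving each bit-wise successor clause from them together with the predecessor equalities while keeping only constantly many inequalities in memory requires a dedicated mechanism. The paper supplies this as the Space Lemma (\reflem{lem:clause-in-space-five}), a derivational analogue of the constant-space cutting-planes refutation of \cite{GPT15SpaceComplexityCP}: any clause implied by a small set of linear inequalities over $m$ variables has a CP derivation of length $\bigoh{m^2 2^m}$ in constant line space, instantiated here with $m=\bigoh{kq}=\bigoh{\log\log\numvertices}$ so the cost per vertex is $\polylog\numvertices$. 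The same lemma is what closes the refutation at the sink, in $\bigoh{q^2 2^q}$ steps rather than $\bigoh{1}$. Without this ingredient, or an explicit replacement for it, your derivation is not a complete cutting-planes proof: the implication you rely on is semantic, and turning it into a constant-space syntactic derivation is precisely the technical content that is missing.
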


We also use the following lemma, which is a ``derivational'' analogue of the recent result of~\cite{GPT15SpaceComplexityCP} showing that any set of unsatisfiable integer linear inequalities
has a cutting planes refutation in constant space.
As the techniques are essentially the same we leave the proof to Appendix \ref{sec:space-lemma}.

\begin{lemma}[Space Lemma]
  \label{lem:clause-in-space-five}
  Let $\formf$ be a set of integer linear inequalities over $\numvariables$
  variables that implies a clause $\clc$. Then there is a cutting
  planes derivation of $\clc$ from $\formf$ in length
  $\bigoh{\numvariables^2 2^\numvariables}$ and space $\bigoh{1}$.
\end{lemma}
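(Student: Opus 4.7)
The plan is to adapt the constant-space refutation technique of~\cite{GPT15SpaceComplexityCP} to the derivational setting. As a first step, translate the clause $\clc$ into the single linear inequality $I_\clc$ obtained by the standard $1$-to-$1$ encoding, and observe that $\formf$ implying $\clc$ over $\{0,1\}^\numvariables$ is equivalent to $\formf$ together with the negation of $I_\clc$ being unsatisfiable over $\{0,1\}^\numvariables$. Our goal is then to produce a cutting planes \emph{derivation} of $I_\clc$ from $\formf$ of space $\bigoh{1}$ and length $\bigoh{\numvariables^2 2^\numvariables}$, mirroring the length-space parameters one would get from refuting the augmented system by the GPT theorem.

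The argument is by induction on the number of variables. For $\numvariables=0$ the statement is trivial: $\formf$ is a collection of constant inequalities implying the constant inequality $I_\clc$, and one such derivation can be done in $\bigoh{1}$ space by successive linear combinations. For the inductive step, pick a variable $x$ and invoke the hypothesis on the restrictions $\formf|_{x=0}$ and $\formf|_{x=1}$, obtaining derivations $\pi_0$ and $\pi_1$ of the corresponding restrictions of $I_\clc$, each of length $L(\numvariables-1)$ and space $\bigoh{1}$. The heart of the proof is the combination step: we re-run $\pi_0$ inside a derivation from $\formf$ \emph{without} explicitly assuming $x=0$, so that every use of the ``pseudo-axiom'' $-x \geq 0$ with some coefficient $c$ contributes a residual $cx$ term, yielding an inequality of the form $I_\clc + c\,x \geq \beta_0$ which we keep on the blackboard. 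We then erase the intermediate lines of $\pi_0$, re-run $\pi_1$ analogously to obtain $I_\clc - c'(1-x) \geq \beta_1$, and finally combine the two stored inequalities with the variable axioms $x \geq 0$ and $-x \geq -1$ (plus at most one division) to conclude $I_\clc$. Since $\pi_0$ and $\pi_1$ are never held on the blackboard simultaneously, the space stays $\bigoh{1}$, while the length obeys the recurrence $L(\numvariables) \leq 2\,L(\numvariables-1) + \bigoh{\numvariables}$, which solves to $L(\numvariables) = \bigoh{\numvariables^2 2^\numvariables}$.

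The main obstacle is verifying that the residual $x$-tracking during the re-runs of $\pi_0$ and $\pi_1$ actually fits in $\bigoh{1}$ blackboard lines. Conceptually this is clear because the residual is always a single multiplier of $x$ that can be carried alongside each intermediate inequality, but one has to check that applications of the division rule inside $\pi_b$ never force us to separate out the $x$-residual as a new blackboard line, and that the final combination of the two stored inequalities needs only a constant number of auxiliary lines. This is essentially the same bookkeeping as in~\cite{GPT15SpaceComplexityCP}; the only new element is that the target is a general inequality $I_\clc$ rather than $0 \geq 1$, which does not affect the analysis beyond the addition of a bounded number of cleanup steps at the base case and at the end of the final combination.
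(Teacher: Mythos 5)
Your approach differs substantially from the paper's. The paper reuses the accumulator machinery from~\cite{GPT15SpaceComplexityCP}: it derives the accumulator inequality $\sum_i 2^i x_i \geq 2^{n-k}$ (encoding that all assignments falsifying $\clc$ have been ``processed'') from the $2^{n-k}$ axiom clauses $\setdescr{\ctaxiom}{\alpha<2^{n-k}}$, and replaces each axiom clause by a short constant-space derivation from $\formf$, which is possible precisely because every $\alpha$ that falsifies $\clc$ must falsify $\formf$. You instead propose a recursive branching argument: restrict on a pivot variable $x$, recurse on $\formf|_{x=0}$ and $\formf|_{x=1}$, then ``lift'' each restricted derivation by carrying a residual multiple of $x$ alongside every blackboard line, and merge at the end.

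The lifting step is where your argument has a genuine gap, and it is not the bookkeeping issue you flag. The problem is the division rule itself. Suppose a step of $\pi_0$ divides $\sum_i d\, a_i y_i \geq A$ by $d$. In your lifted derivation this line is instead $\sum_i d\, a_i y_i + c\, x \geq A'$ for some residual coefficient $c$ that arose from earlier linear combinations. The division rule requires \emph{every} coefficient on the left to be divisible by $d$, so it is simply not applicable unless $d \mid c$, and there is no reason to expect this. You acknowledge needing to ``check that applications of the division rule inside $\pi_b$ never force us to separate out the $x$-residual,'' but the failure mode is not a space overrun --- it is that the inference is not sound. One could try to repair this by adding multiples of $x \geq 0$ or $-x \geq -1$ immediately before each division to bump $c$ to a multiple of $d$, but this changes the residual in a way that must be tracked through the rest of $\pi_b$, and it is no longer clear that the two final lifted inequalities have the precise form needed for the merge to recover $I_\clc$. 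In particular your claimed forms $I_\clc + cx \geq \beta_0$ and $I_\clc - c'(1-x) \geq \beta_1$ presuppose that the $x$-free part of the lifted conclusion is exactly $I_\clc$, but in general it is only $I_{\clc|_{x=b}}$ --- which differs from $I_\clc$ whenever $x$ occurs in $\clc$ --- and this case is not addressed. Without a worked-out treatment of the division steps and the pivot-in-$\clc$ case, the inductive combination does not go through. Note finally that even if it did, the recurrence $L(\numvariables) \le 2L(\numvariables-1) + \bigoh{\numvariables}$ solves to $\bigoh{2^\numvariables}$, not $\bigoh{\numvariables^2 2^\numvariables}$; this is harmless since the claimed bound is an upper bound, but it suggests the accounting has not been fully thought through.
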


Let us begin by outlining the high level proof idea.
We would like to refute the lifted formula $\liftedformula{\pebblingformula}{\equalgadget_q}$ using constant space. Consider first the unlifted formula $\pebblingformula$.
The natural way to refute it is the following: Let $v_1, \ldots, v_n$ be a topological ordering of the vertices of $G$. The refutation will go over the vertices in this order, each time deriving the equation that says that the variable $z_{v_i}$ must take the value ``true" by using the equations that were derived earlier for the predecessors of $v_i$. Eventually, the refutation will derive the equation that says that the sink must take the value ``true", which contradicts the axiom that says that the sink must be false.

Going back to the lifted formula $\liftedformula{\pebblingformula}{\equalgadget_q}$, we construct a refutation using the same strategy, except that now the equation of $z_{v_i}$ is replaced with the equations
\[ x_{v_i, 1} = y_{v_i, 1}, \ldots x_{v_i, q} = y_{v_i, q}. \]

The main obstacle is that if we implement this refutation in the naive way, we will have to store all the equations simultaneously, yielding a refutation of space $\bigoh{q\cdot n}$. The key idea of our proof is that CP can encode the conjunction of many equations using a \emph{single} equation. We can therefore use this encoding in our refutation to store at any given point all the equations that were derived so far in a single equation. The implementation yields a refutation of constant space, as required.

To see how we can encode multiple equations using a single equation, consider the following example. Suppose we wish to encode the equations
\[ x_1 = y_1, x_2 = y_2, x_3 = y_3, \]
where all the variables take values in $\set{0,1}$. Then, it is easy to see that those equations are equivalent to the equation
\[ 4 \cdot x_1 + 2 \cdot x_2 + x_3 = 4 \cdot y_1 + 2 \cdot y_2 + y_3. \]
This idea generalizes in a straightforward way to deal with more equations, as well as with arbitrary linear gadgets, to be discussed below.

  The rest of this section is devoted to the proof of Theorem~\ref{th:small-space}.
  The following notion is central to the proof.
  Say a gadget $g(x, y) : \set{0,1}^q \times \set{0, 1}^q \rightarrow \set{0,1}$ is \emph{linear} if there exists a linear expression with integer coefficients \[L(x, y) = c + \sum_{i=1}^q a_ix_i + b_iy_i \] such that $g(x, y) = 1$ if and only if $L(x, y) = 0$.
  Note that the equality gadget is linear, as it corresponds to the linear expression $\sum_{i=1}^q 2^{i-1}(x_i - y_i)$.

  Let $g$ be any linear gadget with corresponding linear expression $L$.
  Let $K = 1 + \max_{x, y} |L(x, y)|$, and let $G$ be the underlying DAG of the composed pebbling formula $\pebblingformula \circ g^n$.
  Note that for each vertex $u$ of $G$ the composed formula has corresponding variables $x_u, y_u \in \set{0,1}^q$.
  Once and for all, fix an ordering of the vertices of $G$ and assume that all subsets are ordered accordingly.
  For a subset of vertices $U \subseteq V$ define \[ L(U) := \sum_{u_i \in U} K^i L(x_{u_i}, y_{u_i}).\]
  The following claim shows that $L(U)$ encodes the truth of the conjunction $\bigwedge_{u_i \in U} g(x_{u_i}, y_{u_i})$.

\begin{claim}
For a set of vertices $U$ and any $x, y \in \set{0,1}^{\gadgetarity\numvertices}$, $L(U) = 0$ if and only if $\bigwedge_{u_i \in U} g(x_{u_i}, y_{u_i}) = 1$.
\end{claim}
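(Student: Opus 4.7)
The plan is to prove the two directions separately, with the forward direction being essentially immediate and the reverse direction relying on a base-$K$ argument that justifies the choice of $K = 1 + \max_{x,y}|L(x,y)|$.

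For the forward direction, assume that $\bigwedge_{u_i \in U} g(x_{u_i}, y_{u_i}) = 1$. By the definition of a linear gadget, this means $L(x_{u_i}, y_{u_i}) = 0$ for every $u_i \in U$. Then every term in the sum defining $L(U)$ vanishes, so $L(U) = 0$. No other ingredient is needed here.

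For the reverse direction, suppose toward contradiction that $L(U) = 0$ but there is some $u_i \in U$ with $L(x_{u_i}, y_{u_i}) \neq 0$. Let $j$ be the largest index such that $L(x_{u_j}, y_{u_j}) \neq 0$, so that $|L(x_{u_j}, y_{u_j})| \geq 1$ and the contribution of this term to $L(U)$ has absolute value at least $K^j$. All remaining nonzero terms correspond to indices $i < j$, and by the choice of $K$ each satisfies $|L(x_{u_i}, y_{u_i})| \leq K - 1$. The triangle inequality then gives
\begin{equation*}
\Bigl| \sum_{u_i \in U,\, i < j} K^i L(x_{u_i}, y_{u_i}) \Bigr| \leq (K-1)\sum_{i=0}^{j-1} K^i = K^j - 1 < K^j,
\end{equation*}
so the top-order term cannot be cancelled by the lower-order terms, contradicting $L(U) = 0$. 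Hence every $L(x_{u_i}, y_{u_i})$ must be $0$, which, by the defining property of the linear gadget, means $g(x_{u_i}, y_{u_i}) = 1$ for every $u_i \in U$.

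The only step that requires any real care is the reverse direction, and in particular the verification that $K$ is genuinely large enough that the standard base-$K$ uniqueness argument applies; this is exactly the reason for defining $K = 1 + \max_{x,y}|L(x,y)|$ rather than the maximum itself. Beyond that, the claim is a direct consequence of positional notation, so no further work is needed.
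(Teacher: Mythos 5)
Your proof is correct and follows essentially the same approach as the paper: forward direction immediate from linearity, reverse direction via the base-$K$ positional argument isolating the highest-index nonzero term and bounding the remaining terms by $(K-1)\sum_{i<j}K^i < K^j$. The only cosmetic difference is that you phrase the reverse direction as a proof by contradiction while the paper argues the contrapositive directly; the inequalities are the same.
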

\begin{proof}
    Since $g$ is linear, if $g(x_{u_i}, y_{u_i}) = 1$ for all $u_i \in U$ then it follows that $L(x_{u_i}, y_{u_i}) = 0$ for all $u_i \in U$, which in turn implies $L(U) = 0$.
    Conversely, suppose $g(x_{u_i}, y_{u_i}) = 0$ for some vertex $u_i$, and let $i$ be the \emph{largest} such index.
    It follows that $\lingadget{u_i} \neq 0$, and clearly
    \begin{equation}
      \ABS{\sum_{j < i} \safetyconstant^{j}\lingadget{x_{u_j}, y_{u_j}}} \leq
      \sum_{j<i} \safetyconstant^{j}\abs{\lingadget{x_{u_j}, y_{u_j}}} \leq
      (\safetyconstant-1)\sum_{j<i} \safetyconstant^{j}
      < \safetyconstant^i.
    \end{equation}
    This implies $\lingadget{U}\neq 0$, since
	\begin{align*}
		\ABS{L(U)} &= \ABS{K^i \cdot L(u_i) + \sum_{j < i} \safetyconstant^{j}\lingadget{x_{u_j}, y_{u_j}}} \\
			&\ge \ABS{K^i \cdot L(u_i)} - \ABS{\sum_{j < i} \safetyconstant^{j}\lingadget{x_{u_j}, y_{u_j}}} \\
			&\ge K^i - \ABS{\sum_{j < i} \safetyconstant^{j}\lingadget{x_{u_j}, y_{u_j}}} > 0 \qedhere
	\end{align*}
  \end{proof}

  From here on in the proof, we consider $L(U) = 0$, or $L(U)$ for short, as being syntactically represented in cutting planes as the pair of inequalities $L(U) \geq 0$, $-L(U) \geq 0$.
  The bulk of the proof lies in the following lemma, which shows how to ``encode'' and ``decode'' unit literals in the expressions $L(U)$.

  \begin{lemma}[Coding Lemma]\label{lem:coding-lemma}
    Let $U$ be any set of vertices.
    Then
    \begin{enumerate}
    \item For any $u \in U$ there is a cutting planes derivation of $L(u)$ from $L(U)$ in length $O(q\setsize{U})$ and space $O(1)$. \label{item:coding-extract}
    \item Let $C = \neg z_{u_1} \vee \neg z_{u_2} \vee \cdots \vee \neg z_{u_{k-1}} \vee z_{u_k}$ be an axiom of $\pebblingformula$ with $u_1, u_2, \ldots, u_{k-1} \in U$.
    Let $\ell_g$ and $s_g$ be such that there exists a derivation of $L(u)$ from a CNF encoding of $g(u)$ in length $\ell_g$ and space $s_g$.
      From $L(u_1), L(u_2), \ldots, L(u_{k-1})$ and $C \circ g^n$ there is a cutting planes derivation of $L(u_k)$ in length $O(2^{kq} \ell_g)$ and space $O(s_g)$. \label{item:coding-propagate}
      \item For any $u \not \in U$ there is a cutting planes derivation of $L(U \cup \set{u})$ from $L(U)$ and $L(u)$ in length $O(1)$ and space $O(1)$. \label{item:coding-append}
    \end{enumerate}
  \end{lemma}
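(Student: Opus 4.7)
The plan is to handle the three parts of the lemma separately, with the extract step being by far the most delicate. For Part~\ref{item:coding-append} (append), note that by definition $L(U \union \set{u}) = L(U) + K^{i(u)} L(u)$, where $i(u)$ is the global position of $u$, so a single linear combination of the four inequalities representing $L(U)$ and $L(u)$ with coefficients $1$ and $K^{i(u)}$ produces the pair of inequalities for $L(U \union \set{u})$ in constant length and constant space.

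For Part~\ref{item:coding-extract} (extract), let $u = u_{j_k}$ sit at global position $j_k$ in $U$, and write the elements of $U$ in increasing global order as $u_{j_1}, \ldots, u_{j_m}$. A naive strategy that extracts $L(u_{j_m}), L(u_{j_{m-1}}), \ldots, L(u_{j_k})$ one at a time costs $\Theta(q\setsize{U}^2)$ length, so to achieve the claimed $\bigoh{q\setsize{U}}$ bound we handle the higher-indexed terms collectively via two applications of the division rule. First, for each $\ell < k$ we derive the variable-axiom bound $L(u_{j_\ell}) \leq K - 1$ in $\bigoh{q}$ CP steps; multiplying each bound by $K^{j_\ell}$ and summing into $L(U) \geq 0$ yields
\begin{equation*}
K^{j_k} L(u_{j_k}) + \sum_{\ell > k} K^{j_\ell} L(u_{j_\ell}) \geq -\sum_{\ell < k} K^{j_\ell}(K-1) \geq -(K^{j_k} - K).
\end{equation*}
All variable coefficients on the left are divisible by $K^{j_k}$ while the right-hand side has magnitude strictly less than $K^{j_k}$, so one application of the division rule produces $L(u_{j_k}) + K \cdot M \geq 0$, where $M = \sum_{\ell > k} K^{j_\ell - j_k - 1} L(u_{j_\ell})$. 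A symmetric argument starting from $L(U) \leq 0$ and the lower bounds $L(u_{j_\ell}) \geq -(K-1)$ gives $L(u_{j_k}) + KM \leq 0$. Next, adding the variable-axiom bound $-L(u_{j_k}) \geq -(K-1)$ to $L(u_{j_k}) + KM \geq 0$ yields $KM \geq -(K-1)$, and a second application of the division rule (by $K$) gives $M \geq 0$; symmetrically $M \leq 0$. Finally, multiplying $M \leq 0$ by $K$ and adding the resulting $-KM \geq 0$ to $L(u_{j_k}) + KM \geq 0$ cancels the $KM$ term and delivers $L(u_{j_k}) \geq 0$; symmetrically $L(u_{j_k}) \leq 0$, so $L(u_{j_k}) = 0$. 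The total length is $\bigoh{q\setsize{U}}$, dominated by the $\setsize{U}$ variable-bound derivations, and constant space is maintained by discarding each bound immediately after folding it into the running accumulator.

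For Part~\ref{item:coding-propagate} (propagate), observe that by linearity of $g$, each hypothesis $L(u_\ell) = 0$ coincides semantically with $g(u_\ell) = 1$, so together with $C \circ g^n$ the hypotheses semantically entail the CNF encoding of $g(u_k) = 1$. Apply the Space Lemma (\reflem{lem:clause-in-space-five}) to derive each of the $\bigoh{2^q}$ clauses of this CNF encoding from the given data in constant space and length $\bigoh{2^{kq}}$ (up to polynomial factors), and then invoke the hypothesized derivation of $L(u_k)$ from this encoding at an additional cost of length $\ell_g$ and space $s_g$, for a total of $\bigoh{2^{kq} \ell_g}$ length and $\bigoh{s_g}$ space.

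The main obstacle is the extract step: without the double-division trick the overall refutation length in \refthm{th:small-space} would blow up by a factor of $\numvertices$, breaking the claimed $\bigohtilde{\numvertices^2}$ bound.
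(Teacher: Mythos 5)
Your proof is correct. Parts 2 (propagate) and 3 (append) follow the paper's approach: a single linear combination for append, and for propagate, deriving the clauses of $z_{u_k}\circ g$ from $\set{L(u_i)}_{i<k}$ and $C \circ g^n$ via the Space Lemma and splicing them into the hypothesized derivation~$\Pi$. One small caveat on propagate: the clauses of the CNF encoding of $g(u_k)$ should be derived \emph{on demand} at the point where $\Pi$ uses them as axioms, rather than all up front and stored; otherwise the space blows up to the number of clauses in the encoding rather than staying $\bigoh{s_g}$.

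Part 1 (extract) takes a genuinely different route from the paper. The paper sets up two explicit ``trim'' operations: trimming the low-order terms of an inequality (zero out each low-order variable with a variable axiom, divide by $K^i$, multiply back), and trimming the high-order terms (apply low-order trimming to the \emph{opposite} inequality $-L(U)\geq 0$, then add), and extracts $L(u)$ by trimming both ends of both $L(U)\geq 0$ and $-L(U)\geq 0$, then dividing. You trim only the low end---using the block bounds $L(u_{j_\ell})\leq K-1$ instead of variable-by-variable zeroing, which is equivalent---and a single division to isolate $L(u_{j_k}) + KM \geq 0$ and $-L(u_{j_k}) - KM \geq 0$. Instead of explicitly eliminating the high-order residue $M$ with the opposite inequality, you \emph{prove} $M = 0$ from these two inequalities and the bound on $L(u_{j_k})$ via a second division, and then cancel. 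The appeal of your version is that it never constructs the trimmed high part as a standalone inequality, relying purely on arithmetic with the division rule; the paper's trimming gadgets are reusable and directly produce the trimmed object. Both give the same $\bigoh{q\setsize{U}}$ length and $\bigoh{1}$ space. Two minor imprecisions that do not affect correctness: (i) the intermediate bound $\sum_{\ell<k} K^{j_\ell}(K-1) \leq K^{j_k}-K$ requires the smallest exponent to be at least $1$; the weaker $<K^{j_k}$ always holds and is all that the rounding needs; (ii) when the linear gadget has a nonzero free coefficient $c$, the constant part of $L(U)$ is not zero, so the right-hand side after moving constants over is a multiple of $K^{j_k}$ plus a residue of magnitude $<K^{j_k}$---the ceiling in the division rule is only affected by the residue, so your derivation goes through, but it is worth stating.
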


Let us first use the Coding Lemma to complete the proof. We show a
more general statement from which \refth{th:small-space} follows
immediately by setting $k=3$ and $g=\equalgadget_q$, with
$\gadgetarity=\bigoh{\log \log n}$, and bounding
$\ell_{\equalgadget} = \bigoh{q}$ and $s_{\equalgadget}= \bigoh{1}$.

\begin{lemma}
  \label{lem:small-space}
  If $\pebblingformula$ is a width-$k$ pebbling formula on $n$ variables and $g$ is a linear gadget of arity $\gadgetarity$ then there is a
  cutting planes refutation of $\liftedformula{\pebblingformula}{g^n}$ in
  length $O(n(kqn + 2^{kq}\ell_g))$ and space $\bigoh{\clwidth+s_g}$.
\end{lemma}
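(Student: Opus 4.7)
The plan is to process the DAG $G$ underlying the pebbling formula in a topological order $v_1, v_2, \ldots, v_n$, maintaining at all times a single ``accumulator'' $L(U_i)$ on the blackboard (represented by the two inequalities $L(U_i) \geq 0$ and $-L(U_i) \geq 0$), where $U_i$ denotes the set of vertices processed so far. By the claim proved just before the Coding Lemma, $L(U_i) = 0$ encodes the conjunction $\bigwedge_{v \in U_i} g(x_v, y_v) = 1$, and we grow $U_i$ by one vertex at a time until it contains the sink, at which point a contradiction falls out.

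For each new vertex $v_{i+1}$ in topological order, we do the following. If $v_{i+1}$ is a source, then the lifted source axiom is a CNF encoding of $g(x_{v_{i+1}}, y_{v_{i+1}}) = 1$, so by assumption we can derive $L(v_{i+1})$ from it in length $\ell_g$ and space $s_g$. If $v_{i+1}$ is internal with predecessors $u_1, \ldots, u_r$ where $r \leq k-1$, we iteratively invoke part~\ref{item:coding-extract} of the Coding Lemma to produce $L(u_1), \ldots, L(u_r)$ on the blackboard alongside $L(U_i)$, at cost $\bigoh{qn}$ length per extraction, and then invoke part~\ref{item:coding-propagate} with the lifted clause axiom for $v_{i+1}$ to derive $L(v_{i+1})$ at cost $\bigoh{2^{kq}\ell_g}$ length and $\bigoh{s_g}$ space. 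Once $L(v_{i+1})$ is derived we erase the extracted predecessors, leaving only $L(U_i)$ and $L(v_{i+1})$, and then apply part~\ref{item:coding-append} to combine them into $L(U_{i+1})$ in constant length and space. When $v_{i+1}$ is the sink $t$, the same extraction yields $L(t) = 0$, i.e.\ $L(x_t, y_t) = 0$, while the lifted sink axiom $\olnot{z_t}$ is a CNF encoding of $g(x_t, y_t) = 0$; these are jointly unsatisfiable, so by the Space Lemma (\reflem{lem:clause-in-space-five}) the contradiction $0 \geq 1$ can be derived in length and space depending only on $q$, which is absorbed into the final bound.

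For the accounting, each round costs $\bigoh{kqn + 2^{kq}\ell_g}$ length and $\bigoh{k + s_g}$ space: the $\bigoh{k}$ in the space is needed to simultaneously store $L(U_i)$, the up to $k-1$ extracted $L(u_j)$'s, and $L(v_{i+1})$, while $s_g$ accounts for the working inequalities of part~\ref{item:coding-propagate}. Summing over the $n$ vertices yields the claimed $\bigoh{n(kqn + 2^{kq}\ell_g)}$ length and $\bigoh{k + s_g}$ space, matching the statement.

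The main obstacle is confirming that the extraction in part~\ref{item:coding-extract} can be carried out without destroying $L(U_i)$---otherwise, each extraction would force us to rebuild the accumulator and both length and space would blow up. The natural reading of the Coding Lemma is that $L(U)$ remains on the blackboard throughout while $L(u)$ is produced alongside it using $\bigoh{1}$ additional working inequalities, so this concern is dispatched by inspecting the proof of that lemma. Given that, the rest of the argument is straightforward bookkeeping that respects the topological order.
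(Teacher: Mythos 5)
Your proposal is correct and follows essentially the same approach as the paper's own proof: maintain a running accumulator $L(U)$, extract predecessor equalities via Part~\ref{item:coding-extract}, propagate via Part~\ref{item:coding-propagate}, append via Part~\ref{item:coding-append}, and finish with the Space Lemma at the sink. Your observation that Part~\ref{item:coding-extract} must be non-destructive is well taken and is indeed confirmed by the Coding Lemma's proof, which begins by making copies of the pair of inequalities representing $L(U)$.
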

\begin{proof}
  We begin with $L(\emptyset)$, which is represented as the pair of inequalities $0 \geq 0, 0 \geq 0$.
  By combining Parts (2) and (3) of the Coding Lemma we can derive $L(S)$, where $S$ is the set of sources of $G$.
  We then follow a unit-propagation proof of $\pebblingformula$, deriving $L(u)$ for each vertex of $G$ in topological order.
  Suppose at some point during the derivation we have derived $L(U)$ for some subset $U$ of vertices.
  For any axiom $C$ of $\pebblingformula$ of the form $C = \neg z_{u_1} \vee \neg z_{u_2} \vee \cdots \vee \neg z_{u_{k-1}} \vee z_{u_k}$ with $u_1, u_2, \ldots, u_{k-1} \in U$ we do the following: first apply Part (1) of the Coding Lemma to obtain $L(u_i)$ for each $i \in [k-1]$.
  Apply Part (2) to derive $L(u_k)$, forget $L(u_i)$ for each $i \in [k-1]$, and then apply Part (3) to $L(U)$ and $L(u_k)$ to derive $L(U \cup \set{u_k})$.
  Continue in this way until we derive $L(z)$ where $z$ is the sink vertex of $G$.
  Since $\set{L(z), \olnot{z} \circ g}$ is an unsatisfiable set of linear inequalities, it follows by the Space Lemma (Lemma \ref{lem:clause-in-space-five}) that we can deduce a contradiction in length $O(q^22^q)$ and space $O(1)$.

  In the above proof we need to derive $L(u)$ for each of the $n$ vertices of the graph.
  Deriving $L(u)$ requires at most $O(k)$ applications of Part (1), one application of Part (2), and one application of Part (3).
  Thus, in total, we require length $O(n(kqn + 2^{kq}\ell_g))$ and space $O(k + s_g)$.
\end{proof}

It remains to prove the Coding Lemma (\reflem{lem:coding-lemma}).
  \begin{proof}[Proof of Coding Lemma]
    Let $U = \set{u_1, u_2, \ldots, u_t}$ be an arbitrary subset of vertices of size $t$.
    Recall the definition $L(U) = \sum_{i=1}^t K^{i-1}L(u_i)$.
    For any $u_i \in U$ a \emph{term} of $L(U)$ will be one of the terms $K^{i-1}L(u_i)$, which is a sum of $2q$ variables itself.
    We begin by defining two auxiliary operations that allow us to trim both the least and the most significant terms from $L(U)$.

To
    trim the $\trimctant$ least significant terms of an inequality we
    essentially divide by $\safetyconstant^\trimctant$. More formally, for
    every variable $v$ with a positive coefficient
    $\varcoefficient$ less than $\safetyconstant^\trimctant$ we add the inequality
    $-\varcoefficient \vargeneric \geq -\varcoefficient$, and
    for every variable with a negative coefficient greater than
    $-\safetyconstant^\trimctant$ we add the inequality $\varcoefficient\vargeneric \geq 0$.
    This takes length $\bigoh{\gadgetarity\trimctant}$, since each
    term contributes $2q$ coefficients, and space $\bigoh{1}$.

    At this point all the remaining coefficients on the LHS are
    divisible by $\safetyconstant^\trimctant$, so we can apply the
    division rule.  By construction the RHS is greater than
    $-\safetyconstant^\trimctant - \sum_{j
      \geq\trimctant}\freecoefficient K^{j}$, therefore when we divide
    by $\safetyconstant^\trimctant$ the coefficient on the RHS becomes
    $-\sum_{j\geq\trimctant}\freecoefficient
    \safetyconstant^{j-\trimctant}$.

    Finally, to restore the values of the coefficients to the values
    they had before dividing, we multiply by
    $\safetyconstant ^\trimctant$ at the end to restore them.

To trim
    the $\ineqsize-\trimctant$ most significant terms of an inequality
    with $\ineqsize$ terms we need to use the opposite inequality, since
    the remaining part only has a semantic meaning when the most
    significant part vanishes. Hence we first trim the $\trimctant$ least
    significant terms of the opposite inequality, keeping exactly the
    negation of the terms that we want to discard. Then we add both
    inequalities so that only the $\trimctant$ least significant terms
    remain. This takes length $\bigoh{\gadgetarity\trimctant}$ and space $\bigoh{1}$.

    Using the trimming operations we can prove items \ref{item:coding-extract}--\ref{item:coding-append} in the lemma.

\begin{enumerate}
\item We must show that for any $u \in U$ there is a cutting planes derivation of $L(u)$ from $L(U)$ in length $O(qt)$ and space $O(1)$.
    This is straightforward: begin by making copies of the pair of inequalities $L(U) \geq 0$ and $-L(U) \geq 0$ encoding $L(U)=0$. Trim the terms that are strictly more and strictly less significant than $L(u)$ from both of the inequalities, in length $\bigoh{\gadgetarity t}$ and space $\bigoh{1}$.
    
\item Recall that
    we assumed there is a derivation $\Pi$ of $L(u_k)$ from the CNF
    formula $z_{u_k} \circ g$ in length $\ell_g$ and space $s_g$, so
    our goal is to produce the set of clauses $z_{u_k} \circ g$. Any
    such clause $D$ is implied by the set of inequalities
    $\set{L(u_i)}_{i=1}^{k-1}$ together with the CNF encoding of
    $C \circ g^n$, therefore it has a derivation $\Pi_D$ in length
    $O(2^{k\gadgetarity})$ and space $\bigoh{1}$ by the Space Lemma
    (\reflem{lem:clause-in-space-five}). Replacing each usage of a clause
    $D \in z_{u_k} \circ g$ in $\Pi$ as an axiom by the corresponding
    derivation $\Pi_D$ we obtain a sound derivation $L(u_{t+1})$ in
    length $O(2^{t\gadgetarity}\encodingtime)$ and space
    $\bigoh{\encodingspace}$.
\item Simply add $K^{t+1}L(u) \geq 0$ to $L(U) \geq 0$ and $-K^{t+1}L(u) \geq 0$ to $-L(U)\geq 0$; this clearly uses bounded length and space.\qedhere
\end{enumerate}
    \end{proof}
  \noindent
  This completes the proof of \refth{th:small-space}.

Note that the largest coefficient used in the refutation is bounded by
$\safetyconstant^{\numvariables}$. Indeed, the argument can be generalized to
give a continuous trade-off between the size of the largest coefficient and
the number of inequalities, simply by adding a new pair of empty
inequalities once the coefficient required to add a vertex to an
existing pair would be too large. This means that if we allow up to
$\extrainequalities$ inequalities then we can use coefficients of size
bounded by
$\safetyconstant^{\bigoh{\numvariables/\extrainequalities}}$.

\section{Application: Separating Monotone Boolean and Real Formulas}
\label{sec:monotone}

In this section we exhibit an explicit function that exponentially
separates the size of monotone Boolean formulas and monotone real
formulas.

\begin{theorem}
  \label{th:monotone}
  There is an explicit family of functions $f_n$ over
  $\bigoh{n\polylog n}$ variables that can be computed by 
  monotone real formulas of size $\bigoh{n\polylog n}$ but for which every monotone
  Boolean formula requires size $2^{\bigomega{n/\log n}}$.
\end{theorem}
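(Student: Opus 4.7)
My plan is to instantiate the Karchmer--Wigderson paradigm using the lifted pebbling formula of Section~\ref{sec:cp}, leveraging Lemma~\ref{lem:search-is-hard} for the lower bound and a short semantic tree-like cutting planes refutation combined with the Hrube\v{s}--Pudl\'ak correspondence~\cite{HP18Note} for the upper bound. Concretely, let $G = G_n$ be a graph from Theorem~\ref{th:hard-graphs} with $\rpeb(G) = \bigomega{n/\log n}$, and consider the formula $\pebblingformula \circ \eq_q$ with $q = \bigoh{\log\log n}$ chosen as in Lemma~\ref{lem:search-is-hard}. Following the standard Raz--McKenzie transformation from the falsified-clause search problem of an unsatisfiable CNF to a monotone Boolean function, I would define $f_n$ so that its monotone Karchmer--Wigderson game coincides with $\Search(\pebblingformula \circ \eq_q)$. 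The resulting function lives on $\bigoh{n\polylog n}$ variables.

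For the lower bound, Lemma~\ref{lem:search-is-hard} gives $\Pcc(\Search(\pebblingformula \circ \eq_q)) = \bigomega{n/\log n}$. By the Karchmer--Wigderson theorem, this lower bounds the minimum monotone Boolean formula depth of $f_n$. Spira's balancing theorem implies that any monotone Boolean formula of size $s$ can be converted to one of depth $\bigoh{\log s}$, so every monotone Boolean formula computing $f_n$ must have size at least $2^{\bigomega{n/\log n}}$, as desired.

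For the upper bound, I would construct a semantic tree-like cutting planes refutation of $\pebblingformula \circ \eq_q$ of size $\bigoh{n\polylog n}$ and then invoke the Hrube\v{s}--Pudl\'ak correspondence~\cite{HP18Note} to convert it into a monotone real formula of the same asymptotic size computing $f_n$. The refutation mirrors the strategy of Theorem~\ref{th:small-space}: traverse $G$ in topological order and, at each vertex $v$, semantically derive the pair of inequalities encoding $\eq_q(x_v, y_v) = 1$ from the lifted axioms together with the corresponding inequalities at the predecessors of $v$, closing at the sink against the negative root axiom. Each such semantic step involves only $\bigoh{q} = \bigoh{\log\log n}$ variables, so both its size and the cost of (re-)deriving it along the tree remain polylogarithmic.

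The main technical obstacle is this upper bound: tree-like cutting planes refutations of pebbling formulas typically blow up exponentially in the reversible pebbling price, since subproofs cannot be shared. The equality gadget should sidestep this barrier precisely because each semantic claim $\eq_q(x_v, y_v) = 1$ depends on only $\bigoh{\log\log n}$ bits and so can be re-derived locally in polylogarithmic size; this is analogous to the compression phenomenon exploited in Section~\ref{sec:small-space}, but now turned into a tree-like rather than a DAG-like bound. Making this quantitative, and carefully tracking the size blowup through the translation of~\cite{HP18Note} into a monotone real formula, is where the bulk of the work will lie.
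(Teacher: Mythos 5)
Your overall strategy matches the paper's proof of Theorem~\ref{th:monotone}: use Lemma~\ref{lem:search-is-hard} plus the (Razborov/G\'al) transformation of Lemma~\ref{lem:kw-from-search} and the Karchmer--Wigderson characterization for the Boolean lower bound, and a short semantic tree-like cutting planes refutation together with the correspondence of~\cite{HP18Note} (Lemmas~\ref{lem:proof-to-proto} and~\ref{lem:proto-to-ckt}) for the real-formula upper bound. You also correctly flag the tree-like refutation as the main technical obstacle. However, the sketch you give for how to overcome that obstacle does not work, and it misses the key idea that the paper actually uses.

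Your claim that the claim $\eq_q(x_v,y_v)=1$ ``depends on only $\bigoh{\log\log n}$ bits and so can be re-derived locally in polylogarithmic size'' is false. The statement itself involves few variables, but its \emph{derivation} from the lifted axioms is not local: deriving $\eq_q(x_v,y_v)=1$ requires $\eq_q(x_{u},y_{u})=1$ for every predecessor $u$ of $v$, which in turn require their predecessors, all the way back to the sources. If you re-derive from scratch each time a fact is needed, you get exactly the exponential blowup you are trying to avoid; the small gadget arity does nothing to prevent this. What makes the paper's Theorem~\ref{th:tree-like}/Lemma~\ref{lem:tree-like} work is a genuinely different mechanism: at every point in the refutation, \emph{all} of the accumulated information (the conjunction $\bigwedge_{u\in W}\eq_q(x_u,y_u)$ for a growing prefix $W$ of the topological order) is kept encoded in a \emph{single pair} of inequalities $\lingadget{W}\geq 0$, $-\lingadget{W}\geq 0$, and Lemma~\ref{lem:tree-like-induction} shows how to semantically advance to $\lingadget{W\cup\set{w}}\geq 0$ using $2^{\gadgetarity}$ inference steps that consume fresh copies of the axioms, never re-deriving old facts. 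This ``in-place'' update, not local re-derivation, is the compression phenomenon that makes the tree-like bound possible, and your sketch does not contain it.

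Two smaller issues. First, Lemma~\ref{lem:kw-from-search} only yields a \emph{partial} monotone function $f^*$; the total explicit $f_n$ of the theorem statement is obtained only \emph{after} constructing the real formula (adding a top threshold gate to make the output Boolean and taking $f_n$ to be what the formula computes), and the lower bound transfers because $f_n$ extends $f^*$. Your proposal defines $f_n$ up front as ``the'' function with the right KW game, which does not quite make sense for a relation-valued search problem. Second, the precise size accounting for the tree-like refutation ($\bigoh{\numvertices\gadgetarity 2^{\gadgetarity}}$ with $\gadgetarity = \log\log\numvertices$ giving $\bigoh{\numvertices\polylog\numvertices}$) depends on the accumulator construction above; without it there is no path to the stated bound.
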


To prove the lower bound part of \refth{th:monotone} we use the
characterization of formula
depth by communication complexity~\cite{KW90Monotone}. Given a
monotone Boolean function $f$, the monotone Karchmer--Wigderson game
of $f$ is a search problem
$\mkw\colon \set{0,1}^n \times \set{0,1}^n \to [n]$ defined as
$((x,y),i)\in\mkw$ if $f(x)=1$, $f(y)=0$, $x_i=1$, and $y_i=0$. In
other words, given a $1$-input $x$ and a $0$-input $y$ for $f$, the
task is to find an index $i\in[n]$ such that $x_i=1$, and
$y_i=0$. Such an index always exists because $f$ is monotone.

If we denote by $\mdepth$ the minimum depth of a monotone Boolean
formula required to compute a Boolean function $f$, then we can write
the characterization as
\begin{lemma}[\cite{KW90Monotone}]
  \label{lem:kw}
  For every function $f$, it holds that $\mdepth=\Pcc(\mkw)$.
\end{lemma}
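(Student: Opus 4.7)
The plan is to prove both inequalities separately, following the classical argument of Karchmer and Wigderson. Throughout I treat $f$ as a monotone function, so that $\mkw$ is always a total search problem: an index $i$ with $x_i=1$ and $y_i=0$ always exists when $f(x)=1$ and $f(y)=0$, precisely because $f$ is monotone.

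For the direction $\mdepth \geq \Pcc(\mkw)$, I convert a depth-$d$ monotone formula $F$ computing $f$ into a $d$-bit deterministic protocol for $\mkw$. Given Alice's $1$-input $x$ and Bob's $0$-input $y$, the players walk down $F$ from the root maintaining the invariant that the current subformula evaluates to $1$ on $x$ and to $0$ on $y$. At an AND gate both children evaluate to $1$ on $x$ while at least one evaluates to $0$ on $y$, so Bob announces one bit indicating which subtree to descend into; OR gates are symmetric, with Alice speaking. At a leaf labelled by a variable $x_i$ the invariant forces $x_i=1$ and $y_i=0$, so $i$ is a valid output. Total communication is the depth $d$.

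For the converse $\mdepth \leq \Pcc(\mkw)$, I take a protocol $\Pi$ of cost $c$ and transform its tree into a monotone formula $F$ of depth at most $c$. Recursively, I assign each leaf outputting index $i$ the subformula $x_i$; at a node where Alice speaks, I combine the two subtree formulas with $\vee$, and at a node where Bob speaks, with $\wedge$. The resulting formula is monotone by construction and inherits the depth of the protocol tree.

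The main step, which is where the one nontrivial subtlety lies, is verifying that $F$ actually computes $f$. I proceed by induction on the protocol tree using the rectangle structure of deterministic protocols. At each internal node the set of inputs reaching it is a combinatorial rectangle $X \times Y$ with $X \subseteq f^{-1}(1)$ and $Y \subseteq f^{-1}(0)$; the invariant I carry is that the subformula at that node evaluates to $1$ on every $x \in X$ and to $0$ on every $y \in Y$. At leaves the invariant is immediate from correctness of $\Pi$. At an Alice node, $X$ is partitioned into $X_0 \cup X_1$ while $Y$ is preserved, so by induction each subformula $F_b$ is $0$ on all of $Y$ and is $1$ on $X_b$, which means $F_0 \vee F_1$ meets the invariant; Bob nodes are symmetric under AND and the partition of $Y$. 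Since $f$ is total we have $f^{-1}(0) \cup f^{-1}(1) = \set{0,1}^n$, so $F$ agrees with $f$ on every input, completing the proof.
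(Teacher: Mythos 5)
Your proof is correct and is exactly the classical Karchmer--Wigderson argument from the cited reference \cite{KW90Monotone}; the paper itself does not reprove this lemma but simply cites it. Both directions (walking a monotone formula down to a protocol using the AND/OR dichotomy, and recursively turning a protocol tree into a formula while tracking the rectangle invariant $X \subseteq f^{-1}(1)$, $Y \subseteq f^{-1}(0)$) are handled correctly, so there is nothing to add.
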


The analogue of this characterization for real circuits is in terms of
DAG-like real
protocols~\cite{Krajicek98Interpolation,Sokolov17Daglike,HP18Note}. Since
we are only interested in formulas rather than circuits we only
consider tree-like protocols, which we call \emph{locally real}
protocols to distinguish them from the stronger model of real
protocols, also known as real games~\cite{Krajicek98Interpolation}.

A locally real communication protocol, then, is a communication
protocol where the set of inputs compatible with a node is defined by
one half-space, as opposed to a real protocol where the set of
compatible inputs is defined by the intersection of all half-spaces in
the path leading to that node.

Formally, a locally real protocol for a search problem
$\Search \colon \mathcal{X} \times \mathcal{Y} \to \mathcal{Z}$, where
$\mathcal{X}$ and $\mathcal{Y}$ are Boolean hypercubes, is a tree
where every internal node $v$ is labelled with a half-space
$H_v = \setdescr{(x,y)\in \mathcal{X}\times\mathcal{Y}}{\langle xy,c_v \rangle
  \geq d_v}$, where
$c_v \in \mathbb{R}^{\dim \mathcal{X} + \dim \mathcal{Y}}$ and
$d_v \in \mathbb{R}$, and every leaf is additionally labelled with an
element $z\in \mathcal{Z}$. The root is labelled with the full space
$\mathcal{X} \times \mathcal{Y}$, children are consistent in the sense that if a node $w$
has children $u$ and $v$ then $H_w \subseteq H_u \union H_v$.  Given
an input $(x,y)$, the protocol produces a nondeterministic output $z$
as follows. We start at the root and at each internal node we
nondeterministically move to a child that contains $(x,y)$, which
exists by the consistency condition. The output of the protocol is the
label of the resulting leaf. A protocol is correct if for any input
$(x,y)\in \mathcal{X}\times \mathcal{Y}$ it holds that
$z\in \mathcal{Z}$.

It is not hard to turn a real formula
into a locally real protocol, and the converse also holds.

\begin{lemma}[\cite{HP18Note}]
  \label{lem:proto-to-ckt}
  Given a locally real protocol for the monotone Karchmer--Wigderson
  game of a partial function $f$, there exists a monotone real formula
  with the same underlying graph that computes $f$.
\end{lemma}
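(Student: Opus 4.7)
The plan is a structural induction on the protocol tree $T$, producing at each node a gate so that the resulting formula has the same underlying graph as $T$. For each node $v$ of $T$ I will build a monotone real subformula $F_v \colon \set{0,1}^n \to \mathbb{R}$ and maintain the invariant that $F_v(x) > F_v(y)$ for every pair $(x,y) \in H_v$ with $f(x)=1$ and $f(y)=0$. Since $H_{\mathrm{root}}$ is the full bipartite input space, this invariant at the root separates every $1$-input of $f$ from every $0$-input, so a final threshold (which is itself a monotone real gate and can be absorbed into the top of the formula) converts $F_{\mathrm{root}}$ into $f$.

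The base case is immediate: at a leaf $\ell$ labelled by an index $i \in [n]$, set $F_\ell(z) := z_i$, which is a monotone real atom. Correctness of the protocol forces any KW-valid $(x,y) \in H_\ell$ to satisfy $x_i = 1$ and $y_i = 0$, giving $F_\ell(x) = 1 > 0 = F_\ell(y)$. For the inductive step, consider an internal node $w$ with children $u_0, u_1$ and halfspace $H_w$ determined by $\langle c_w, xy \rangle \geq d_w$. The consistency condition $H_w \subseteq H_{u_0} \union H_{u_1}$ ensures that every KW-valid $(x,y) \in H_w$ lies in some $H_{u_i}$, so by induction at least one of the differences $F_{u_0}(x) - F_{u_0}(y)$ and $F_{u_1}(x) - F_{u_1}(y)$ is strictly positive. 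The task is to exhibit a monotone real combining gate $g_w$ such that $F_w := g_w(F_{u_0}, F_{u_1})$ preserves the invariant on all of $H_w$.

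The hard part will be precisely this choice of combiner: naïve options such as $\max$, $\min$, or an unweighted sum do not work in general, because the child whose halfspace is \emph{not} entered can contribute an arbitrarily negative difference to $F_w(x) - F_w(y)$. Following the perspective of~\cite{HP18Note}, I would resolve this by reading off the coordinate signs of $c_w = (a^w, b^w)$ to fix the shape of $g_w$: when $a^w$ is non-negative and $b^w$ non-positive coordinate-wise the halfspace is ``monotone in favour of Alice'' and $g_w = \max$ yields an $\vee$-like gate; the opposite signs produce $g_w = \min$ and an $\wedge$-like gate; and in mixed cases one splits $c_w$ into its positive and negative parts and encodes the halfspace as a non-negative weighted combination of $F_{u_0}$ and $F_{u_1}$, rescaled so that the contribution from the entered halfspace strictly dominates the worst-case adversarial margin coming from the unused child. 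Verifying that this gate is simultaneously (i) monotone non-decreasing in $z$ and (ii) invariant-preserving on every KW-valid $(x,y) \in H_w$ is the key technical step; once established, iterating up to the root and capping with the final threshold produces a monotone real formula with the same underlying graph as $T$ that computes $f$.
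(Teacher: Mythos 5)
Your scaffolding --- inducting over the protocol tree with the invariant ``$F_v(x)>F_v(y)$ for every KW-valid $(x,y)\in H_v$,'' reading literals off the leaves, and absorbing a threshold at the root --- is exactly the construction of~\cite{HP18Note}, and you correctly identify that the choice of combining gate $g_w$ is the crux. The resolution you propose is wrong, though: the shape of $g_w$ is \emph{not} determined by the coordinate signs of $c_w$, and none of $\max$, $\min$, nor a fixed non-negative weighted combination of $F_{u_0},F_{u_1}$ works in general, for any sign pattern. The inductive hypothesis only guarantees that for KW-valid $(x,y)\in H_w$ the vector $\bigl(F_{u_0}(x),F_{u_1}(x)\bigr)$ is not coordinate-wise dominated by $\bigl(F_{u_0}(y),F_{u_1}(y)\bigr)$; two such pairs whose ``winning'' coordinate differs --- say $(2,0)$ vs.\ $(0,1)$ for one KW-valid pair and $(0,1)$ vs.\ $(2,0)$ for another --- defeat any fixed linear combiner (the requirements $2\lambda>\mu$ and $\mu>2\lambda$ are incompatible), and whether this can occur has nothing to do with the signs of $c_w$.

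What actually makes the step go through is that $H_w$ is a half-space, i.e.\ a triangle $\{(x,y):a_w(x)\geq b_w(y)\}$, so the monotone gate must be produced by a no-cycle / linear-extension argument rather than by a linear combination. One must verify there is no cyclic sequence of KW-valid pairs $(x_1,y_1),\dots,(x_m,y_m)\in H_w$ with $\bigl(F_{u_0}(y_j),F_{u_1}(y_j)\bigr)$ coordinate-wise dominating $\bigl(F_{u_0}(x_{j+1}),F_{u_1}(x_{j+1})\bigr)$ for all $j$ cyclically; any such cycle makes the strict inequalities $g_w\bigl(F_{u_0}(x_j),F_{u_1}(x_j)\bigr)>g_w\bigl(F_{u_0}(y_j),F_{u_1}(y_j)\bigr)$ unsatisfiable for every monotone $g_w$. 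If one existed, the induction hypothesis at $u_0,u_1$ together with $H_w\subseteq H_{u_0}\cup H_{u_1}$ would force $(x_{j+1},y_j)\notin H_w$ while $(x_j,y_j)\in H_w$, yielding $a_w(x_{j+1})<b_w(y_j)\leq a_w(x_j)$ and hence the impossible strict chain $a_w(x_1)>a_w(x_2)>\dots>a_w(x_1)$. Once cycles are excluded, a monotone $g_w$ separating all the required pairs is obtained by linear extension over the finitely many relevant points. This is the missing content of your inductive step; as written, the argument does not close.
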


In order to obtain a function whose Karchmer--Wigderson game we can
analyse we use the fact that every search problem can be interpreted
as the Karchmer--Wigderson game of some function. To state the result
we need the notion of a nondeterministic communication protocol, which
is a collection $N$ of deterministic protocols such that
$((x,y),z) \in S$ if and only if there exists some protocol $\pi\in N$
such that $\pi(x,y)=z$. The cost of a nondeterministic protocol is
$\log \setsize{N} + \max_{\pi\in N}\depth{\pi}$.

\begin{lemma}[\cite{Razborov90Applications,Gal01Characterization}, see also \cite{Robere18Thesis}]
  \label{lem:kw-from-search}
  Let $S$ be a two-party total search problem with nondeterministic
  communication complexity $k$. There exists a partial monotone
  Boolean function $f\colon \set{0,1}^{2^k} \to \set{0,1,*}$ such that
  $S$ is exactly the monotone Karchmer--Wigderson game of $f$.
\end{lemma}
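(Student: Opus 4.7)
\medskip

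\noindent\textbf{Proof proposal.} The plan is to encode the nondeterministic protocol into the coordinates of a Boolean vector space, so that solving $\Search$ becomes exactly the task of finding a disagreement between a designated $1$-input and a designated $0$-input of a monotone partial function.

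Unpacking the cost measure, a nondeterministic protocol of cost $k$ consists of a collection $N$ of deterministic protocols with $\log|N| + \max_{\pi\in N}\depth{\pi}\leq k$, so the disjoint union $T$ of the leaf sets of the protocols in $N$ satisfies $|T|\leq 2^k$. I would index coordinates of $\set{0,1}^{T}$ by these leaves, padding with extra dummy coordinates if $|T|<2^k$. For each $x\in\mathcal{X}$ define $a_x\in\set{0,1}^T$ by $(a_x)_\ell = 1$ iff $\ell$ is consistent with $x$ from Alice's side in the protocol containing $\ell$; for each $y\in\mathcal{Y}$ define $b_y\in\set{0,1}^T$ by $(b_y)_\ell = 0$ iff $\ell$ is consistent with $y$ from Bob's side (padding coordinates set to $0$ in every $a_x$ and to $1$ in every $b_y$, so they are never witnesses of a disagreement).

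Next I would define the partial function
\begin{equation*}
f(v) = \begin{cases} 1 & \text{if } v\geq a_x \text{ for some } x\in\mathcal{X}, \\ 0 & \text{if } v\leq b_y \text{ for some } y\in\mathcal{Y}, \\ * & \text{otherwise,}\end{cases}
\end{equation*}
and verify that $f$ is well defined and monotone. The key point is that on any input pair $(x,y)$, every leaf $\ell$ actually reached by some protocol in $N$ on $(x,y)$ is consistent with both Alice's $x$ and Bob's $y$, so $(a_x)_\ell = 1$ and $(b_y)_\ell = 0$; hence $a_x\not\leq b_y$ for all $x,y$, which means no vector can be simultaneously a $1$-input and a $0$-input. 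Monotonicity is then immediate from the upward/downward closure in the definition.

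Finally I would check that $\mkw[f]$ coincides with $S$. Given a $1$-input $v\geq a_x$ and a $0$-input $w\leq b_y$, any coordinate $\ell$ with $v_\ell=1$ and $w_\ell=0$ must satisfy $(a_x)_\ell=1$ and $(b_y)_\ell=0$, i.e., $\ell$ is a leaf consistent with both $x$ and $y$ in some protocol $\pi\in N$; by the soundness of $N$ its output label $z(\ell)$ satisfies $((x,y),z(\ell))\in S$. Conversely, the leaf actually reached by $(x,y)$ in any protocol that outputs a valid $z$ gives such a coordinate, so every instance of $\mkw[f]$ can be mapped from and to an instance of $S$ with the same set of valid answers. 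The main technical nuisance I expect is bookkeeping: one has to be careful that Alice's (respectively Bob's) side of each leaf depends only on $x$ (respectively $y$), which is immediate from determinism of each $\pi\in N$, and that the padding coordinates do not inadvertently become $\mkw$ witnesses, which the chosen padding convention ensures.
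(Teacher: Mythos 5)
The paper does not include a proof of this lemma --- it is cited from Razborov, G\'al, and Robere's thesis --- so I am comparing your argument against the standard construction. Your setup (indexing coordinates by leaves of the protocols in $N$, defining $a_x$ by Alice-consistency and $b_y$ by Bob-consistency, arguing $a_x \not\leq b_y$ via totality of $S$ and soundness of $N$) is exactly the standard approach, and that part is correct.

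However, there is a genuine gap in the last step. You define $f$ by taking the upward closure of $\{a_x\}$ as the $1$-inputs and the downward closure of $\{b_y\}$ as the $0$-inputs, and then assert: ``Given a $1$-input $v\geq a_x$ and a $0$-input $w\leq b_y$, any coordinate $\ell$ with $v_\ell=1$ and $w_\ell=0$ must satisfy $(a_x)_\ell=1$ and $(b_y)_\ell=0$.'' The implication goes the wrong way. From $v\geq a_x$ you get $(a_x)_\ell \leq v_\ell$, so $v_\ell=1$ tells you nothing about $(a_x)_\ell$; likewise $w\leq b_y$ gives $w_\ell\leq(b_y)_\ell$, so $w_\ell=0$ tells you nothing about $(b_y)_\ell$. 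Concretely, if $v$ is obtained from $a_x$ by flipping a coordinate $\ell'$ that is Bob-consistent with $y$ but \emph{not} Alice-consistent with $x$, then $(v,b_y)$ has $\ell'$ as a valid $\mkw$-witness, yet $\ell'$ is not a leaf reached by any $\pi(x,y)$ and its label need not be a valid answer to $S$ on $(x,y)$. Your padding convention also only protects pairs $(a_x,b_y)$; for $v>a_x$ a padding coordinate can become a disagreement. In short, the monotone KW game of your $f$ has strictly more inputs than $S$, and on those extra inputs the correspondence to $S$ breaks, so it is not ``exactly'' $S$.

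The fix is simple and is what the cited proofs do: define $f(v)=1$ \emph{exactly} when $v = a_x$ for some $x$, $f(v)=0$ \emph{exactly} when $v = b_y$ for some $y$, and $f(v)=*$ otherwise. The function is still monotone as a partial function precisely because $a_x\not\leq b_y$ for all $x,y$, which you already proved. Now every $\mkw$-instance is of the form $(a_x,b_y)$, a witness $\ell$ with $(a_x)_\ell=1$ and $(b_y)_\ell=0$ is a leaf consistent with both $x$ and $y$ --- hence the unique leaf reached by $\pi(x,y)$ for that $\pi\in N$ --- and soundness of $N$ gives $((x,y),z(\ell))\in S$. Conversely every valid $S$-answer arises this way by the ``only if'' direction in the definition of a nondeterministic protocol. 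With this change the rest of your argument goes through.
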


We use as a search problem the falsified clause search problem of a
hard pebbling formula composed with equality given by
\reflem{lem:search-is-hard}. To exhibit a real formula for the
function it induces, we first build a tree-like cutting planes proof
of small size of the composed pebbling formula.

\begin{theorem}
  \label{th:tree-like}
  If $\formf$ is the pebbling formula of a graph of indegree $2$, then
  there is a tree-like semantic cutting planes refutation of
  $\liftedformula{\formf}{\eq_{\log\log \numvertices}}$ in length
  $\bigoh{\numvertices \log\numvertices\log\log\numvertices}$.
\end{theorem}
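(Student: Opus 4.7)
The plan is to follow a topological pebbling strategy on $G$ and encode the current configuration $U$ of already-processed vertices as a single derived equality
\[
L(U) = \sum_{u_i \in U} K^{i}\,L(u_i) = 0,
\]
using the super-increasing coefficient encoding from \refsec{sec:upper-bounds}. Here $L(u) = \sum_{j} 2^{j-1}(x_{u,j} - y_{u,j})$ is the linear form for the equality gadget on $u$, and $K$ is chosen large enough that $L(U) = 0$ forces each individual $L(u_i) = 0$ over Boolean inputs. Because the pebble state is captured by just one pair of inequalities, the backbone of the refutation is an essentially linear chain $L(\emptyset) \to L(U_1) \to \cdots \to L(V)$, which is automatically tree-like; a short tail then combines $L(V) = 0$ (which forces $\eq(t)$) with the sink's axiom clauses encoding $\neq(t)$ to derive the contradiction $0 \geq 1$.

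Each transition $L(U_{i-1}) \to L(U_i) = L(U_{i-1}) + K^{i}\,L(v_i)$ is carried out by a single semantic cutting planes inference whose children are the two previously derived inequalities $\pm L(U_{i-1}) \geq 0$ (consumed here, as tree-likeness requires) together with a carefully chosen subset of clauses from the lifted axiom for $v_i$. Semantic validity is immediate once enough axiom clauses are included, since $L(U_{i-1}) = 0$ forces $\eq(u)$ for every predecessor $u$ of $v_i$, the axiom then forces $\eq(v_i)$, and so $L(v_i) = 0$ and $L(U_i) = L(U_{i-1}) + K^{i}L(v_i) = 0$.

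The main technical obstacle is to bound the axiom-leaf cost of each internal transition by $O(q \cdot 2^q)$ rather than the naive $\Theta(q \cdot 2^{2q})$ that one would get by pouring in the whole distributed CNF of the lifted axiom (which would give total length $\Theta(n \log^2 n \log\log n)$, a $\log n$ factor above the target). I expect to achieve this reduction by a semantic case analysis on the common value $a \in \set{0,1}^q$ of one predecessor---well defined over Boolean inputs because $L(U_{i-1}) = 0$ already pins $x_{u_1} = y_{u_1}$ to a single value---enumerating the $2^q$ choices of $a$ and using only $O(q)$ axiom clauses per choice to propagate the constraint through the other predecessor and output the $2q$ bits of $\eq(v_i)$. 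Summing the $O(q \cdot 2^q)$ axiom leaves over the $n$ pebbling moves, together with the $O(q)$ leaves per source initialization (from the small $\eq_q$ CNF) and the $O(2^q)$ leaves of the sink finale (from the $\neq_q$ CNF), yields the target length $O(n \log n \log\log n)$.
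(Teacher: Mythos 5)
Your high-level architecture matches the paper's: maintain a single pair of inequalities $\pm L(\cdot) \ge 0$ encoding the current pebbled set via super-increasing coefficients, advance along a topological order, and finish with the sink. You have also correctly identified the target per-vertex axiom count of $O(q\cdot 2^q)$. But the crux of the proof --- the part that makes the case analysis \emph{tree-like} --- is missing, and the step where you claim it can be done is exactly where the paper has to work hardest.

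The issue is this. You say each transition $L(U_{i-1}) \to L(U_i)$ is ``a single semantic cutting planes inference'' whose children are $\pm L(U_{i-1}) \geq 0$ plus $O(q\cdot 2^q)$ axiom clauses. A single unbounded-arity semantic inference is not the right model: the proof must eventually be turned into a \emph{locally real protocol} and then a \emph{monotone real formula}, both of which have fan-in~$2$, so the CP inference rule must have arity~$2$. Once arity is bounded, the $O(q\cdot 2^q)$ axiom leaves must be absorbed one or two at a time through a sequence of intermediate inequalities, and semantic soundness of the big inference does \emph{not} automatically provide such intermediates. Worse, the natural realization of your ``case analysis on the common value $a\in\{0,1\}^q$ of one predecessor'' is a branching structure with $2^q$ branches, each of which would need its own copy of $\pm L(U_{i-1}) \geq 0$ --- which tree-likeness forbids, since that pair can be used exactly once. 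This is precisely what the paper flags at the start of Section~\ref{sec:tree-like}: ``we cannot use the Coding Lemma to isolate the value of a single vertex, since then we would lose the information on the rest of vertices.'' The Coding-Lemma style of reasoning (extract, propagate, re-merge) that you are borrowing from Section~\ref{sec:upper-bounds} is a DAG-like pattern, and the whole point of Theorem~\ref{th:tree-like} is that it must be replaced.

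What the paper does instead is the \emph{accumulator/slack-term} device. It reinterprets the $q$-bit lift as a $1$-bit XNOR lift over a graph $G'$ where each vertex is a blob of $q$ vertices, then, to process a new vertex $w$ with $q$ predecessors, it multiplies $L(\vertexseq)$ by $2^q$ and adds the slack term $\slackterm = \sum_j 2^{j-1}\alicevar{u}{j}$ so that the resulting single inequality $\lingadgetpart{\binarylimit} = (\binarylimit+1)\axiomconclusion + 2^q L(\vertexseq) + \slackterm \geq 0$ semantically encodes ``$L(\vertexseq)\geq 0$ holds, \emph{and} if the predecessors are consistent then the first $\binarylimit$ of the $2^q$ axioms $\axiom{\binarystring}$ have been absorbed.'' This lets the $2^q$ cases be processed \emph{sequentially}, each by one arity-$2$ semantic inference, with $L(\vertexseq)$ consumed only once at the start of the chain. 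Your proposal has no analogue of this mechanism, and without it the case analysis cannot be made tree-like at the claimed length. The coefficient-encoding idea and the $O(q\cdot 2^q)$ target are right; you need to supply (or discover) the slack-term accumulator, or some equivalent way of certifying valid intermediate inequalities, to make the transition step go through as a chain of arity-$2$ inferences.
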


It is not hard to see that we can extract an efficient locally real
protocol from a tree-like cutting planes refutation of small size, but
let us record this fact formally.

\begin{lemma}[Folklore, see \cite{Sokolov17Daglike}]
  \label{lem:proof-to-proto}
  Given a semantic cutting planes refutation of a formula $F$, there
  is a locally real protocol for $\Search(F)$ with the same underlying graph.
\end{lemma}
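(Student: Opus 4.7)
The plan is to use the refutation $\Pi$ itself as the underlying tree of the locally real protocol, and to label every node by the half-space of inputs that violate the inequality stored at that node. Concretely, if an internal node $v$ of $\Pi$ carries the inequality $I_v : \sum_i a^v_i z_i \geq b^v$, then in the protocol we label $v$ with the closed half-space
\[
H_v = \Set{(x,y) \in \mathcal{X} \times \mathcal{Y} \mid \sum_i (-a^v_i)\, z_i \geq 1 - b^v},
\]
whose Boolean points are exactly those that falsify $I_v$. The root of $\Pi$ carries the contradiction $0 \geq 1$, so $H_{\mathrm{root}} = \mathcal{X} \times \mathcal{Y}$, matching the convention that the root of a locally real protocol is labelled with the full input space.

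The consistency condition is then just a restatement of the semantic soundness of the rules of cutting planes. If $w$ has children $u_1, u_2$ and was derived from the premises $I_{u_1}, I_{u_2}$, then by soundness every Boolean assignment satisfying both premises also satisfies $I_w$; contrapositively, any input falsifying $I_w$ falsifies at least one of $I_{u_1}, I_{u_2}$, i.e.\ $H_w \subseteq H_{u_1} \cup H_{u_2}$. Unary rules are handled identically with a single child. Hence the labelled tree is a valid locally real protocol with the same graph as $\Pi$.

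It remains to specify the outputs at the leaves. Every leaf of $\Pi$ is either a variable axiom $z_i \geq 0$ or $-z_i \geq -1$ — satisfied by every Boolean input, so $H_{\mathrm{leaf}}$ contains no Boolean point and such a leaf cannot terminate a valid walk — or the translation of some clause $C$ of $F$, in which case we label the leaf with $C$. Starting at the root, consistency guarantees that for every input $(x,y)$ there is a downward walk staying inside half-spaces that contain $(x,y)$, terminating at some leaf. By the preceding observation, this leaf must be a clause axiom, and its inequality is falsified by $(x,y)$, so $C$ is a valid answer to $\Search(F)$. The protocol constructed in this way has precisely the underlying graph of $\Pi$, as required.

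There is no serious obstacle in the argument; the only care needed is to select the correct half-space (the \emph{negation} of the stored inequality, shifted by one so as to be closed over integer-valued inputs) and to verify that variable axioms, being tautological on $\set{0,1}$, are unreachable and therefore do not interfere with the output labelling.
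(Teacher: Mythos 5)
Your proof is correct and is exactly the standard folklore argument alluded to in the citation: take the derivation tree of the refutation, label each node by the half-space of Boolean inputs that violate its inequality (the root $0\ge 1$ then gets the full space), observe that semantic soundness of the inference rules is precisely the consistency condition $H_w \subseteq H_{u_1}\cup H_{u_2}$, and note that the variable-axiom leaves have empty half-space (over $\set{0,1}$ points) and are hence never reached, so every walk ends at a clause axiom falsified by the input. The paper itself gives no proof for this lemma, deferring to the literature, so there is nothing further to compare against.
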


Before we move into the proof of \refth{th:tree-like}, let us complete
the proof of \refth{th:monotone}.

\begin{proof}[Proof of \refth{th:monotone}]
  Let $S=\Search(\pebblingformula \circ \eqloglog)$ be the search
  problem given by \reflem{lem:search-is-hard}. The nondeterministic
  communication complexity of $f$ is
  $\log(\setsize{\pebblingformula \circ \eqloglog})+2$, since given a
  certificate consisting of a clause falsified by the inputs each
  party can independently verify that their part is falsified and
  communicate so to the other party. Therefore by
  \reflem{lem:kw-from-search} there is a partial monotone function
  $f^*$ over $\bigoh{n\polylog n}$ variables whose monotone
  Karchmer--Wigderson game is equivalent to $S$. By
  \refth{th:tree-like} there is a semantic cutting planes refutation
  of the formula $\pebblingformula \circ \eqloglog$ of length
  $\bigoh{n\polylog n}$, which we convert into a locally real protocol
  for $S$ of size $\bigoh{n\polylog n}$ using
  \reflem{lem:proof-to-proto}, and then into a monotone real formula
  for $f^*$ of size $\bigoh{n\polylog n}$ using
  \reflem{lem:proto-to-ckt}. Add a threshold gate on top of the
  formula to ensure that the output is always Boolean and let $f$ be
  the total function that the formula computes. Since $f$ extends
  $f^*$, by \reflem{lem:kw} and \reflem{lem:search-is-hard} $f$
  requires monotone Boolean formulas of depth $\bigomega{n/\log n}$,
  and therefore size $2^{\bigomega{n/\log n}}$.
\end{proof}

\subsection{A Short Tree-like Refutation}
\label{sec:tree-like}

For simplicity in this section we reinterpret the pebbling formula of
a graph $G$ of indegree $2$ lifted with equality of $\gadgetarity$
bits as the pebbling formula of a graph $G'$ lifted with equality of
$1$ bit or $\xnor$, where $G'$ is the graph where we replace every
vertex in $G$ by a blob of $\gadgetarity$ vertices and we replace
every edge by a bipartite complete graph between blobs, and with the
difference that instead of having axioms asserting that all sinks are
false, the axioms assert that some sink is false.

Without further ado, let us prove \refth{th:tree-like}, which follows by setting $q=\log\log n$ in the following Lemma.
\begin{lemma}
  \label{lem:tree-like}
  If $\formf$ is the pebbling formula of a graph of indegree $2$, then
  there is a tree-like semantic cutting planes refutation of
  $\liftedformula{\formf}{\eq_{\gadgetarity}}$ in length
  $\bigoh{\numvertices \gadgetarity 2^\gadgetarity}$.
\end{lemma}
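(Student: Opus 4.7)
The plan is to work in the blob-graph reinterpretation given just above the lemma, where the formula becomes the pebbling formula of $G'$ (obtained by blowing up each vertex of $G$ into a $\gadgetarity$-vertex blob and each edge into a bipartite clique) lifted with the one-bit $\xnor$ gadget, with the modified sink axiom saying that some sink blob-vertex is false. The refutation simulates a topological black pebbling of $G'$: for each vertex $w$ of $G'$ I derive the pair of inequalities $\pi_w := \{x_w \leq y_w,\; x_w \geq y_w\}$ asserting that $x_w = y_w$, and from the pebble-statements of the sink blob together with the lifted sink axiom I derive $0 \geq 1$.

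For a source $w$, the two unit axioms directly give $\pi_w$ in $\bigoh{1}$ steps. For a non-source $w$ with $2\gadgetarity$ predecessors, the lifted neighbor axiom expands to $\bigoh{2^{2\gadgetarity}}$ CNF clauses; the key technical step is to derive $\pi_w$ from the $2\gadgetarity$ predecessor pebble-statements together with a suitable subset of these axiom clauses in $\bigoh{2^\gadgetarity}$ tree-like semantic cutting planes steps. The strategy is to chain through the sign patterns of one predecessor blob at a time: for each of the $2^\gadgetarity$ sign patterns over the first predecessor blob, one combines the corresponding axiom clauses with the pebble-statements of the second predecessor blob to produce an intermediate, and then combines all $2^\gadgetarity$ intermediates with the pebble-statements of the first blob to obtain $\pi_w$. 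Summing over the $\numvertices\gadgetarity$ vertices of $G'$ yields $\bigoh{\numvertices\gadgetarity\, 2^\gadgetarity}$ total cost, and contradicting the pebble-statements of the sink blob with the $2^\gadgetarity$ sink axiom clauses adds at most $\bigoh{2^\gadgetarity}$ further steps.

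To keep the construction tree-like on an arbitrary indegree-$2$ DAG, I thread the derivations along an accumulator inequality in the style of \reflem{lem:coding-lemma}: each placement event consumes the previous accumulator together with the subtree deriving the new $\pi_w$, and produces a new accumulator that records the updated pebble state, using only $\bigoh{1}$ combination steps beyond the subtree itself. The total length is then dominated by the sum of the per-vertex derivations.

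The main obstacle will be the per-vertex bound of $\bigoh{2^\gadgetarity}$ rather than the naive $\bigoh{4^\gadgetarity}$ that one gets by multiplying out the full CNF. Establishing it requires the careful chaining outlined above, using each predecessor pebble-statement at the precise moment that collapses half of the sign patterns rather than applying them all at once, together with the flexibility of semantic cutting planes to combine the integer-level equalities $x_u = y_u$ with the clause-level premises.
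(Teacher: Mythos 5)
Your proposal correctly identifies the blob-graph reformulation and the general idea of maintaining an accumulator inequality, but the central step---deriving $\pi_w$ (the pebble-statement $x_w = y_w$) as a stand-alone subtree and then folding it into the accumulator---does not work in a tree-like proof, and the paper explicitly flags this. To derive $\pi_w$ you need the pebble-statements of $w$'s predecessors, but in a tree-like derivation those statements are only available \emph{inside} the accumulator, and any extraction (as in Part~1 of \reflem{lem:coding-lemma}) consumes the accumulator, so you cannot both read the predecessor information and keep the accumulator for the next placement. In the DAG-like proof (\reflem{lem:small-space}) this is handled by referencing $\lingadget{\vertexseq}$ with fan-out greater than one; in the tree-like setting that is exactly what is forbidden, and re-deriving the accumulator from scratch each time would blow the length up exponentially on a general DAG. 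The paper says precisely this: ``we cannot use the Coding Lemma to isolate the value of a single vertex, since then we would lose the information on the rest of vertices, therefore we have to simulate the inference steps in place.''

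The paper's actual mechanism is qualitatively different from what you sketch and never isolates $\pi_w$. For a fixed $\succvertex$, it builds a linear chain of inequalities $\lingadgetpart{0} \geq 0, \lingadgetpart{1} \geq 0, \ldots, \lingadgetpart{2^\gadgetarity} \geq 0$, where each $\lingadgetpart{\binarylimit}$ \emph{simultaneously} encodes $\lingadget{\vertexseq} \geq 0$ and the partial progress that, under $\boolgadget{\vertexseq}$, the axiom clauses $\axiom{\binarystring}$ for $\binarystring \leq \binarylimit$ hold. Each step $\lingadgetpart{\binarylimit} \to \lingadgetpart{\binarylimit+1}$ is a single semantic inference from $\lingadgetpart{\binarylimit}$ and one axiom clause, so the sub-derivation is a caterpillar in which every interior line has fan-out one. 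A slack term $\slackterm$ is added to the accumulator so that predecessor sign patterns index the axiom chain; this replaces your proposed two-phase ``one blob at a time'' chaining, which is not needed. At the end, $\lingadgetpart{2^\gadgetarity}$ is semantically equivalent to $\lingadget{\vertexseq \cup \set{\succvertex}} \geq 0$. Without an in-place encoding of this kind, which stores both the old pebble state and the running conjunction of processed axiom clauses in a single line, tree-likeness cannot be maintained and your length accounting does not close. I would also push back on ``a suitable subset of these axiom clauses'': the semantic chain consumes every lifted axiom clause of $\succvertex$, one per step; there is no selection to be made.
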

As in \refsec{sec:small-space} we fix a topological order of $G$ and
we build a refutation by keeping two inequalities
$\lingadget{\vertexseq}\geq 0$ and $-\lingadget{\vertexseq} \geq
0$. The main difference is that we cannot use the Coding Lemma to
isolate the value of a single vertex, since then we would lose the
information on the rest of vertices, therefore we have to simulate the
inference steps in place as we describe next.

Let us set up some notation. If $\vertexseq$ is a set of vertices, let
$\boolgadget{\vertexseq}=\Land_{\vargeneric \in \vertexseq}
\xnor(\vargeneric)$. We represent $\xnor(\vargeneric)$ with
$\lingadget{\vargeneric}=0$, where
$\lingadget{\vargeneric}=\alicevar{\vargeneric}{}-\bobvar{\vargeneric}{}$,
and $\boolgadget{\vertexseq}$ with $\lingadget{\vertexseq} = 0$, where
$\lingadget{\vertexseq} = \sum_{\vargeneric_j \in \vertexseq}
2^j\lingadget{\vargeneric_j}$. We begin with $\vertexseq=\emptyset$
and with the trivial inequalities $0=\lingadget{\emptyset}=0$. Let us show how to derive each vertex.

\begin{lemma}
  \label{lem:tree-like-induction}
  There is a tree-like semantic derivation of
  $\lingadget{\vertexseq \union \set{\succvertex}} \geq 0$ from
  $\lingadget{\vertexseq} \geq 0$ and the axioms in $2^\gadgetarity$
  steps.
\end{lemma}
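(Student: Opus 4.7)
The plan is to split the argument into two cases depending on whether $\succvertex$ is a source vertex of $G'$ or an internal one, using throughout that the refutation maintains both $\lingadget{\vertexseq}\geq 0$ and $-\lingadget{\vertexseq}\geq 0$ (as stated in the paragraph just above the lemma) so both are available as premises of the derivation.

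For the source case, the lifted formula contains the CNF encoding of $\xnor(\succvertex)$, which as inequalities is exactly the pair $\lingadget{\succvertex}\geq 0$ and $-\lingadget{\succvertex}\geq 0$. A single linear combination, scaling $\lingadget{\succvertex}\geq 0$ by the appropriate power of two and adding it to $\lingadget{\vertexseq}\geq 0$, yields the target $\lingadget{\vertexseq\union\set{\succvertex}}\geq 0$ in $O(1)$ steps, comfortably within the $2^\gadgetarity$ budget.

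The substantive case is when $\succvertex$ is internal with predecessors $U\subseteq\vertexseq$. Here the propagation axiom of the lifted formula is the single disjunctive clause $\xnor(\succvertex)\lor\bigvee_{u\in U}\neg\xnor(u)$, whose CNF expansion produces roughly $2^\gadgetarity$ axiomatic inequalities under the XNOR-with-blob reinterpretation set up at the start of \refsec{sec:tree-like}. I would arrange the tree-like semantic derivation as a case analysis on these CNF clauses of the axiom. In every branch where $\xnor(\succvertex)$ is the satisfied disjunct, $\lingadget{\succvertex}=0$ and combining with $\lingadget{\vertexseq}\geq 0$ gives the target directly by linear combination. In every branch where some $\neg\xnor(u)$ is satisfied for a $u\in U$, we have $\lingadget{u}\neq 0$; combining the two maintained inequalities $\lingadget{\vertexseq}\geq 0$ and $-\lingadget{\vertexseq}\geq 0$ forces $\lingadget{\vertexseq}=0$, and since the coefficients in $\lingadget{\vertexseq}$ are strictly increasing powers of two, the uniqueness of the binary expansion semantically forces $\lingadget{v}=0$ for every $v\in\vertexseq$, contradicting $\lingadget{u}\neq 0$. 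From this infeasible branch any inequality is derivable, in particular the target.

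The main obstacle I expect is the tree-like accounting: the derivation must be arranged so that its total size matches the claimed $2^\gadgetarity$ bound, which reflects the number of clauses arising in the CNF encoding of the propagation axiom under the reinterpretation. Verifying that each internal node of the tree is a sound semantic inference from its children, and that the source and internal cases assemble correctly into a single tree of the claimed size, is the remaining bookkeeping; the underlying semantic content (linear combinations, binary-expansion uniqueness, and closure of the derivation under a contradictory branch) is elementary.
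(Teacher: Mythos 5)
The proposal does not describe a valid tree-like semantic cutting planes derivation. Such a derivation is a tree whose leaves are axioms and whose every internal node is a \emph{single concrete linear inequality} that is a sound semantic consequence of its at most two children. There is no mechanism for ``case analysis'' or ``branching on which disjunct is satisfied'': at each step you must commit to one inequality that holds for \emph{all} consistent Boolean assignments. Phrases like ``in every branch where $\xnor(\succvertex)$ is the satisfied disjunct'' and ``from this infeasible branch any inequality is derivable'' describe semantic reasoning about why the conclusion is true, not a sequence of bounded-arity inferences. In particular, no single inference can consult all $2^\gadgetarity$ clauses of the CNF-encoded propagation axiom at once, so you need an explicit \emph{accumulator}---and that accumulator is precisely the content of the paper's proof that your proposal is missing. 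The paper introduces the slack term $\slackterm = \sum_{j=1}^\gadgetarity 2^{j-1}\alicevar{u}{j}$, sets $\lingadgetpart{\binarylimit} = (\binarylimit+1)\lingadget{\succvertex} + 2^\gadgetarity\lingadget{\vertexseq} + \slackterm$, and proves (this is the real work, via the Claim) that $\lingadgetpart{\binarylimit}\geq 0$ faithfully encodes $\ib{\lingadget{\vertexseq}\geq 0}\land\bigl(\boolgadget{\vertexseq}\limpl\Land_{\binarystring\leq\binarylimit}\axiom{\binarystring}\bigr)$. Only then does each step $\lingadgetpart{\binarylimit}\geq 0,\ \axiom{\binarylimit} \vdash \lingadgetpart{\binarylimit+1}\geq 0$ become a sound arity-two semantic inference, yielding the chain of $2^\gadgetarity$ steps with $\lingadgetpart{2^\gadgetarity}\geq 0$ semantically equivalent to $\lingadget{\vertexseq\union\set{\succvertex}}\geq 0$.

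A second, structural issue: you invoke both maintained inequalities $\lingadget{\vertexseq}\geq 0$ and $-\lingadget{\vertexseq}\geq 0$ as premises. The lemma deliberately derives $\lingadget{\vertexseq\union\set{\succvertex}}\geq 0$ from the single premise $\lingadget{\vertexseq}\geq 0$ (plus axioms), with a symmetric, independent derivation for $-\lingadget{\vertexseq\union\set{\succvertex}}\geq 0$ from $-\lingadget{\vertexseq}\geq 0$ using the other half of the axioms $\axiom{\binarystring}=-\axiomconclusion+\axiompremise$. This separation is not cosmetic: in a tree-like proof a derived line is consumed once, so routing both maintained inequalities into both halves of the next stage would double the tree at every vertex of $G'$ and ruin the $\bigoh{\numvertices\gadgetarity 2^\gadgetarity}$ length bound. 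Your semantic observation about binary-expansion uniqueness (that $\lingadget{\vertexseq}=0$ forces $\lingadget{v}=0$ for all $v\in\vertexseq$) is correct and is indeed the reason the final inequality is sound, but it does not by itself produce the required chain of inferences.
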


\begin{proof}
If $\succvertex$ is a source, then the inequality
$\lingadget{\succvertex}\geq 0$ is already an axiom, hence it is enough
to multiply $\lingadget{\vertexseq}\geq 0$ by $2$ and add
$\lingadget{\succvertex}$.

The complex case is when $\succvertex$ has predecessors
$\predvertex_1,\ldots,\predvertex_\gadgetarity$. Let
$\makelit{\vargeneric}{\polarity} = \polarity + (-1)^\polarity
\vargeneric$ be the literal over variable $\vargeneric$ and polarity
$1-\polarity$. Consider the $2^\gadgetarity$ axioms
$\axiom{\binarystring} \geq 0$ indexed by
$\binarystring \in \set{0,1}^\gadgetarity$ and defined as
\begin{align}
  \axiompremise &= \sum_{j=1}^\gadgetarity \makelit{\alicevar{u}{j}}{\binarystring_j} + \makelit{\bobvar{u}{j}}{\binarystring_j}\\
  \axiom{\binarystring} &= \axiomconclusion + \axiompremise
\eqperiod
\end{align}

We start with an inequality $\lingadget{\vertexseq} \geq 0$. In order
to have enough working space for the axioms we multiply the inequality
by $2^\gadgetarity$, and using weakening axioms we add a slack term
defined as
\begin{equation}
  \slackterm = \sum_{j=1}^\gadgetarity 2^{j-1} \alicevar{u}{j}
\end{equation}
to obtain $\lingadgetpart{0} \geq 0$ with
\begin{equation}
  \lingadgetpart{0} = 2^\gadgetarity \lingadget{\vertexseq} + \slackterm \eqperiod
\end{equation}
The coefficients for $\slackterm$ are chosen so that if we evaluate
$\slackterm$ on a string $\binarystring\in\set{0,1}^\gadgetarity$, the
result is $\binarystring$ interpreted as a binary number. We use it to keep
track of which axioms we have processed so far, in a similar fashion
to how the space-efficient refutation of the complete tautology~\cite{GPT15SpaceComplexityCP} that we reproduce in \refapp{sec:space-lemma} keeps track of processed truth value
assignments.

The next step is to add each axiom to $\lingadgetpart{0}$, but for
this to work we need to represent each intermediate step with one
inequality as follows.

\begin{claim}
We can represent the Boolean expression
\begin{equation}
  \boolgadgetpart{\binarylimit}=\ib{\lingadget{\vertexseq}\geq 0} \land
  \left(\boolgadget{\vertexseq} \limpl \Land_{\binarystring \leq \binarylimit} \axiom{\binarystring}\right)
\end{equation}
with the inequality $\lingadgetpart{\binarylimit} \geq 0$ defined as
\begin{equation}
  \lingadgetpart{\binarylimit} = (\binarylimit+1)\axiomconclusion + \lingadgetpart{0} \eqperiod
\end{equation}
\end{claim}

\begin{proof}
Let us begin proving the claim by showing that
$\boolgadgetpart{\binarylimit} \Rightarrow
\lingadgetpart{\binarylimit} \geq 0$. First consider an assignment
$\alpha$ that satisfies $\lingadget{\vertexseq} \geq 0$ but not
$\boolgadget{\vertexseq}$, that is an assignment where
$\alicevar{u}{j}=1$ and $\bobvar{u}{j}=0$ for some predecessor
$\predvertex_j$ of $\succvertex$.  Then
$\restrict{\lingadgetpart{0}}{\alpha}\geq 2^\gadgetarity$, hence
$\restrict{\lingadgetpart{\binarylimit}}{\alpha} \geq
-(\binarylimit+1)+2^\gadgetarity \geq 0$.

Now consider an assignment $\alpha$ that satisfies
$\boolgadget{\vertexseq}$, hence $\lingadget{\vertexseq}=0$. If
$\alpha_{\predvertex_1,\ldots,\predvertex_\gadgetarity} =
\binarystring \leq \binarylimit$ then, since $\alpha$ falsifies
$\axiompremise\geq 1$, $\alpha$ must satisfy $\axiomconclusion\geq 0$,
so both $\lingadgetpart{0} \geq 0$ and $\axiomconclusion \geq
0$. Otherwise if
$\alpha_{\predvertex_1,\ldots,\predvertex_\gadgetarity} =
\binarystring > \binarylimit$ then
$\restrict{\slackterm}{\alpha} = \binarystring \geq \binarylimit+1$,
and we have
$\restrict{\lingadgetpart{\binarylimit}}{\alpha} \geq
-(\binarylimit+1) + \binarystring \geq 0$.

Let us finish by showing that
$\boolgadgetpart{\binarylimit} \Leftarrow
\lingadgetpart{\binarylimit} \geq 0$. First consider an
assignment $\alpha$ that falsifies $\lingadget{\vertexseq} \geq 0$. Then
$\restrict{\lingadgetpart{\binarylimit}}{\alpha}\leq (\binarylimit+1) -2^\gadgetarity < 0$.

Now consider an assignment $\alpha$ that satisfies
$\boolgadget{\vertexseq}$ but not an axiom $\axiom{\binarystring}$
with $\binarystring\leq \binarylimit$. Then in particular $\alpha$
falsifies $\axiomconclusion\geq 0$, hence
$\restrict{\lingadgetpart{\binarylimit}}{\alpha} =
(\binarylimit+1)\axiomconclusion + \restrict{\slackterm}{\alpha} =
-(\binarylimit+1) - \binarystring < 0$. This concludes the proof of
the claim.
\end{proof}

Since $\lingadgetpart{\binarylimit+1} \geq 0$ follows semantically
from $\lingadgetpart{\binarylimit} \geq 0$ and $\axiom{\binarylimit}$,
we can derive $\boolgadgetpart{2^\gadgetarity} \geq 0$ from
$\lingadgetpart{0} \geq 0$ and the set of axioms
$\axiom{\binarystring} \geq 0$ using $2^\gadgetarity$ semantic inferences of arity
$2$. Also, $\lingadgetpart{2^\gadgetarity} \geq 0$ is semantically
(but not syntactically) equivalent to
$\lingadget{\vertexseq \union \set{\succvertex}} \geq 0$, so we can be
ready for the next step with a semantic inference of arity $1$.
\end{proof}

We can derive the upper bound inequality
$-\lingadget{\vertexseq \union \set{\succvertex}}\geq 0$ similarly,
the main differences being that we start with
$-\lingadget{\vertexseq} \geq 0$ and that we use the other half of the
axioms, that is
$\axiom{\binarystring} = -\axiomconclusion + \axiompremise$.

We handle the sinks in a slightly different way. Instead of using the
pebbling axioms directly, we first use the pebbling axioms of all the
sinks together with the axioms enforcing that some sink is false in
order to derive a set of inequalities similar to pebbling axioms but
with $-1$ in place of $\axiomconclusion$. We then use the same
derivation as in \reflem{lem:tree-like-induction} using these
inequalities in place of the axioms and we obtain
$\lingadget{\vertexseq} -1 \geq 0$ and analogously
$-\lingadget{\vertexseq} -1 \geq 0$. Adding both inequalities leads to
the contradiction $-2 \geq 0$.

To conclude the proof it is enough to observe that we do
$\bigoh{2^\gadgetarity}$ inference steps for each vertex in $G'$,
which has order $\numvertices\gadgetarity$, hence the total length of
the refutation is $\bigoh{\numvertices\gadgetarity2^\gadgetarity}$.

\section{Concluding Remarks}
\label{sec:conclusion}

In this paper, we show that the cutting planes proof system (CP) is stronger than
its variant with polynomially bounded coefficients (\cpstar) with respect to simultaneous length
and space. This is the first result in proof complexity demonstrating any 
situation where high-weight coefficients are more powerful than 
low-weight coefficients.
We also prove an explicit separation between monotone Boolean
formulas and monotone real formulas. 
Previously the result was only known to hold
non-constructively.
To obtain these results we strengthen a lifting theorem of~\cite{PR18LiftingNS}
to allow the lifting to work with \emph{any} gadget with sufficiently large rank, 
in particular with the equality gadget---a crucial ingredient for 
obtaining the separations discussed above. 

This work raises a number of questions. 
Prior to our result,
no explicit function was known separating 
monotone real circuits or formulas from monotone Boolean circuits or formula.
Although we prove an explicit formula separation, it remains open
to obtain an explicit function that separates monotone real circuits
from monotone Boolean circuits.

The most glaring open problem related to our cutting planes contribution
is to strengthen our result to a true length separation, without any
assumption on the space complexity.
It is natural to ask whether techniques inspired 
by~\cite{Sokolov17Daglike,GGKS18MonotoneCircuit} can be of use.
Another thing to note about our trade-off result for \cpstar
is that it is not a ``true trade-off'': we know that length
and space cannot be optimised simultaneously, but we do not know 
if there in fact exist small space refutations. An interesting 
problem is, therefore, to exhibit formulas that present ``true trade-offs''
for \cpstar but are easy with regard to space and length in CP.

It follows from our results that standard decision tree complexity, parity decision tree complexity,
and Nullstellensatz degree are equal for the falsified clause search problem of lifted pebbling formulas.
In view of this we can ask ourselves what complexity measure we are actually lifting.
We know that for general search problem decision tree complexity is not enough for a lifting result.
How about parity decision tree complexity?
Or can we leverage the fact that we have ``well-behaved'' rectangle covers
and small certificate complexity to lift weaker complexity models?
It would be valuable to have a better understanding of the relation
between gadgets, outer functions/relations and complexity measures.

\section*{Acknowledgements}

Different subsets of the authors would like to acknowledge fruitful and
enlightening conversations with different subsets of
Arkadev Chattopadhyay,
Pavel Hrubeš,
Christian~Ikenmeyer,
Bruno Loff,
Sagnik Mukhopadhyay,
Igor Carboni Oliveira,
Pavel Pudlák,
and
Dmitry Sokolov.
We are also grateful for discussions regarding literature references with
Albert Atserias,
Paul Beame,
and
\mbox{Massimo Lauria.}
We are 
thankful to the
anonymous referees for their comments; in particular, indicating a simplified proof of Lemma \ref{lem:alternative-good}.

Part of this work was carried out while several of the authors were
visiting the Simons Institute for the Theory of Computing in
association with the DIMACS/Simons Collaboration on Lower Bounds in
Computational Complexity, which is conducted with support from the
National Science Foundation.

Or Meir was supported by the Israel Science Foundation (grant No.~1445/16).
Toniann Pitassi and Robert Robere were  supported by NSERC.
Susanna F.~de Rezende and Jakob Nordström were supported
by the European Research Council under the
European Union's Seventh Framework Programme \mbox{(FP7/2007--2013) /}
ERC grant agreement no.~279611,
as well as by the
Knut and Alice Wallenberg grant
KAW 2016.0066.
Jakob Nordström also received funding from the
Swedish \mbox{Research} Council grants
\mbox{621-2012-5645}
and
\mbox{2016-00782}.
Marc Vinyals was supported by
the Prof.\ R Narasimhan post-doctoral award.

\appendix

\section{Lifting Nullstellensatz Degree for All Gadgets}
\label{sec:nullstellensatz-lifting}

In this section we prove Theorem \ref{cor:nss-communication-eq}. In fact, we prove the following stronger result, which implies Theorem~\ref{cor:nss-communication-eq} as a corollary.

\begin{theorem}\label{th:ns-lifting}
Let $\mathcal{C}$ be an unsatisfiable $k$-CNF on $n$ variables and let $\mathbb{F}$ be any field.
Let $g$ be any Boolean-valued gadget with $\rank(g) \geq 4$.
Then \[ \mathsf{P}^{\cc}(\Search(\mathcal{C}) \circ g^n) \geq \NS_{\mathbb{F}}(\mathcal{C}) \log \left(\frac{\NS_{\mathbb{F}}(\mathcal{C})\rank(g)}{en}\right) - \frac{6n\log e}{\rank(g)} - \log k.\]
\end{theorem}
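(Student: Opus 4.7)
My plan is to follow the proof of the lifting theorem in \cite{PR18LiftingNS} essentially verbatim, substituting their exact rank identity for ``good'' gadgets with the approximate inequality from Theorem~\ref{th:rank-lifting}, and carefully tracking the resulting quantitative loss.

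The starting point is that $\mathsf{P}^{\cc}$ is lower bounded by the logarithm of Razborov's rank measure, so it suffices to lower bound the rank measure $\mu(\Search(\mathcal{C})\circ g^n)$. A protocol of cost $c$ induces a rectangle partition of $\mathcal{X}^n\times \mathcal{Y}^n$ into at most $2^c$ rectangles, each labeled with a clause of $\mathcal{C}$; for each rectangle $R$ labeled with $C$, every input in $R$ falsifies $C$ under $g^n$, so the indicator $\mathbf{1}_R$ is supported within the $1$-entries of $\mathcal{E}(C)\circ g^n$. Setting $d := \NS_{\mathbb{F}}(\mathcal{C})$, the dual characterization of Nullstellensatz (Theorem~\ref{thm:ns-design}) provides a $(d{-}1)$-design $D$ for $\mathcal{E}(\mathcal{C})$. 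I would use $D$ to build a ``test functional'' on matrices that is orthogonal to every $\mathcal{E}(C)\circ g^n$ but has nontrivial value on $J = 1\circ g^n$, following the construction of \cite{PR18LiftingNS}.

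The quantitative heart of the argument is to estimate the rank of matrices of the form $p \circ g^n$ appearing in this construction. Where \cite{PR18LiftingNS} could invoke their exact identity $\rank(p\circ g^n)=\sum_{S:\,\widehat{p}(S)\neq 0}\rank(g)^{|S|}$, I would substitute Theorem~\ref{th:rank-lifting}, which only gives
\[
\sum_{S:\,\widehat{p}(S)\neq 0}(\rank(g)-3)^{|S|}\;\leq\;\rank(p\circ g^n)\;\leq\;\sum_{S:\,\widehat{p}(S)\neq 0}\rank(g)^{|S|}\,.
\]
Running the PR18 argument with these bounds and the hypothesis that no design of smaller degree exists yields a raw inequality of the form $c \geq d\log(\rank(g)-3) - \log\binom{n}{d} - \log k$, where the $-\log k$ loss accounts for the fact that each clause has width at most $k$ and so $\mathcal{E}(C)\circ g^n$ contributes an additional multiplicative factor of at most $\rank(g)^k$ in the per-clause rank decomposition. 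Using $\binom{n}{d}\leq (en/d)^d$ together with the elementary estimate
\[
d\log(\rank(g)-3) \;\geq\; d\log\rank(g) - \frac{6d\log e}{\rank(g)}\,,
\]
valid for $\rank(g)\geq 4$ via $\ln(1-x)\geq -2x$ on $x\leq 1/2$, and bounding $d\leq n$ in the correction term, yields precisely the bound claimed in the theorem.

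The main obstacle will be faithfully reproducing the linear-algebraic charging argument of \cite{PR18LiftingNS} with an approximate rather than exact rank identity. That argument expands matrices in a ``Fourier-like'' basis $\{\chi_S\circ g^n\}$ whose linear independence crucially relies on the good-gadget property, so one must verify that the approximate bound still supports the accounting. The slack of $3$ in the lower bound of Theorem~\ref{th:rank-lifting} must be shown to amortize negligibly over the $n$ coordinates, which is exactly what the hypothesis $\rank(g)\geq 4$ combined with the estimate above accomplishes.
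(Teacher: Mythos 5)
Your high-level plan matches the paper's proof: lower bound $\Pcc$ by the logarithm of Razborov's rank measure (the paper's Lemma~\ref{lem:cc-rank-lb}), construct the test matrix as $A = p\circ g^n$ for a carefully chosen multilinear polynomial $p$, bound the numerator and denominator of the rank measure using Theorem~\ref{th:rank-lifting}, and clean up the arithmetic. Two details need repair, though, before this becomes a correct proof.

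First, the dual object you reach for is the wrong one. You invoke Theorem~\ref{thm:ns-design} and propose to build a linear ``test functional'' from a $(d-1)$-design. The paper (and the underlying argument of \cite{PR18LiftingNS}) instead uses the \emph{algebraic gap complexity} characterization (Definition~\ref{def:algebraic-gaps-general} and Theorem~\ref{th:gap-equals-ns}): one takes the polynomial $p$ witnessing $\gap_{\mathbb{F}}(\mathcal{C}) = \NS_{\mathbb{F}}(\mathcal{C})$, which has $\deg p = n$ and $\deg(p\restriction\pi)\leq n-\gap$ for every certificate $\pi$, and sets $A=p\circ g^n$. The design $D$ is a linear functional, not a polynomial, and there is no direct way to feed it into Theorem~\ref{th:rank-lifting}; the gap witness is what lets you bound both $\rank(A)$ (via $\deg p=n$, giving a $(\rank(g)-3)^n$ lower bound) and $\rank(A\restriction R)$ (via the degree drop under restriction by $\Cert(C)$, giving roughly $k\,\binom{n}{\gap}\rank(g)^{n-\gap}$).

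Second, your ``raw inequality'' $c \geq d\log(\rank(g)-3) - \log\binom{n}{d} - \log k$ is not what the rank-measure calculation actually gives. The numerator contributes $(\rank(g)-3)^n$, so after dividing by the denominator's $\rank(g)^{n-d}$ factor the correction term $\log(1-3/\rank(g))$ appears with coefficient $n$, not $d$; the correct raw bound is $c \geq d\log\rank(g) + n\log(1-3/\rank(g)) - \log\binom{n}{d} - \log k$. Your claimed raw bound, with a coefficient of $d$ on the negative correction term, is strictly stronger than what the argument proves. It so happens that when you later ``bound $d\leq n$ in the correction term'' you weaken your (unprovable) raw bound back to exactly the (provable) one, so the final theorem you derive coincides with the right answer; but as written the intermediate step is not justified. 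You should also note that the estimate $\ln(1-x)\geq -2x$ with $x=3/\rank(g)\leq 1/2$ requires $\rank(g)\geq 6$, not just $\rank(g)\geq 4$; the paper's own computation implicitly uses $\rank(g)\geq 6en/\gap$, and the bound is vacuously true when the gadget rank is too small, so this is a shared technicality rather than a substantive gap.
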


The proof of the theorem follows the proof of a similar lifting theorem from \cite{PR18LiftingNS}.
As such, we will need some notation from that paper.
Let us begin by introducing a key notion: Razborov's \emph{rank measure}.
Given sets $\mathcal{U}, \mathcal{V}$, a \emph{rectangle cover} $\mathcal{R}$ of $\mathcal{U} \times \mathcal{V}$ is a covering of $\mathcal{U} \times \mathcal{V}$ by combinatorial rectangles.

\begin{definition}\label{def:rank-measure}
  Let $\mathcal{U}, \mathcal{V}$ be sets and let $\mathcal{R}$ be any rectangle cover of $\mathcal{U} \times \mathcal{V}$.
  Let $A$ be any $\mathcal{U} \times \mathcal{V}$ matrix over a field $\mathbb{F}$.
  The \emph{rank measure} of $\mathcal{R}$ at $A$ is the quantity \[ \mu_{\mathbb{F}}(\mathcal{R}, A) = \frac{\rank(A)}{\displaystyle \max_{R \in \mathcal{R}} \rank(A \restriction R)}.\]
\end{definition}

Using the rank measure we can lower bound the deterministic communication complexity of composed CNF search problems as follows.
The key observation is that any deterministic communication protocol outputs a rectangles that lie in a ``structured'' rectangle cover in the following sense.
We note below that if $A$ is a collection of tuples from some product set $\mathcal{I}^n$ then we write $A_i$ to mean the projection of $A$ to the $i$th coordinate and $A_I$ for $I \subseteq [n]$ to mean the projection onto the coordinates in $I$.

\begin{definition}\label{def:structured-rectangle-cover}
  Let $\mathcal{C}$ be an unsatisfiable $k$-CNF on $n$ variables and let $g : \mathcal{X} \times \mathcal{Y} \rightarrow \set{0,1}$ be a gadget.
  For a clause $C \in \mathcal{C}$, a combinatorial rectangle $R \subseteq \mathcal{X}^n \times \mathcal{Y}^n$ is \emph{$C$-structured} if $g^n(x, y)$ falsifies $C$ for all $(x, y) \in R$ and for all $i \not \in \vars{C}$ we have $R_i = \mathcal{X} \times \mathcal{Y}$.
  A rectangle cover $\mathcal{R}$ of $\mathcal{X}^n \times \mathcal{Y}^n$ is \emph{$\mathcal{C}$-structured} if every $R \in \mathcal{R}$ is $C$-structured for some $C \in \mathcal{C}$.
\end{definition}

\begin{lemma}\label{lem:cc-rank-lb}
  Let $\mathcal{C}$ be an unsatisfiable $k$-CNF on $n$ variables and let $g: \mathcal{X} \times \mathcal{Y} \rightarrow \set{0,1}$ be a gadget.
  Let $\mathbb{F}$ be any field and let $A$ be any $\mathcal{X}^n \times \mathcal{Y}^n$ matrix over $\mathbb{F}$.
  Then \[\Pcc(\Search(\mathcal{C}) \circ g) \geq \min_{\mathcal{R}} \log \mu_\mathbb{F}(\mathcal{R}, A) \] where the minimum is taken over $\mathcal{C}$-structured rectangle covers of $\mathcal{X}^n \times \mathcal{Y}^n$.
\end{lemma}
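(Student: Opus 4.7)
}

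The plan is to take an optimal deterministic protocol for $\Search(\mathcal{C}) \circ g$, extract from it a $\mathcal{C}$-structured rectangle cover whose cardinality is bounded by $2^{\Pcc(\Search(\mathcal{C}) \circ g)}$, and then translate this cardinality bound into a rank-measure bound by invoking subadditivity of rank. Since the cover we construct is $\mathcal{C}$-structured, it is a feasible competitor in the minimum on the right-hand side, which gives the desired inequality.

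First I would fix a deterministic protocol $\Pi$ computing $\Search(\mathcal{C}) \circ g$ of cost $c = \Pcc(\Search(\mathcal{C}) \circ g)$. The leaves of $\Pi$ partition $\mathcal{X}^n \times \mathcal{Y}^n$ into at most $2^c$ combinatorial rectangles, and each leaf outputs a clause $C \in \mathcal{C}$ such that $g^n(x,y)$ falsifies $C$ for every input $(x,y)$ routed to that leaf. Given such a leaf-rectangle $R$ labelled with clause $C$, I would replace it by its cylindrification
\begin{equation*}
\widehat{R} \;=\; R_{\vars{C}} \times \prod_{i \notin \vars{C}} (\mathcal{X} \times \mathcal{Y}),
\end{equation*}
where $R_{\vars{C}}$ denotes the projection of $R$ onto the coordinates indexed by $\vars{C}$. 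The rectangle $\widehat{R}$ still contains $R$, so the collection $\mathcal{R}_\Pi = \{\widehat{R} : R \text{ is a leaf of } \Pi\}$ still covers $\mathcal{X}^n \times \mathcal{Y}^n$, has size at most $2^c$, and each $\widehat{R}$ is $C$-structured: the coordinates outside $\vars{C}$ are trivial by construction, and for every $(x,y) \in \widehat{R}$ the values $g(x_i,y_i)$ for $i \in \vars{C}$ agree with those of some original point of $R$, so $g^n(x,y)$ still falsifies $C$ (which depends only on these coordinates).

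Next I would apply subadditivity of rank to the cover: for any matrix $A$,
\begin{equation*}
\rank(A) \;\leq\; \sum_{R \in \mathcal{R}_\Pi} \rank(A \restriction R) \;\leq\; |\mathcal{R}_\Pi| \cdot \max_{R \in \mathcal{R}_\Pi} \rank(A \restriction R).
\end{equation*}
Rearranging and taking logarithms yields $c \geq \log |\mathcal{R}_\Pi| \geq \log \mu_{\mathbb{F}}(\mathcal{R}_\Pi, A)$. Since $\mathcal{R}_\Pi$ is a particular $\mathcal{C}$-structured cover, this is at least $\min_{\mathcal{R}} \log \mu_{\mathbb{F}}(\mathcal{R}, A)$, completing the proof.

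The one step requiring care is verifying that the cylindrification $\widehat{R}$ remains $C$-structured, that is, that extending the non-$C$ coordinates to all of $\mathcal{X} \times \mathcal{Y}$ does not create inputs on which $g^n$ satisfies $C$. This is immediate because a clause $C$ only depends on the variables in $\vars{C}$, so falsification of $C$ by $g^n(x,y)$ is determined entirely by the coordinates in $\vars{C}$, which are precisely those preserved by the cylindrification. Beyond this verification the argument is purely linear-algebraic and essentially reduces to subadditivity of rank over a cover.
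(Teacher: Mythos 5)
Your overall approach matches the paper's: extract the leaf rectangle partition from an optimal protocol, cylindrify each rectangle to obtain a $\mathcal{C}$-structured cover $\mathcal{R}_\Pi$, and translate the leaf count into a rank-measure bound. Your verification that each cylindrified rectangle remains $C$-structured is correct. However, there is a genuine gap in the linear-algebraic step.

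You invoke ``subadditivity of rank over the cover,'' asserting $\rank(A) \leq \sum_{R\in\mathcal{R}_\Pi} \rank(A\restriction R)$. This holds when the rectangles form a \emph{partition}, since then $A = \sum_R A\restriction R$, but it is false for covers in general---and $\mathcal{R}_\Pi$ is only a cover, since cylindrification can make distinct leaf rectangles overlap. A counterexample over $\mathbb{Q}$: take $n=2^m$ with $m\geq 3$ and let $A = J_n - I_n$, the $n\times n$ all-ones matrix minus the identity; its eigenvalues are $n-1$ (once) and $-1$ (with multiplicity $n-1$), so $\rank(A) = n$. Cover $[n]\times[n]$ by the $2m$ rectangles $R_{k,b} = \setdescr{(i,j)}{i_k = b,\ j_k = 1-b}$ over bit positions $k\in[m]$ and bits $b\in\set{0,1}$, together with the $n$ diagonal singletons $\set{i}\times\set{i}$. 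Every $R_{k,b}$ avoids the diagonal, so $A\restriction R_{k,b}$ is an all-ones block of rank $1$, while each diagonal singleton contributes rank $0$; the total is $2m = 2\log n$, which is strictly less than $\rank(A) = n$ already for $n=8$.

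The correct route---which is exactly what the paper does, citing a lemma of Razborov---is to apply subadditivity over the leaf \emph{partition} $\mathcal{T}$, obtaining $\rank(A) \leq \sum_{T\in\mathcal{T}}\rank(A\restriction T)$, and then observe that each $T$ is a sub-rectangle of its cylindrification $\widehat{T}\in\mathcal{R}_\Pi$, so $A\restriction T$ is a zero-padded sub-matrix of $A\restriction\widehat{T}$, whence $\rank(A\restriction T) \leq \rank(A\restriction\widehat{T}) \leq \max_{R\in\mathcal{R}_\Pi}\rank(A\restriction R)$. This yields $\rank(A) \leq |\mathcal{T}|\cdot\max_{R\in\mathcal{R}_\Pi}\rank(A\restriction R) \leq 2^{c}\cdot\max_{R\in\mathcal{R}_\Pi}\rank(A\restriction R)$, and the lemma follows on taking logarithms. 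Your conclusion is right, but the argument must pass through the partition $\mathcal{T}$ rather than directly through the cover $\mathcal{R}_\Pi$.
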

\begin{proof}
  Let $\Pi$ be any communication protocol solving $\Search(\mathcal{C}) \circ g$, and let $\mathcal{T}$ be the monochromatic rectangle partition corresponding to $\Pi$.
  Since $\Pi$ solves the search problem, for every rectangle $R \in \mathcal{T}$ there is a clause $C$ such that for all $(x, y) \in R$, $g^n(x, y)$ falsifies $C$.
  We can write $R = A \times B$ for some sets $A \subseteq \mathcal{X}^{\vars{C}} \times \mathcal{X}^{[n] \setminus \vars{C}}$ and $B \subseteq \mathcal{Y}^{\vars{C}} \times \mathcal{Y}^{[n] \setminus \vars{C}}$.
  Consider $R' = A' \times B'$ where $A' = A_{\vars{C}} \times \mathcal{X}^{[n] \setminus \vars{C}}$ and $B' = B_{\vars{C}} \times \mathcal{X}^{[n] \setminus \vars{C}}$.
  Since $C$ only depends on indices in $\vars{C}$ we have that $g^n(x, y)$ falsifies $C$ for all $(x, y) \in R'$ and, moreover, $R'_i = \mathcal{X} \times \mathcal{Y}$ for all $i \not \in \Vars{C}$.
  It follows that $R'$ is $C$-structured. Let $\mathcal{R}$ be the $\mathcal{C}$-structured rectangle covering obtained from $\mathcal{T}$ by relaxing all rectangles of $\mathcal{T}$ in this way.

  We now have an $\mathcal{C}$-structured rectangle cover $\mathcal{R}$ such that every $T \in \mathcal{T}$ is contained in some rectangle of $\mathcal{R}$.
  Razborov \cite{Razborov90Applications} proved that if $\mathcal{T}$ is a rectangle partition and $\mathcal{R}$ is a rectangle cover such that for each $T \in \mathcal{T}$ there is an $R \in \mathcal{R}$ such that $T \subseteq R$ it holds that \[ |\mathcal{T}| \geq \mu_{\mathbb{F}}(\mathcal{R}, A)\] for any matrix $A$.
  Since $\log {|\mathcal{T}|} \leq |\Pi| = \Pcc(\Search(\mathcal{C}) \circ g)$ the lemma follows.
\end{proof}

We now introduce the notion of a \emph{certificate} of an unsatisfiable CNF formula.

\begin{definition}\label{def:cnf-certificates}
  Let $\mathcal{C}$ be an unsatisfiable Boolean formula on $n$ variables in conjunctive normal form, and let $C$ be a clause in $\mathcal{C}$.
  The \emph{certificate} of $C$, denoted $\Cert(C)$, is the partial assignment $\pi: [n] \rightarrow \set{0,1,*}$ which falsifies $C$ and sets the maximal number of variables to $*$s.
  Let $\Cert(\mathcal{C})$ denote the set of certificates of clauses of $\mathcal{C}$.
\end{definition}

We say that an assignment $z \in \set{0,1}^n$ \emph{agrees} with a certificate $\pi \in \Cert(\mathcal{C})$ if $\pi(i) = z_i$ for each $i$ assigned to a $\set{0,1}$ value by $\pi$.
Since the CNF formula $\mathcal{C}$ is unsatisfiable, it follows that every assignment in $z \in \set{0,1}^n$ agrees with some $\set{0,1}$-certificate of $\mathcal{C}$.
Next we introduce an alternative definition of Nullstellensatz degree called the \emph{algebraic gap complexity}.

\begin{definition}\label{def:algebraic-gaps-general}
  Let $\mathbb{F}$ be a field.
  Let $\mathcal{C}$ be an unsatisfiable CNF on $n$ variables.
  The \emph{$\mathbb{F}$-algebraic gap complexity} of $\mathcal{C}$ is the maximum positive integer $\gap_{\mathbb{F}}(\mathcal{C}) \in \mathbb{N}$ for which there exists a multilinear polynomial $p \in \mathbb{F}[z_1, z_2, \ldots, z_n]$ such that
  \[ \deg(p) = n \quad \text{ and} \quad \forall \pi \in \Cert(\mathcal{C}): \deg(p \restriction \pi) \leq n - \gap_{\mathbb{F}}(\mathcal{C}) \eqperiod\]
  When the field is clear from context we will write $\gap(\mathcal{C})$.
\end{definition}

In \cite{PR18LiftingNS, Robere18Thesis} it was shown that the algebraic gap complexity is equal to Nullstellensatz degree.

\begin{theorem}\label{th:gap-equals-ns}
  For any unsatisfiable CNF formula $\mathcal{C}$ on $n$ variables and any field $\mathbb{F}$, $\gap_{\mathbb{F}}(\mathcal{C}) = \NS_{\mathbb{F}}(\mathcal{C})$.
\end{theorem}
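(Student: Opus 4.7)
My plan is to identify both $\NS_{\mathbb{F}}(\mathcal{C}) \geq d$ and $\gap_{\mathbb{F}}(\mathcal{C}) \geq d$ with a common linear-algebraic condition on the $2^n$-dimensional $\mathbb{F}$-vector space of multilinear polynomials in $z_1,\ldots,z_n$, and then conclude equality. Let $V_{d-1}$ denote the $\mathbb{F}$-linear span of the multilinearized polynomials $\mathcal{E}(C)\,z^S$ with $S \cap \vars{C} = \emptyset$ and $|S| + |\vars{C}| \leq d-1$. Theorem~\ref{thm:ns-design} directly gives that $\NS_{\mathbb{F}}(\mathcal{C}) \geq d$ is equivalent to $1 \notin V_{d-1}$.

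I would then introduce the bilinear pairing $\langle f, g\rangle :=$ coefficient of $z^{[n]}$ in the multilinearized product $fg$. This pairing is nondegenerate: if $f \neq 0$ and $U$ is an inclusion-maximal set in the support of $\hat f$, then $\langle f, z^{[n] \setminus U}\rangle = \hat f(U) \neq 0$, because any other set $U'$ contributing would have to satisfy $U' \supseteq U$, and maximality of $U$ forces $U' = U$. By standard linear-algebraic duality, $1 \notin V_{d-1}$ is equivalent to the existence of a polynomial $p \in V_{d-1}^{\perp}$ with $\langle p, 1\rangle = \hat p([n]) \neq 0$, that is, a multilinear polynomial of degree exactly $n$ that annihilates every axiom of degree at most $d-1$. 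It therefore remains only to show that $V_{d-1}^{\perp}$ coincides with the set of multilinear polynomials satisfying the gap-$d$ restriction conditions.

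The key technical step is the identity
\[
\langle p,\, \mathcal{E}(C)\,z^R\rangle \;=\; (-1)^{|C^+|} \sum_{S_0 \subseteq W \subseteq [n] \setminus \vars{C}} \bigl(\text{coefficient of } z^W \text{ in } p \restriction \pi_C\bigr),
\]
where $S_0 := [n] \setminus \vars{C} \setminus R$ and $\pi_C$ denotes the falsifying certificate of the clause $C$. I would derive this by expanding $\mathcal{E}(C) = \sum_{T\subseteq C^+}(-1)^{|T|}\,z^{T \cup C^-}$, multilinearizing the product $p \cdot \mathcal{E}(C)\,z^R$, and matching the resulting coefficient of $z^{[n]}$ with the coefficient formula $\text{coeff}(z^W, p \restriction \pi_C) = \sum_{U \subseteq C^-}\hat p(W \cup U)$; the $T$-sum forces $U \cap C^+ = \emptyset$ via a binomial cancellation, leaving exactly the displayed expression. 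Observing that $|R| + |\vars{C}| \leq d - 1$ is equivalent to $|S_0| \geq n - d + 1$, the condition $p \in V_{d-1}^{\perp}$ becomes: for every clause $C$ and every $S_0 \subseteq [n]\setminus\vars{C}$ with $|S_0| \geq n - d + 1$, the displayed superset sum of restriction-coefficients vanishes. A standard Möbius inversion on the subset lattice of $[n] \setminus \vars{C}$ then converts this into the requirement that every individual coefficient of $z^W$ in $p \restriction \pi_C$ with $|W| \geq n - d + 1$ vanishes --- precisely the condition $\deg(p \restriction \pi_C) \leq n - d$.

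The main obstacle I anticipate is handling the sign accounting and the nested subset conditions in the inner-product identity cleanly so that the Möbius step drops out transparently; once the identity is in place, the equivalence $\gap_{\mathbb{F}}(\mathcal{C}) \geq d \Leftrightarrow \NS_{\mathbb{F}}(\mathcal{C}) \geq d$, and hence equality of the two complexity measures, follows immediately from Theorem~\ref{thm:ns-design} combined with nondegeneracy of the pairing.
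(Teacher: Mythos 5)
The paper does not include its own proof of Theorem~\ref{th:gap-equals-ns}; it only cites the result to the prior works \cite{PR18LiftingNS, Robere18Thesis}, so there is no in-paper argument to compare against. Your proposal is a correct, self-contained linear-algebraic proof, and the three ingredients you lean on all check out. First, working in the $2^n$-dimensional space of multilinear polynomials, the reformulation $\NS_{\mathbb{F}}(\mathcal{C}) \geq d \iff 1 \notin V_{d-1}$ is indeed what Theorem~\ref{thm:ns-design} gives once one observes that (i) multilinearization kills the Boolean axioms, and (ii) multilinearizing $\mathcal{E}(C)\,z^S$ either vanishes or reduces to $\mathcal{E}(C)\,z^{S\setminus C^-}$ with $(S\setminus C^-)\cap\Vars{C}=\emptyset$, so it suffices to span by the terms you list. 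Second, the pairing identity is right: expanding $\mathcal{E}(C)=\sum_{T\subseteq C^+}(-1)^{|T|}z^{T\cup C^-}$ and collecting by $T'$, the inner alternating sum over $T\supseteq C^+\setminus T'$ cancels unless $T'\cap C^+=\emptyset$, leaving $(-1)^{|C^+|}\sum_{S_0\subseteq W\subseteq[n]\setminus\Vars{C}}\,\text{coeff}(z^W,p\restriction\pi_C)$ as claimed, where the inner $U$-sum over $C^-$ reassembles into the restriction coefficient formula $\text{coeff}(z^W,p\restriction\pi_C)=\sum_{U\subseteq C^-}\hat p(W\cup U)$. Third, the final step does not even need Möbius inversion in full generality: letting $a(W)$ be the restriction coefficients and $b(S_0)=\sum_{W\supseteq S_0}a(W)$, the equivalence ``$b(S_0)=0$ for all $|S_0|\geq n-d+1$'' $\iff$ ``$a(W)=0$ for all $|W|\geq n-d+1$'' follows directly in both directions because all the contributing index sets have size at least the threshold; so the appeal to the Möbius formula is correct but can be stated more elementarily. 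Together with nondegeneracy of the pairing (witnessed by a support-maximal set $U$), this establishes $\NS_{\mathbb{F}}(\mathcal{C})\geq d \iff \gap_{\mathbb{F}}(\mathcal{C})\geq d$ for all $d$ and hence the theorem. This is essentially the design-duality route that \cite{PR18LiftingNS, Robere18Thesis} take, but you have packaged it cleanly via a single explicit bilinear form; if you write it up, the only thing worth making fully explicit is the reduction to multilinear monomial generators with $S\cap\Vars{C}=\emptyset$, which you currently attribute to Theorem~\ref{thm:ns-design} ``directly'' when in fact it requires the short quotient-ring observation above.
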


We now prove a lifting theorem from Nullstellensatz degree to the rank measure, from which Theorem \ref{th:ns-lifting} follows by applying Lemma \ref{lem:cc-rank-lb}.
\begin{theorem}\label{th:rank-measure-lifting}
  Let $\mathcal{C}$ be an unsatisfiable $k$-CNF on $n$ variables and let $\mathbb{F}$ be any field.
  Let $g$ be any Boolean-valued gadget with $\rank(g) \geq 4$.
  There is a matrix $A$ such that for any $\mathcal{C}$-structured rectangle cover $\mathcal{R}$ we have \[ \mu_{\mathbb{F}}(\mathcal{R}, A) \geq \frac{1}{k}\left(\frac{\NS_{\mathbb{F}}(\mathcal{C})\rank(g)}{en}\right)^{\NS_{\mathbb{F}}(\mathcal{C})} \exp(-6n/\rank(g))\eqperiod \]
\end{theorem}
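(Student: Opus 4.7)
Plan: The proof will follow the Pitassi--Robere framework from \cite{PR18LiftingNS}, substituting our generalized rank-lifting result (Theorem~\ref{th:rank-lifting}) for their lemma restricted to ``good'' gadgets. Throughout, write $d = \NS_\mathbb{F}(\mathcal{C})$ and $r = \rank(g)$.

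First, I would apply Theorem~\ref{th:gap-equals-ns} to extract a multilinear polynomial $p \in \mathbb{F}[z_1, \ldots, z_n]$ with $\deg(p) = n$ and $\deg(p|_\pi) \leq n - d$ for every certificate $\pi \in \Cert(\mathcal{C})$. The witnessing matrix is $A := p \circ g^n$. Since $\hat p([n]) \neq 0$, the lower bound in Theorem~\ref{th:rank-lifting} gives
\[\rank(A) \geq (r-3)^n \geq r^n \exp(-6n/r),\]
using the elementary estimate $(1 - 3/r)^n \geq \exp(-6n/r)$, which is valid for $r \geq 4$.

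Next, for the upper bound on $\rank(A|_R)$ when $R$ is $C$-structured, I would observe that for all $(x,y) \in R$ the gadget evaluations $g(x_i, y_i)$ for $i \in \vars{C}$ are fixed to $\pi(i)$, where $\pi = \Cert(C)$, while $R$ imposes no restriction on coordinates outside $\vars{C}$. Consequently $A|_R$ is, after collapsing duplicate rows and columns, equal to $(p|_\pi) \circ g^{n - |\vars{C}|}$, so the two have the same rank. Applying the upper bound in Theorem~\ref{th:rank-lifting} together with $\deg(p|_\pi) \leq n - d$ yields
\[\rank(A|_R) \leq \sum_{j=0}^{n-d} \binom{n-|\vars{C}|}{j} r^j \leq \sum_{j=0}^{n-d} \binom{n}{j} r^j.\]

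Finally, to produce the stated form of the rank-measure bound, I would estimate the sum combinatorially: the ratio $r(n-j)/(j+1)$ between successive terms shows that the sum is, up to a constant factor, dominated by its top term $\binom{n}{d}\, r^{n-d}$. Using $\binom{n}{d} \leq (en/d)^d$ and dividing,
\[\mu_\mathbb{F}(\mathcal{R}, A) \;\geq\; \frac{r^n \exp(-6n/r)}{O\!\left(\binom{n}{d}\, r^{n-d}\right)} \;\geq\; \frac{1}{k}\!\left(\frac{dr}{en}\right)^d \exp(-6n/r),\]
where the factor $1/k$ absorbs universal constants (harmless since $k \geq 1$). The main obstacle is to carry out this combinatorial estimate uniformly across all regimes of $r$ relative to $n/d$, and to track constants carefully so that the final bound cleanly matches the stated form.
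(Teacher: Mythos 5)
Your plan follows the same high-level structure as the paper's proof (same witness matrix $A = p\circ g^n$ via Theorem~\ref{th:gap-equals-ns}, same use of Theorem~\ref{th:rank-lifting} to bound both numerator and denominator, same treatment of $A\restriction R$ via $(p\restriction\pi)\circ g^{[n]\setminus\vars{C}}$), and your estimate $(1-3/r)^n\geq \exp(-6n/r)$ for $r\geq 4$ is fine. However, the obstacle you flag at the end is a genuine gap, and the paper closes it with a step your proposal is missing.

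Your denominator bound is $\sum_{j=0}^{n-d}\binom{n}{j}r^j$, and you claim this is ``up to a constant factor'' its top term. That is false for small $r$: the successive ratio $r(n-j)/(j+1)$ is monotonically \emph{decreasing} in $j$, so the terms peak around $j^*\approx n(1-1/r)$, and if $n-d > j^*$ (i.e.\ $r \lesssim n/d$) the sum is dominated by interior terms far larger than $\binom{n}{d}r^{n-d}$. The paper avoids this entirely: it observes that one may assume WLOG that $\hat p(S)=0$ for all $|S|<n-d$ (truncating $p$ to high-degree monomials preserves both $\deg p = n$ and $\deg(p\restriction\pi)\leq n-d$, since the discarded part has degree $<n-d$ and restriction cannot increase degree). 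Under this normalization every $S$ in the support of $p\restriction\pi$ satisfies $n-d-k\leq |S|\leq n-d$, so the denominator sum has only $k+1$ terms, namely $\sum_{i=0}^{k}\binom{n}{d-i}\,r^{n-d-i}$, which is bounded by $k(en/d)^d r^{n-d}$ with no constraint on $r$. Your approach \emph{can} be completed without this normalization by adding the observation that the theorem is trivial when $r\leq en/d$ (since $\mu\geq 1$ always and the right-hand side is then $\leq 1/k$); for $r>en/d$ the successive ratio at $j=n-d-1$ is $r(d+1)/(n-d) > e > 2$, and since the ratio is decreasing in $j$ this dominates the whole sum geometrically. But as written, your proposal neither invokes the paper's WLOG on the support of $p$ nor handles the small-$r$ regime, so the final inequality is not established in the form stated.
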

\begin{proof}
  Let $p \in \mathbb{F}[z_1, z_2, \ldots, z_n]$ be the polynomial witnessing the algebraic gap complexity $\gap(\mathcal{C})$, and let $A = p \circ g^n$ be the pattern matrix obtained by composing $p$ and $g$.
  We need to analyze \[ \mu_{\mathbb{F}}(\mathcal{R}, p \circ g^n) = \frac{\rank_{\mathbb{F}}(p \circ g^n)}{\displaystyle{\max_{R \in \mathcal{R}}\rank_{\mathbb{F}}(p \circ g^n \restriction R)}}\eqperiod \]

  Let us first analyze the denominator.
  Let $R$ be an arbitrary rectangle from the cover $\mathcal{R}$, and suppose that $R$ is $C$-structured for the clause $C \in \mathcal{C}$.
  Let $\pi = \Cert(C)$.
  We want to show that
  \begin{equation}
    \label{eq:rank-restricted-new}
    \rank_{\mathbb{F}}(p \circ g^n \restriction R) \leq \sum_{S: \widehat{p \restriction \pi}(S) \neq 0} \rank(g)^{|S|} \eqperiod
  \end{equation}
  To prove this, we claim that $p \circ g^n \restriction R$ is column-equivalent to the block matrix \[ [(p \restriction \pi) \circ g^{[n] \setminus \vars{C}}, (p \restriction \pi) \circ g^{[n] \setminus \vars{C}}, \ldots, (p \restriction \pi) \circ g^{[n] \setminus \vars{C}}]\] for some number of copies of the matrix $(p \restriction \pi) \circ g^{[n] \setminus \vars{C}}$.
  Indeed Equation \ref{eq:rank-restricted-new} immediately follows from this claim as \[\rank_{\mathbb{F}}(p \circ g^n \restriction R) = \rank_{\mathbb{F}}((p \restriction \pi) \circ  g^{[n] \setminus \vars{C}}) \leq \sum_{S: \widehat{p \restriction \pi}(S) \neq 0} \rank(g)^{|S|}\] by Theorem \ref{th:rank-lifting}.
  So, we now prove the claim.

  Write $R = A \times B$.
  Fix assignments $\alpha \in A_{\vars{C}}$ and $\beta \in B_{\vars{C}}$, and note that since $R$ is $C$-structured we have that $g^{\vars{C}}(\alpha, \beta) = \pi$ and $(\alpha, x') \in A$ and $(\beta, y') \in B$ for all $x', y'$.
  Thus, by ranging $x_{[n] \setminus \vars{C}}, y_{[n] \setminus \vars{C}}$ over all values yields the matrix $(p \restriction \pi) \circ g^{[n] \setminus \vars{C}}$.
  Then, ranging $x_{\vars{C}}$ and $y_{\vars{C}}$ over all $\alpha, \beta$ such that $g^{\vars{C}}(\alpha, \beta) = \pi$ yields the claim and Equation \ref{eq:rank-restricted-new}.

  Now, consider the rank measure $\mu_{\mathbb{F}}(\mathcal{R})$, which by Theorem \ref{th:rank-lifting} and Equation \ref{eq:rank-restricted-new} satisfies \[ \mu_\mathbb{F}(\mathcal{R}) \geq \frac{\rank_{\mathbb{F}}(p \circ g^n)}{\displaystyle{\max_{R \in \mathcal{R}} \rank_{\mathbb{F}}(p \circ g^n \restriction R)}} = \frac{\displaystyle \sum_{S: \hat p(S) \neq 0} (\rank(g)-3)^{|S|}}{\displaystyle \max_{\pi \in \Cert(\mathcal{C})} \sum_{S : \widehat{p \restriction \pi}(S) \neq 0} \rank(g)^{|S|}}\]
  By definition of $\gap(\mathcal{C})$ we have $\deg p = n$ and thus the numerator is at least $(\rank(g) - 3)^n$.
  For the denominator, since $p$ witnesses the algebraic gap of $\mathcal{C}$, we have that $\deg p \restriction \pi \leq n - \gap(\mathcal{C})$ for all $\pi \in \Cert(\mathcal{C})$.
  We may assume that $\hat p(S) = 0$ when $|S| < n - \gap(\mathcal{C})$ as the definition of algebraic gaps depends only on the coefficients of monomials of $p$ with degree larger than $n - \gap(\mathcal{C})$. 
  So, for any restriction $\pi$:
\begin{align*}
    \sum_{S: \widehat{p \restriction \pi}(S) \neq 0} \rank(g)^{|S|} &\leq \sum_{i = 0}^k {n \choose \gap(\mathcal{C}) - i} \rank(g)^{n - \gap(\mathcal{C}) - i} \\
    & \leq k \left( \frac{en}{\gap(\mathcal{C})} \right)^{\gap(\mathcal{C})}\rank(g)^{n - \gap(\mathcal{C})}
  \end{align*}
  Putting it all together, and using the fact that $\rank(g) \geq 6en/\gap(\mathcal{C})$, we have
  \begin{align*}
    \mu_{\mathbb{F}}(\mathcal{R}, p \circ g^n) & \geq \frac{(\rank(g) - 3)^n}{k(en/\gap(\mathcal{C}))^{\gap(\mathcal{C})} \rank(g)^{n-\gap(\mathcal{C})}} \\
                                                          & = \frac{1}{k}\left(\frac{\gap(\mathcal{C})\rank(g)}{en}\right)^{\gap(\mathcal{C})} \cdot \left(1 - \frac
                                                            {3}{\rank(g)}\right)^n \\
                                                          & \geq \frac{1}{k}\left(\frac{\gap(\mathcal{C})\rank(g)}{en}\right)^{\gap(\mathcal{C})} \exp(-6n/\rank(g)) \eqperiod
  \end{align*}
  Since $\gap(\mathcal{C}) = \NS(\mathcal{C})$ the theorem is proved.
\end{proof}

Theorem \ref{th:ns-lifting} follows immediately from Theorem \ref{th:rank-measure-lifting} and Lemma \ref{lem:cc-rank-lb}.

\section{Proof of the Space Lemma}
\label{sec:space-lemma}

In this section we prove the Space Lemma
(\reflem{lem:clause-in-space-five}), restated next.
\begin{lemma}
  Let $\formf$ be a set of inequalities over $\numvariables$ variables
  that implies a clause $\clc$. Then there is a cutting planes
  derivation of $\clc$ from $\formf$ in length
  $\bigoh{\numvariables^2 2^\numvariables}$ and space $\bigoh{1}$.
\end{lemma}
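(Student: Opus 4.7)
The plan is to adapt the constant-space refutation construction of Galesi, Pudlák and Thapen~\cite{GPT15SpaceComplexityCP}, which we shall abbreviate as \emph{GPT}, to produce a derivation of a specific clause in place of a refutation of contradiction. Since $\formf$ semantically implies $\clc$, the extended system $\formf \cup \set{\neg \clc}$ is unsatisfiable, where $\neg \clc$ denotes the set of unit inequalities asserting that each literal of $\clc$ is false. A direct application of GPT yields a constant-space refutation of this extended system, but the axioms coming from $\neg \clc$ are not available to us; the heart of the proof will therefore be to internalize these axioms into the GPT construction so that the ``refutation'' instead outputs $\clc$.

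Recall that the key idea of GPT is to iterate over all $2^\numvariables$ Boolean assignments in binary order while maintaining a single accumulator inequality. Letting $S(z) = \sum_{i=1}^{\numvariables} 2^{i-1} z_i$, so that $S(\alpha)$ is the binary representation of $\alpha$ as an integer in $\set{0, 1, \ldots, 2^{\numvariables} - 1}$, one maintains a valid inequality $T_t \geq 0$ such that $T_t$ is derivable from the axioms of $\formf$ used so far and such that at $t = 2^\numvariables$ the inequality reduces syntactically to a contradiction. The transition from $T_t$ to $T_{t+1}$ invokes the axiom violated by the unique assignment of binary value $t$, followed by a carefully designed sequence of weakenings, linear combinations and divisions that rebalances the coefficients of the $\numvariables$ Boolean variables appearing inside $S(z)$. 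Each such transition uses $\bigoh{1}$ lines in memory and length polynomial in $\numvariables$.

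The modification for our lemma is to arrange that the accumulator additionally carries the target clause $\clc$. Writing $\clc$ in inequality form as $L(\clc) \geq r(\clc)$, the modified accumulator will take the shape
\begin{equation*}
    T'_t = M \cdot \bigl(L(\clc) - r(\clc)\bigr) + T_t
\end{equation*}
for a sufficiently large integer coefficient $M$, set up so that $T'_0 \geq 0$ is derivable from the variable axioms alone and so that at $t = 2^\numvariables$ the divisions that would normally produce $0 \geq 1$ instead produce $L(\clc) \geq r(\clc)$, which is exactly $\clc$. The update from $T'_t$ to $T'_{t+1}$ is made by case analysis on whether the assignment $\alpha$ of binary value $t$ satisfies $\clc$: if it does, the slack inherent in the $L(\clc) - r(\clc)$ term on $\alpha$ is enough to perform the transition without invoking any axiom from $\formf$; if it does not, then by the assumption that $\formf$ implies $\clc$ the assignment $\alpha$ must violate some axiom $I \in \formf$, and we invoke $I$ to perform the update exactly as GPT would.

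The main obstacle will be to verify that the modified accumulator $T'_t$ can be maintained in constant line space across all $2^\numvariables$ iterations, and that each update step can be implemented in length $\bigoh{\numvariables^2}$---the extra factor of $\numvariables$ compared with GPT reflects the additional coefficient manipulations required to propagate the rebalancing of $S(z)$ through the $L(\clc) - r(\clc)$ carrier term, together with the uniform bookkeeping for the two cases above. Once these details are settled, summing over the $2^\numvariables$ iterations yields the total length bound $\bigoh{\numvariables^2 2^\numvariables}$, and correctness follows from the GPT analysis together with the case split described above.
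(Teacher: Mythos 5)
Your high-level plan---adapt the GPT constant-space construction so that the accumulator at the end gives $\clc$ rather than $0 \geq 1$, and use the hypothesis $\formf \models \clc$ to supply the needed tautology axioms---is the right instinct and is indeed what the paper does. However, the specific mechanism you propose (a ``carrier term'' $M\bigl(L(\clc) - r(\clc)\bigr)$ added to the accumulator, with a case split at each step on whether the current assignment $\alpha_t$ satisfies $\clc$) has a genuine gap that I do not see how to repair.

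First, with a fixed ``sufficiently large'' $M>0$, the modified accumulator $T'_0 = M\bigl(L(\clc)-r(\clc)\bigr) + T_0$ is not even a valid inequality over $\{0,1\}^\numvariables$: on the assignment that sets every positive literal of $\clc$ to $0$ and every negative literal to $1$ (and everything else to $0$), we get $L(\clc)-r(\clc) = -1$ and $T_0 = \sum 2^{i-1}z_i$ contributes at most $2^{\numvariables}-1$, so $T'_0 = -M + O(2^\numvariables) < 0$ once $M$ is large. Hence $T'_0 \geq 0$ cannot be derived from the variable axioms. Scaling $M$ with $t$ (e.g.\ $T'_t = t\,L(\clc) + \sum 2^{i-1}z_i - t$, which does have the intended semantics ``$\clc$ holds or the assignment is $\geq t$'') fixes the $t=0$ endpoint, but then the transition when $\alpha_t \models \clc$ breaks: one computes $T'_{t+1} = T'_t + \bigl(L(\clc) - 1\bigr)$, so passing from $T'_t \geq 0$ to $T'_{t+1} \geq 0$ requires the extra inequality $L(\clc) \geq 1$, which is exactly the clause $\clc$ one is trying to derive. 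No weakening of a single satisfied literal of $\clc$, nor any division I can find, supplies the missing unit. The ``slack'' you invoke is a pointwise semantic fact about the value of $T'_t$ at the single assignment $\alpha_t$; it does not translate into a cutting-planes derivation step.

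The paper sidesteps this entirely, and the resulting argument is both simpler and correct. After renaming variables (and substituting $x_i \mapsto 1-x_i$ for the negative literals) one may assume $\clc = x_{\numvariables-1} \lor \cdots \lor x_{\numvariables-k}$. Now observe that the \emph{unmodified} GPT accumulator $\ctaccumulator[2^{\numvariables-k}]$, i.e.\ $\sum_{i=0}^{\numvariables-1} 2^i x_i \geq 2^{\numvariables-k}$, is already semantically equivalent to $\clc$ over Boolean inputs, and the GPT inductive derivation of $\ctaccumulator[2^{\numvariables-k}]$ from $\setdescr{\ctaxiom}{\alpha < 2^{\numvariables-k}}$ visits precisely those assignments that falsify $\clc$. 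Since $\formf \models \clc$, every such $\alpha$ also falsifies $\formf$, so the second GPT ingredient (deriving the complete-tautology clause $\ctaxiom$ from any set of inequalities that $\alpha$ falsifies, in length $\bigoh{\numvariables}$ and constant space) applies to each of them. Splicing these $\bigoh{\numvariables}$-line derivations in for each axiom use, and converting $\ctaccumulator[2^{\numvariables-k}]$ into the literal clause $\clc$ by one more application of the same lemma, completes the proof in the stated bounds. In short, rather than modifying the accumulator and arguing away the ``satisfying'' assignments, the paper simply never visits them.
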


We do so by adapting the proof in~\cite{GPT15SpaceComplexityCP} that
any formula has a cutting planes refutation in constant space in order
to show that, in fact, we can derive any clause that follows from a
set of inequalities in constant space.

At a bird's eye view, the proof in~\cite{GPT15SpaceComplexityCP} has
two steps. The primary step is building a refutation of the complete
tautology, the formula that contains all $2^n$ clauses with $n$
variables each forbiding one of the possible $2^n$ assignments, in
constant space. The authors come up with an order and a way to encode
that the first $K$ clauses are all true in small space for an
arbitrary $K$, and the rest of the primary step consists of showing
how to operate with this encoding also in small space, starting with
no clause being true and adding clauses one by one until a
contradiction arises. The secondary step is to transform the original
set of linear inequalities into the complete tautology.

If we do not start with an unsatisfiable set of linear inequalities we
obviously cannot reach a contradiction, but given a clause $\clc$ that
follows from $\formf$ we can still encode that all the clauses that
are a superset of $\clc$ must be true, and this expression is
equivalent to $\clc$.

Let us set up some notation. We number the variables from $0$ to
$\numvariables-1$. If $\alpha$ is a total assignment, we denote by
$\ctaxiom$ the clause over $n$ variables that is falsified exactly by
$\alpha$. We overload notation and also denote by $\ctaxiom$ the 
standard translation of the clause $\ctaxiom$ into an inequality.
We say that an assignment is less than a natural number
$\binarylimit$ and write $\alpha < \binarylimit$ if $\alpha$ is
lexicographically smaller than the binary representation of
$\binarylimit$, that is if
$\sum_{i=0}^{n-1}2^i \alpha(x_i) < \binarylimit$. We write
$\ctaccumulator$ to denote the inequality
$\sum_{i=0}^{n-1} 2^i x_i \geq \binarylimit$ that is falsified exactly
by the assignments
$\setdescr{\alpha\in\set{0,1}^n}{\alpha < \binarylimit}$.

We can reuse the following two intermediate lemmas
from~\cite{GPT15SpaceComplexityCP}, corresponding to the primary and
the secondary steps.

\begin{lemma}[\cite{GPT15SpaceComplexityCP}]
  \label{lem:ta-in-space-five}
 There is a cutting
  planes derivation of $\ctaccumulator$ from the set of clauses
  $\setdescr{\ctaxiom}{\alpha<\binarylimit}$ in length
  $\bigoh{\numvariables\binarylimit}$ and space $\bigoh{1}$.
\end{lemma}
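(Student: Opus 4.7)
The plan is to adapt the constant-space refutation of the complete tautology from \cite{GPT15SpaceComplexityCP}, terminating at the target inequality $\ctaccumulator$ rather than continuing to a contradiction. Concretely, I would build up the accumulator $\ctaccumulator[k]$ for $k=0,1,\ldots,\binarylimit$ in sequence, starting from the trivially derivable $\ctaccumulator[0]\colon \sum_{i=0}^{\numvariables-1} 2^i x_i \geq 0$ (obtained in $\bigoh{\numvariables}$ steps and constant space from the axioms $x_i \geq 0$ via successive linear combinations with weights~$2^i$), and then show how to increment $\ctaccumulator[k]$ to $\ctaccumulator[k+1]$ in $\bigoh{\numvariables}$ additional length and $\bigoh{1}$ space using only the new axiom $\ctaxiom[\alpha_k]$, where $\alpha_k$ denotes the assignment whose binary value is~$k$. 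Iterating this $\binarylimit$ times yields $\ctaccumulator$ in total length $\bigoh{\numvariables\binarylimit}$ while keeping only a constant number of inequalities in any configuration.

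The core technical step is the increment from $\ctaccumulator[k]$ to $\ctaccumulator[k+1]$. Observe that over $\set{0,1}$-assignments $\ctaccumulator[k+1]$ is equivalent to $\ctaccumulator[k] \land \ctaxiom[\alpha_k]$, since adding the excluding clause $\ctaxiom[\alpha_k]$ removes exactly the single assignment $\alpha_k$ from the feasible region of $\ctaccumulator[k]$. To realize this implication in cutting planes without blowing up space, I would form an integer linear combination $a \cdot \ctaccumulator[k] + b \cdot \ctaxiom[\alpha_k]$ whose Boolean solution set already coincides with that of $\ctaccumulator[k+1]$, then apply the division rule so that its ceiling contributes the needed ``$+1$'' on the right-hand side, and finally use $\bigoh{\numvariables}$ subsequent linear combinations with the variable axioms $x_i \geq 0$ and $-x_i \geq -1$ to rebalance the coefficients into the canonical form $\sum_{i=0}^{\numvariables-1} 2^i x_i \geq k+1$. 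Intermediate inequalities can be forgotten as soon as they are used, and in particular $\ctaccumulator[k]$ and $\ctaxiom[\alpha_k]$ are deleted once $\ctaccumulator[k+1]$ has been derived, so the configuration returns to containing only the accumulator.

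The main obstacle I expect is choosing the parameters $a$, $b$ and the subsequent sequence of axiom applications uniformly in $k$: the bit pattern of $k$---and in particular the position of the lowest $0$-bit, which governs the carries propagated when passing from $k$ to $k+1$---controls the precise arithmetic needed in the rebalancing phase. However, the construction in \cite{GPT15SpaceComplexityCP} already performs essentially this increment in order to refute the complete tautology, and our setting differs only in that the loop terminates at $k = \binarylimit$ rather than at $k = 2^{\numvariables}$, and that the final configuration contains the single inequality $\ctaccumulator$ rather than $0 \geq 1$; the same bookkeeping should therefore apply without modification.
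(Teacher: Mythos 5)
Your proposal matches the paper's argument: the paper also derives $\ctaccumulator$ by iterating the increment step (Claim~3 in the proof of Lemma~3.2 of \cite{GPT15SpaceComplexityCP}, which takes $\ctaccumulator[k]$ and $\ctaxiom[\alpha_k]$ to $\ctaccumulator[k+1]$ in $\bigoh{\numvariables}$ length and $\bigoh{1}$ space) starting from the trivial $\ctaccumulator[0]$ and stopping after $\binarylimit$ iterations rather than $2^\numvariables$. Your sketch of the increment as ``one combination, one division, $\bigoh{\numvariables}$ rebalancing steps'' is looser than the actual carry-propagation argument in \cite{GPT15SpaceComplexityCP}, but since you explicitly defer to their construction for that step, this is the same level of detail the paper itself gives.
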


\begin{lemma}[\cite{GPT15SpaceComplexityCP}]
  \label{lem:ia-in-space-four}
  If a total assignment $\alpha$ falsifies a set of inequalities $\formf$,
  then there is a cutting planes derivation of $\ctaxiom$ from
  $\formf$ in length $\bigoh{\numvariables}$ and space $\bigoh{1}$.
\end{lemma}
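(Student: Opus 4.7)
The argument is a direct derivational analogue of the complete-tautology refutation of~\cite{GPT15SpaceComplexityCP}. Pick any single inequality $I_0 : \sum_i a_i x_i \geq b$ in $\formf$ that $\alpha$ falsifies---one exists by hypothesis---so that $\sum_i a_i \alpha(x_i) \leq b - 1$. I would introduce $I_0$ as the sole ``running line'' on the blackboard and mutate it, one variable at a time, until it becomes (at least as strong as, on the Boolean cube) the clausal form of $\ctaxiom$, namely $\sum_i (-1)^{\alpha(x_i)} x_i \geq 1 - \setsize{\alpha}$ where $\setsize{\alpha} = \setsize{\set{i : \alpha(x_i) = 1}}$.

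The central observation is that both variable axioms are ``tight at $\alpha$'': $x_i \geq 0$ has slack $\alpha(x_i)$ at $\alpha$ and $-x_i \geq -1$ has slack $1 - \alpha(x_i)$, so whichever axiom we happen to add in any given step will have slack $0$ at $\alpha$. Consequently, combining the running line with any non-negative integer multiple of such an axiom preserves the running line's slack at $\alpha$, and the running line therefore stays falsified by $\alpha$ throughout. I would carry this out in two passes. A \emph{sign-alignment pass} scans the variables and, for each $i$ whose current coefficient has the wrong sign relative to $\alpha(x_i)$, combines with the appropriate axiom (with multiplier equal to the current absolute value of the coefficient) to drive that coefficient to $0$. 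A \emph{cap pass} then scans again and raises every remaining coefficient in absolute value to $M := \max_i \setsize{c_i}$ by multiplying the opposite axiom by $M - \setsize{c_i}$. A final application of the division rule with divisor $M$ yields $\sum_i (-1)^{\alpha(x_i)} x_i \geq \lceil b''/M \rceil$, and slack preservation forces $\lceil b''/M \rceil \geq 1 - \setsize{\alpha}$, so the derived line implies (and in the typical case equals) $\ctaxiom$.

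Each pass touches every variable at most once and performs $\bigoh{1}$ work per variable---introduce one axiom line, do one linear combination, then erase both the previous running line and the axiom line---so the blackboard holds at most a constant number of inequalities at any time, and the total length is $\bigoh{\numvariables}$, with one extra division step at the end.

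The main obstacle is handling degenerate cases, most notably $M = 0$, which arises precisely when the sign-alignment pass wipes the running line down to $0 \geq b^* \geq 1$. Rather than divide, one then recovers $\ctaxiom$ from this contradiction in $\bigoh{\numvariables}$ further constant-space steps by inserting the literals of $\ctaxiom$ one at a time with the appropriate RHS adjustments. A similar short postprocessing step converts any over-strengthening of $\ctaxiom$ produced by the division into $\ctaxiom$ itself when syntactic equality is needed; neither case affects the asymptotic bounds.
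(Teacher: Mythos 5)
Your proposal is correct and, modulo wording, is precisely the derivation the paper delegates to Theorem~3.4 of~\cite{GPT15SpaceComplexityCP}: isolate a single inequality of $\formf$ that $\alpha$ falsifies, shape its left-hand side into $\sum_i(-1)^{\alpha(x_i)}x_i$ by adding only variable axioms that are tight at $\alpha$ (so the running line remains falsified by $\alpha$ throughout), divide out the common coefficient, and weaken if needed. One small slip worth flagging: in the cap pass you must use the \emph{same} axiom that is tight at $\alpha$ for variable $i$ (namely $x_i\geq 0$ when $\alpha(x_i)=0$ and $-x_i\geq -1$ when $\alpha(x_i)=1$), not the ``opposite'' one---the opposite axiom has slack $1$ at $\alpha$ and would break the invariant on which the whole argument rests.
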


\reflem{lem:ta-in-space-five}, which contains the core of the
argument, follows from the proof of Lemma~3.2
in~\cite{GPT15SpaceComplexityCP}. We repeat Claim~3 in that proof,
which shows how to inductively derive $\ctaccumulator[\binarylimit'+1]$
from $\ctaccumulator[\binarylimit']$ and $\ctaxiom[\binarylimit']$, not $2^n$
but $\binarylimit$ times.

In turn \reflem{lem:ia-in-space-four} follows from the proof of Theorem~3.4
in~\cite{GPT15SpaceComplexityCP}: since $\alpha$ must falsify some
inequality $I$ from $\formf$, we only need to reproduce the derivation
of $\ctaxiom$ from $I$ verbatim.

\begin{proof}[Proof of \reflem{lem:clause-in-space-five}]
  Assume for now that
  $\clc = x_{n-1} \lor \cdots \lor x_{n-k}$. Consider the derivation
  $\Pi$ of the inequality $\ctaccumulator[2^{n-k}]$ from
  $\setdescr{\ctaxiom}{\alpha<\binarylimit}$, which is
  equivalent to $\clc$, given by \reflem{lem:ta-in-space-five}. We
  build a new derivation $\Pi'$ extending $\Pi$ as follows.

  Every time that we add an axiom $\ctaxiom$ to a configuration in
  $\Pi$, we replace that step by the derivation of $\ctaxiom$ from
  $\formf$ given by \reflem{lem:ia-in-space-four}. Observe that we
  only add axioms $\ctaxiom$ with $\alpha < 2^{n-k}$, and since any
  such assignment falsifies $\formf$ we meet the conditions to apply
  \reflem{lem:ia-in-space-four}.

  Finally we obtain $\clc$ from $\ctaccumulator[2^{n-k}]$ by
  considering $\set{\ctaccumulator[2^{n-k}]}$ as a set of inequalities
  over the $k$ variables $x_{n-1}\ldots x_{n-k}$ and applying
  \reflem{lem:ia-in-space-four} with $\alpha=0^k$ being the only
  assignment over these variables that falsifies
  $\ctaccumulator[2^{n-k}]$. The result is $\ctaxiom[0]=\clc$.

  To derive a general clause $\clc$ that contains $k'$ negative
  literals, say
  $\clc = \olnot{x_{n-1}} \lor \cdots \lor \olnot{x_{n-k'}} \lor
  x_{n-k'-1} \lor \cdots \lor x_{n-k}$, we build a derivation with the
  same structure as $\Pi'$, except that we replace every occurrence of $x_i$ by
  $(1-x_i)$ for $n-k' \leq i < n$. To do so, we replace each derivation of an axiom $\ctaxiom$ with a derivation
  of the axiom $\clc_{\alpha + (0^{n-k'}1^{k'})}$ and, for 
  $n-k' \leq i < n$, replace each use of
  $x_i \geq 0$ and $-x_i \geq -1$ by $-x_i \geq -1$ and $x_i \geq 0$,
  respectively. Linear combination and
  division steps go through unchanged, we only observe that at a
  division step the coefficient on the right hand side differs by a
  multiple of the divisor, so rounding is not affected.
\end{proof}

\section{Proof of Corollary~\ref{cor:query-complexity}}
\label{sec:query-complexity}

In this appendix, we provide the proof of  Corollary~\ref{cor:query-complexity}, restated next.

\begin{corollary}[\ref{cor:query-complexity}, restated]
For any field $\mathbb{F}$ and any directed acyclic graph $G$, the Nullstellensatz degree over $\mathbb{F}$ of $\pebblingformula$, the decision tree depth of $\Search(\pebblingformula)$, and the parity decision tree depth of $\Search(\pebblingformula)$ coincide and are equal to the reversible pebbling price of $G$.
\end{corollary}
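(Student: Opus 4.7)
My plan is to prove the corollary by sandwiching all four complexity measures between $\rpeb(G)$ and itself, using \reflem{lem:pebbling-nss} (which establishes $\NS_{\mathbb{F}}(\pebblingformula) = \rpeb(G)$ over every field) as the anchor.

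First I would record the folklore inequality $\NS_{\mathbb{F}}(F) \leq \decisiontree(\Search(F))$, valid over every field. Given a decision tree of depth $d$ for $\Search(F)$, each leaf $\ell$ outputs a clause $C_{\ell}$ falsified by the inputs reaching $\ell$; writing the monomial indicator of the path to $\ell$ as $p_{\ell} = q_{\ell} \cdot \mathcal{E}(C_{\ell})$ (the path forces the variables of $C_{\ell}$ to falsifying values), the identity $\sum_{\ell} q_{\ell} \mathcal{E}(C_{\ell}) = 1$ modulo the Boolean axioms furnishes a Nullstellensatz refutation of degree $d$. For parity decision trees over $\mathbb{F}_2$, the same argument goes through with the path indicator replaced by a product of $d$ degree-one polynomials of the form $1 + L(z) + b \in \mathbb{F}_2[z_1, \ldots, z_n]$, giving $\NS_{\mathbb{F}_2}(F) \leq \paritydecisiontree(\Search(F))$.

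Second, I would prove the matching upper bound $\decisiontree(\Search(\pebblingformula)) \leq \rpeb(G)$ specific to pebbling formulas. The construction unrolls an optimal reversible pebbling of $G$ into a decision tree: whenever the strategy places a pebble on a vertex $v$, the tree queries $z_v$; a ``false'' answer immediately exposes a falsified source or neighbor axiom, while a ``true'' answer at the sink exposes the sink axiom, and removal moves contribute no queries. The essential point is that reversibility allows the resulting decision tree to have depth equal to the maximum pebble count rather than the total number of pebbling moves, by a recursive construction in the style of Bennett's reversible simulations.

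Combining these ingredients with \reflem{lem:pebbling-nss} closes the loop: over any field $\mathbb{F}$ we obtain
\begin{equation}
\NS_{\mathbb{F}}(\pebblingformula) \leq \decisiontree(\Search(\pebblingformula)) \leq \rpeb(G) = \NS_{\mathbb{F}}(\pebblingformula),
\end{equation}
forcing $\NS_{\mathbb{F}}(\pebblingformula) = \decisiontree(\Search(\pebblingformula)) = \rpeb(G)$, and specializing to $\mathbb{F}_2$ and chaining $\NS_{\mathbb{F}_2} \leq \paritydecisiontree \leq \decisiontree$ pins the parity decision tree depth to the same value. The main obstacle will be the careful verification of the second step: a naive simulation that queries each vertex at the moment its pebble is first placed yields a decision tree whose depth equals the number of \emph{distinct} vertices pebbled, which for families such as long paths (where $\rpeb(G) = \Theta(\log n)$ but every one of the $n$ vertices must be pebbled at some point) is exponentially larger than $\rpeb(G)$; the recursive construction is precisely what avoids this blow-up.
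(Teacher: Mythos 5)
Your overall strategy matches the paper's: anchor everything to $\rpeb(G)$ via Lemma~\ref{lem:pebbling-nss}, use the folklore ``Nullstellensatz degree is at most (parity) decision tree depth'' bound, and close the chain with the equality $\decisiontree(\Search(\pebblingformula)) = \rpeb(G)$. The paper organizes the cycle slightly differently --- it runs $\paritydecisiontree \le \decisiontree = \rpeb = \NS_{\mathbb{F}_2} \le \paritydecisiontree$ in one pass, only ever needing the parity version of the folklore bound and over $\mathbb{F}_2$ alone (then pushes to arbitrary $\mathbb{F}$ because Lemma~\ref{lem:pebbling-nss} is field-independent) --- but your ``prove DT equality first, then piggyback PDT'' arrangement is an equivalent reshuffling, not a genuinely different route.

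The gap is in the step $\decisiontree(\Search(\pebblingformula)) \le \rpeb(G)$, which the paper attributes to Chan and re-proves as Proposition~\ref{pro:pebbling-to-dt}. You correctly identify that the naive simulation (query $z_v$ every time $v$ is pebbled) gives depth equal to the number of placements, not to the pebble count, and that some recursion is needed --- but the resolution you gesture at does not actually work. The paper's construction is not a Bennett-style balanced time-split; rather, it pivots on the \emph{earliest-placed vertex $v$ that remains pebbled until the final configuration}, queries $z_v$ once, and then branches: if $z_v = 1$, recurse on the tail of the pebbling with $v$ promoted to the ``static'' set (lowering the non-static cost by one); if $z_v = 0$, recurse on the \emph{reverse} pebbling restricted to non-descendants of $v$, which is a surrounding pebbling of $v$ of lower non-static cost. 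Your description ``whenever the strategy places a pebble on $v$, the tree queries $z_v$; a false answer immediately exposes a falsified axiom'' is also not quite right as stated: a false answer to an internal vertex $v$ only exposes the neighbour axiom of $v$ if you already know the predecessors of $v$ are true, and guaranteeing this without re-querying is exactly what the careful choice of pivot and the bookkeeping of the static set accomplish. You would need to carry a state $(Q,Z)$ and a non-static cost accounting (as in Propositions~\ref{pro:dt-to-pebbling} and~\ref{pro:pebbling-to-dt}) to make the induction go through; the Bennett analogy by itself does not supply that.

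The two folklore inequalities are stated correctly, and the $\mathbb{F}_2$/PDT version is exactly the paper's Lemma~\ref{lem:PDT-NS}; the argument you sketch (product the linear polynomials along each root-to-leaf path, multilinearize, observe that the leaf indicator is divisible by $\mathcal{E}(C_\ell)$) is the right one.
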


\noindent Our proof uses Lemma~\ref{lem:pebbling-nss}, restated next.

\begin{lemma}[\ref{lem:pebbling-nss}, restated]
  For any field $\mathbb{F}$ and any graph $G$, $\NS_{\mathbb{F}}(\pebblingformula) = \rpeb(G)$.
\end{lemma}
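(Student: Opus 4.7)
The plan is to establish the sandwich
\[
\rpeb(G) \;=\; \NS_{\mathbb{F}}(\pebblingformula) \;\leq\; \paritydecisiontree(\Search(\pebblingformula)) \;\leq\; \decisiontree(\Search(\pebblingformula)) \;\leq\; \rpeb(G),
\]
which forces all four quantities to coincide. The leftmost equality is exactly Lemma~\ref{lem:pebbling-nss}, so the work reduces to justifying the three inequalities on the right.

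The middle inequality, $\paritydecisiontree \leq \decisiontree$, is immediate since ordinary decision trees are the special case of parity decision trees that only query singleton linear forms. For the first inequality $\NS_{\mathbb{F}}(\mathcal{C}) \leq \paritydecisiontree(\Search(\mathcal{C}))$, valid for any unsatisfiable CNF over any field, I would invoke the standard tree-to-refutation simulation: each leaf of a depth-$d$ parity decision tree corresponds to a conjunction of at most $d$ affine constraints together with a falsified axiom of $\mathcal{C}$, so multiplying the polynomial encoding of each axiom by the degree-$d$ indicator polynomial of its leaf and summing over the leaves produces a Nullstellensatz refutation of degree~$d$. Combined with Lemma~\ref{lem:pebbling-nss} this already yields $\rpeb(G) \leq \paritydecisiontree(\Search(\pebblingformula)) \leq \decisiontree(\Search(\pebblingformula))$.

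The final and most delicate inequality, $\decisiontree(\Search(\pebblingformula)) \leq \rpeb(G)$, I would prove by directly constructing a decision tree that adaptively simulates an optimal reversible pebbling strategy of~$G$. At each internal node of the tree we maintain a pebble configuration $C$ together with the invariant that every $v \in C$ has been queried on the current path and returned $1$; pebble removals are silent, while placements of $v$ trigger a query of $z_v$, with the $0$-branch outputting a falsified axiom---the source axiom if $v$ is a source, or the neighbor axiom $\bigvee_{u \in \prednode{v}} \neg z_u \vee z_v$, whose negated predecessor literals are satisfied by the invariant together with the pebble-game precondition $\prednode{v} \subseteq C$---and with the $1$-branch continuing the simulation. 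When the sink is finally placed, $z_t=1$ falsifies the sink axiom $\neg z_t$. The main obstacle is that a naive simulation charges one query per placement, whereas a strategy of price $\rpeb(G)$ can make many more placements than that; the depth must be pared down to $\rpeb(G)$ by building the tree recursively directly from the pebble game rather than from a fixed linear sequence of moves, so that along every root-to-leaf path the queried-and-alive vertices track one configuration of the strategy and therefore never exceed $\rpeb(G)$ in number.
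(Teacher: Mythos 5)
Your proposal is circular, and the circularity sits exactly at the hard part of the lemma. The chain you write,
\[
\rpeb(G) \;=\; \NS_{\mathbb{F}}(\pebblingformula) \;\leq\; \paritydecisiontree(\Search(\pebblingformula)) \;\leq\; \decisiontree(\Search(\pebblingformula)) \;\leq\; \rpeb(G),
\]
has as its leftmost link precisely the equality you are asked to prove, and you explicitly acknowledge this by saying ``the leftmost equality is exactly Lemma~\ref{lem:pebbling-nss}.'' Strip that link away and the remaining inequalities $\NS_{\mathbb{F}}(\pebblingformula) \leq \paritydecisiontree \leq \decisiontree \leq \rpeb(G)$ deliver only the one-sided bound $\NS_{\mathbb{F}}(\pebblingformula) \leq \rpeb(G)$. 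Nothing in your plan addresses the other direction, $\rpeb(G) \leq \NS_{\mathbb{F}}(\pebblingformula)$, i.e.\ the Nullstellensatz degree \emph{lower} bound, which is the substantive content of the lemma: a reversible pebbling strategy gives a decision tree and hence a Nullstellensatz refutation essentially for free, but showing no lower-degree refutation exists is where the work is.

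The paper proves that missing direction (and in fact both directions at once) via the dual characterization of Nullstellensatz degree by $d$-designs (Definition~\ref{def:ns-designs} and Theorem~\ref{thm:ns-design}): a $d$-design is a linear functional on polynomials with $D(1)=1$ and $D(pq)=0$ whenever $p$ is an axiom and $\deg(pq) \leq d$, and such a functional exists iff there is no degree-$d$ refutation. The paper's argument pins down what any design for the pebbling equations must look like---it must assign value $1$ exactly to the multilinear monomials indexed by pebbling configurations reachable from~$\emptyset$ with at most $d$ pebbles---and observes that this is consistent with the sink axiom iff $d < \rpeb(G)$. That argument is entirely absent from your proposal. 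You have effectively reconstructed the paper's proof of Corollary~\ref{cor:query-complexity} (which \emph{uses} Lemma~\ref{lem:pebbling-nss} as an ingredient) and presented it as a proof of Lemma~\ref{lem:pebbling-nss} itself. Your sketches of the genuine inequalities you do use are reasonable---the $\NS \leq \paritydecisiontree$ step matches the paper's Lemma~\ref{lem:PDT-NS}, and you correctly identify that $\decisiontree \leq \rpeb(G)$ requires a recursive rather than linear simulation of the pebbling (this is Proposition~\ref{pro:pebbling-to-dt})---but they only yield the easier half of the equality.
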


Let $\mathbb{F}_2$ be the finite field of two elements. Given a search problem $\mathcal{S}$,  we denote the (deterministic) decision tree depth and parity decision tree depth of $\mathcal{S}$ by $\decisiontree(\mathcal{S})$ and $\paritydecisiontree(\mathcal{S})$ respectively. We use the following lemma, which will be proved below.

\begin{lemma}[Folklore]
  \label{lem:PDT-NS}
  Let $\formf$ be an unsatisfiable CNF over $n$ variables. Then, $\NS_{\mathbb{F}_2}(\formf) \le \paritydecisiontree(\Search(\formf))$.
\end{lemma}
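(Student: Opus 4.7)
I will prove this by extracting a Nullstellensatz refutation of degree at most $d := \paritydecisiontree(\Search(\formf))$ directly from a parity decision tree $T$ of depth $d$ solving $\Search(\formf)$. Working in $\mathbb{F}_2[x_1, \ldots, x_n]$ modulo the Boolean ideal $(x_i^2 - x_i)$, to each leaf $\ell$ of $T$ I associate the set $V_\ell \subseteq \mathbb{F}_2^n$ of inputs reaching $\ell$, which is an affine subspace cut out by the parity constraints $L_{\ell,1}(x) = b_{\ell,1}, \ldots, L_{\ell,d_\ell}(x) = b_{\ell,d_\ell}$ queried on the path to $\ell$ (where $d_\ell \le d$). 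The polynomial $I_\ell(x) := \prod_{j=1}^{d_\ell} (L_{\ell,j}(x) + b_{\ell,j} + 1)$ has degree $d_\ell$ and is the $\mathbb{F}_2$-indicator of $V_\ell$ on Boolean inputs. Since every Boolean input reaches exactly one leaf, $\sum_\ell I_\ell$ equals $1$ pointwise on $\{0,1\}^n$, so $\sum_\ell I_\ell = 1 + \sum_i h_i (x_i^2 - x_i)$ where a routine degree-reduction argument keeps every term of degree at most $d$.

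The heart of the plan, for each leaf $\ell$, is to rewrite $I_\ell \equiv J_\ell \cdot \mathcal{E}(C_\ell)$ modulo the Boolean ideal with $\deg(J_\ell \cdot \mathcal{E}(C_\ell)) \le d$, where $C_\ell$ is the clause labelling $\ell$ and $\pi_\ell$ is the unique assignment to $\vars{C_\ell}$ falsifying $C_\ell$. Correctness of $T$ gives $V_\ell \subseteq U_\ell$, where $U_\ell$ is the subcube of assignments agreeing with $\pi_\ell$ on $\vars{C_\ell}$. Comparing dimensions over $\mathbb{F}_2^n$ yields two key inequalities: first, $|\vars{C_\ell}| \le d_\ell$; and more importantly, inside $U_\ell$ the subspace $V_\ell$ has codimension only $d_\ell - |\vars{C_\ell}|$. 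Consequently, after substituting $z \mapsto \pi_\ell(z)$ for all $z \in \vars{C_\ell}$, the restricted polynomial $I_\ell|_{\pi_\ell}$ is a function on the remaining $n - |\vars{C_\ell}|$ variables whose multilinear representation $J_\ell$ has degree at most $d_\ell - |\vars{C_\ell}|$.

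Two modular identities then close out the proof. Because each factor of $\mathcal{E}(C_\ell)$ has the form $z$ or $1-z$ for some $z \in \vars{C_\ell}$, we have $(z - \pi_\ell(z)) \cdot \mathcal{E}(C_\ell) \equiv 0$ modulo the Boolean ideal, which lets us freely substitute $z = \pi_\ell(z)$ inside products with $\mathcal{E}(C_\ell)$ and yields $I_\ell \cdot \mathcal{E}(C_\ell) \equiv J_\ell \cdot \mathcal{E}(C_\ell)$ modulo the Boolean ideal. Combined with the pointwise identity $I_\ell = I_\ell \cdot \mathcal{E}(C_\ell)$ (which holds because $V_\ell \subseteq U_\ell$), this gives $I_\ell \equiv J_\ell \cdot \mathcal{E}(C_\ell)$ modulo Boolean with right-hand side of degree $(d_\ell - |\vars{C_\ell}|) + |\vars{C_\ell}| = d_\ell \le d$, and summing over all leaves produces $1 \equiv \sum_\ell J_\ell \cdot \mathcal{E}(C_\ell)$ modulo Boolean---the desired Nullstellensatz refutation of degree $\le d$. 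The main obstacle is this degree accounting: the naive identity $I_\ell \cdot \mathcal{E}(C_\ell) = I_\ell$ alone would give degree $d + |\vars{C_\ell}|$, potentially as large as $2d$, and the essential insight that saves us is that the parity constraints on a path overlap with the variable fixings in $\mathcal{E}(C_\ell)$, collapsing $V_\ell$'s codimension inside $U_\ell$ from $d_\ell$ to $d_\ell - |\vars{C_\ell}|$.
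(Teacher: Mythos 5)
Your proposal is correct, and it reaches the same destination as the paper's proof but by a genuinely different route. Both proofs build the same indicator polynomials (your $I_\ell$ is the paper's $r_\ell$), observe that they sum to $1$, and then try to express each leaf's contribution as $q_\ell \cdot \mathcal{E}(C_\ell)$ of degree at most $d$. Where you diverge is in how you control the degree of $q_\ell$. The paper multilinearizes $r_\ell$ to get $r_\ell'$ and proves \emph{exact polynomial divisibility}: since $r_\ell'$ vanishes on each half-cube $z_i = 1 - \pi_\ell(z_i)$ for $z_i \in \vars{C_\ell}$, it is literally divisible (as a multilinear polynomial over $\mathbb{F}_2$) by each factor of $\mathcal{E}(C_\ell)$, hence $r_\ell' = q_\ell \cdot \mathcal{E}(C_\ell)$ with $\deg(q_\ell \cdot \mathcal{E}(C_\ell)) = \deg(r_\ell') \le d$ for free. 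You instead work modulo the Boolean ideal throughout and obtain the degree bound $\deg J_\ell \le d_\ell - |\vars{C_\ell}|$ via a linear-algebraic codimension count: since $V_\ell$ is an affine subspace contained in the subcube $U_\ell$, its codimension \emph{inside} $U_\ell$ is at most $d_\ell - |\vars{C_\ell}|$, so its indicator on $U_\ell$ has multilinear degree at most that. The ``obstacle'' you highlight (the degree of $I_\ell \cdot \mathcal{E}(C_\ell)$ being up to $2d$) simply does not arise in the paper's formulation, because divisibility gives the compressed form directly rather than passing through the product. Your geometric argument is a valid and instructive alternative and makes the role of the parity structure more transparent, whereas the paper's is a bit more self-contained, resting only on an elementary fact about multilinear polynomials over $\mathbb{F}_2$; both correctly handle the degenerate cases (empty $V_\ell$, dependent queries) and correctly bound the degree of the auxiliary Boolean-axiom multipliers by the degree of the difference.
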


\begin{proof}[Proof of Corollary~\ref{cor:query-complexity} from Lemmas \ref{lem:pebbling-nss} and \ref{lem:PDT-NS}]
Let $G$ be a directed acyclic graph.  Note that it suffices to prove the corollary for $\mathbb{F} = \mathbb{F}_{2}$, since Lemma~\ref{lem:pebbling-nss} implies that $\NS_{\mathbb{F}}(\pebblingformula)$ is the same for every finite field $\mathbb{F}$. The corollary follows immediately from the following chain of inequalities:
$$ \paritydecisiontree(\Search(\pebblingformula)) \le \decisiontree(\Search(\pebblingformula)) = \rpeb(G) =\NS_{\mathbb{F}_2}(\pebblingformula) \le \paritydecisiontree(\Search(\pebblingformula)). $$
The first inequality is obvious, and the first equality was proved in the work of Chan \cite{Chan13JustAPebble}, but for completeness we provide a simplified proof in Appendix~\ref{sec:reversible}. The second equality follows from Lemma~\ref{lem:pebbling-nss}, and the last inequality follows from Lemma~\ref{lem:PDT-NS}. Thus, the corollary is proved.
\end{proof}

In the remainder of this appendix, we prove Lemma~\ref{lem:PDT-NS}. Let $\formf$ be an unsatisfiable CNF over variables $z_1, \ldots, z_n$, and let $T$ be a parity decision tree of depth~$d$ that solves $\Search(\formf)$. We prove that there exists a Nullstellensatz refutation over $\mathbb{F}_2$ for $\formf$ of degree at most $d$, and this will imply the required result.

Recall that the parity decision tree $T$ takes as input an assignment $\alpha$ to $z_1, \ldots, z_n$, queries at most $d$ parities of $\alpha$, and then outputs a clause $C$ of $\formf$ that is violated by $\alpha$. More formally, every internal node $v$ of $T$ is associated with some linear polynomial $p_v$ in $z_1, \ldots, z_n$, and each outgoing edge $e$ of $v$ is associated with bit $b_e \in \{ 0,1 \}$ (so the edge $e$ is taken if $p_v(\alpha) = b_e$). Every leaf $\ell$ of $T$ is associated with a clause $C_\ell \in \formf$, such that every assignment $\alpha$ that leads $T$ to $\ell$ violates the clasue $C_\ell$.

We construct for each leaf $\ell$ of $T$ a polynomial $r_\ell(z_1, \ldots, z_n)$ of degree at most $d$ that output $1$ on an assignment $\alpha$ if the tree $T$ reaches the leaf $\ell$ when invoked on $\alpha$, and outputs $0$ otherwise. We will use those polynomials  later to construct the Nullstellensatz refutation of $\formf$. First, for every internal vertex $v$ and an outgoing edge $e$ of $v$, we associate with $e$ the linear polynomial
$$ r_e(z_1, \ldots, z_n) = p_v(z_1, \ldots, z_n) + b_e + 1.$$
Intuitively, the polynomial $r_e$ output $1$  on an assignment $\alpha$ if the query of $v$ outputs $b_e$ on $\alpha$, and $0$ otherwise. Now, to construct the polynomial $r_\ell$ of a leaf $\ell$, we multiply the polynomials $r_e$ for all the edges $e$ on the path from the root to $\ell$.  Since there are at most $d$ edges on that path, it follows that $r_\ell$ is of degree at most $d$. Moreover, it is not hard to see that $r_\ell (\alpha) = 1$ if $\alpha$ leads the tree $T$ to the leaf $\ell$, and $r_\ell (\alpha) = 0$ otherwise.

Let us denote by $r_{\ell}'$ the ``multilinearized" version of $r_\ell$, that is,  $r_{\ell}'$ is the polynomial obtained from $r_\ell$ by reducing the degree of every variable to $1$ in each of its occurences in $r_\ell$. It is not hard to see that $r_{\ell}'$ agrees with $r_\ell$ on every assignment in $\left\{0,1\right\}^n$.  Our Nullstellensatz refutation of $\formf$ is the polynomial
$$r = \sum_{\mathrm{leaf~}\ell\mathrm{~of~}T} r_{\ell}'.$$
Clearly, this polynomial is of degree at most $d$. In order to prove that the polynomial $r$ is a valid Nullstellensatz refutation of $\formf$, we need to prove that $r$ equals $1$, and that it can be derived from the axioms of $\formf$ (in other words, that $r$ belongs to ideal generated by $\mathcal{E}(\formf) \cup \left\{ z_i^2 - z_i \right\}_{i \in [n]}$). We start by proving that $r$ equals $1$.

\begin{claim}
$r = 1$.
\end{claim}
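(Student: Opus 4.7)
The plan is to prove that $r$ equals $1$ as a polynomial by showing that it evaluates to $1$ on every Boolean assignment and then invoking the uniqueness of multilinear representations.

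First I would observe that, since $T$ is a (parity) decision tree, for every assignment $\alpha \in \set{0,1}^n$ the execution of $T$ on $\alpha$ traverses a unique root-to-leaf path, and hence reaches a unique leaf, which I will denote by $\ell(\alpha)$. By the construction of the edge polynomials $r_e = p_v + b_e + 1$, we have $r_e(\alpha) = 1$ precisely when the query of the parent node $v$ on $\alpha$ returns $b_e$, that is, exactly when $e$ lies on the path traversed by $\alpha$; otherwise $r_e(\alpha) = 0$. Multiplying over the edges of the path from the root to $\ell$, it follows that $r_\ell(\alpha) = 1$ when $\ell = \ell(\alpha)$ and $r_\ell(\alpha) = 0$ otherwise.

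Next, since $r_\ell'$ is obtained from $r_\ell$ by reducing each variable to be of degree at most one, and this multilinearization leaves the values on $\set{0,1}^n$ unchanged (using $z_i^k = z_i$ for $z_i \in \set{0,1}$), we get $r_\ell'(\alpha) = r_\ell(\alpha)$ for every $\alpha \in \set{0,1}^n$. Summing over all leaves then gives
\begin{equation*}
  r(\alpha) \;=\; \sum_{\ell} r_\ell'(\alpha) \;=\; \sum_{\ell} r_\ell(\alpha) \;=\; r_{\ell(\alpha)}(\alpha) \;=\; 1
\end{equation*}
for every $\alpha \in \set{0,1}^n$.

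Finally, I would conclude by invoking the standard fact that a multilinear polynomial over $\mathbb{F}_2$ (indeed over any field) is uniquely determined by its values on $\set{0,1}^n$, since the multilinear monomials $\prod_{i\in S} z_i$ form a basis for the space of functions $\set{0,1}^n \to \mathbb{F}$. As $r$ is a sum of multilinear polynomials (hence itself multilinear) that evaluates to the constant $1$ everywhere on the Boolean cube, it must equal the constant polynomial $1$ syntactically, as desired. I do not anticipate any substantial obstacle here: the only subtlety is to keep the ``evaluation on $\set{0,1}^n$'' and the ``syntactic equality of multilinear polynomials'' viewpoints straight, which the multilinearization step $r_\ell \mapsto r_\ell'$ is precisely designed to bridge.
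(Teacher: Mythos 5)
Your proposal is correct and follows essentially the same route as the paper: show that for each Boolean assignment exactly one leaf polynomial evaluates to $1$ (so $r$ evaluates to $1$ on all of $\set{0,1}^n$), then conclude syntactic equality from the uniqueness of multilinear representations. The paper's argument is identical, merely stated more briefly.
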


\begin{proof}
We use the well-known fact that a multilinear polynomial is determined by its values on $\left\{0,1\right\}^n$: one way to see it is to observe that the multilinear monomials are a basis of the space of functions from $\left\{0,1\right\}^n$ to $\left\{0,1\right\}$, and therefore every such function has a unique representation as a multilinear polynomial. Since $r$ is multilinear, it therefore suffices to prove that $r$ outputs $1$ on every assignment in $\left\{0,1\right\}^n$.

Let $\alpha \in \left\{0,1\right\}^n$ be an assignment to $z_1, \ldots, z_n$. Let $\ell$ be the leaf that $T$ reaches when invoked on $\alpha$. Then, by the construction of $r_{\ell}'$, it holds that $r_{\ell}'(\alpha) = 1$ and that $r_{\ell '}'(\alpha) = 0$ for every other leaf $\ell '$ of $T$. It follows that $r(\alpha) = 1$, as required.
\end{proof}

It remains to show that $r$ can be derived from the axioms of $\formf$. To this end, we will show that each of the polynomials $r_{\ell}'$ can be derived from those axioms . It is not hard to show that for every leaf $\ell$, the polynomial $r_\ell - r_{\ell}'$ can be derived from the boolean axioms $\left\{ z_i^2 - z_i \right\}_{i \in [n]}$ (in fact, this holds for any difference of a polynomial and its multilinearized version). Thus, it suffices to prove that for every leaf $\ell$, the polynomial $r_\ell$ can be derived from the axioms of $\formf$. We prove a stronger statement, namely, that for every leaf $\ell$, the polynomial $r_\ell$ is divisible by the polynomial $\mathcal{E}(C_\ell)$ (i.e., the polynomial encoding of the clause $C_\ell$). To this end, we prove the following result.

\begin{claim}
Let $p(z_1, \ldots, z_n)$ be a multilinear polynomial over $\mathbb{F}_2$, and let $i \in [n]$. If $p$ vanishes whenever $z_i = 0$, then $z_i$ divides $p$. Moreover, if $p$ vanishes whenever $z_i= 1$ , then $(1 - z_i)$ divides $p$.
\end{claim}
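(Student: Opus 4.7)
The plan is to use the canonical decomposition of a multilinear polynomial with respect to a single variable. Since $p$ is multilinear, I can uniquely write
\[ p(z_1, \ldots, z_n) = z_i \cdot q(z_{\neq i}) + r(z_{\neq i}), \]
where $q$ and $r$ are multilinear polynomials in the remaining variables $z_{\neq i} := (z_1, \ldots, z_{i-1}, z_{i+1}, \ldots, z_n)$. This decomposition is obtained just by collecting monomials according to whether they contain $z_i$.

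The key tool I would invoke is the standard fact (already used in the proof of the previous claim) that a multilinear polynomial over $\mathbb{F}_2$ is uniquely determined by its values on $\{0,1\}^n$; equivalently, the multilinear monomials form a basis for the space of functions $\{0,1\}^n \to \mathbb{F}_2$, so any multilinear polynomial that vanishes on all Boolean inputs is the zero polynomial.

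For the first part, assume $p$ vanishes whenever $z_i = 0$. Substituting $z_i = 0$ into the decomposition gives $p\restriction_{z_i = 0} = r(z_{\neq i})$, and by hypothesis this multilinear polynomial in $n-1$ variables vanishes on every assignment in $\{0,1\}^{n-1}$. By the fact above, $r \equiv 0$, so $p = z_i \cdot q$ and $z_i$ divides $p$. For the second part, assume $p$ vanishes whenever $z_i = 1$. Substituting $z_i = 1$ gives $p\restriction_{z_i = 1} = q(z_{\neq i}) + r(z_{\neq i})$, which vanishes everywhere on $\{0,1\}^{n-1}$; hence $q + r \equiv 0$, i.e.\ $q = r$ over $\mathbb{F}_2$. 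Therefore
\[ p = z_i \cdot r + r = (1 + z_i) \cdot r = (1 - z_i) \cdot r, \]
using $1 + z_i = 1 - z_i$ over $\mathbb{F}_2$, so $(1 - z_i)$ divides $p$.

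The argument is essentially mechanical once the decomposition $p = z_i q + r$ is in place, and I do not expect any real obstacle; the only conceptual ingredient is the identification of multilinear polynomials over $\mathbb{F}_2$ with functions on the Boolean cube, which the paper has already used explicitly in the preceding claim.
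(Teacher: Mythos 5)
Your proof is correct and follows essentially the same route as the paper's: both decompose $p$ as a $\mathbb{F}_2$-linear combination $z_i\cdot(\text{poly in }z_{\neq i}) + (\text{poly in }z_{\neq i})$, specialize $z_i$, and conclude that the surviving multilinear polynomial in the remaining variables must be identically zero. The only cosmetic difference is that you invoke the ``multilinear polynomials over $\mathbb{F}_2$ are determined by their Boolean values'' fact directly, whereas the paper unwinds that same fact into a one-line argument by contradiction (exhibit an assignment where the residual polynomial is nonzero and extend it), and the paper phrases the second case as $p = (1-z_i)a + (b-a)$ rather than first deriving $q=r$; these are equivalent manipulations.
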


\begin{proof}
We can write $p = z_i \cdot a + b$, where $a$ and $b$ are polynomials that do not contain $z_i$. Suppose first that $p$ vanishes whenever $z_i = 0$. We would like to prove that $b = 0$. Assume that this is not the case. Then, there is an assignment $\alpha'$ to the variables $z_1, \ldots, z_n$ except for $z_i$ on which $b$  does not vanish. Now, if we extend $\alpha'$ to an assignment $\alpha$ to $z_1, \ldots, z_n$ by setting $z_i = 0$, it will follow that $\alpha$ is an assignment on which $z_i = 0$ but $p$ does not vanish, which is a contradiction.

Next, suppose that $p$ vanishes whenever $z_i = 1$. Observe that we can write $p = (1 -  z_i) \cdot a + (b - a)$ (here we use the fact that we are working over $\mathbb{F}_2$).  We would like to prove that $b - a = 0$. Assume that this is not the case. Then, there is an assignment $\alpha'$ to the variables $z_1, \ldots, z_n$ except for $z_i$ on which $b-a$  does not vanish. As before, if we extend $\alpha'$ to an assignment $\alpha$ to $z_1, \ldots, z_n$ by setting $z_i = 1$, it will follow that $\alpha$ is an assignment on which $z_i =1$ but $p$ does not vanish, which is a contradiction.
\end{proof}

We turn to prove that for every leaf $\ell$, the polynomial $r_\ell$ is divisible by the polynomial $\mathcal{E}(C_\ell)$. Fix a leaf $\ell$, and denote $C = C_\ell$. Recall that we denote by $C^+$ and $C^-$  the sets of variables that occur positively and negatively in $C$ respectively, and that
$$ \mathcal{E}(C) \equiv \prod_{z \in C^+}(1- z) \prod_{z \in C^-}z. $$
Now, observe that for every variable $z_i \in C^+$, it holds that $r_\ell$ vanishes whenever $z_i=1$: to see it, observe that when $z_i=1$, the assignment does not violate the clause $C$, and therefore the tree $T$ cannot reach $\ell$ when invoked on that assignment. Thus, $(1 - z_i)$ divides $r_\ell$ for every $z_i
\in C^+$.  Similarly, the variable $z_i$ divides $r_\ell$ for every $z_i \in C^-$. It follows that $r_\ell$ is divisible by every factor of $\mathcal{E}(C)$, and therefore it is divisible by $\mathcal{E}(C)$ (here we used the fact that each factor occurs in $\mathcal{E}(C)$ at most once). This concludes the proof.
 
\section{Reversible Pebbling is Equivalent to Query Complexity}
\label{sec:reversible}

In this appendix we present a direct proof that the reversible pebbling price of a graph
equals the query complexity of the search problem of the pebbling
formula of that graph, originally proved by
Chan~\cite{Chan13JustAPebble}.

\begin{theorem}[\cite{Chan13JustAPebble}]
  \label{th:justapebblegame}
  For every DAG $G$ with a single sink, it holds that $\decisiontree(\Search(\pebblingformula)) = \rpeb(G)$.
\end{theorem}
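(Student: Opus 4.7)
The plan is to prove the two inequalities $\rpeb(G) \leq \decisiontree(\Search(\pebblingformula))$ and $\decisiontree(\Search(\pebblingformula)) \leq \rpeb(G)$ separately.

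For the direction $\rpeb(G) \leq \decisiontree(\Search(\pebblingformula))$, I would extract a reversible pebbling of cost at most $d$ from any decision tree $T$ of depth $d$. The idea is a greedy walk through $T$: when $T$ queries a variable $z_v$, we answer $1$ and place a pebble on $v$ whenever the placement is legal in the current pebbling configuration (that is, whenever $v$ is a source or all predecessors of $v$ already carry pebbles), and we answer $0$ otherwise. A short case analysis on the axiom output at the leaf reached rules out source and propagation axioms: outputting a source axiom would require having answered $0$ on a source, contradicting the rule that sources are always pebble-able; outputting a propagation axiom for some $v$ would require having answered $0$ on $v$ while all predecessors of $v$ were pebbled, again contradicting the greedy rule. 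Hence the leaf outputs the sink axiom, meaning the walk has placed a pebble on $t$. Cleaning up the remaining pebbles by removing them in reverse order of placement (which is legal because each removed vertex still has all its predecessors pebbled at the moment of removal) yields a reversible pebbling of cost at most $d$ that reaches $\set{t}$.

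For the converse direction $\decisiontree(\Search(\pebblingformula)) \leq \rpeb(G)$, the plan is to construct a decision tree of depth at most $p := \rpeb(G)$ from a reversible pebbling strategy of cost $p$ by a recursive construction guided by the pebbling's sequence of configurations. The decision tree will maintain the invariant that at each node, the queries made along the current root-to-node path correspond to the vertices of a current pebbling configuration $C$, all answered $1$. A ``place pebble on $v$'' move is simulated by querying $z_v$: on answer $0$ we immediately output a falsified source or propagation axiom (soundness of the propagation case follows from the invariant, which guarantees that the predecessors of $v$ lie in $C$ and are known to equal $1$), and on answer $1$ we recurse into the subtree corresponding to the configuration $C \cup \set{v}$.

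The main obstacle is controlling the depth in the presence of pebble removals, since the total number of placements along a reversible pebbling can greatly exceed its maximum configuration size (as witnessed already by path graphs, where $\rpeb$ is logarithmic while any valid pebbling must perform linearly many placements). To overcome this, the plan is to carry out the induction with a strengthened hypothesis parameterised by the current configuration $C$: if a reversible pebbling of cost at most $p$ exists from $C$ to $\set{t}$, then there is a decision tree of depth at most $p - \abs{C}$ solving the remaining search problem under the assumption that all variables in $C$ are already known to equal $1$. The placement case consumes exactly one unit of the free budget, matching the recursion on $\abs{C \cup \set{v}} = \abs{C} + 1$. The delicate removal case requires exploiting the gap between the current configuration size and the maximum configuration size occurring in the remainder of the pebbling, together with a careful restructuring of the recursion so that the freed pebble slot is reused by the sub-tree rather than inflating its depth. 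Combined with the base case $C = \set{t}$ (where zero queries suffice, since $z_t = 1$ is already known and the sink axiom can be output directly), this induction delivers the desired depth bound.
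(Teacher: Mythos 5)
Both directions of your proposal have a genuine gap. For $\rpeb(G) \leq \decisiontree(\Search(\pebblingformula))$, the greedy walk does not force the leaf reached to carry the sink axiom. If the decision tree queries a non-source vertex $v$ \emph{before} querying its predecessors, the greedy rule correctly answers $0$, and the predecessors can subsequently all be queried and answered $1$; the walk then reaches a leaf labelled by the propagation axiom for $v$ with no violation of the greedy rule and without ever pebbling the sink. Concretely, let $G$ have sources $a,b$ and sink $c$ with $\prednode{c}=\set{a,b}$, and take the decision tree that queries $z_c$, then $z_a$, then $z_b$: the greedy walk answers $0,1,1$, lands on the leaf outputting $\olnot z_a \lor \olnot z_b \lor z_c$, and pebbles only $a$ and $b$ --- it never touches $c$, so it cannot certify $\rpeb(G)=3$. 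The paper avoids this by first proving Lemma~\ref{lem:path-like}, that an \emph{optimal} decision tree (for the generalised search problem from a path-like state) makes only \emph{relevant} queries; it is exactly this structural normalisation that lets one extract a surrounding pebbling from the tree in Proposition~\ref{pro:dt-to-pebbling}.

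For $\decisiontree(\Search(\pebblingformula)) \leq \rpeb(G)$, the difficulty you flag in the removal case is real and your sketch does not resolve it. After a removal $C \to C\setminus\set{v}$ the inductive hypothesis yields depth $\leq p-\abs{C}+1$, one more than your invariant allows, and strengthening to $M_i - \abs{P_i}$ with $M_i = \max_{j\geq i}\abs{P_j}$ does not repair it: when $M_i = M_{i+1}$ the same unit of slack remains. The underlying obstruction is that a reversible pebbling can have far more placements than its cost --- on a path $\Theta(n)$ placements versus $\Theta(\log n)$ cost --- so a decision tree that tracks the pebbling move by move is a single chain whose depth is the number of placements, not the cost. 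The paper's Proposition~\ref{pro:pebbling-to-dt} does something structurally different: it queries as pivot the earliest vertex that is placed and never subsequently removed, recurses on the forward sub-pebbling (with the pivot added to the static set, so the \emph{non-static cost} drops by one) and on the reversed sub-pebbling (which surrounds the pivot), producing a balanced tree of depth equal to the non-static cost. No comparable branching mechanism appears in your sketch, and ``careful restructuring so that the freed slot is reused'' is where the actual argument would have to live.
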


Let us introduce some notation to talk more formally about pebbling:
A \emph{pebbling configuration} is a set of vertices $P$. A  \emph{(reversible) pebbling} is a
sequence of configurations $\pebbling = P_1,\ldots,P_\ell$ where
$P_{i+1}$ follows from $P_i$ by applying the pebbling rules.
Its reverse $\reversepebbling(\pebbling)=P_\ell,\ldots,P_1$ is also a
valid pebbling. Its \emph{cost} is the maximal size of a configuration $P_i$.
Unless we call a pebbling \emph{partial}, we assume that
$P_1=\emptyset$.
A pebbling \emph{visits} $x$ if $x\in P_\ell$, and \emph{surrounds} $x$ if
$\pred(x) \subseteq P_\ell$.
The \emph{pebbling price} of a graph~$G$, denoted
$\rpeb(G)$, is the minimum cost of all pebblings that visit the sink,
and its \emph{surrounding price}, denoted $\speb(G)$, is the minimum cost of all
pebblings that surround the sink.

Given a decision tree for $\Search(\pebblingformula)$, we associate each node
with a state formed by a pair $(Q,Z)$ of queried vertices and the vertices
$Z \subseteq Q$ whose queries were answered by $0$.
It is immediate to verify that at a leaf either the sink $z$ belongs to $Q \setminus Z$,
or there is a vertex in $Z$ such that all of its
predecessors are in $Q \setminus Z$.
It is useful to generalize the definition of the search problem $\Search(\pebblingformula)$
to start with intermediate states. Specifically, we associate a state $(Q,Z)$ with the search problem
in which we are given an assignment to $\pebblingformula$ that is promised to assign $0$
to the vertices in $Z$ and $1$ to the vertices in~$Q \setminus Z$, and would like
to find a clause that is falsified by this assignment. We denote this search problem by
$\Search_G(Q,Z)$ and denote its query complexity by $\decisiontree_G(Q,Z)$.
We omit $G$ from the latter notation when it is clear from the context.
The crux of our proof is the following lemma, which implies Theorem~\ref{th:justapebblegame}.

\begin{lemma}\label{lem:RM-game}
For every DAG $G$ with a single sink $z$, it holds that $\decisiontree_G(\set{z},\set{z}) = \speb(G)$.
\end{lemma}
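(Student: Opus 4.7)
The plan is to strengthen the statement so that the inductive arguments go through cleanly on both sides. For every state $(Q, Z)$ with $P := Q \setminus Z$, define $\speb^{+}_G(Q, Z)$ to be the minimum cost of a reversible pebbling that starts at $P$, ends surrounding some $v \in Z$, and crucially maintains $P \subseteq P_t$ throughout. I aim to prove the equality $\decisiontree_G(Q, Z) = \speb^{+}_G(Q, Z) - |P|$ in general; specialising to $Q = Z = \set{z}$ makes $P = \emptyset$ and the ``keep $P$'' constraint vacuous, so $\speb^{+}_G(\set{z}, \set{z}) = \speb(G)$ and the lemma follows.

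For the inequality $\speb^{+}_G(Q,Z) \le |P| + \decisiontree_G(Q,Z)$, I would induct on decision tree depth $d$. The base case ($d = 0$) is immediate: some $v \in Z$ has $\pred(v) \subseteq P$, so the zero-step pebbling works. For the inductive step, if the root queries a vertex $u$ with subtrees $T_0$ and $T_1$ of depth at most $d - 1$, the inductive hypothesis yields pebblings $\pebbling_0$ (from $P$, surrounding some $v_0 \in Z \cup \set{u}$, of cost at most $|P| + d - 1$) and $\pebbling_1$ (from $P \cup \set{u}$, surrounding some $v_1 \in Z$, of cost at most $|P| + d$). If $v_0 \in Z$ we use $\pebbling_0$; otherwise $v_0 = u$ and we concatenate $\pebbling_0$ (surrounding $u$), a placement of $u$, the reverse $\reversepebbling(\pebbling_0)$ augmented with an extra pebble on $u$ (returning to $P \cup \set{u}$), and finally $\pebbling_1$. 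The augmented reverse is a valid pebbling because placement and removal preconditions depend only on predecessors, and reversibility of the underlying pebbling is essential here; a careful accounting shows that $P$ remains pebbled throughout and the total cost never exceeds $|P| + d$.

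For the reverse inequality $\decisiontree_G(Q, Z) \le \speb^{+}_G(Q, Z) - |P|$, I would induct on $\speb^{+}_G(Q, Z) - |P|$. The crucial consequence of the ``keep $P$'' constraint is that the first move of any feasible pebbling cannot be a removal of a vertex in $P$, and a vertex outside $P$ cannot be removed either since $P_0 = P$. Hence either $P$ already surrounds some $v \in Z$ (base case, with decision tree depth $0$) or the first move must be a placement of some $u \notin P$ with $\pred(u) \subseteq P$. In the placement case, the decision tree queries $u$: on a $0$-answer it outputs a falsified clause immediately (a source axiom if $u$ is a source, or the implication clause for $u$ since $\pred(u) \subseteq Q \setminus Z$); on a $1$-answer it recurses on state $(Q \cup \set{u}, Z)$, for which the tail of the pebbling witnesses a strictly smaller value of $\speb^{+} - |P|$.

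The main obstacle is the asymmetry between reversible pebbling, which permits removals, and a decision tree, which cannot ``unquery'' a vertex. A naive induction using the unrestricted $\speb_G$ fails because an optimal pebbling may begin with a removal of an initial pebble that the decision tree cannot match, incurring an off-by-one deficit in the depth bound. The strengthening to $\speb^{+}_G$ resolves this by forbidding exactly those removals, forcing the first pebbling move to be a placement that does correspond to a query, while still reducing to $\speb_G$ at the target state $(\set{z},\set{z})$ where $P = \emptyset$.
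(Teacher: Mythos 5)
Your direction $\speb^{+}_G(Q,Z) \le |P| + \decisiontree_G(Q,Z)$ is sound and tracks the paper's own argument (it is essentially the ``run $\pebbling_0$, place $u$, reverse $\pebbling_0$ keeping $u$, then run $\pebbling_1$'' construction of Proposition~\ref{pro:dt-to-pebbling}), with the nice simplification that by only requiring the pebbling to surround \emph{some} $v \in Z$ rather than a specific head vertex you avoid having to prove the path-likeness/relevance lemma (Lemma~\ref{lem:path-like}) before the induction can be set up.

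However, the other direction has a genuine gap. You claim that on a $1$-answer ``the tail of the pebbling witnesses a strictly smaller value of $\speb^{+} - |P|$,'' where the tail is the pebbling after the first move, which places some $u$. But $\speb^{+}_G(Q \cup \set{u}, Z)$ requires a pebbling that \emph{maintains $P \cup \set{u}$ at every step}, and the tail of your pebbling need not do this: the optimal witness for $\speb^{+}_G(Q,Z)$ is free to remove $u$ at some later time (and possibly replace it), since the only invariant you imposed is $P \subseteq P_t$, and $u \notin P$. Reinstating $u$ in every configuration of the tail is a valid pebbling, but it can raise the maximum configuration size by one — precisely when the peak of the original pebbling is reached during an interval when $u$ is absent — so you do not get $\speb^{+}_G(Q \cup \set{u}, Z) \le \speb^{+}_G(Q,Z)$ from this argument, and the induction does not close.

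The paper avoids this in two coupled ways in Proposition~\ref{pro:pebbling-to-dt}. First, it uses a looser strengthening: rather than ``keeps $P$,'' it tracks $\static(\pebbling)$ (the intersection of all configurations) and the non-static cost, requiring only $\static(\pebbling) \subseteq Q \setminus Z$; this makes the invariant stable under passing to tails. Second, and more importantly, it does \emph{not} branch on the first pebble placed, but on the first vertex $v$ placed at some time $m>1$ and \emph{never removed afterwards} — such a vertex is guaranteed to exist and automatically belongs to $\static$ of the tail. The price of this choice is that $\pred(v)$ need no longer lie in $Q \setminus Z$, so the $0$-branch cannot simply output a falsified clause; the paper instead builds a second pebbling (the reverse of the tail with the descendants of $v$ pruned) that surrounds $v$ within budget and recurses on state $(Q \cup \set{v}, Z \cup \set{v})$. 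To repair your proof you would need an analogous double recursion (or a separate lemma showing that among optimal $\speb^{+}_G(Q,Z)$ witnesses one can always find one whose first placed pebble is never removed — which is not obvious and would itself require an argument of comparable delicacy).

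Two minor points worth noting explicitly in a final writeup: you should justify that the vertex queried at the root is not already in $Q$ (otherwise the resulting state is degenerate), and in the base case you should rule out the sink axiom as the output clause; both are fine because the initial state is $(\set{z},\set{z})$ with $z$ the unique sink and hence $z \in Z'$ at every reachable state, but they should be said.
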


We claim that Lemma~\ref{lem:RM-game} implies Theorem~\ref{th:justapebblegame}.
To see why, let $G$ be a DAG with a single sink $z$, and let $G'$ be the DAG obtained
from $G$ by adding a new sink $z'$ and an edge from $z$ to $z'$. Then, it is not hard to
see that $\Search_G'(\set{z'}, \set{z'}) = \Search(\pebblingformula))$, and that
every pebbling that surrounds the sink of $G'$ is pebbling that visits the sink of~$G$,
and vice versa. Hence, it holds that
$$ \decisiontree(\Search(\pebblingformula)) =\decisiontree_{G'}(\set{z'}, \set{z'})
= \speb(G') = \rpeb(G),$$
where the second equality follows from Lemma~\ref{lem:RM-game}. In the rest of this
appendix, we focus on proving Lemma~\ref{lem:RM-game}. To this end, fix a DAG $G$
with a single sink $z$.

We first show that the states of an optimal decision tree for  $\Search(\set{z},\set{z})$
are of a special form, which we call "path-like".
Specifically, we say that a state $(Q,Z)$ is \emph{path-like} if there is a path $P$ ending at
the sink such that $P \intersection Q = Z$. Observe that in a path-like state
there is a unique such path of maximal length that starts in a vertex in $Z$.
We denote this path by $P_Z$, and denote its first vertex by $\head(Z)$.
We say that a vertex $v$ is \emph{relevant} to a path-like state $(Q,Z)$ if there is a path $P_v$ from $v$
to $\head(Z)$ such that $P_v \intersection Q = \set{\head(Z)}$.
Observe that starting from a path-like state and querying a relevant
vertex yields a path-like state where the new path is $P_Z \union P_v$
if the answer is $0$ and $P_Z$ if it is $1$.

We now show that if $(Q,Z)$ is a path-like state, then every optimal decision tree
for $\Search(Q,Z)$ only queries relevant vertices. Observe that this implies that all the nodes of the tree
correspond to path-like states. Moreover, this result holds in particular for $(\set{z},\set{z})$,
 since it is a path-like state.

\begin{lemma}
  \label{lem:path-like}
  If $(Q,Z)$ is a path-like state then every optimal decision
  tree for $\Search(Q,Z)$ only queries vertices relevant to $(Q,Z)$.
\end{lemma}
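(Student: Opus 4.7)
I will argue by induction on the depth of the optimal decision tree. Suppose $T$ is an optimal tree for $\Search(Q,Z)$ whose root queries a vertex $u$, and suppose for contradiction that $u$ is irrelevant to $(Q,Z)$. Let $T_0$ and $T_1$ be the subtrees corresponding to the answers $u=0$ and $u=1$. The plan is to show that $T_1$ itself is already a valid decision tree for $\Search(Q,Z)$, which yields a tree of depth $\depth{T}-1$ and so contradicts the optimality of $T$.

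A first observation (immediate from the definitions) is that since $u$ is irrelevant to $(Q,Z)$, the state $(Q\cup\{u\},Z)$ is still path-like with the same witnessing path $P_Z$, the same head vertex $\head(Z)$, and the same set of relevant vertices. Hence $T_1$ is an optimal decision tree for a path-like subproblem whose relevant set coincides with that of $(Q,Z)$, and the inductive hypothesis applies to it: every query in $T_1$ is to a vertex relevant to its local state. In particular no query in $T_1$ is made to $u$, and the induction also guarantees that the intermediate states inside $T_1$ remain path-like, so that the current head is reachable from every $0$-answered vertex via a path avoiding the accumulated $Q$ except at its endpoint.

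Now fix any assignment $\alpha$ satisfying the promise of $\Search(Q,Z)$, run $T_1$ on $\alpha$, and let $C$ be the clause output at the leaf reached. The crux is to show that $u\notin\vars(C)$, from which it follows that $C$ is falsified by $\alpha$, since $C$ would be falsified by the extension of $\alpha$ with $\alpha(u)=1$ and $C$ does not depend on $u$. For this, I would enumerate the ways $u$ can appear in a clause of $\pebblingformula$: in the source clause $z_u$, in the edge clause of $u$ itself, in the edge clause of some successor $v$ of $u$, or (if $u$ is the sink) in the sink clause. The first two force $\alpha(u)=0$ for $C$ to be falsified, which is inconsistent with $T_1$'s implicit assumption $u=1$; the sink-clause case does not arise in the applications, since the sink always lies in $Q$.

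The main obstacle is the remaining case where $C$ is the edge clause for a successor $v$ of $u$. Along the root-to-leaf path of $T_1$ the vertex $v$ must have been answered $0$ and every other predecessor of $v$ must have been answered $1$. Using the inductive invariant about how the head and witnessing path evolve inside $T_1$, I would chain the $0$-answered vertices back to $\head(Z)$ via a path avoiding $Q$ except at the endpoint. Prepending the edge $u\to v$ to this chain gives a path from $u$ to $\head(Z)$ avoiding $Q\setminus\{\head(Z)\}$, contradicting the irrelevance of $u$. Making this chaining argument precise is the delicate step: it requires a secondary structural induction showing that each $0$-answer to a relevant query extends the witnessing path to the new head in a $Q$-avoiding way, so that the paths can be concatenated end-to-end.
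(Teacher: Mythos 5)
Your overall template — induction on tree depth, assuming the root query $u$ is irrelevant, and trying to show that one of the two subtrees is already a valid protocol for $\Search(Q,Z)$ — matches the paper's. But there are two substantive gaps.

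\textbf{The ``first observation'' is false in general.} You claim that irrelevance of $u$ implies $(Q\cup\{u\},Z)$ is path-like with the \emph{same} witnessing path $P_Z$. This fails precisely when $u$ lies \emph{on} $P_Z$, i.e., $u$ is strictly between $\head(Z)$ and the sink. Such a $u$ is automatically irrelevant (no DAG path can go from $u$ back up to $\head(Z)$), yet $u\in P_Z$ and $u\notin Q$ force $P_Z\cap(Q\cup\{u\}) = Z\cup\{u\}\neq Z$, so $P_Z$ is not a witness for the new state. Worse, $(Q\cup\{u\},Z)$ may fail to be path-like at all: take $G=a\to b\to c\to d$ with $Q=Z=\{b,d\}$ and $u=c$; then no path $P'$ satisfies $P'\cap\{b,c,d\}=\{b,d\}$. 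Without path-likeness you cannot invoke the inductive hypothesis on $T_1$, and your whole argument stalls. The paper deals with this by splitting on whether the irrelevant query lies on $P_Z$: when it does, one considers $T_0$ instead of $T_1$, because $(Q\cup\{u\}, Z\cup\{u\})$ \emph{is} path-like (same witness, head unchanged), and a different contradiction is extracted.

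\textbf{The chaining argument does not cover the case that actually occurs.} Even restricting to $u\notin P_Z$, your plan to chain $0$-answered vertices inside $T_1$ back to $\head(Z)$ applies only when the successor $v$ with output clause $C_v$ was \emph{queried} by $T_1$ and answered $0$. But the argument you sketch (if $v$ were relevant, prepending $u\to v$ would make $u$ relevant) shows exactly that $v$ is irrelevant to $(Q\cup\{u\},Z)$, hence by the inductive hypothesis $T_1$ never queries $v$ — so $T_1$ can only ``know'' $v$ is false because $v\in Z$. This is the case your chain does not reach: $v$ then sits on $P_Z$ strictly below $\head(Z)$, and there is no upward path from $v$ to $\head(Z)$ to concatenate with $u\to v$. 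The paper's proof instead pivots to the predecessor $u'$ of $v$ along $P_Z$: since $u'\notin Z$ (else $C_v$ would not be falsified) and $u'\in P_Z\Rightarrow u'\notin Q$, $T_1$ would have to query $u'$ to output $C_v$, but $u'$ is irrelevant (it is below $\head(Z)$), contradicting the inductive hypothesis. That extra step is essential and is missing from your sketch.

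In short, the single-subtree ($T_1$) strategy cannot work uniformly — the case $u\in P_Z$ forces you to $T_0$ — and even in the $u\notin P_Z$ case the decisive subcase ($v\in Z$) requires the predecessor-on-$P_Z$ argument rather than chaining from $v$.
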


\begin{proof}
The proof is by induction on the query complexity $p$ of $\Search(Q,Z)$.
The base case $p=0$ holds vacuously: the optimal decision tree
does not make any queries.
Assume that $p>0$. Fix an optimal decision tree $T$ for $(Q,Z)$, and
let $v$ be the query made by the root of $T$. Observe that the children
of the root of $T$ correspond to the states $(Q \union{v}, Z \union{v})$
and $(Q \union{v}, Z)$. Let $T_0, T_1$ be the sub-trees rooted at 
those children respectively, and note that these trees are optimal for the latter states
 and that their depth is at most $p-1$.

Suppose first that $v$ is a relevant query to $(Q,Z)$. In this case,
the states of the children of the root  are path-like.
By the induction assumption, the trees $T_0, T_1$ only make queries
that are relevant to the states  $(Q \union{v}, Z \union{v})$
and $(Q \union{v}, Z)$ respectively.
Observe that any such query is necessarily relevant to $(Q,Z)$ as well.
Therefore, every query of $T$ is relevant to $(Q,Z)$.

Next, suppose that $v$ is irrelevant to $(Q,Z)$. We consider two
separate cases: the case where $v$ belongs to the path $P_Z$, and
the case where it does not belong to $P_Z$.

\paragraph*{$v$ belongs to $P_Z$.}
Assume that $v$ belongs to the path $P_Z$. In this case, the state
$(Q \union{v}, Z \union{v})$ is path-like, and therefore by the
induction assumption the tree $T_0$  only makes queries that are
relevant to that state. We claim
that $T_0$ solves $\Search(Q,Z)$, which contradicts the assumption
that $T$ is optimal for $\Search(Q,Z)$.

To see why, suppose for the sake of contradiction that $T_0$ fails to solve
$\Search(Q,Z)$ on some assignment $\alpha$. Observe that the only case where this
can happen  is when $\alpha(x_v) = 1$ but $T_0$ outputs that the violated clause is
$C_v = x_v \vee \bigvee_{u \in \pred(v)} \neg x_u$.
Let $u$ be the predecessor of $v$ that lies on $P_Z$
(such $u$ must exist since $v$ is queried by $T$ and hence cannot be equal to
$\head(Z)$). Then, $u$ cannot belong to $Z$ (or otherwise $T_0$ could not have 
 output $C_v$) and therefore it cannot belong to $Q$ (since $P_Z \cap Q = Z$).
This means that $T_0$ must have queried $u$ in order to output $C_v$. However,
 $u$ is irrelevant to $(Q \union{v}, Z \union{v})$, so we reached a contradiction
to the assumption that $T_0$ does not make irrelevant queries. Hence, $T_0$ solves
$\Search(Q,Z)$, as required.

\paragraph*{$v$ does not belong to $P_Z$.}
Next, assume that $v$ does not belong to the path $P_Z$. In this case,
the state $(Q \union{v}, Z)$ is path-like, and  therefore by the induction
assumption the tree $T_1$  only makes queries that are relevant to  that state.
We claim that $T_1$ solves $\Search(Q,Z)$,
which contradicts the assumption that $T$ is optimal for $(Q,Z)$. 

To see why, suppose for the sake of contradiction that $T_1$ fails to solve
$\Search(Q,Z)$ on some assignment $\alpha$. Observe that the only case where this
can happen is when $\alpha(x_v) = 0$ but for some successor $w$ of $v$ the tree~$T_1$
outputs that the violated clause is $C_w = x_w \vee \bigvee_{u \in \pred(w)} \neg x_u$.
The query $w$ cannot be relevant to $(Q,Z)$, since otherwise $v$ would have been
relevant for $(Q,Z)$. Since $w$ is irrelevant to $(Q,Z)$, it cannot be queried by $T_1$,
and therefore it must belong to $Z$ in order for $T_1$ to output $C_w$.
Moreover, $w$ cannot be equal to $\head(Z)$, since otherwise $v$ would have been
relevant for $(Q,Z)$. Thus, $w$ has a predecessor $u$ in $P_Z$.

The vertex $u$ cannot  belong to $Z$ (or otherwise $T_1$ could not have output $C_w$)
and therefore it cannot belong to $Q$ (since $P_Z \cap Q = Z$).
This means that $T_1$ must have queried $u$ in order to output $C_w$. However,
$u$ is irrelevant to $(Q \union{v}, Z)$, so we reached a contradiction to the
assumption that $T_1$ does not make irrelevant queries. Hence, $T_1$ solves
$\Search(Q,Z)$
\end{proof}

The following two propositions prove the two directions of Lemma~\ref{lem:RM-game}.
In the proof of the first proposition we use the following notion:
a \emph{pebbling assuming free pebbles on a set $S$} is a partial pebbling $\pebbling$
such that $P_1 \subseteq S$, and its \emph{cost} is the maximum size of
$P_i \setminus S$.

\begin{proposition}
  \label{pro:dt-to-pebbling}
  If $\decisiontree(\set{z}, \set{z}) \le p$, then $\speb(G) \le p$.
\end{proposition}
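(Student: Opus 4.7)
I would argue by induction on the depth of the decision tree. By Lemma~\ref{lem:path-like} we may assume our depth-$p$ decision tree $T$ for $\Search(\{z\},\{z\})$ only queries relevant vertices, so every node of $T$ is labelled by a path-like state $(Q,Z)$. The inductive claim is the following: for every subtree of $T$ rooted at a node with state $(Q,Z)$ and depth $d$, there is a partial reversible pebbling $\pi$ that starts at configuration $Q \setminus Z$, ends at $Q \setminus Z$, passes through at least one configuration surrounding $\head(Z)$, and has cost at most $|Q \setminus Z| + d$. Instantiating this at the root, where $Q \setminus Z = \emptyset$ and $\head(Z) = z$, produces a pebbling from $\emptyset$ that surrounds $z$ using at most $p$ pebbles, proving $\speb(G) \le p$.

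For the base case $d = 0$, a leaf must output a falsified clause. The sink clause $\neg z$ cannot be falsified since $z \in Z$ throughout the computation. Any source $s \in Z$ on $P_Z$ would need an incoming edge to participate in $P_Z$ past its start, so such an $s$ must equal $\head(Z)$, which is then vacuously surrounded. For a predecessor clause at $v \in Z$ with $\pred(v) \subseteq Q \setminus Z$, I observe that if $v \ne \head(Z)$ then the immediate predecessor of $v$ on $P_Z$ is in $P_Z$, and by the defining property $P_Z \cap Q = Z$ this predecessor lies in $Z$ or outside $Q$ entirely, hence not in $Q \setminus Z$, contradicting $\pred(v) \subseteq Q \setminus Z$. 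Thus the only possibility is $v = \head(Z)$, and the trivial pebbling consisting of the single configuration $Q \setminus Z$ already surrounds $\head(Z)$ with cost $|Q \setminus Z|$.

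For the inductive step, consider an internal node with state $(Q,Z)$ querying a relevant vertex $v$. The $0$-child has state $(Q \cup \{v\}, Z \cup \{v\})$, pebble set $Q \setminus Z$, and head $v$, so by induction there is a pebbling $\pi_0$ from $Q \setminus Z$ to itself surrounding $v$ at some intermediate configuration $P^{\ast}$, with cost at most $|Q \setminus Z| + d - 1$. The $1$-child has state $(Q \cup \{v\}, Z)$, pebble set $(Q \setminus Z) \cup \{v\}$, and head $\head(Z)$, so by induction there is a pebbling $\pi_1$ from $(Q \setminus Z) \cup \{v\}$ to itself surrounding $\head(Z)$, with cost at most $|Q \setminus Z| + d$. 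I combine them as follows: run $\pi_0$ up to $P^{\ast}$; since $\pred(v) \subseteq P^{\ast}$, legally place a pebble on $v$; then replay the first half of $\pi_0$ \emph{in reverse} with the extra pebble on $v$ present throughout, landing at $(Q \setminus Z) \cup \{v\}$; run $\pi_1$; then replay the first half of $\pi_0$ forward with $v$ still present, returning to $P^{\ast} \cup \{v\}$; remove $v$ (valid since $\pred(v) \subseteq P^{\ast}$); and finally reverse the remainder of $\pi_0$ back to $Q \setminus Z$. Since $v$ adds at most one pebble to any configuration of $\pi_0$, every phase has configuration size at most $|Q \setminus Z| + d$, completing the induction.

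The main subtlety will be the base case, specifically pinning down that the falsifying vertex must be $\head(Z)$ itself; this relies on exploiting the structural property $P_Z \cap Q = Z$ of path-like states. Once that is in hand, the inductive gluing is a routine application of the reversibility of the pebble game, which lets us prepend and then undo a preparatory computation around any intermediate configuration.
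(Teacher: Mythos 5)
Your proof is correct and takes essentially the same inductive approach as the paper's: induct on decision-tree depth over path-like states, obtain $\pi_0$ and $\pi_1$ from the two children, and splice them together using the reversibility of the pebble game (run $\pi_0$ to surround $v$, place $v$, undo $\pi_0$ with $v$ held, run $\pi_1$). The only cosmetic differences are that you require the pebbling to return to its starting configuration and track absolute cost, whereas the paper tracks ``cost assuming free pebbles on $Q\setminus Z$'' and stops at the surrounding configuration; neither change affects the substance.
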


\begin{proof}
  We prove the following stronger claim: if for some path-like
  state $(Q,Z)$ there is an optimal decision tree $T$ of depth $p$,
  then there is a pebbling that surrounds $\head(Z)$  of
  cost $p$ assuming free pebbles on $Q \setminus Z$.
To see that this claim implies the proposition observe that $(\set{z},\set{z})$ is
  path-like. Therefore, if $\decisiontree(\set{z}, \set{z}) \le p$ then
  the claim implies that there is a pebbling that surrounds $z$
  of cost at most~$p$ without free pebbles.

  We prove the claim by induction on $p$. The base case is when
  $p=0$, so $T$ consists of a single leaf. This means that there must
  be some vertex in $Z$ that is surrounded by vertices in $Q\setminus Z$.
  This vertex must be $\head(Z)$, since any other vertex $v \in Z$ has
  a predecessor in $P_Z$, and this predecessor cannot belong to $Q \setminus Z$.
  Hence, $\set{Q \setminus Z}$ is a surrounding  pebbling of $\head(Z)$
  of cost $0$ assuming free pebbles on $Q \setminus Z$, as required.

  We proceed to the induction step. Suppose that $p>0$.
  Let $v$ be the query made at the root of $T$. Let $T_0, T_1$  be the
  subtrees rooted at the children of $v$  that corresponds to the states
  $(Q \union{v}, Z \union{v})$ and $(Q \union{v}, Z)$ respectively. By
  Lemma~$\ref{lem:path-like}$, the query $v$ is relevant to $(Q,Z)$,
  and therefore the latter states are path-like.

  Observe that $\head(Z \union{v}) = v$. Hence, by applying
  the induction assumption to $T_0$, we obtain a  pebbling $\pebbling_0$
  that surrounds $v$ of cost at most $p-1$  assuming free pebbles on
$$(Q \union{v}) \setminus (Z \union{v}) = Q \setminus Z.$$
Furthermore, by applying the induction assumption to $T_1$, we obtain
  a pebbling $\pebbling_1$ that surrounds $\head(Z)$ of cost at most~$p-1$
  assuming free pebbles on $(Q \union{v}) \setminus Z$.

  We now construct a pebbling that surrounds $\head(Z)$ assuming
  free pebbles on $Q \setminus Z$  as follows: we first follow $\pebbling_0$,
  thus reaching a configuration that surrounds $v$. Then, we place a pebble
  on $v$ (unless it is already pebbled). Next, we follow
  $\setdescr{P \union{v}}{P\in\reversepebbling(\pebbling_0)}$ to remove all the pebbles that were placed
  by $\pebbling_0$ except for~$(Q \union{v}) \setminus Z$. Finally, we follow
  $\pebbling_1$ and reach a configuration that surrounds $\head(Z)$. It is not hard
  to see that this pebbling has cost at most~$p$
  assuming free pebbles on $Q \setminus Z$, as required.
\end{proof}

In the proof of the second proposition, we use the following notion:
Given a pebbling $\pebbling$, we define $\static(\pebbling) = \intersection \pebbling$
to be the set of pebbles that are always present in $\pebbling$, and the \emph{non-static
cost} of $\pebbling$ be the maximum size of
$P \setminus \static(\pebbling)$ for $P \in \pebbling$. We also denote the vertices reachable from $v$, including $v$ itself, by $\desc(v)$.

\begin{proposition}
  \label{pro:pebbling-to-dt}
  If $\speb(G) \le p$ then $\decisiontree(\set{z}, \set{z}) \le p$.
\end{proposition}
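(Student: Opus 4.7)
My plan is to prove, by induction on $p$, a stronger claim that parallels the inductive statement used in the proof of Proposition~\ref{pro:dt-to-pebbling}: for any path-like state $(Q, Z)$ with head $h = \head(Z)$ and any reversible pebbling $\pebbling$ of cost at most $p$ that surrounds $h$ while assuming free pebbles on $S = Q \setminus Z$, there exists a decision tree for $\Search(Q, Z)$ of depth at most $p$. Setting $(Q, Z) = (\set{z}, \set{z})$ (which is path-like with $\head(Z) = z$) will then yield the proposition directly, mirroring how Proposition~\ref{pro:dt-to-pebbling} was extracted from its strengthened form.

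The base case $p = 0$ is immediate: since no non-free pebble is ever placed, the final configuration $P_\ell$ is contained in $S = Q \setminus Z$ yet surrounds $h$. Hence every predecessor of $h$ lies in $Q \setminus Z$ and is promised to take value $1$. Combined with the promise $h \in Z$ that $h$ takes value $0$, the neighbour axiom for $h$ is falsified (or the source axiom, if $h$ has no predecessors). A single leaf with no queries therefore solves $\Search(Q, Z)$.

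For the inductive step I will choose a pivot vertex $v$ such that $\pebbling$ splits into two sub-pebblings of cost at most $p - 1$: one, $\pebbling_0$, surrounding $v$ with free pebbles on $S$, and another, $\pebbling_1$, surrounding $h$ with free pebbles on $S \cup \set{v}$. Given such a decomposition, the decision tree I build will query $v$ at its root. If the answer is $0$, applying the inductive hypothesis to $\pebbling_0$ with the state $(Q \cup \set{v}, Z \cup \set{v})$ --- which is path-like with new head $v$ --- yields a subtree of depth at most $p-1$. If the answer is $1$, applying it to $\pebbling_1$ with the state $(Q \cup \set{v}, Z)$ --- still path-like with head $h$, since $v$ extends the defining path --- yields another subtree of depth at most $p-1$. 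Combining these two subtrees at the root gives overall depth at most $p$, reflecting exactly how the forward direction of Proposition~\ref{pro:dt-to-pebbling} built a combined pebbling from the two subtrees of its decision tree.

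The crux of the argument, and the obstacle I expect to be the hardest, is producing this decomposition and choosing the pivot. The natural candidate for $v$ is the vertex whose placement first pushes the non-free pebble count above $p-1$: the prefix of $\pebbling$ up to this moment then has cost at most $p-1$ and surrounds $v$, providing $\pebbling_0$. For the suffix to serve as $\pebbling_1$ with cost at most $p-1$ once $v$ is treated as free, however, $v$ must remain present in every subsequent configuration. Since an arbitrary reversible pebbling may remove $v$ and later re-place it, the plan is to first preprocess $\pebbling$ by exploiting reversibility: any sub-sequence that removes $v$ and later re-adds it can be spliced out without changing the final configuration or raising the peak, because the removal/re-addition moves can be cancelled pairwise. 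After this cleanup, $v$ persists from its first placement onwards, so in the suffix every non-free pebble count is reduced by one when $v$ is relabelled as free, and both sub-pebblings achieve cost at most $p-1$, closing the induction.
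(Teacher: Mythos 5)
Your high-level plan — strengthen to an inductive claim, pick a pivot $v$, query $v$ at the root, and recurse on the two branches — is the right shape, but the specific prefix/suffix decomposition you propose has a genuine gap, and the preprocessing you suggest does not close it.

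The problem is the suffix $\pebbling_1$. Your choice of $v$ (the vertex whose placement first pushes the non-free count to $p$) guarantees that at the moment $v$ is placed, relabelling $v$ as free drops the count to $p-1$. But later in the suffix $v$ may be removed, and then some \emph{other} placement can bring the non-$S$ count back up to $p$ while $v$ is absent; in such a configuration relabelling $v$ as free does nothing, and the non-$(S \cup \set{v})$ count is still $p$. Your preprocessing does not help: if you cancel a removal of $v$ at time $t_1$ against a re-placement at time $t_2$, the configurations in that window become $P_t \cup \set{v}$, and since $v \notin P_t$ there, $\setsize{(P_t \cup \set{v}) \setminus (S \cup \set{v})} = \setsize{P_t \setminus S}$, which is unchanged. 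Keeping $v$ on does not reduce the count exactly where you need it reduced. A secondary issue is that you carry the path-like hypothesis from Proposition~\ref{pro:dt-to-pebbling} into your inductive claim, but your pivot $v$ is determined by the pebbling rather than by the search problem, so there is no reason $v$ should be a relevant vertex, and no reason $(Q\cup\set{v}, Z\cup\set{v})$ should remain path-like; the paper drops the path-like requirement entirely for this direction.

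The paper closes the gap with two ideas your plan misses. First, it picks $v$ to be the \emph{earliest} vertex placed after time~$1$ that is \emph{never removed} until the final step; this makes the suffix $\pebbling' = P_m,\ldots,P_\ell$ have $\static(\pebbling') = \static(\pebbling)\cup\set{v}$, so its non-static cost drops to $p-1$ automatically, with no preprocessing. Second — and this is the trick missing from your plan — it does not use the prefix at all (the prefix up to this choice of $v$ has no useful cost bound). For the $0$-branch it instead takes the \emph{reverse} of $\pebbling'$, which surrounds $v$ and has the same non-static cost $p-1$, and prunes $\desc(v)$ out of every configuration so that the static set shrinks back inside $\static(\pebbling)$. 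Your prefix-plus-suffix decomposition is the natural first attempt, but it does not work as stated; the reuse of the suffix and its reversal is essential.
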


\begin{proof}
  We prove the following stronger claim: if for some state $(Q,Z)$
  there is a partial pebbling $\pebbling$ that surrounds a vertex
  $w\in Z$ of non-static cost $p$ with
  $\static(\pebbling) \subseteq Q \setminus Z$, then there is a
  decision tree of depth $p$ that solves $\Search(Q,Z)$.

  To see that this claim implies the proposition, observe that
  a pebbling
  $\pebbling$ that surrounds $z$ of cost $p$ starting from $\emptyset$
  has $\static(\pebbling) = \emptyset$. Thus, the claim implies that
  if such a pebbling exists, it holds that
  $\decisiontree(\set{z}, \set{z}) \le p$.

  If $\static(\pebbling)$ surrounds $w$ then all the predecessors of
  $w$ are in $Q \setminus Z$. Therefore, $\Search(Q,Z)$ can be solved
  without making any queries: the decision tree can immediately output
  that the clause
  $C_{w} = x_{w} \vee \bigvee_{u \in \pred(w)} \neg x_u$ is violated.

  Otherwise we prove the claim by induction on $p$. The base case
  $p=0$ is a particular instance of $\static(\pebbling)$ surrounding
  $w$, hence we turn to proving the induction step and suppose that
  $p>0$. Having $\ell$ denote the length of $\pebbling$, let $v$ be
  the earliest (in time) vertex to be placed at some time $m>1$ and
  not removed until time $\ell$. Note that $v$ exists because $w$ is
  not surrounded in some configuration of $\pebbling$.

  Let $\pebbling' = P_m,\ldots,P_\ell$ be the subpebbling of
  $\pebbling$ from time $m$ to time $\ell$. Observe that
  $\static(\pebbling') = \static(\pebbling) \union\set{v}$ by construction, and thus the
  non-static cost of $\pebbling'$ is at most~$p-1$. By applying the
  induction assumption to $\pebbling'$, it follows that there exists a
  decision tree $T_1$ for $\Search(Q \union{v}, Z)$ of depth $p-1$.

  Next, observe that $\reversepebbling(\pebbling')$ is a partial
  pebbling that \emph{surrounds $v$} of non-static cost $p-1$ with
  $\static(\reversepebbling(\pebbling')) =
  \static(\pebbling)\union\set{v}$, therefore
  $\pebbling'' = \setdescr{P \setminus \desc(v)}{P\in\reversepebbling(\pebbling')}$ is a partial pebbling
  that surrounds $v$ of non-static cost $p-1$ with
  $\static(\pebbling'') \subseteq \static(\pebbling)$. Hence, by applying the
  induction assumption to $\pebbling''$, it follows that there exists
  a decision tree $T_0$ for $\Search(Q \union{v}, Z \union {v})$ of
  depth $p-1$.

  We now construct a decision tree $T$ for $\Search(Q,Z)$ as follows:
  The tree $T$ queries $v$. If the answer is $0$, the tree proceeds
  by invoking $T_0$, and otherwise it invokes $T_1$. It is not hard to
  see that $T$ has depth at most $p$ and that it solves $\Search(Q,Z)$,
  as required.
\end{proof}

It is worth mentioning that we can prove Theorem~\ref{th:justapebblegame}
directly, without going through Lemma~\ref{lem:RM-game}. This can be done
by splitting the proofs of each of the propositions into two cases: the case
where $Z = \emptyset$ and the case where $Z \neq \emptyset$. In the first
case, we need to work with a visiting pebbling of the sink rather than
a surrounding pebbling of $\head(Z)$. However, this makes the proof
more cumbersome.

\bibliography{refShort}

\bibliographystyle{alpha}

\end{document}